\newcommand{\plim}[1]{\underset{{#1\to\infty}}{\textnormal{plim}}\,}
\newtheorem{theorem}{Theorem}
\theoremstyle{plain}
\newtheorem{algorithm}{Algorithm}
\newenvironment{assumption*}
{\ifnum\value{subassumption}=0 \stepcounter{assumption}\fi\subassumption}
{\endsubassumption}
\newenvironment{assumption+}[1]
{\subassumption}
{\endsubassumption}
\theoremstyle{plain} 
\newtheorem{assump}{Assumption}
\newtheorem*{definition*}{Definition}
\newtheorem{lemma}{Lemma}
\newtheorem{proposition}{Proposition}
\numberwithin{equation}{section}
\newcommand{\ind}{\perp\!\!\!\!\perp} 
\newcommand{\norm}[1]{\left\lVert#1\right\rVert}
\newcommand{\vect}{\text{vec}}
\newcommand{\R}{\mathbb{R}}
\newcommand{\sumnt}{\sum_{i=1}^N \sum_{t=1}^T}
\newcommand{\Ex}[1]{\mathbb{E}\left[#1\right]}
\newcommand{\gt}{\widetilde{g}}
\DeclareMathOperator*{\argmin}{argmin}
\begin{document}
\title[Unobserved Grouped Heteroskedasticity]{Unobserved Grouped Heteroskedasticity\\ and Fixed Effects}
\author{Jorge A. Rivero}
\thanks{This version: \today, first version: November 9, 2022, \\\emph{All errors are my own. I am grateful for the inspiration, support and encouragement from my committee chair Yanqin Fan.} Correspondence to \href{mailto:jrivero@uw.edu}{jrivero@uw.edu}. Department of Economics, University of Washington,  Box 353330,
Seattle, WA 98195.}

\maketitle

\begin{abstract}

This paper extends the linear grouped fixed effects (GFE) panel model to allow for heteroskedasticity from a discrete latent group variable. Key features of GFE are preserved, such as individuals belonging to one of a finite number of groups and group membership is unrestricted and estimated. Ignoring group heteroskedasticity may lead to poor classification, which is detrimental to finite sample bias and standard errors of estimators. I introduce the ``weighted grouped fixed-effects'' (WGFE) estimator that minimizes a weighted average of group sum of squared residuals. I establish $\sqrt{NT}$-consistency and normality under a concept of group separation based on second moments. A test of group homoskedasticity is discussed. A fast computation procedure is provided. Simulations show that WGFE outperforms alternatives that exclude second moment information. I demonstrate this approach by considering the link between income and democracy and the effect of unionization on earnings.

\vskip20pt

\noindent\textit{Keywords}: Discrete heterogeneity, heteroskedasticity, panel data, latent variables, fixed effects, clustering, union, labor, democracy.

\vskip10pt

\noindent\textit{JEL codes}: C23, D72, J31, J51, O47
\end{abstract}

\section{Introduction}

Agents that exhibit unobserved differences pose a significant challenge in econometric models by confounding the true relationships between variables of interest. Panel data provides a solution where it is possible to make intuitive, though restrictive, assumptions on the form of heterogeneity such as with two way fixed effects or interactive fixed effects (see \cite{wooldridge:2010} and \cite{bai:2009}, respectively). Another approach is group fixed effects where the unobservable is a discrete variable that labels individuals into groups and allows within-group parameters that vary across time (\cite{bm:2015}). These methods require estimating the heterogeneity, which could be harmed by ignoring important features of the marginal distribution of confounding latent variables. I focus on the case of a discrete latent variable that induces group heterogeneous fixed effects and variances in the model, which is written as a linear grouped fixed effects model with unobserved groupwise heteroskedasticity. The heterogeneity that fits this model should generally be closer to the true groupings, rather than one estimated solely focused on the group fixed effects.

A linear model with group fixed effects is written as
\begin{gather}\label{themodel}
	y_{it} =  x_{it}'\theta + \alpha_{g_i t} + u_{i t},
\end{gather}
for $i = 1,\dots, N$ and $t = 1, \dots, T$ where the variable $g_i = 1, \dots, G$ are discrete and take a known or estimable $G$ number of values, are unobserved and may be arbitrarily correlated with the exogenous variables $x_{it}$. The group membership variables $g_i$ and group-specific time effects $\alpha_{g_i t}$ are exogenous and estimated from the data. It is assumed that there are a relatively small number of groups and group membership of individuals does not change over time and that individuals in the same group $g$ follow the same time path of effects $\alpha_{gt}$. This model and a least squares estimation procedure were introduced in \cite{bm:2015} and several alternative estimators have since been proposed (\cite{chetverikov:2022,mugnierPWD:2022}). 

The key identification assumption shared by all of these proposals is that the groups are separated, specifically that groups can be distinguished by the group fixed effects $\alpha_{g} = (\alpha_{gt})_t$ for all $g = 1,\dots,G$. From an error component perspective: $v_{it} = y_{it} - x_{it}'\theta = \alpha_{g_i t} + u_{it}$ has the group fixed effect term $\alpha_{g t}$ as the conditional mean given $g_i = g$ so that identification requires the conditional means of the error components to be separated. Therefore, in order to learn group memberships, one must consider partitions or groupings of $v_{i} = y_{i} - x_{i}\theta = (y_{it} - x_{it}'\theta)_t$ into $G$ groups that uniquely satisfy the moment conditions implied by this model. Mathematically, if the group means $\alpha_{g}$ and parameters $\theta$ were known, this amounts to sorting individuals $i=1,\dots,N$ into a group where their $v_{i}$ is closest to the corresponding mean $\alpha_{g}$.  However, the error component may also exhibit heteroskedasticity with respect to the unobserved groups, which is not a feature that has received attention in the literature. 

Unobserved group heteroskedasticity can be expressed as
\begin{equation}
	\mathbb{E}[ {u}_{it}^2 \vert g_i = g] = \sigma_{g}^2
\end{equation}
where $\sigma_{g} > 0$ for any $g=1,\dots,G$. Using worker's earnings as an example, this expression implies different variability in wages across groups, which  could arise when workers belong to different ability tiers. Workers that are part of a highly skilled group may display more years of schooling, experience, and consistently high earnings, with little variation. On the other hand, workers in a lower skilled, but improving ability group may show consistently lower earnings at first, but may start to demand for compensation as they improve with a larger variance of success, perhaps by being promoted or remaining as a subordinate. 

This paper extends the group fixed effects model to incorporate group heteroskedasticity. Simulation evidence suggests that ignoring this feature may lead to poor estimation of group memberships and severe finite-sample bias and large standard errors of parameter estimators. Intuitively, assigning according to just sample group means ignores the possibility that outlier individuals from a high variance group appears closer to another group, which may result in an erroneous assignment. I introduce the ``weighted grouped fixed-effects'' (WGFE) estimator, where the name serves as an analogy to the estimators connection to the GFE estimator and resemblance to the weighted least squares estimator. The WGFE estimator simultaneously estimates the parameters and finds the optimal grouping by organizing units with outcomes net the effect of covariates into groups that they are most similar to, which is determined by nearness to the mean and variance of each group using a modified Mahalanobis distance. As with GFE estimation, without covariates this reduces to a clustering problem, but usual $k$-means algorithms do not apply since they are known to perform poorly when clusters have heterogeneous variances. Using the proposed distance function, I make use of recent advances in latent factor discovery by \cite{yang:2022} to develop an algorithm adapted to handle heterogeneous group variance. In fact, the $k$-means algorithm itself is a special case of their algorithm when the clusters are uniformly distributed and have the homogeneous variance. The WGFE estimator is equivalent to the GSR estimator of \cite{boot:2022}, where the WGFE estimator arises from concentrating out the group variance parameters in their setup. These estimators for the GFE model with group heteroskedasticity were found independently. 

I provide an asymptotic theory for the WGFE estimator where I allow the number of individuals $N$ and time periods $T$ to approach infinity simultaneously. This estimator shares the property with the GFE estimator that $N$ can grow much faster than $T$ as opposed to the individual fixed-effects estimator which suffers from bias of order $1/T$ as $N/T$ converges to a constant, also known as the incidental parameter problem (\cite{nickell:1981,arellano-hahn:2007}). The WGFE and GFE estimators are both consistent and asymptotically normal as $N/T^\nu$ tends to zero for some $\nu >0$ provided the groups are well separated and errors $u_{it}$ satisfy tail and dependence conditions. 

For identification WGFE requires a stronger notion of separability between groups in order to ensure consistency of group assignments. The inclusion of second moment information must exclude population groupings where differences in group variances are relatively larger than differences in group means in a way that there is significant overlapping support. In other words, groups must be mean-separated as a function of relative differences in the group variances. If this is achieved, the quality of group assignments improves faster than GFE as $T$ approaches infinity and the WGFE estimator is expected to outperform estimators based solely on first moment classification when $T$ is small, which is a common property across panel data sets. GFE remains consistent for large $N$ and $T$ so the performance gap should close for large and long panels. It is also shown that the WGFE estimator is asymptotically equivalent to GFE when there is no group heteroskedasticity and the groups are equally weighted and that the WGFE objective function is bounded by the GFE objective, suggesting a test of group homoskedasticity. Future work involving an arbitrary number of features may involve these separability type assumptions and connections to GFE so this may be a starting point to provide tools to study discrete heterogeneity in depth. 

I showcase the WGFE estimator in two illustrations. The first is an application to study the effect of national income on democracy, adding to the insights found by \cite{bm:2015}. Following \cite{gleditsch-ward:2006} and \cite{ahlquist-wibbels:2012}, it is empirically observed that there are grouped patterns of democratization that transition in both time and space.  I find evidence where groupings determined by the WGFE show contiguous regions on similar time paths using a subset of a balanced panel of 90 countries from 1970-2000. Our findings do not contradict that of \cite{bm:2015} or \cite{acemoglu:2008} that income has little or no effect on democratization, however it strengthens the evidence and equivalently the correlation between the group variable, income and democracy. This supports the assertion that shared historical factors may have set countries on different time paths such as the formation of religious institutions and economic and defensive pacts; see \cite{huntington:1991}.

The second empirical example revisits the question of the effects of unionization on wages using data from the biennial Panel Study of Income Dynamics between 2001 and 2019. Unobserved ability is widely accepted as an important omitted variable that would explain employers selecting more able workers while facing higher labor costs stemming from a union contract. Many studies find the effect of unionization is overestimated without accounting for individual-specific unobserved ability (\cite{freeman:1984,robinson:1989,card:1996}). Using WGFE, workers across various unobserved ability groups experience different rates of expansion and deterioration of worker bargaining power and real earnings across the sample period. Using traditional fixed effects simply shows an aggregate decline, which keeps the different unobserved segments hidden.

Section \ref{sec:ggfe} presents the issues of identification with different notions of separability, the estimator and computational approach. Section \ref{sec:asym}  develops the asymptotic theory of the infeasible WGFE estimator and the WGFE estimator including consistency of parameter estimators and sample group assignments, and asymptotic normality. In Section \ref{sec:sims}, I compare the WGFE and GFE estimators in a simulation study and use our approach to study the two empirical applications on unionization-wages and income-democracy.

\subsection{Related Literature}\label{sec:rellit}

Modeling with latent group heterogeneity in panel models has received attention as a useful alternative to standard fixed effects approaches. \cite{sun:2005} estimate a multinomial logistic regression with unobserved groupings, but known number of groups $G$ via maximum likelihood. \cite{lin:2012} in one of their proposed methods develop a $k$-means algorithm to group individuals based on the regression estimate of one of $G$ groups. When (multiple) groupings are known, \cite{bester:2016} propose an estimator for a random effects model that assumes individuals share a fixed-effect at some (undetermined) level of grouping. \cite{hahn:2010} consider a class of game theoretic models and show that the incidental parameter problem vanishes quickly as $N$ approaches infinity since the number of equilibria are predicted to be finite, hence the support of the fixed-effects is finite.  In the case of multidimensional heterogeneity, \cite{cheng:2019} consider individuals who may belong to different groups modeled by several latent variables. None of these papers consider time-varying heterogeneity. \cite{mugnier:2022} introduce the single index nonlinear GFE model and estimation that does allow for time-varying heterogeneity, but not the possibility of group heteroskedasticity.

When the number of groups are unknown it must be inferred from the data. \cite{bm:2015} follow \cite{bai:2002} and \cite{bai:2009} by using a Bayesian information criteria (BIC) and setting a maximum number of groups. Recently there have been a number of papers that use penalization to classify and estimate parameters. \cite{su:2016} propose the C-Lasso related to group Lasso that classifies and shrinks individual coefficients to the unobserbed group coefficients. \cite{lu:2017} continue with a testing procedure to determine the number of groups based on C-Lasso. \cite{su:2018} extend the C-Lasso to dynamic panel data models with interactive fixed-effects and cross sectional dependence and propose a BIC to estimate the number of factors and groups. \cite{ando:2016} consider grouped factor structure and penalization for coefficient estimates and their $C_p$-type criteria to estimate the number of groups and number of factors in each group. \cite{mehrabani:2022} take these penalization approaches further by allowing the number of groups to diverge along with the sample dimensions. \cite{mugnierPWD:2022} propose a simple nuclear-norm regularized estimator that estimates the number of groups along with parameters.

An equivalent estimator to WGFE was found independently by \cite{boot:2022}, where their identification conditions are similar in requiring strong separability. 
\cite{zhang:2020} propose a Bayesian approach that treats the number of groups as a parameter to be estimated and use a Dirichlet process prior that allows for infinitely many groups and allows for group heteroskedasticty, but require strictly normal errors while WGFE imposes no such structure. Along these lines, \cite{kim:2019} give evidence that in the normal errors case that accounting for group heteroskedasticity in the income and democracy application shows the correlation between the estimate group variable and income is larger than predicted by GFE estimation and that the total number of groups estimated might depend on the specification of group heteroskedasticity, which coincides with the prediction using WGFE.

Finite mixture models are closely related to our approach, see the textbook by \cite{mclachlan:2004} for a comprehensive review. In these models the data is not required to label individuals into groups and group membership is instead estimated. It is typical to assume that the group distributions belong to the same family e.g. Gaussian mixture models so group heteroskedasticity is modeled. For identifiability of models with covariates there needs to be restrictions the values covariates can take on and on the interaction between covariates and the latent variable; see \cite{kasahara:2009} for a result on identification of discrete choice models with panel data where both the group distribution and choice probabilities are nonparametrically specified. I allow covariates to be arbitrarily correlated with the latent variable as in typical fixed-effects and leave group membership probabilities unrestricted. Mixture models can incorporate fixed-effects, see \cite{deb:2013} who provide a solution to the incidental parameter problem for Gaussian and Poisson families. \cite{heckman:1984} apply them to duration models where they take the distribution of the unobservable nonparametrically, while imposing a parametric assumption on the group distributions. A related setting is Markov-Switching models for time series, which can be seen as a mixture model, see \cite{fruhwirth:2006}. Related to switching models, $k$-means clustering is similar to greedy approximations of step function signals, which can be thought of as a estimating a switching model of intercepts with no covariates or parametric assumptions on the error, see \cite{rivero:2023}. In this setting, the user does not need to specify the number of groups and instead could be determined via penalty methods.

The Expectation Maximization (EM) algorithm commonly used in the maximum likelihood estimation of mixture models can be seen as a clustering algorithm, see \cite{redner:1984}. In fact, in the case of a Gaussian mixture, the Mahalanobis distance is a key part of the algorithm, which suggests a connection to our clustering algorithm which also incorporates second moment information. Indeed GFE is also the maximizer of the pseudo-likelihood of a Gaussian mixture model, where the mixing probabilities are individual-specific and unrestricted e.g. independent of covariates. 

 An inspiration for considering unobserved group heteroskedasticity is the  the $k$-means problem of assigning a number of individuals into a finite number of groups; see (\cite{maccqueen:1967,lloyd:1982,forgy:1965}). The $k$-means criterion is a least squares criterion, which implicitly assumes that the groups are separated enough in mean and are of identical variance. When the data violates any of these assumptions, then group assignment at the sample level may be compromised. See the recent survey \cite{ahmed:2020} on the performance of the standard $k$-means algorithm in these contexts.

The estimator is interestingly tied to optimal measure transport theory (\cite{villani:2009, santambrogio:2015}), specifically the Wasserstein (Kantorovich) distance between probability distributions; see \cite{kolouri:2017} for more on Wasserstein distance and its applications. The Wasserstein distance function defines a distance function between probability measures and so permits a Fr\'{e}chet mean of distributions that is also itself a distribution known as the Wasserstein barycenter, see \cite{ac:2011}. The WGFE estimator for a model without covariates can be seen as the minimization problem of optimally forming two parameter (location-scale) groups such that these groups have a barycenter with minimum variance over other possible groupings and barycenters. See \cite{galichon:2018} for an introduction to optimal transport for economists.

\section{Weighted Grouped Fixed Effects Estimation}\label{sec:ggfe}

The slope parameters are contained in the vector $\theta\in\Theta \subset \R^p$ and group-specific time-effects $\alpha_{gt}\in \mathcal{A} \subset \R$ for any $g=1,\dots,G$ and $t = 1,\dots,T$ and $\alpha \in \mathcal{A}^{GT}$. The group assignment parameters are categorical: $g_i\in\{1,\dots,G\}$ for every $i = 1,\dots,N$ and the space of all possible partitions using the $g_i$ is denoted as $\gamma = (g_1, \dots, g_N) \in \{1,\dots,G\}^N =\Gamma_G^N$. The covariates $x_{it}$ can contain lagged outcomes along with strictly exogenous regressors. The covariates are also allowed to be arbitrarily correlated to the time-effects $\alpha_{gt}$. Note the absence of time subscript implies a vector of time profiles or time series for the individual. Denote $ x \mapsto \norm{x}$ as the Standard Euclidean norm of finite-dimensional vector $x$.

\subsection{Identification of groups}

Group fixed effects models assume that group membership is unknown and estimated from data. Identification of group memberships in these class of models require that groups are ``separated'' according to some criteria. It is sufficient to assume that group fixed effects for any pair of groups are different in the mean squared sense. This means that groups can be distinguished by the comparison of their group-specific time series of effects. 

Denote the true parameters with a superscript ``0''. Formally for any $(g,\widetilde{g}) \in \{1,\dots,G\}^2$ such that $g\neq \widetilde{g}$:
\begin{equation}
	\plim{T} \frac{1}{T} \sum_{t=1}^T \left( \alpha_{gt}^0 - \alpha_{\widetilde{g}t}^0\right)^2 = \plim{T}\frac{1}{T} \norm{\alpha_{g}^0 - \alpha_{\widetilde{g}}^0}^2  > 0.
\end{equation}
This is a separability assumption and implies a rule for determining the group of any individual using their time series information. For $i\in\{1,\dots,N\}$ and supposing that $g_i^0 = g$, I will show that the mean squared distance between $v_i = y_i - x_i\theta^0$ and $\alpha_{h}^0$ for any $h \in \{1,\dots,G\}$ is minimized when $h = g$. Recall the true model \eqref{themodel} and consider the following difference we want to show is bounded away from zero:
\begin{align}
	\plim{T}\frac{1}{T}\norm{y_{i} - x_{i}\theta^0 - \alpha_{h}^0}^2 - \frac{1}{T}\norm{y_{i} - x_{i}\theta^0 - \alpha_{g}^0}^2 &= \plim{T}\frac{1}{T}\norm{\left(\alpha_{g}^0 - \alpha_{h}^0 \right) + u_{i}}^2 - \frac{1}{T}\norm{u_{i}}^2\\
	&= \plim{T}\frac{1}{T}\norm{\left(\alpha_{g}^0 - \alpha_{h}^0 \right)}^2 + o_p(1) \\
	&> 0
\end{align}
where the $o_p(1)$ results from an exogeneity condition applied on $\alpha_g^0 - \alpha_h^0$ that will be made in more detail in Section \ref{sec:asym}. The converse is also true-- for any $i\in\{1,\dots,N\}$, if we assume the strict inequality holds for some $g$, it uniquely minimizes the distance between $v_i$ and the group fixed effect $\alpha_{g}$ of group $g$. Therefore, since groups are separated, an individual must belong to a group $g$ if their observed $v_i = y_i - x_i\theta$ is closest to $\alpha_{g}$ compared to the other group fixed effects, thus identifying group assignments for each individual provided parameters are known.

Geometrically this requires that clusters or conditional distributions given groupings have different means. This does not restrict overlap between groups, notably how a particular group variance causes large intersections with other groups. Moreover even when clusters have many features that are heterogeneous this condition delivers identification of group memberships. 

The inclusion of group heteroskedasticity in the model brings more information, but when groups must be estimated it necessitates steeper requirements to detect them in the population. A natural criteria can be based on the Mahalanobis distance, which normalizes distances to the cluster mean based on the ``dispersion''  or noise of the group. Letting $\sigma_g^2$ denote the variance of any group $g$, a candidate rule may be written for all $i \in \{1,\dots,N\}$: $g_i^0 = g$ if and only if, for all $h \neq g$,
\begin{equation}\label{rule:wgfe1}
	\plim{T}\frac{1}{T}\frac{\norm{y_i - x_i'\theta^0 - \alpha_g^0}^2}{\sigma_g} < \plim{T}\frac{1}{T}\frac{\norm{y_i - x_i'\theta^0 - \alpha_h^0}^2}{\sigma_h}.
\end{equation}
For this rule to be valid a stronger separability condition is required that involves both group fixed effects and variances. Following similar calculations on this inequality, we arrive at
\begin{align*}
	\plim{T} \frac{1}{\sigma_h T}\norm{\alpha_{g}^0 - \alpha_h^0}^2 - \left(\frac{\sigma_h - \sigma_g}{\sigma_g\sigma_h}\right)\frac{1}{T} \sum_{t=1}^T u_{it}^2 + o_p(1)
\end{align*}
where once again we would like this to be asymptotically bounded away from zero. A sufficient condition is the following ``strong'' separability of groups:
\begin{equation}\label{strongsep}
	\plim{T} \frac{1}{T} \sum_{t=1}^T \left( \alpha_{gt}^0 - \alpha_{ht}^0\right)^2 > \vert\sigma_h - \sigma_g\vert \frac{\max_{f\in \{1,\dots,G\}}\sigma_f^2}{\min_{f\in \{1,\dots,G\}}\sigma_f} 
\end{equation}
which explicitly requires that the model noise is finite, i.e., $\plim{T} \sum_{t=1}^T u_{it}^2  < \infty$ for all $i \in\{1,\dots,N\}$ and that there are no degenerate groups: $\sigma_f >0$ for all $f = 1,\dots,G$.  In words, the difference between variances is scaled by the $F$-statistic between the noisiest group and the least. A model that has a large range between group variances therefore requires a large discrepancy between group fixed effects in order to identify groups. As an example with no covariates, with $T=2$, two normally distributed groups, fixing the mean-squared error of the two group fixed effects to 4 and one of the groups variances to 1, \eqref{strongsep} implies a valid range of variances for the other group in the interval $(0.2,2)$ where the groups would be identifiable. Figure \ref{fig:strongsep1} displays a visual on how these groups may appear when the variable standard error is close to 2 and 0.2. According to this example the constraint of strong separability does not seem to place a large restriction on the types of clusters WGFE may handle. 

\begin{figure}[h!]
	\includegraphics[scale = 0.5]{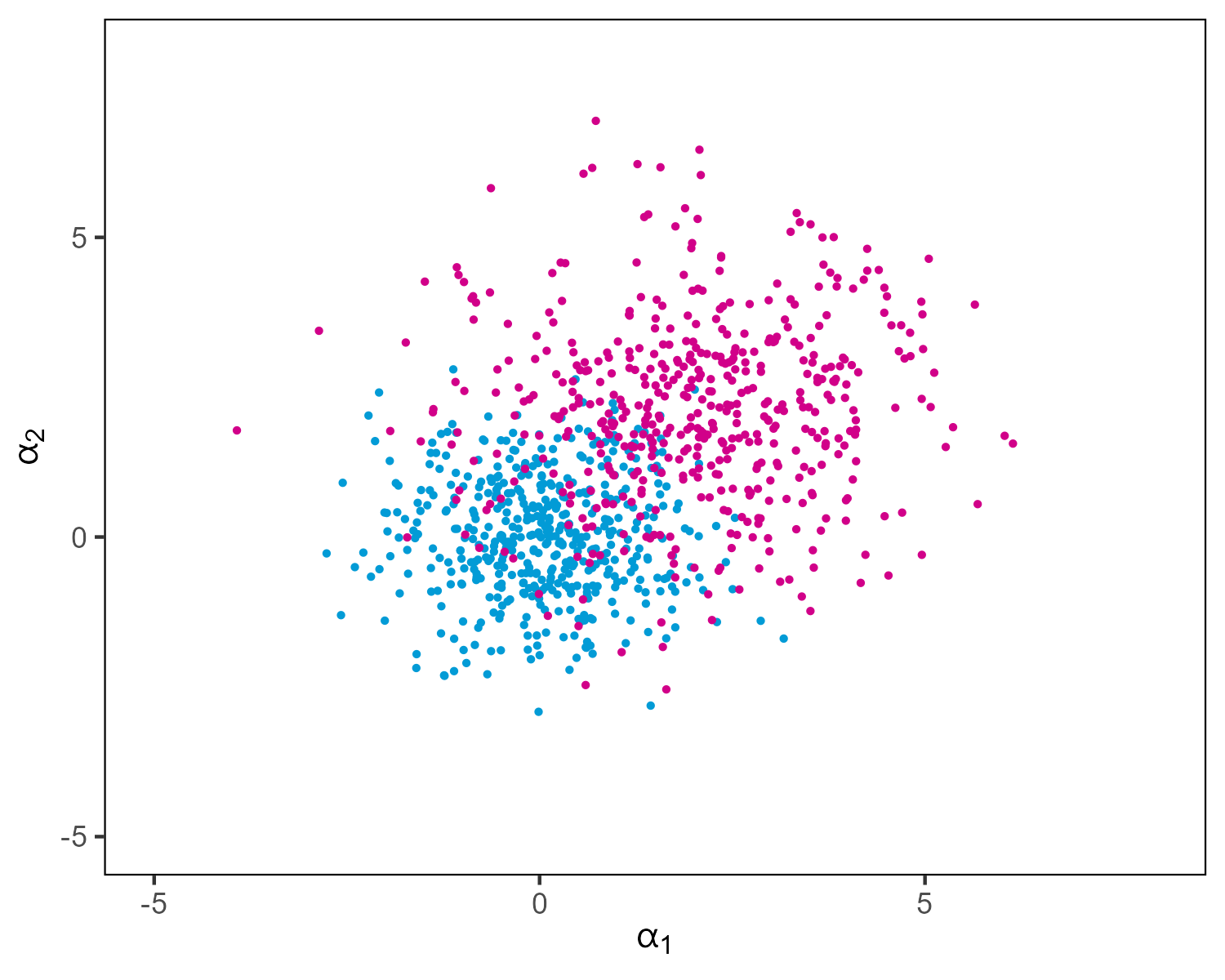}
	\includegraphics[scale = 0.5]{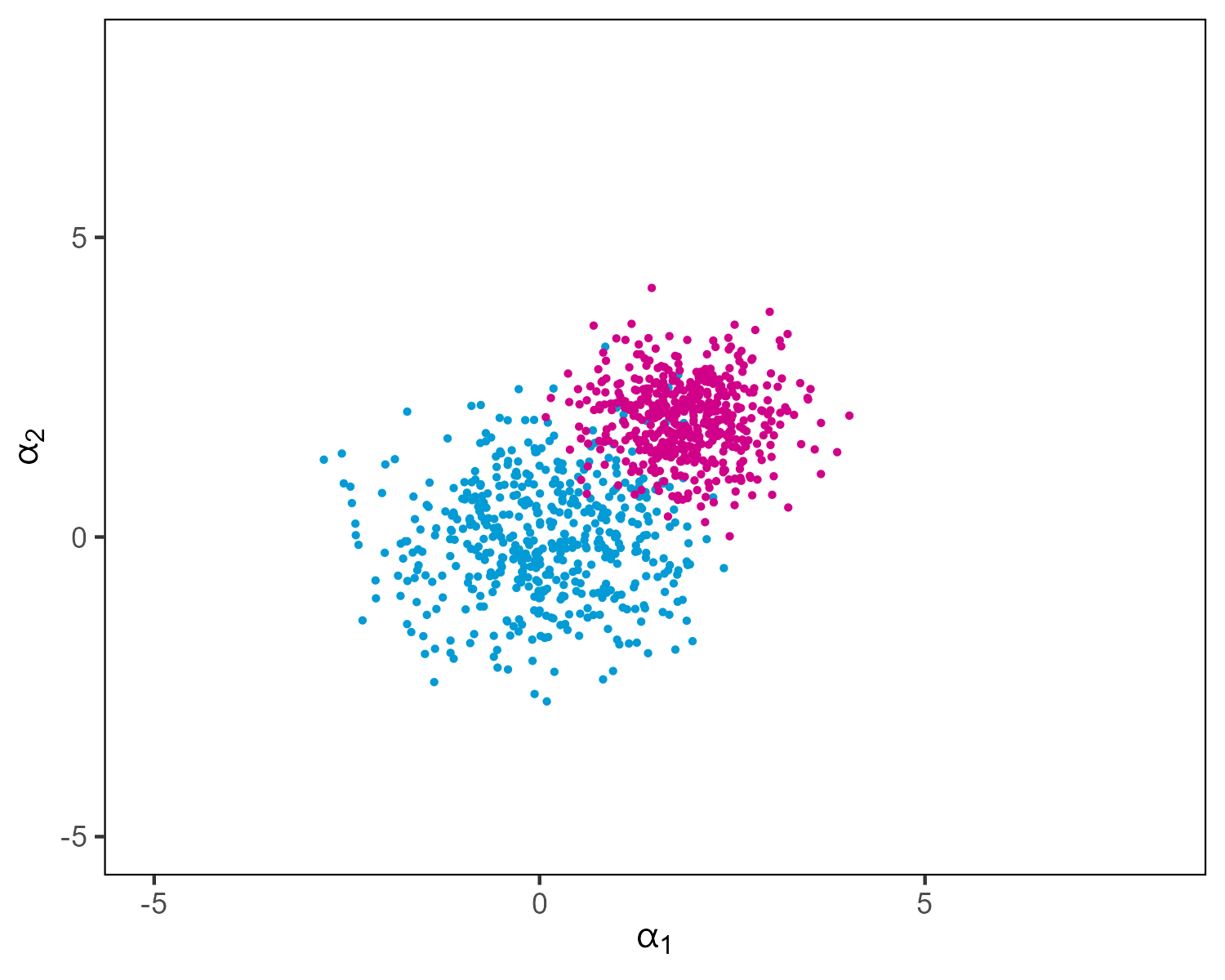}
\caption{Groups that satisfy strong separability. Blue group with standard error 1, Red group with variable $\sigma$.  Left: $\sigma = 1.55$, Right: $\sigma = 0.65$}
\label{fig:strongsep1}
\end{figure}

Taking the rule \eqref{rule:wgfe1} to data may perform poorly by creating and then favoring groupings with enormous variance. To counteract this, a simple additive penalty may be imposed that vanishes asymptotically with larger $T$ so that strong separability is preserved. Following \cite{yang:2022}, I propose the following correction:
\begin{equation}\label{rule:wgfe2}
	\plim{T}\frac{1}{T}\frac{\norm{y_i - x_i'\theta^0 - \alpha_g^0}^2}{\sigma_g} + \frac{1}{T}\sigma_g < \plim{T}\frac{1}{T}\frac{\norm{y_i - x_i'\theta^0 - \alpha_h^0}^2}{\sigma_h} + \frac{1}{T}\sigma_h
\end{equation}
so that noisy groups are penalized and less noisy groups are assigned primarily by closeness to group mean.

This reasoning is easily generalizable to groups with non diagonal  heterogeneous covariances $\mathbf{\Sigma}_{g_i^0}$ (see Appendix \ref{app:ggfe}) and potentially to incorporate other moments of the distributions. An extension up to $K$ moments might involve a distance function that takes arguments as the first $K$ moments and then a separability condition that requires increasingly steeper restrictions on how far the group means must be as a function of the other $K-1$ moments.

\subsection{Estimation} 

 The \emph{weighted grouped fixed-effects} (WGFE) estimator for model \eqref{themodel} is defined as
\begin{equation}\label{wgfe:obj}
	\left(\widehat{\theta},\widehat{\alpha},\widehat{\gamma}\right) = \argmin_{(\theta,\alpha,\gamma) \in \Theta \times \mathcal{A}^{GT} \times \Gamma_G^{N}} 
\sum_{g=1}^G P_g^\gamma \sqrt{\widehat{Q}_g(\theta,\alpha,\gamma)}
\end{equation}
where
\begin{equation}
	\widehat{Q}_g(\theta,\alpha,\gamma) = \frac{1}{T\sum_{j=1}^N P_{g}^{j}(\gamma)} \sum_{i=1}^N \sum_{t=1}^T P_g^i(\gamma)(y_{it} - x_{it}'\theta - \alpha_{g t}^0)^2
\end{equation}
and $P_g^i(\gamma)  = \mathds{1}\{g_i = g\}$ and $P_g^\gamma = N^{-1}\sum_{i=1}^N P_{g}^i$. 

The objective function is a departure from the pooled least squares criteria of \cite{bm:2015} and to $k$-means when covariates are absent. It is instead a minimization of a weighted transformation of within-group least squares that is shown to coincide to choosing $\widehat{g}_i = \widehat{g}_i(\theta,\alpha)$ that satisfies \eqref{rule:wgfe2} for every $i \in \{1,\dots,N\}$ in \cite{yang:2022}. It is a version of the GSR estimator independently developed by \cite{boot:2022} which concentrates out the optimal group variance estimator. Because of this results using these estimators will appear similar.

Let $\sigma_{g_i}(\theta,\alpha,\gamma) = \sqrt{\widehat{Q}_{g_i}(\theta,\alpha,\gamma) }$. With a fixed grouping scheme $\gamma^*$ the first-order conditions with respect to $\theta$ and $\alpha$ are
\begin{gather}
0 = \sumnt {\sigma}_{{g}_i^*}^{-1}(\theta,\alpha,\gamma^*) x_{it} \left(y_{it} - x_{it}'\theta - \alpha_{{g}_i^* t} \right)\label{foc:theta}\\
0 = \frac{\sum_{i=1}^N\mathds{1}\{{g}_i^* = g\} \left(y_{it} - x_{it}'\theta - \alpha_{g t}\right)}{\sum_{j=1}^N \mathds{1}\{{g}_i^* = g\}} \label{foc:alpha}
\end{gather}
for all $ t=1,\dots,T$, $g = 1,\dots, G$. With these groupings, $\theta^*$ is the solution to the nonlinear equation
\begin{gather}\label{wgfe:theta}
	\theta^*
= \left[\sum_{i=1}^N \sum_{t=1}^T {\sigma}_{{g}_i}^{-1}({\theta}^*,{\alpha}^*,\gamma^*) (x_{it} - \overline{x}_{{g}_i t})(x_{it} - \overline{x}_{{g}_i t})'\right]^{-1} \sum_{i=1}^N \sum_{t=1}^T  {\sigma}_{{g}_i}^{-1}({\theta}^*,{\alpha}^*,\gamma^*) (x_{it} - \overline{x}_{{g}_i t})(y_{it} - \overline{y}_{{g}_i t})
\end{gather}
in the system of equations with 
\begin{equation}\label{wgfe:alpha}
	{\alpha}_{{g} t}^* = \overline{y}_{{g} t} - \overline{x}_{g t} '{\theta}^*
\end{equation}
where $\overline{y}_{{g} t}$ and $\overline{x}_{{g} t}$ are the within-group averages of the respective variables according to grouping $\gamma^*$. The form of \eqref{wgfe:theta} displays similarities to weighted least squares and solvable by the Newton-Raphson method (\cite{dennis:1996}) or by fixed-point iteration.

\subsection{Computation} 

The partitioning problem of assigning $N$ units into $G < N$ groups is NP-hard (\cite{aloise:2008,dasgupta:2008}) and, with additional parameters such as $\theta$, rules out exhaustive search \footnote{\cite{clausen:2003}, \cite{brusco:2006} and  \cite{aloise:2012} for some global and exact methods that are prohibitive in my application.}. I follow convention and propose a heuristic algorithm based on Lloyd's ($k$-means) algorithm (\cite{lloyd:1982}) that after initializing will alternate between assignment of individuals into groups and then updating parameters based on those assignments.

\begin{algorithm}\label{wgfe:algo}(Lloyd-type Algorithm for WGFE).
\begin{itemize}
\item[1.] Initialize $\theta^{(0)}$; set the $\alpha^{(0)}$ as $G$ randomly chosen $v_i = y_i - x_i\theta^{(0)}$; create initial assignments $g_i^{(0)}$ by assigning $v_i$ to the closest $\{\alpha_{g_i^0}^{(0)}\}$; and calculate weights $\{P_g^{(0)}\}$ and $\{{{\sigma}}_g^{(0)}\}$ for all $g \in \{1,\dots,G\}$. Set $s = 1$. \newline

\item[2.] (Assignment Step). For all $i$, assign according to
\[	
	{g}_i^{(s+1)} \leftarrow \argmin_g \left\{ \frac{\norm{y_i - x_i\theta^{(s)} - \alpha_{g}^{(s)}}^2}{{\sigma}_g^{(s)}} + {\sigma}_g^{(s)}\right\}
\]
and collect them in $\gamma^{(s+1)} = (g_1^{(s+1)}, \dots, g_N^{(s+1)})$.\\
\item[3.] (Update Step). Update $\alpha^{(s+1)}$ and $\theta^{(s+1)}$ according to \eqref{wgfe:theta} and \eqref{wgfe:alpha}, respectively, 
then update $\{P_g^{(s)}\}$ and $\{{{\sigma}}_g^{(s+1)}\}$ for all $g \in \{1,\dots,G\}$. \newline

\item[4.] If $g_i^{(s)} = g_i^{(s + 1)}$ for all $i$, stop. Otherwise, set $s \leftarrow s + 1$ and go back to step 2.

\end{itemize}

\end{algorithm}

This is a variance augmented $k$-means algorithm similar to gradient descent  where part of the set of parameters take on discrete values and each iteration improves on the function value until a local minimum is reached  (see \cite{forgy:1965}, \cite{maccqueen:1967} and \cite{lloyd:1982}). The algorithm converges quickly, however a complete application of Algorithm \ref{wgfe:algo} involves repeating it many times with a randomly drawn starting value and choosing the result with the smallest local minima. I emphasize that this is a heuristic method sensitive to initialization and has no guarantees of finding the global minimizer. Heuristic algorithms perform well enough; see \cite{brusco:2007} and in particular the class of \emph{variable neighborhood search} (VNS) algorithms (\cite{hansen:2010}). These algorithms combine both stochastic initialization and deterministic search to comb the domain for the smallest local minimum. I adapt a VNS algorithm described in the Supplementary Appendix of \cite{bm:2015} and originally from \cite{pacheco:2003} to incorporate covariates and utilize it in simulations and the empirical applications since it can be more efficient at finding minima versus repeated applications of Algorithm \ref{wgfe:algo}. See Appendix \ref{wgfe:vns} for a full description of the VNS Algorithm \ref{alg:vns}. Despite using the VNS algorithm in this paper, I often refer to Algorithm \ref{wgfe:algo} as it provides intuition for identification and the asymptotic theory and still plays a role in the improved VNS algorithm.

The rule \eqref{rule:wgfe2} appears in the assignment step and serves as a balanced criteria to assign an individual into a group it is closest in group fixed effect relative to the noise in the group. While very noisy groups make the first term smaller, they are penalized by the second term preventing assignment solely based on the standardized distance. This is not the first clustering criteria incorporating second moment information, see \cite{zhao:2015}. The Mahalanobis distance can be used, however too much normalization without the penalty term in \eqref{rule:wgfe2} may result in an assignment rule that favors the noisiest group and the clustering objective function becomes trivial, see \cite{krishnapuram:1999}. 

\subsection{Connection to GFE estimation}\label{sec:gfe}

The GFE estimator solves the pooled least squares problem
\begin{equation}
	\min_{(\theta,\alpha,\gamma) \in \Theta \times \mathcal{A}^{GT} \times \Gamma_G^N} \frac{1}{NT}\sumnt \left(y_{it} - x_{it}'\theta - \alpha_{g_i t} \right)^2
\end{equation}
and the assignment step of GFE estimation is given by
\begin{equation}\label{gfe:assignment}
	g_i \leftarrow \argmin_g \norm{y_i - x_i\theta - \alpha_{g}}^2.
\end{equation}
The estimators have a closed-form for given grouping $\widetilde{g}$:

\begin{gather}\label{gfe:theta}
	\widetilde{\theta}^{GFE} 
= \bigg[\sum_{i=1}^N \sum_{t=1}^T (x_{it} - \overline{x}_{\widetilde{g}_i t})(x_{it} - \overline{x}_{\widetilde{g}_i t})'\bigg]^{-1} \sum_{i=1}^N \sum_{t=1}^T   (x_{it} - \overline{x}_{\widetilde{g}_i t})(y_{it} - \overline{y}_{\widetilde{g}_i t})
\end{gather}
and
\begin{equation}
	\widetilde{\alpha}_{\widetilde{g} t}^{GFE} = \overline{y}_{\widetilde{g} t} - \overline{x}_{\widetilde{g} t}'\widetilde{\theta}^{GFE}.
\end{equation}

The GFE criterion function weighs the residual variance the same across all groups while the WGFE criterion separates them by weighing them according to the size of groups. If groups had homogeneous variance and size, then WGFE and GFE estimation would be equivalent.

\begin{theorem}\label{WC:GFE}
	If $\sigma_g^2 = \sigma^2$ and $\mathbb{P}(g_i^0 =g)= 1/G$ for all $g = 1, \dots, G$,  then the WGFE objective function is asymptotically equivalent to the GFE  objective function.
\end{theorem}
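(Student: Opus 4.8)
The plan is to write both criteria as weighted averages across groups sharing the same weights, and then exploit the concavity of the square root. First I would note that, for any grouping $\gamma$, the GFE objective decomposes groupwise as
\[
\frac{1}{NT}\sumnt \left(y_{it}-x_{it}'\theta-\alpha_{g_it}\right)^2 = \sum_{g=1}^G P_g^\gamma\,\widehat{Q}_g(\theta,\alpha,\gamma),
\]
since $P_g^\gamma\,\widehat{Q}_g$ equals $(N_g/N)\cdot(TN_g)^{-1}\sum_{i:g_i=g}\sum_t(\cdot)^2$, with $N_g=\sum_j P_g^j(\gamma)$, and these terms sum exactly to the pooled average. Thus the GFE and WGFE objectives are, respectively, $\sum_g P_g^\gamma \widehat{Q}_g$ and $\sum_g P_g^\gamma \sqrt{\widehat{Q}_g}$, i.e. the same convex combination of the within-group mean squared residuals $\widehat{Q}_g$ and of their square roots.

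Second, because $x\mapsto\sqrt{x}$ is concave and the weights $P_g^\gamma$ are nonnegative and sum to one, Jensen's inequality gives
\[
\sum_{g=1}^G P_g^\gamma \sqrt{\widehat{Q}_g} \le \sqrt{\sum_{g=1}^G P_g^\gamma \widehat{Q}_g},
\]
so the WGFE objective is always dominated by the square root of the GFE objective (this is the bound alluded to when motivating a test of homoskedasticity), with equality precisely when the $\widehat{Q}_g$ coincide across $g$. The remaining task is to show the gap closes asymptotically under the hypotheses. Under $\mathbb{P}(g_i^0=g)=1/G$ each group collects $N_g=N P_g^{\gamma^0}\to\infty$ units at the true grouping, so $\widehat{Q}_g(\theta^0,\alpha^0,\gamma^0)$ is an average of $N_g T\to\infty$ squared errors and converges in probability to the common $\sigma_g^2=\sigma^2$ under the tail and dependence conditions of Section \ref{sec:asym}; it is precisely the positivity of every group probability that guarantees no group is too small for this within-group law of large numbers. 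Consequently $\widehat{Q}_g-\widehat{Q}_{\widetilde g}=o_p(1)$ for all pairs, Jensen's inequality becomes asymptotically tight, and the WGFE objective equals $\sqrt{\sum_g P_g^\gamma \widehat{Q}_g}+o_p(1)$.

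Finally, since $x\mapsto\sqrt{x}$ is strictly increasing, minimizing the square root of the GFE objective is the same as minimizing the GFE objective itself; hence the two programs share the same minimizer asymptotically, establishing the stated equivalence. I would corroborate this at the level of the first-order conditions: when $\sigma_g\equiv\sigma$ the common weight $\sigma^{-1}$ factors out of both the numerator and the denominator in \eqref{wgfe:theta}, reducing it to the GFE formula \eqref{gfe:theta}, and the penalized assignment rule \eqref{rule:wgfe2} collapses to $\argmin_g\norm{y_i-x_i\theta-\alpha_g}^2$, which is exactly the GFE assignment \eqref{gfe:assignment}. The main obstacle is uniformity: the convergence $\widehat{Q}_g\to\sigma^2$ is clean only at the true grouping, whereas the objectives are compared over all $\gamma\in\Gamma_G^N$, and for misclassified groupings the $\widehat{Q}_g$ need not coincide so Jensen's gap need not vanish. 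I would resolve this by invoking the consistency results of Section \ref{sec:asym}, which place both minimizers in a shrinking neighborhood of $(\theta^0,\alpha^0,\gamma^0)$ where the groupwise variances do converge to $\sigma^2$, so that the monotone relationship between the two objectives holds on the relevant region and pins down a common minimizer.
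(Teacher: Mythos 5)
Your proposal is correct, but it takes a genuinely different route from the paper's. The paper proves Theorem \ref{WC:GFE} through the generalized (matrix) formulation of Appendix \ref{app:ggfe}: under group homoskedasticity each cluster covariance $\widehat{\bm{\Sigma}}_g$ becomes a common scalar multiple of $I_T$, the fixed-point equation \eqref{baryvar} for the Wasserstein barycenter covariance $\bm{\Omega}_\delta$ collapses, and $\mathrm{tr}[\bm{\Omega}_\delta]$ is read off as exactly the GFE sum of squares. You instead work directly with the scalar criterion: you observe that the GFE and WGFE objectives are the same convex combination $\sum_g P_g^\gamma(\cdot)$ applied to $\widehat{Q}_g$ and to $\sqrt{\widehat{Q}_g}$ respectively, invoke Jensen's inequality (which is precisely the content of Proposition \ref{WGFE-less-GFE}), and then close the Jensen gap by a within-group law of large numbers driving every $\widehat{Q}_g$ to the common $\sigma^2$. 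What your approach buys is elementarity --- no barycenter machinery is needed, and the argument makes transparent that the equal-probability hypothesis enters only through the weights summing to one --- and it also buys honesty about the one real subtlety, which the paper's terse proof glosses over: the groupwise variances coincide only at (or near) the true grouping, so the equivalence of the two objectives is local rather than uniform over $\Gamma_G^N$, and one must localize both minimizers via the consistency results of Section \ref{sec:asym} before concluding they agree. What the paper's route buys is that the same calculation generalizes immediately to non-diagonal group covariances, whereas your scalar Jensen argument is tied to the one-dimensional case. Your corroborating check on the first-order conditions \eqref{wgfe:theta} and the assignment rule \eqref{rule:wgfe2} is a useful sanity check but is not needed for the claim about the objective functions themselves.
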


Furthermore, they are connected in a way economists might describe risky behavior. The GFE criterion is to the risk neutral agent while the WGFE is to the risk averse agent. The following proposition is a consequence of Jensen's inequality.
\begin{proposition}\label{WGFE-less-GFE}
	Let $\widehat{Q}^{WGFE}$ and $\widehat{Q}^{GFE}$ denote the sample criterion functions for the WGFE and GFE estimators, respectively. Denote their population counterparts by ${Q}^{WGFE}$ and ${Q}^{GFE}$. Then,
\begin{align*}
\left[\widehat{Q}^{WGFE}(\theta,\alpha,\gamma)\right]^2 \leq&\widehat{Q}^{GFE}(\theta,\alpha,\gamma) \\
\left[{Q}^{WGFE}(\theta,\alpha,\gamma)\right]^2 \leq& {Q}^{GFE}(\theta,\alpha,\gamma)
\end{align*}
 for any $(\theta,\alpha,\gamma)\in \Theta \times \mathcal{A}^{GT} \times \Gamma_G^N$ and with equality for the second inequality if $\sigma_g = \sigma_{\gt}$ for all $(g,\gt) \in \{1,\dots,G\}^2$, $\theta = \theta^0$ and $\alpha = \alpha^0$.
\end{proposition}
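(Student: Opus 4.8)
The plan is to reduce both inequalities to a single application of Jensen's inequality for the concave map $q \mapsto \sqrt{q}$, once the GFE objective is rewritten as a convex combination of the very same group-level quantities $\widehat{Q}_g$ that define the WGFE objective. The crucial first step is a purely algebraic identity: partitioning the GFE sum by group and using $\sum_{j=1}^N P_g^j(\gamma) = N P_g^\gamma$ gives
\[
\widehat{Q}^{GFE}(\theta,\alpha,\gamma) = \frac{1}{NT}\sum_{g=1}^G \sum_{i=1}^N\sum_{t=1}^T P_g^i(\gamma)\,(y_{it}-x_{it}'\theta-\alpha_{gt})^2 = \sum_{g=1}^G P_g^\gamma\,\widehat{Q}_g(\theta,\alpha,\gamma),
\]
so that $\widehat{Q}^{GFE}$ is a weighted average of the $\widehat{Q}_g$ with weights $P_g^\gamma \ge 0$ summing to one, while by definition $\widehat{Q}^{WGFE}=\sum_{g=1}^G P_g^\gamma\sqrt{\widehat{Q}_g}$ is the same weighted average of their square roots.

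With this identity in hand, I would treat $\{P_g^\gamma\}_{g=1}^G$ as a probability distribution over $\{1,\dots,G\}$ and view $\widehat{Q}_g$ as the value at $g$ of a nonnegative random variable $X$. Then $\widehat{Q}^{GFE}=\mathbb{E}[X]$ and $\widehat{Q}^{WGFE}=\mathbb{E}[\sqrt{X}]$, and since $\sqrt{\cdot}$ is concave on $[0,\infty)$, Jensen's inequality yields $\mathbb{E}[\sqrt{X}]\le\sqrt{\mathbb{E}[X]}$, i.e. $\widehat{Q}^{WGFE}\le\sqrt{\widehat{Q}^{GFE}}$; squaring both nonnegative sides delivers the first inequality. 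The population inequality follows identically after replacing $P_g^\gamma$ by $\mathbb{P}(g_i^0=g)$ and $\widehat{Q}_g$ by its probability limit $Q_g$, so that ${Q}^{GFE}=\sum_g \mathbb{P}(g_i^0=g)\,Q_g$ and ${Q}^{WGFE}=\sum_g \mathbb{P}(g_i^0=g)\sqrt{Q_g}$, and applying the same Jensen step.

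For the equality claim I would invoke that $\sqrt{\cdot}$ is strictly concave on $(0,\infty)$, so Jensen holds with equality precisely when $Q_g$ takes a common value across all groups receiving positive weight. Evaluating at $\theta=\theta^0$ and $\alpha=\alpha^0$ collapses the residuals to $u_{it}$, and by the group heteroskedasticity assumption $\mathbb{E}[u_{it}^2\mid g_i=g]=\sigma_g^2$ the population group quantity is $Q_g=\sigma_g^2$. Hence the hypothesis $\sigma_g=\sigma_{\widetilde g}$ for all $(g,\widetilde g)$, say with common value $\sigma$, makes $Q_g$ constant across groups, and a direct computation confirms $[{Q}^{WGFE}]^2=\big(\sum_g \mathbb{P}(g_i^0=g)\,\sigma\big)^2=\sigma^2=\sum_g \mathbb{P}(g_i^0=g)\,\sigma^2={Q}^{GFE}$, giving equality.

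The argument is essentially mechanical, so the care needed is bookkeeping rather than a genuine obstacle. The one point deserving attention is groups with $P_g^\gamma=0$, for which $\widehat{Q}_g$ involves a division by zero; since such groups contribute zero to both objectives, I would adopt the convention of summing only over occupied groups (equivalently, setting $P_g^\gamma\sqrt{\widehat{Q}_g}=0$ when $P_g^\gamma=0$), which leaves the Jensen step untouched. For the population statement the only place any probabilistic content enters is the identification $Q_g=\sigma_g^2$ at the truth, which I would state explicitly as the within-group second-moment limit of $\widehat{Q}_g$.
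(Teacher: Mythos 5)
Your proof is correct and matches the paper's intended argument: the paper states the proposition as "a consequence of Jensen's inequality" and gives no further details, while you supply exactly that argument by rewriting $\widehat{Q}^{GFE}=\sum_g P_g^\gamma\,\widehat{Q}_g$ and applying concavity of the square root to the weights $\{P_g^\gamma\}$. Your handling of empty groups and the identification $Q_g=\sigma_g^2$ at the truth for the equality case are sensible fillings of details the paper leaves implicit.
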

 The square of the WGFE objective function is pointwise dominated by the GFE objective function in both the sample and population level. The inequality is sharp at the population level provided that errors are group homoskedastic and groups are uniform. Proposition \ref{WGFE-less-GFE} along with asymptotic equivalence of WGFE and GFE estimation under group homoskedasticity suggests constructing a test of group homoskedasticity and uniformness based on the test statistic
\[
	\tau_{NT} = d_{NT}\left(\widehat{Q}^{GFE}\left(\widehat{\theta},\widehat{\alpha},\widehat{\gamma}\right)  -
\left[\widehat{Q}^{WGFE}\left(\widehat{\theta},\widehat{\alpha},\widehat{\gamma}\right)\right]^2\right) \geq 0
\]
where it is evaluated at the WGFE estimator and $d_{NT} \to \infty$. Under the null that groups are homoskedastic and consistency of criterion functions and estimators, we should have $\tau_{NT} \to_p 0$ as $N,T$ approaches infinity. Indeed, in a sufficiently large sample if $\tau_{NT}$ is exceedingly far from zero we may reject group homoskedasticity and uniformity.

\section{Asymptotic Theory}\label{sec:asym}

Consider the data generating process 
\[
	y_{it} = x_{it}'\theta^0 + \alpha_{g_i^0 t}^0 + u_{it},
\]
where $g_i^0 \in \{1,\dots,G\}$ denotes the true group membership for each individual $i$ and zero superscripts denote true values. Assume that the true number of groups is known so $G= G^0$. I assume throughout that we have a random sample $\{(y_{it}, x_{it})\}$ over $i$ and the unobserved $g_i^0$ are a random sample from a mass function $\mathbb{P}(g_i^0 = g)$ with support on its entire domain $\{1, \dots, G^0\}$. The triple $\{(y_{i}, x_{i},g_i^0)\}_{i =1}^N$ is regarded as a random sample from some joint distribution.

This section provides conditions that enable the WGFE estimators of group membership converge to their true values and use this to show that the WGFE parameter estimators are asymptotically equivalent to their infeasible version where the true group memberships are known. I establish conditions for which this infeasible estimator is consistent and asymptotically normal using standard theory of extremal estimation (\cite{whitney:1986}) and use asymptotic equivalence of the estimators to connect these properties to the WGFE estimator. In this setting both $N$ and $T$ are allowed to approach infinity, but $T$ may grow much slower than $N$ to the effect of $N/T^\nu \to 0$ for some $\nu \gg 1$ as will be shown.  

\subsection{Infeasible WGFE estimation}

 Let $\left(\widetilde{\theta}, \widetilde{\alpha}\right)$ be the infeasible version of the WGFE estimator where the true group memberships are known:
\begin{equation}\label{est:infeasible}
	\left(\widetilde{\theta},\widetilde{\alpha}\right) = \argmin_{(\theta,\alpha) \in \Theta \times \mathcal{A}^{GT}} 
\sum_{g=1}^G P_g \sqrt{\widetilde{Q}_g(\theta,\alpha)}
\end{equation}
where
\begin{equation}
	\widetilde{Q}_g(\theta,\alpha) = \frac{1}{T\sum_{j=1}^N P_{g}^j} \sum_{i=1}^N \sum_{t=1}^T P_g^i(y_{it} - x_{it}'\theta - \alpha_{g t})^2
\end{equation}
and $P_g^i  = \mathds{1}\{g_i^0 = g\}$ and $P_g = N^{-1}\sum_{i=1}^N P_{g}^i$. 

Establishing the asymptotic distribution of the GFE estimator follows a similar strategy since the infeasible version is simply a least squares estimator of a pooled regression with group dummy variables. However,  (\ref{est:infeasible}) is not a pooled least squares estimator so I must show this infeasible version of the estimator has the property of consistency and asymptotic normality under some conditions.

The proposed population criterion function is
\begin{equation}
	Q(\theta,\alpha) = \sum_{g=1}^G P_g^0 \sqrt{Q_g(\theta,\alpha)}
\end{equation}
where $P_g^0 = \mathbb{P}(g_i^0 = g)$ and
\begin{equation}
	Q_g(\theta,\alpha) = \Ex{\frac{1}{T}\sum_{t=1}^T\mathds{1}\{g_i^0 = g\} (y_{it} - x_{it}'\theta - \alpha_{gt})^2}.
\end{equation}

\begin{assump}\label{as:infeas}\textnormal{(Infeasible WGFE).} There exists $M>0$ such that
	\begin{itemize}
		\item[a.] The parameter space $\Theta \times \mathcal{A}^{GT} \subset \R^{p}\times \R^{GT}$ is compact.
		\item[b.] $\Ex{u_{it}^4} <M$, $\Ex{u_{it}} = 0$ and $\Ex{\norm{x_{it}}^4} < M$
		\item[c.] $\sum_{i=1}^N \mathds{1}\{g_i^0 = g\} \to \infty$ as $N \to \infty$.
		\item[d.] The support of $g_i^0$ is $\{1,\dots,G\}$: for every $g = 1,\dots,G$, $\mathbb{P}(g_i^0 = g) >0$.
		\item[e.] The matrix 
		\[
		\plim{T}\sum_{g=1}^G \frac{1}{T}\sum_{t=1}^T \Ex{\left(x_{it} - \Ex{x_{it}\vert g_i^0 = g}\right)\left(x_{it} - \Ex{x_{it}\vert g_i^0 = g}\right)' \vert g_i^0 = g}
		\]
		has a positive minimum eigenvalue. 
	\end{itemize}
\end{assump}

The first two assumptions are standard, while $(c)$ and ($d$) are non degeneracy conditions of each group indexed by $\{1,\dots,G\}$. I require all groups to have many observations in them and thus group variances are bounded away from zero. Assumption \ref{as:infeas}($e$) is a full rank condition analogous to those on design matrices for the basic linear models. A sufficient condition for $(e)$ is that for each $g=1,\dots,G$ and setting $\widetilde{X}_i = X_i - \Ex{X_i\vert g_i^0}$ as the stacked $x_{it} - \Ex{x_{it}\vert g_i}$ for each $i$, the matrix
\[
\Ex{\widetilde{X}_i'\widetilde{X}_i\vert g_i^0 = g}
\]
has full rank. Since these matrices are symmetric, they are positive definite and so is their sum. Hence the minimum eigenvalue of the sum is positive. This is similar to fixed-effects identification where the within-transformation removes the time mean from each individual. Instead, this is at the group level which would require removing the time dependent group mean for which individual $i$ belongs.

Similar to fixed-effects estimation, the following consistency result does not require $T$ to be large, but large $T$ will be needed when groupings are unknown.

\begin{theorem}\label{prop:infeasiblecons}
Under Assumption 1, $\widetilde{\theta} \to_p \theta^0$ and $\widetilde{\alpha} \to_p \alpha^0$, as $N \to \infty$. 
\end{theorem}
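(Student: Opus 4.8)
The plan is to treat the infeasible WGFE estimator as a standard extremum estimator and invoke the classical consistency argument: establish (i) uniform convergence of the sample objective $\widehat{Q}(\theta,\alpha) = \sum_{g} P_g \sqrt{\widetilde{Q}_g(\theta,\alpha)}$ to the population objective $Q(\theta,\alpha) = \sum_g P_g^0 \sqrt{Q_g(\theta,\alpha)}$ over the compact set $\Theta \times \mathcal{A}^{GT}$, (ii) continuity of $Q$, and (iii) that $Q$ is uniquely minimized at $(\theta^0,\alpha^0)$, and then conclude via the standard theorem (\cite{whitney:1986}). Because the statement fixes $T$ and sends $N \to \infty$, the argument $\alpha$ lives in the fixed finite-dimensional space $\R^{GT}$, so no infinite-dimensional complications arise.

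First I would handle the weights and the within-group normalization. Since the $g_i^0$ are i.i.d.\ across $i$, the sample frequencies $P_g = N^{-1}\sum_i \mathds{1}\{g_i^0=g\}$ converge in probability to $P_g^0 = \mathbb{P}(g_i^0=g) > 0$ by the law of large numbers, and Assumption \ref{as:infeas}(c)--(d) guarantee each group count $N_g = \sum_j P_g^j \to \infty$ with $N_g/N \to P_g^0 > 0$, so the normalization $1/(T N_g)$ in $\widetilde{Q}_g$ is well behaved. Next I would expand $(y_{it}-x_{it}'\theta-\alpha_{gt})^2$ and write $\widetilde{Q}_g$ as a quadratic polynomial in $(\theta,\alpha_g)$ whose coefficients are within-group sample moments of $y$, $x$, $yx$, and $xx'$. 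Each coefficient converges in probability to its population counterpart by the LLN, with the fourth-moment bounds in Assumption \ref{as:infeas}(b) supplying domination; since there are finitely many coefficients and the parameter set is compact, $\sup_{(\theta,\alpha)}|\widetilde{Q}_g - Q_g| \to_p 0$.

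For identification I would substitute the true model to get $y_{it}-x_{it}'\theta-\alpha_{gt} = x_{it}'(\theta^0-\theta) + (\alpha^0_{gt}-\alpha_{gt}) + u_{it}$ on the event $g_i^0 = g$. Using $\mathbb{E}[u_{it}]=0$ and the maintained exogeneity of $x_{it}$ and $g_i^0$ with respect to $u_{it}$ (so the cross terms vanish in expectation), $Q_g$ decomposes as a nonnegative bias term $A_g(\theta,\alpha)$ plus a parameter-free noise constant $C_g = P_g^0\,T^{-1}\sum_t \mathbb{E}[u_{it}^2\mid g_i^0=g] > 0$, where positivity uses $P_g^0>0$ and $\sigma_g>0$. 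Hence $Q_g \geq C_g = Q_g(\theta^0,\alpha^0)$ with equality iff $A_g=0$; since the square root is monotone and every term attains its minimum simultaneously at the truth, $Q(\theta,\alpha) \geq Q(\theta^0,\alpha^0)$ with equality iff $A_g(\theta,\alpha)=0$ for all $g$. Forcing $A_g=0$ makes $(x_{it}-\mathbb{E}[x_{it}\mid g_i^0=g])'(\theta^0-\theta)$ vanish a.s.; summing the resulting quadratic form over groups and time produces $(\theta^0-\theta)'\Sigma(\theta^0-\theta)=0$ with $\Sigma$ the matrix of Assumption \ref{as:infeas}(e), whose positive minimum eigenvalue gives $\theta=\theta^0$, and then $\alpha_{gt}=\alpha^0_{gt}$ follows termwise. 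This yields the unique minimizer.

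Finally I would pass uniform convergence through the square root. Because $Q_g \geq C_g > 0$ uniformly, the bound $|\sqrt{\widetilde{Q}_g} - \sqrt{Q_g}| \leq |\widetilde{Q}_g - Q_g|/(\sqrt{\widetilde{Q}_g} + \sqrt{Q_g}) \leq |\widetilde{Q}_g - Q_g|/\sqrt{C_g}$ transfers the uniform LLN to the square roots; combining with $P_g\to_p P_g^0$ and finiteness of $G$ gives $\sup|\widehat{Q} - Q|\to_p 0$. The three ingredients then deliver $\widetilde{\theta} \to_p \theta^0$ and $\widetilde{\alpha}\to_p\alpha^0$. The main obstacle I anticipate is the square-root nonlinearity: I must verify the criterion is bounded away from zero (via the irreducible noise $C_g>0$) so that both the uniform-convergence transfer and the identification-via-monotonicity argument go through cleanly; the remainder is a routine, if bookkeeping-heavy, application of the uniform LLN and the full-rank condition.
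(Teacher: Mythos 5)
Your proposal is correct and follows essentially the same route as the paper's proof: uniform convergence of $\widetilde{Q}_g$ to $Q_g$ via a uniform LLN, transfer through the square root, convergence of the group frequencies $P_g$, and unique population minimization obtained by expanding the quadratic, killing the cross terms by exogeneity, and invoking the minimum-eigenvalue condition of Assumption \ref{as:infeas}($e$). The only cosmetic difference is the square-root transfer: you bound $\vert\sqrt{\widetilde{Q}_g}-\sqrt{Q_g}\vert$ by $\vert\widetilde{Q}_g-Q_g\vert/\sqrt{C_g}$ using the irreducible noise floor $C_g>0$, whereas the paper uses the bound $\vert\sqrt{a}-\sqrt{b}\vert\leq\sqrt{\vert a-b\vert}$, which avoids needing a lower bound on the criterion; both are valid.
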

The result follows from the consistency of M-estimators due to the definition of $\left(\widetilde{\theta},\widetilde{\alpha}\right)$ as minimizers of $\widetilde{Q}$, uniform weak converge of $\widetilde{Q}$ to $Q$ and the property that the unique minimizer of $Q$ is the true parameter values.

With $\sqrt{N}$-consistency of the infeasible WGFE estimator, I turn to the asymptotic distribution of the estimator. Under a few more standard assumptions, the infeasible WGFE estimator is asymptotically normally distributed as $N,T \to \infty$. This follows from the standard asymptotic normality theorem for M-estimators. First, define the estimator for the group variance when groupings are known as $\widetilde{\sigma}_{g_i^0}^2(\widetilde{\theta},\widetilde{\alpha})$. Since the infeasible estimator is consistent then $\widetilde{\sigma}_{g_i^0}^2(\widetilde{\theta},\widetilde{\alpha}) \to_p \sigma_{g_i^0}^2$ as $N$ approaches infinity using a weak law of large numbers. The following assumptions are sufficient to characterize the asymptotic distribution of $\left(\widetilde{\theta},\widetilde{\alpha}\right)$.\\

\begin{assump}\label{asym:infeas}\textnormal{(Normality of infeasible WGFE).}
\begin{itemize}
	\item[a.] For all $i,j = 1,\dots,N$ and $t = 1,\dots,T$: $\Ex{x_{jt} u_{it}} = 0$.

	\item[b.] There exists positive definite matrices $B_\theta$ and $V_\theta$ such that, as $N,T \to \infty$
\begin{gather*}
	 \frac{1}{NT} \sum_{i=1}^N \sum_{t=1}^T \widetilde{\sigma}_{g_i^0}^{-1}(\widetilde{\theta},\widetilde{\alpha})(x_{it} - \overline{x}_{g_i^0 t})(x_{it} - \overline{x}_{g_i^0 t})'
\longrightarrow_p B_\theta\\
\Ex{\frac{1}{NT}\sum_{i=1}^N \sum_{j=1}^N \sum_{t=1}^T \sum_{s = 1}^T
\left[\widetilde{\sigma}_{g_i^0}(\widetilde{\theta},\widetilde{\alpha}) \widetilde{\sigma}_{g_j^0}(\widetilde{\theta},\widetilde{\alpha})\right]^{-1}(x_{it} - \overline{x}_{g_i^0 t})(x_{js} - \overline{x}_{g_j^0 s})'u_{it}u_{js}} \longrightarrow V_\theta.
\end{gather*}
\item[c.] As $N,T \to\infty$, $
	\frac{1}{\sqrt{NT}} \sum_{i=1}^N \sum_{t=1}^T \sigma_{g_i^0}^{-1}(x_{it} - \overline{x}_{g_i^0 t})u_{it} \to_d N(0,V_\theta).
$
\item[d.] For all $g$ and $t$ and as $N \to \infty$,
$
\frac{1}{N}\sum_{i=1}^N\sum_{j=1}^N \Ex{\mathds{1}\{g_i^0 = g\}\mathds{1}\{g_j^0 = g\} u_{it}u_{jt}} \rightarrow v_{gt} > 0.
$
\item[e.] For all $g$ and $t$ and as $N \to \infty$,
$
\frac{1}{\sqrt{N}}\sum_{i=1}^N \mathds{1}\{g_i^0 = g\}u_{it} \longrightarrow_d N(0,v_{gt}).
$
\end{itemize}
\end{assump}
Assumption \ref{asym:infeas}($a$) allows covariates that are strictly exogenous and also time lagged outcomes. The following establishes the infeasible WGFE estimator is asymptotically normal. 
\begin{theorem}
Suppose Assumptions \ref{as:infeas} and \ref{asym:infeas} hold. As $N$ and $T$ approach infinity,
\[
\sqrt{NT} (\widetilde{\theta} - \theta^0) \longrightarrow_d N(0, B_\theta^{-1} V_\theta B_\theta^{-1})
\]
and, for all $g = 1,\dots,G$ and $t = 1,\dots, T$,
\[
\sqrt{N}(\widetilde{\alpha}_{gt} - \alpha_{gt}^0) \longrightarrow_d N(0,v_{gt}/\left[\mathbb{P}(g_i^0 = g)\right]^2).
\]
\end{theorem}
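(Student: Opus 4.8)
The plan is to treat $(\widetilde\theta,\widetilde\alpha)$ as an M-estimator and exploit the first-order conditions \eqref{foc:theta}--\eqref{foc:alpha}, which for known memberships reduce the problem to a weighted least squares analysis with group-specific weights $\widetilde\sigma_{g_i^0}^{-1}(\widetilde\theta,\widetilde\alpha)$. Consistency of $(\widetilde\theta,\widetilde\alpha)$ is in hand from Theorem \ref{prop:infeasiblecons}, and consistency of the plug-in weights $\widetilde\sigma_{g}^2(\widetilde\theta,\widetilde\alpha)\to_p\sigma_g^2$ follows by a weak law of large numbers. The idea is then to show that, to first order, the estimator behaves exactly as if the weights $\sigma_g^{-1}$ were known, so the limit is that of a feasible weighted least squares estimator.

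First I would derive the closed form for $\widetilde\theta$. Concentrating out $\alpha$ via \eqref{wgfe:alpha} and substituting $y_{it}=x_{it}'\theta^0+\alpha_{g_i^0 t}^0+u_{it}$, the group fixed effects cancel under within-group demeaning, giving
\begin{equation*}
\widetilde\theta-\theta^0 = B_{NT}^{-1}\,\frac{1}{NT}\sum_{i=1}^N\sum_{t=1}^T \widetilde\sigma_{g_i^0}^{-1}(\widetilde\theta,\widetilde\alpha)\,(x_{it}-\overline{x}_{g_i^0 t})(u_{it}-\overline{u}_{g_i^0 t}),
\end{equation*}
where $B_{NT}$ is the weighted within-group second-moment matrix of the regressors. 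Two simplifications follow. Since the weight $\widetilde\sigma_{g_i^0}^{-1}$ is constant within each group and $\sum_{i:g_i^0=g}(x_{it}-\overline{x}_{gt})=0$ by construction of the group mean, the contribution of the demeaning term $\overline{u}_{g_i^0 t}$ is exactly zero. Next I would replace the estimated weight by its probability limit: writing $\widetilde\sigma_{g}^{-1}=\sigma_g^{-1}+(\widetilde\sigma_g^{-1}-\sigma_g^{-1})$, the remainder factors as $\sum_{g}(\widetilde\sigma_g^{-1}-\sigma_g^{-1})\big[(NT)^{-1/2}\sum_{i:g_i^0=g,t}(x_{it}-\overline{x}_{gt})u_{it}\big]$, an $o_p(1)\cdot O_p(1)$ product that is negligible. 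Assumption \ref{asym:infeas}(b) gives $B_{NT}\to_p B_\theta$, and Assumption \ref{asym:infeas}(c) delivers $(NT)^{-1/2}\sum_{i,t}\sigma_{g_i^0}^{-1}(x_{it}-\overline{x}_{g_i^0 t})u_{it}\to_d N(0,V_\theta)$; Slutsky then yields $\sqrt{NT}(\widetilde\theta-\theta^0)\to_d N(0,B_\theta^{-1}V_\theta B_\theta^{-1})$.

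For the group effects, I would combine \eqref{wgfe:alpha} with the demeaned model to obtain the exact decomposition $\widetilde\alpha_{gt}-\alpha_{gt}^0 = -\overline{x}_{gt}'(\widetilde\theta-\theta^0)+\overline{u}_{gt}$. Scaling by $\sqrt N$, the first term is $O_p(1)\cdot\sqrt N(\widetilde\theta-\theta^0)=O_p(T^{-1/2})=o_p(1)$ because $\sqrt{NT}(\widetilde\theta-\theta^0)=O_p(1)$ and $T\to\infty$; this is precisely where large $T$ is needed. For the second term, writing $\sqrt N\,\overline{u}_{gt}=\big[N/\sum_i\mathds{1}\{g_i^0=g\}\big]\cdot N^{-1/2}\sum_i\mathds{1}\{g_i^0=g\}u_{it}$, the bracket converges in probability to $1/\mathbb{P}(g_i^0=g)$ by the law of large numbers and Assumption \ref{asym:infeas}(e) gives the central term $\to_d N(0,v_{gt})$, so a further application of Slutsky produces $\sqrt N(\widetilde\alpha_{gt}-\alpha_{gt}^0)\to_d N\!\big(0,v_{gt}/[\mathbb{P}(g_i^0=g)]^2\big)$.

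The main obstacle is controlling the plug-in weights $\widetilde\sigma_{g_i^0}^{-1}(\widetilde\theta,\widetilde\alpha)$, which makes the estimating equations implicit and is the genuine departure from the pooled-least-squares / GFE argument. The structural feature that resolves this is that the weights are constant within groups: this makes the demeaning-of-$u$ term vanish identically and lets the weight-estimation error factor out as a scalar $o_p(1)$ times an $O_p(1)$ average, so the first-order asymptotics coincide with those of known-weight weighted least squares.
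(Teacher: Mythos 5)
Your proposal is correct and follows essentially the same route as the paper: write the estimator in closed form with plug-in weights $\widetilde\sigma_{g_i^0}^{-1}$, show these converge to $\sigma_{g_i^0}^{-1}$ via consistency of $(\widetilde\theta,\widetilde\alpha)$ and a law of large numbers so the weight-estimation error is $o_p(1)\cdot O_p(1)$, and then invoke Assumptions \ref{asym:infeas}($b$),($c$) for $\widetilde\theta$ and \ref{asym:infeas}($e$) plus Slutsky for $\widetilde\alpha_{gt}$. Your observations that the $\overline{u}_{g_i^0 t}$ term vanishes identically because the weights are constant within groups, and that $\sqrt N\,\overline{x}_{gt}'(\widetilde\theta-\theta^0)=O_p(T^{-1/2})$ is where large $T$ enters, are consistent with (and slightly more explicit than) the paper's treatment.
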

The form of the asymptotic variance of the infeasible estimator is different than the pooled least squares case in that observations are weighted according to the noise in their group. It appears similar to weighted least squares however the weight is the standard deviation instead of the variance.

\subsection{Consistency of WGFE}
This section establishes conditions for the consistency of the WGFE estimator $\widehat{\theta}$ of $\theta^0$. Remarkably these assumptions are identical to those made in \cite{bm:2015} for consistency of the GFE estimator.\\[1cm]

\begin{assump}\label{as:con}\textnormal{(Consistency of parameters).} There exists $M >0$ as in Assumption \ref{as:infeas} such that
	\begin{itemize}

		\item[$a.$] ${\textstyle \left\vert \frac{1}{NT} \sumnt \sum_{s=1}^T\Ex{u_{it}u_{is} x_{it}'x_{is}}\right\vert \leq M }$.
		
		\item[$b.$] ${\textstyle \frac{1}{N}\sum_{i=1}^N \sum_{j=1}^N \left\vert  \frac{1}{T}\sum_{t=1}^T \Ex{v_{it}v_{jt}} \right\vert \leq M}$.

		\item[$c.$] ${\textstyle \left\vert \frac{1}{N^2 T} \sum_{i=1}^N \sum_{j=1}^N\sum_{t=1}^T \sum_{s=1}^T \textnormal{Cov}(v_{it}v_{jt},v_{is}v_{js}) \right\vert \leq M}$.

		\item[$d.$] Let $\overline{x}_{g\wedge \gt t}$ denote the average of $x_{it}$ in the intersection of groups $g_i^0 =g$ and $g_i = \gt$. For all $g = 1,\dots,G$, and for any $\widetilde{\gamma} =(\widetilde{g}_1, \dots, \widetilde{g}_N) \in \Gamma_G^N$
define $\widehat{\rho}(\widetilde{\gamma})$ as the minimum eigenvalue of the matrix
\[
 \frac{1}{NT}\sumnt  \left( x_{it} -  \overline{x}_{g_i \wedge g^0 t}\right)\left( x_{it} -  \overline{x}_{g_i\wedge g^0 t}\right)'.
\]
Then, $\min_{\gamma\in\Gamma_G}\{\widehat{\rho}(\gamma)\} \to_p \rho > 0$ as $N,T \to \infty$.
	\end{itemize}
\end{assump}

To summarize from previous work, weak dependence conditions in the form of Assumptions \ref{as:con}($b,c,d$) are imposed. These assumptions are related to those made in \cite{stock:2002} and \cite{bai:2002} on large factor models. Assumption \ref{as:con} ($b$) allow for lagged outcomes or any predetermined regressor as covariates e.g. $\Ex{u_{it}\vert x_{it},x_{it-1},u_{it-1}} = 0$. Assumptions \ref{as:con} ($b$) and $(d)$ are conditions on time series dependence of the errors and covariates while Assumption \ref{as:con}($c$) limits cross-sectional dependence. Assumption \ref{as:con} ($e$) is akin to Assumption \ref{as:infeas} ($e$), so it is similar to a full rank condition found in other linear regression models. This sample version is stronger since it requires sufficient variation in the covariates across time and individuals within groups generated by \emph{any} grouping scheme $\widetilde{\gamma}\in \Gamma_G^N$. 
\begin{theorem}\label{prop:conswgfe}
	Suppose Assumption \ref{as:infeas} and \ref{as:con} hold. Then, as $N,T \to \infty$,  
\[\widehat{\theta} \to_p \theta^0 \>\>\>\>\> \text{and} \>\>\>\>\>
\frac{1}{NT} \sumnt \left( \widehat{\alpha}_{\widehat{g}_i t} - \alpha_{g_i^0 t}^0 \right)^2 \longrightarrow_p 0.
\] 
\end{theorem}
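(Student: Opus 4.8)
The plan is to adapt the consistency argument of \cite{bm:2015} to the weighted criterion \eqref{wgfe:obj}. Because the labels $\gamma$ are optimized jointly with $(\theta,\alpha)$, standard M-estimation does not apply directly; instead I would obtain both conclusions at once from a single bound on the ``bias energy''
\[
\frac{1}{NT}\sum_{i=1}^N\sum_{t=1}^T r_{it}^2 \longrightarrow_p 0, \qquad r_{it} := x_{it}'(\theta^0-\widehat\theta) + \big(\alpha^0_{g_i^0 t} - \widehat\alpha_{\widehat g_i t}\big).
\]
The starting point is the optimality inequality $\widehat{Q}^{WGFE}(\widehat\theta,\widehat\alpha,\widehat\gamma) \le \widehat{Q}^{WGFE}(\theta^0,\alpha^0,\gamma^0)$. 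Writing the residuals at the truth as $u_{it}$, the right-hand side equals $\sum_{g} P_g\sqrt{(TN_g^0)^{-1}\sum_{i:g_i^0=g}\sum_t u_{it}^2}$, which by the weak law of large numbers (using $\Ex{u_{it}^4}<M$ and $\sum_i \mathds{1}\{g_i^0=g\}\to\infty$ from Assumption \ref{as:infeas}) converges in probability to $S_0 := \sum_{g=1}^G P_g^0\,\sigma_g$.

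The step with no GFE analogue is a matching lower bound on the left-hand side. On the estimated cell $\{\,i:\widehat g_i=g\,\}$ I would decompose the within-cell mean square as $\widehat Q_g = A_g + C_g + D_g$, where $A_g$ is the pure-noise part $(TN_g)^{-1}\sum u_{it}^2$, $C_g$ the cross term in $r_{it}u_{it}$, and $D_g\ge 0$ the bias energy $(TN_g)^{-1}\sum r_{it}^2$. Since the group variances are bounded away from zero and above (Assumption \ref{as:infeas}(b)--(d)), each $A_g$ is bounded away from $0$, so a second-order expansion of the square root about $A_g$ is legitimate (its remainder being negligible once a crude Cauchy--Schwarz bound shows the bias energy is $O_p(1)$) and yields $\widehat{Q}^{WGFE}(\widehat\theta,\widehat\alpha,\widehat\gamma) = \sum_g P_g^{\widehat\gamma}\sqrt{A_g} + \tfrac12 (NT)^{-1}\sum_{i,t} A_{\widehat g_i}^{-1/2}\big(2 r_{it}u_{it}+r_{it}^2\big) + o_p(1)$. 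The crucial claim is that the pure-noise piece satisfies $\sum_g P_g^{\widehat\gamma}\sqrt{A_g}\ge S_0 - o_p(1)$ \emph{uniformly over all groupings}: as $T\to\infty$ the law of large numbers gives $T^{-1}\sum_t u_{it}^2\to_p \sigma_{g_i^0}^2$, so $A_g$ is asymptotically the average of $\sigma_{g_i^0}^2$ over the cell, and concavity of the square root (Jensen across the true-variance composition of each cell) forces $\sum_g P_g^{\widehat\gamma}\sqrt{A_g}\ge N^{-1}\sum_i \sigma_{g_i^0}\to S_0$, with the floor attained only by variance-homogeneous cells. Combining this with the optimality bound, the weighted bias--cross term is $\le o_p(1)$.

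Because the weights $A_{\widehat g_i}^{-1/2}$ are bounded, this gives $(NT)^{-1}\sum_{i,t} r_{it}^2 \le 2\,\big|(NT)^{-1}\sum_{i,t} A_{\widehat g_i}^{-1/2} r_{it}u_{it}\big| + o_p(1)$, and the cross term I would control exactly as in \cite{bm:2015}: splitting $r_{it}$ into its $x_{it}'(\theta^0-\widehat\theta)$ part and its $(\alpha^0_{g_i^0 t}-\widehat\alpha_{\widehat g_i t})$ part, the first is $\|\theta^0-\widehat\theta\|$ times $(NT)^{-1}\sum A^{-1/2}x_{it}u_{it}=o_p(1)$ (bounded weights and Assumption \ref{as:con}(a)), while the second is handled by the weak-dependence Assumptions \ref{as:con}(b,c) together with the fact that $\alpha^0_{g_i^0 t}-\widehat\alpha_{\widehat g_i t}$ takes finitely many values per period, so a uniform bound over the $G^N$ groupings reduces to controlling the cell sums $\sum_{i}\mathds{1}\{g_i^0=g,\widehat g_i=h\}u_{it}$. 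This delivers a bound of order $o_p(1)+o_p\big(\sqrt{(NT)^{-1}\sum r_{it}^2}\big)$, whence $(NT)^{-1}\sum_{i,t} r_{it}^2\to_p 0$.

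Finally, both stated conclusions drop out as in GFE. Demeaning within the intersection cells $\{g_i^0=g\}\cap\{\widehat g_i=h\}$ annihilates $\alpha^0_{g_i^0 t}-\widehat\alpha_{\widehat g_i t}$, and Assumption \ref{as:con}(d) gives $(NT)^{-1}\sum r_{it}^2 \ge \widehat\rho(\widehat\gamma)\,\|\widehat\theta-\theta^0\|^2\ge \rho\,\|\widehat\theta-\theta^0\|^2 + o_p(1)$ with $\rho>0$, so $\widehat\theta\to_p\theta^0$; the stated average convergence of $\widehat\alpha_{\widehat g_i t}$ then follows by the triangle inequality together with $\Ex{\norm{x_{it}}^4}<M$. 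The main obstacle is the second paragraph: unlike the GFE criterion, whose noise floor $\frac{1}{NT}\sum u_{it}^2$ is grouping-free and makes the optimality inequality collapse immediately, the square-root weighting couples the objective to the estimated partition, so the crux is the uniform-over-groupings lower bound $\sum_g P_g^{\gamma}\sqrt{\widehat Q_g}\ge S_0-o_p(1)$, which in turn hinges on a law of large numbers for $T^{-1}\sum_t u_{it}^2$ that holds uniformly across $i$.
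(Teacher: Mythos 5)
Your architecture is the paper's: start from the optimality inequality, strip the cross terms in $r_{it}u_{it}$ uniformly over groupings (the paper's Lemma \ref{uwcon:aux}), establish a uniform-over-groupings lower bound on the pure-noise floor $\sum_g P_g^{\gamma}\sqrt{A_g}\ge S_0-o_p(1)$ (the paper's Lemma \ref{con:barycon}), and convert the surviving bias energy into $\|\widehat\theta-\theta^0\|^2$ via the intersection-cell eigenvalue condition of Assumption \ref{as:con}(d) (the paper's Lemma \ref{lb:aux}); your Taylor expansion of $\sqrt{\cdot}$ about $A_g$ is just the identity $\sqrt{a}-\sqrt{b}=(a-b)/(\sqrt{a}+\sqrt{b})$ that the paper uses with exact remainder, and your triangle-inequality finish for the $\widehat\alpha$ claim is if anything cleaner than the paper's second pass through $\widetilde{Q}(\theta^0,\widehat\alpha,\widehat\gamma)$. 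The one genuine divergence is in how the noise-floor inequality is justified. You prove it by Jensen across the true-variance composition of each estimated cell, which is a real improvement in transparency: the paper's Lemma \ref{con:barycon} simply asserts that $\gamma^0$ minimizes $Q(\theta^0,\alpha^0,\cdot)$ and your concavity argument is exactly the missing reason. But your route to making this uniform leans on a law of large numbers for $T^{-1}\sum_t u_{it}^2$ holding uniformly across $i$, and that is not available under Assumptions \ref{as:infeas} and \ref{as:con} alone -- the exponential tail and mixing conditions that would deliver it appear only in Assumption \ref{as:ae}, which this theorem does not invoke. The paper sidesteps this by never taking per-$i$ limits: Lemma \ref{sig:con} uses Chebyshev on cell averages, and the uniformity over the $G^N$ partitions is absorbed into Cauchy--Schwarz bounds on cross-sectional averages of $u_{it}u_{jt}$ via Assumption \ref{as:con}(b,c). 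You should reroute your second paragraph the same way (bound $A_g$ away from zero and control its fluctuation at the level of cell averages, not individual time series); with that repair the argument goes through and matches the paper's.
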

The proof largely follows that of \cite{bm:2015} with additional algebraic steps taken given the additional complexity of the WGFE criterion.

\subsection{Consistency of group assignments}

In this section I discuss the properties of the WGFE estimator for group membership in a large $N$ and $T$ setting. I refer to the assignment rules \eqref{rule:wgfe2} and $\eqref{gfe:assignment}$ as the WGFE and GFE assignments, respectively. I will first present the simple case of two groups for intuition and then will move to the general case where I provide conditions for any number of groups $G$ and with covariates.

\subsubsection{Simple Case}\label{gcons:simple} Consider a simplification of the main model where there are no covariates, $G= 2$ and errors are independent, but drawn from different normal distributions depending on $g_i^0$ with variance $\sigma_{g_i^0}^2$. Formally,
\[
	y_{it} = \alpha_{g_i^0}^0 + u_{it}, \>\>\> u_{it} \sim N(0,\sigma_{g_i^0}^2), \>\>\> u_{it}\ind u_{jt} \text{ for }i\neq j.
\]
Assume that $\sigma_1 \geq \sigma_2$. The probability of incorrectly assigning an individual into group 2 while they belong to group 1 using the WGFE estimator is 
\begin{align}
	\mathbb{P}(\widehat{g}_i(\alpha^0)) = 2 \vert g_i^0 =1) &= \mathbb{P} \left( \frac{1}{\sigma_2} \sum_{t=1}^T \left(\alpha_1^0 + u_{it} - \alpha_2^0\right)^2 + \sigma_2 
< \frac{1}{\sigma_1} \sum_{t=1}^T \left(\alpha_1^0 + u_{it} - \alpha_1^0\right)^2 + \sigma_1 \right)\label{wgfe:misclass} \\
&= \mathbb{P} \left(  \sum_{t=1}^T \left(u_{it}^2 + 2\left(\alpha_1^0 - \alpha_2^0\right)u_{it}\right) + T\left(\alpha_1^0 - \alpha_2^0\right) ^2
< \frac{\sigma_2}{\sigma_1} \sum_{t=1}^T u_{it}^2 + \left(\sigma_1 - \sigma_2\right)\sigma_2 \right)\notag\\
&= \mathbb{P} \left(\left(1- \frac{\sigma_2}{\sigma_1}\right) \sum_{t=1}^T u_{it}^2 + 2\left(\alpha_1^0 - \alpha_2^0\right)\sum_{t=1}^Tu_{it} < \left(\sigma_1 - \sigma_2\right)\sigma_2 - T\left(\alpha_1^0 - \alpha_2^0\right) ^2\right)\notag
\end{align}
This expression on the left hand side is the distribution function of some generalized chi-squared distribution and has no closed-form; see \cite{davies:1980}. Note that if $\sigma_1 = \sigma_2$ then this is the example for GFE estimation in \cite{bm:2015}. In the case that $\alpha_{1}^0 = \alpha_{2 }^0$ are equal:
\[
	\mathbb{P}\left(\widehat{g}_i(\alpha^0)\right) = 2 \vert g_i^0 =1) =
	\mathbb{P} \left(\frac{\sum_{t=1}^T u_{it}^2 - T}{\sqrt{2T}} < \frac{\sigma_1\sigma_2 - T}{\sqrt{2T}} \right) 
	\approx \Phi\left(\frac{\sigma_1\sigma_2 - T}{\sqrt{2T}}\right) \longrightarrow 0
\]
as $T\to\infty$ where $\Phi$ is the standard normal cdf. However if $\sigma_2 >\sigma_1$ then this is no longer the case since this assignment rule can't distinguish between the smaller density with $\sigma_1$ and the larger density with $\sigma_2$ and always assigns to the larger density. Essentially the leading term is now negative and the inequality will reverse leading to 
\begin{equation}\label{assign:simplecase}
\mathbb{P}\left(\widehat{g}_i(\alpha^0)\right) = 2 \vert g_i^0 =1)= \mathbb{P} \left(\frac{\sum_{t=1}^T u_{it}^2 - T}{\sqrt{2T}} > \frac{\sigma_1\sigma_2 - T}{\sqrt{2T}} \right)  \approx 1- \Phi\left(\frac{\sigma_1\sigma_2 - T}{\sqrt{2T}}\right) \longrightarrow 1
\end{equation}
so despite the additional information in the assignment rule this demonstrates the need for separation of group fixed effects. 

There is no closed form for the misclassification probability \eqref{wgfe:misclass} so comparisons of asymptotic performance of assignments is given numerically. In Figure \ref{fig:tgroupbig}, group 1 is assumed to have a variance greater than or equal to group 2 in all the examples. In the top left plot we see again that GFE and WGFE assignment are equivalent when variances are identical. In the top right plot, when there is separation in both mean and variance, the probability of misclassification goes to zero at a faster rate for WGFE assignment. In the bottom plot there is weak separation in means and in variances and the WGFE significantly outperforms GFE assignment.
\begin{figure}[h]
\centering
\includegraphics[scale=0.25]{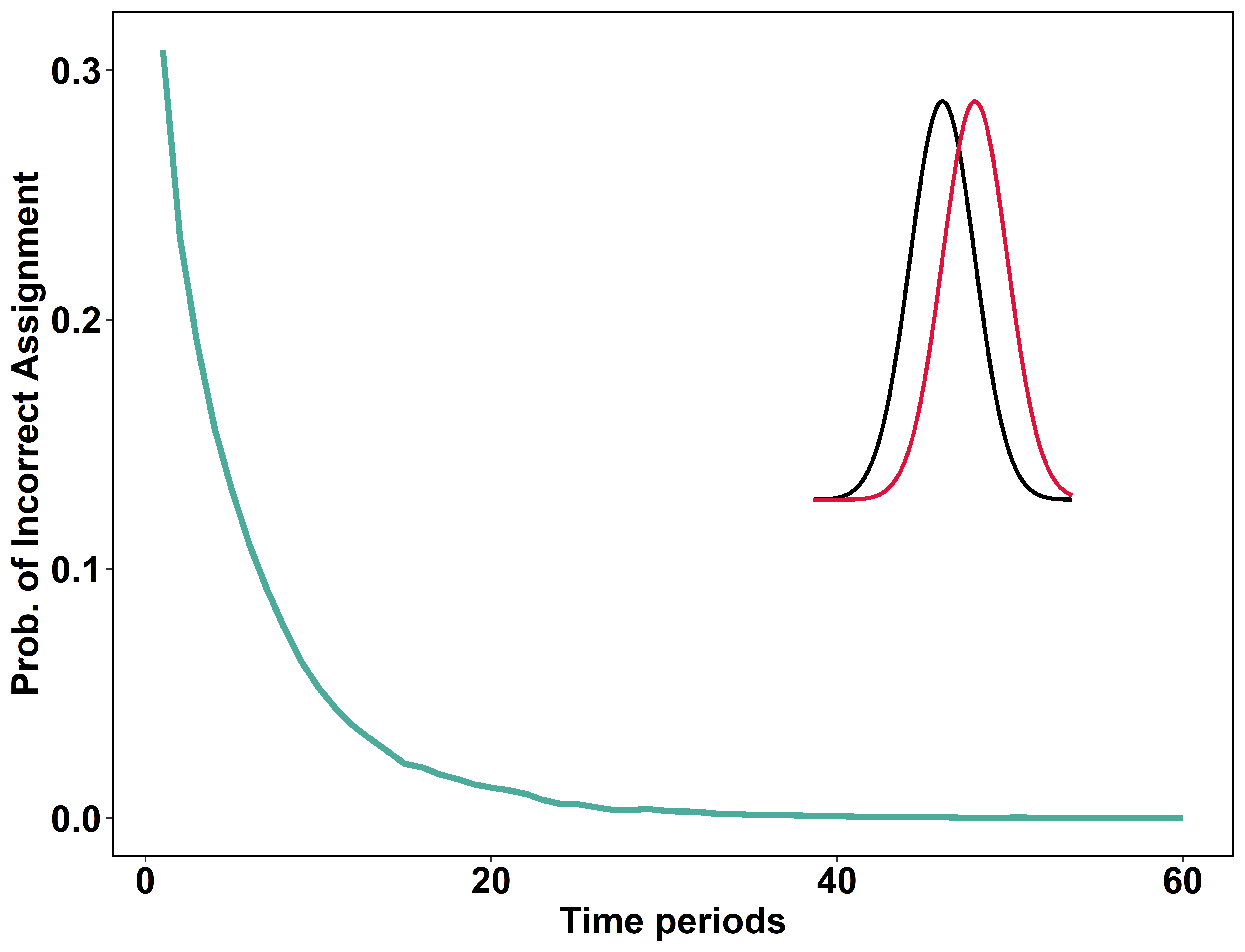}
\includegraphics[scale=0.25]{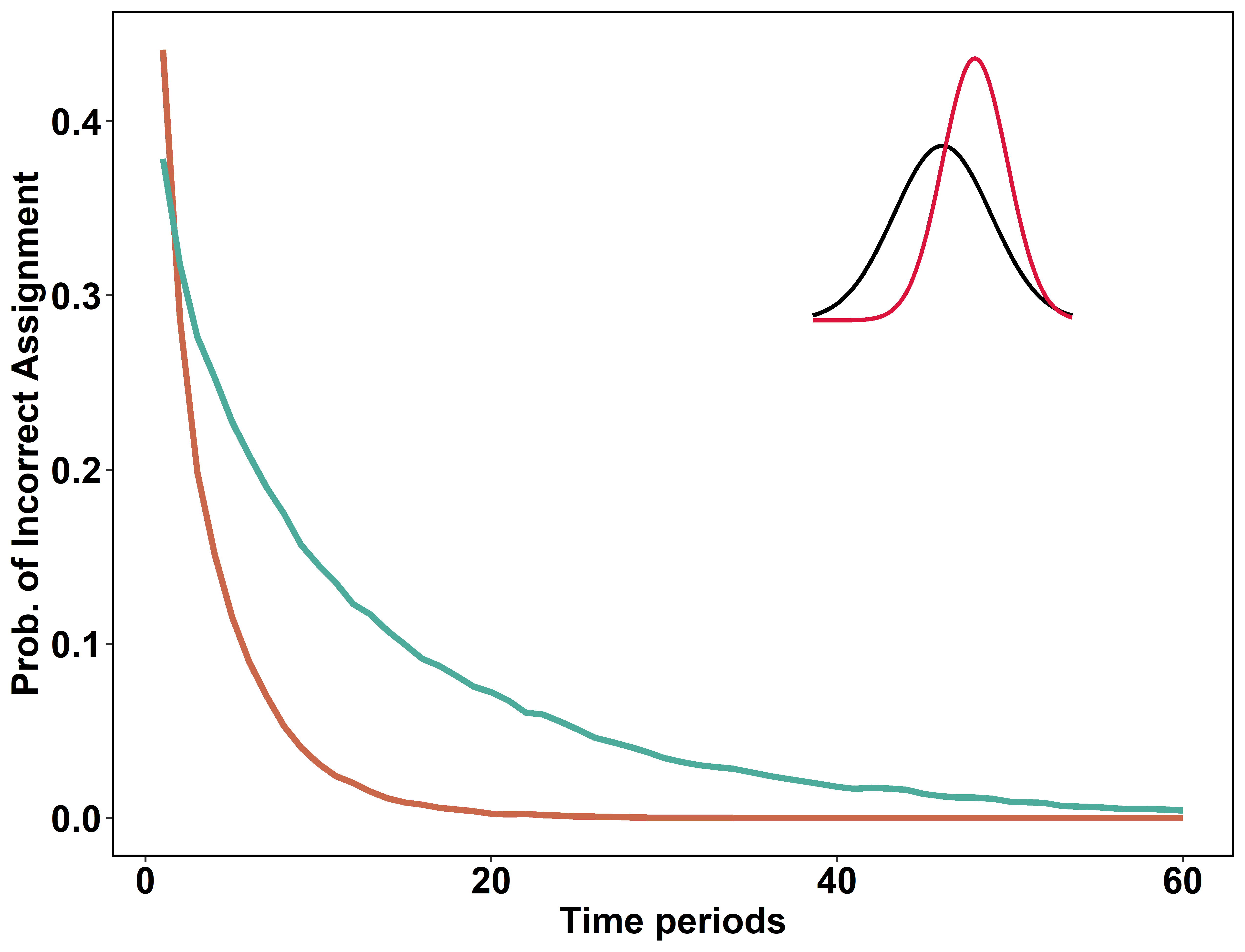}
\includegraphics[scale=0.25]{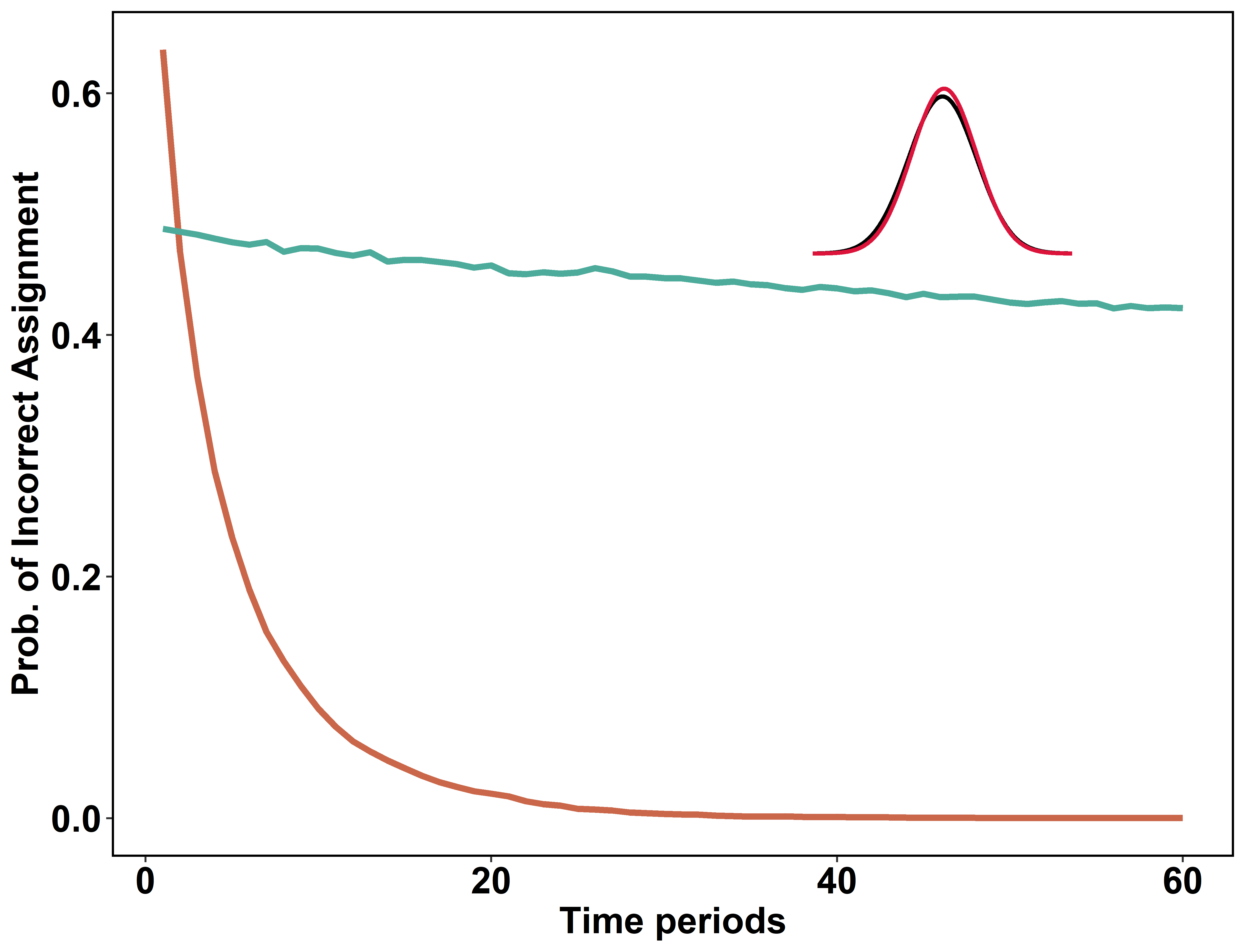}
\caption{WGFE (orange) and GFE (blue) probability of missassignment into group 2 while true group is group 1 as $T$ approaches infinity. Density in black is group 1 and density in red is group 2.}
\label{fig:tgroupbig}
\end{figure}
In the case that the true group variance is strictly less than the missclassified group variance, I run into a similar problem to \eqref{assign:simplecase}, however convergence is still possible. In Figure \ref{fig:tgroupsmall}, I fix $\sigma_1 = 1$, $\sigma_2 = 1.5$ and $\alpha_1^0 = 0$ and I take $\alpha_2^0 = 0.5,1,1.5,2$. When there is weak separation in the $\alpha$'s, misclassification is almost surely going to occur for large $T$ since the WGFE assignment can't distinguish between the two groups due to overlap and every point will be classified into the larger variance group. However, as there is more separation, classification improves and there will be convergence. Note that the GFE assignment also benefits from more separation, but the WGFE assignment benefits more in improving misclassification when the variance of the true group is smaller as it takes into account both first and second moment information. 

In this example of two groups, there will be a group with a smaller variance and so the misclassification probability is susceptible to the issues in Figure \ref{fig:tgroupsmall}. Therefore to achieve desirable simultaneous convergence properties of assignment to both groups, there needs to be additional restrictions on how separated the groups are. In light of the figures and the mathematical expressions, there is a first order importance on mean separation between groups. While variance separation lends to advantages for the WGFE assignment, if the means aren't far enough then this advantage ceases and becomes an obstacle in convergence for one side of group assignments. This is precisely the requirement of strong separability \eqref{strongsep} for identification of groups in WGFE estimation.
\begin{figure}[h]
\centering
\includegraphics[scale=0.25]{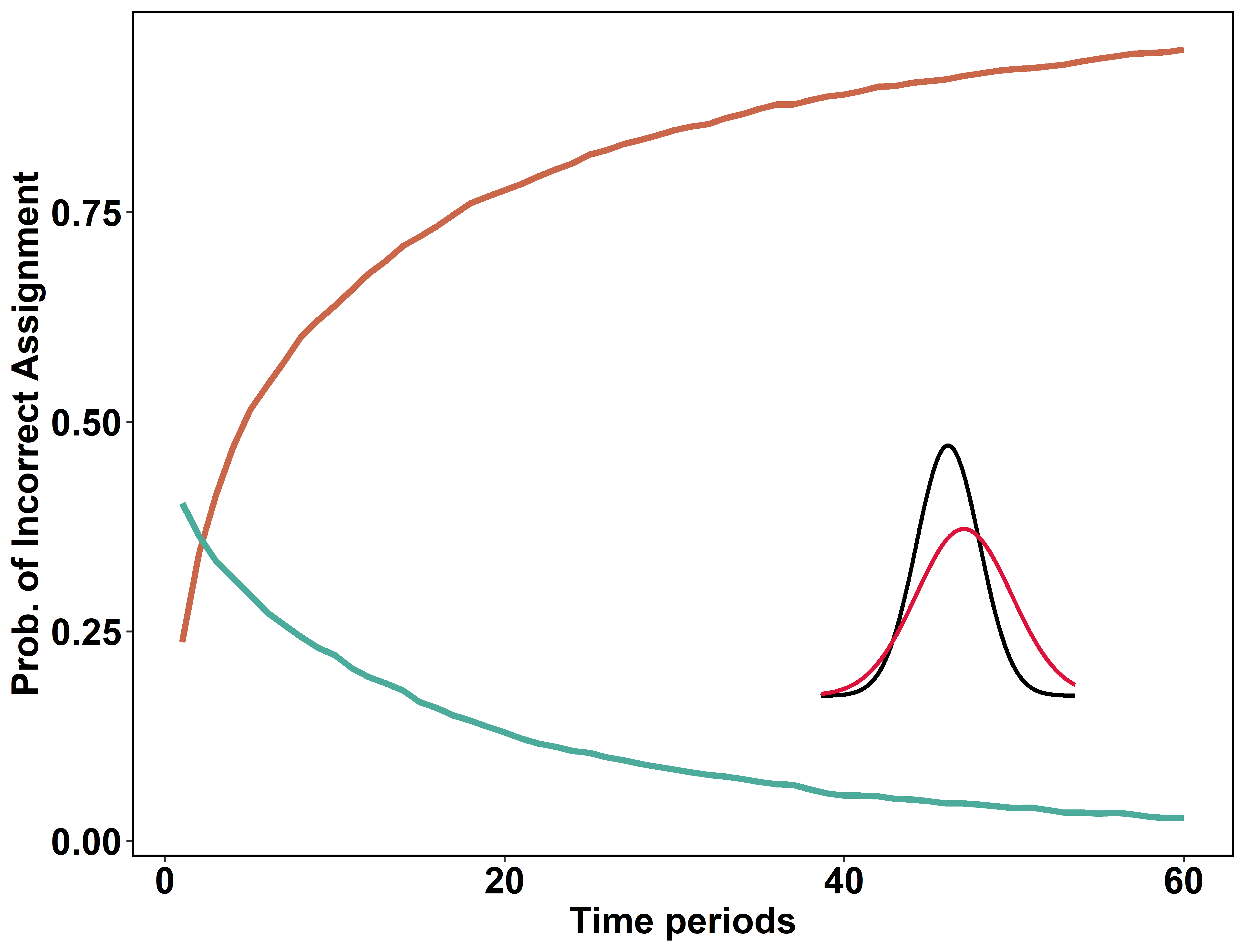}
\includegraphics[scale=0.25]{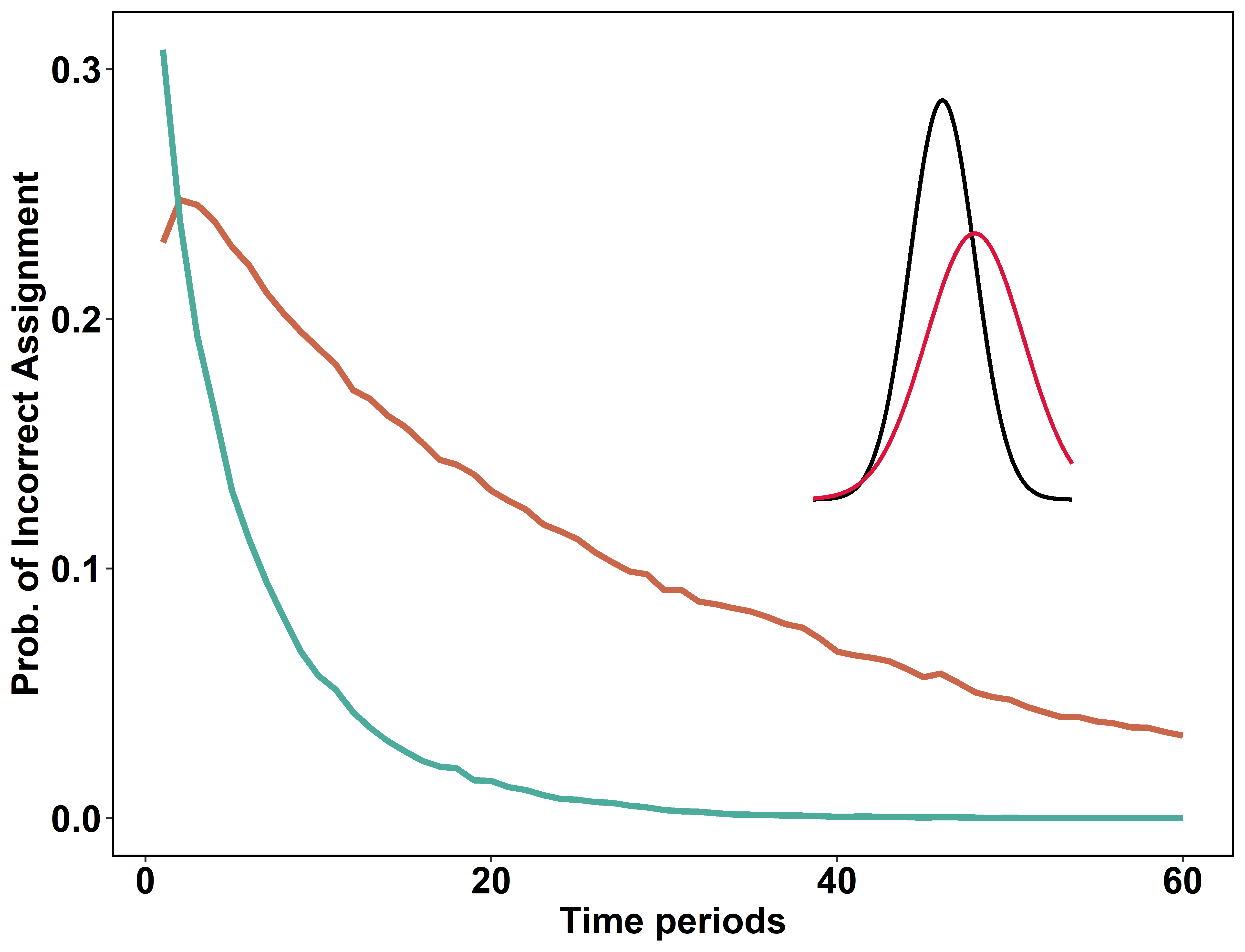}
\includegraphics[scale=0.25]{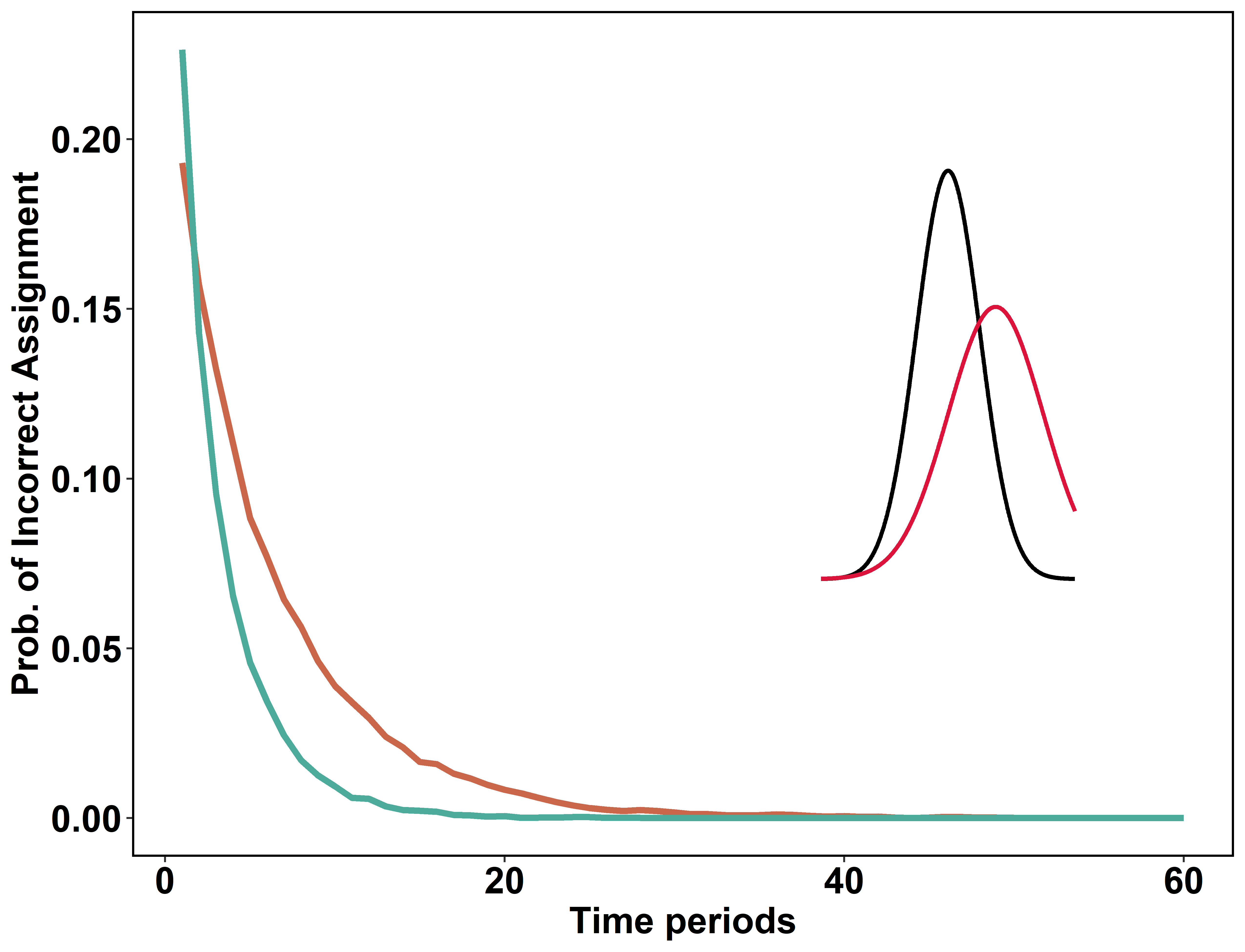}
\includegraphics[scale=0.25]{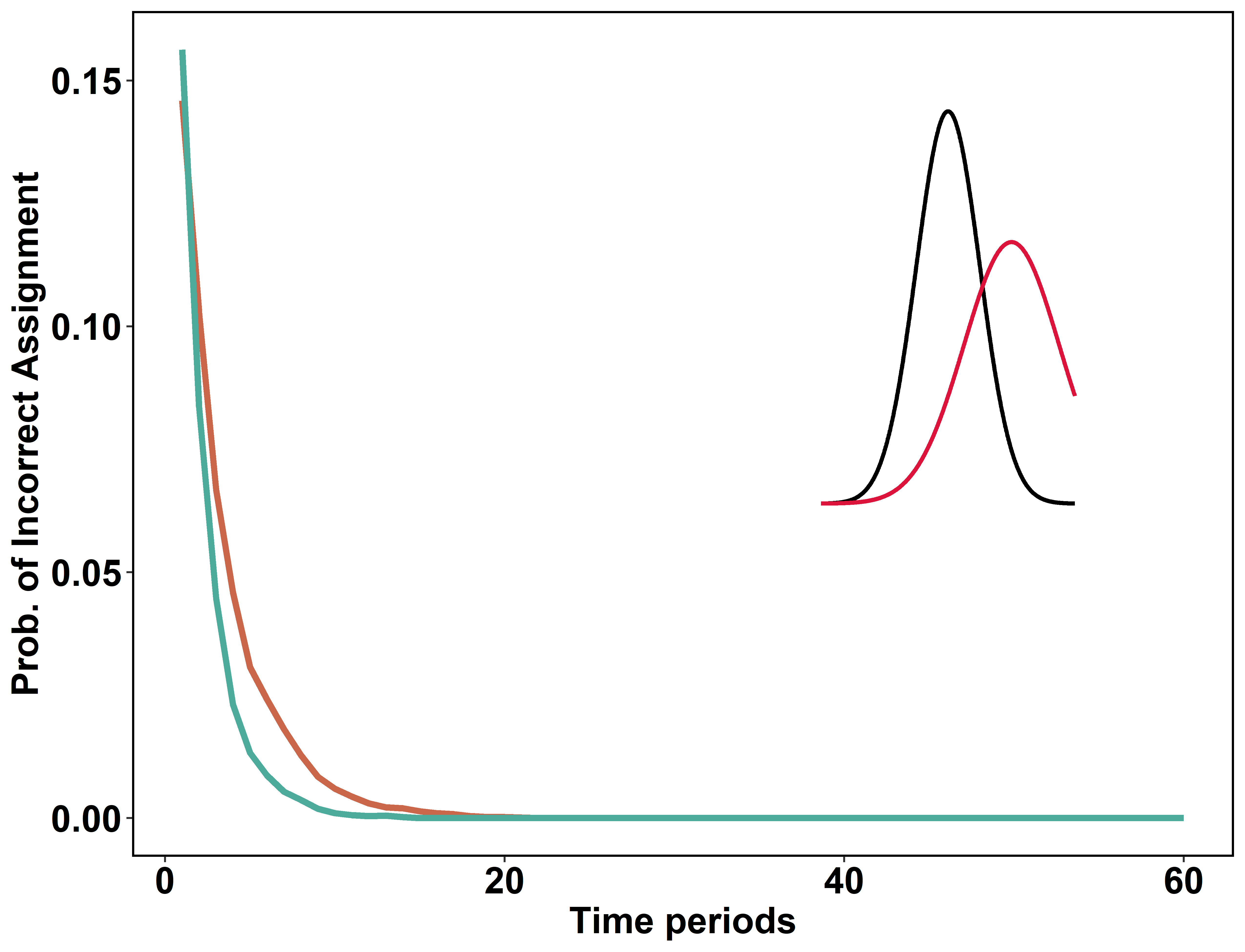}
\caption{WGFE (orange) and GFE (blue) probability of missassignment into group 2 while true group is group 1 as $T$ approaches infinity. Density in black is group 1 and density in red is group 2.}
\label{fig:tgroupsmall}
\end{figure}

\subsubsection{The General Case} 
Let $\eta > 0$ and define a neighborhood $\mathcal{N}_\eta$ around the true parameter values $\theta^0$ and $\alpha^0$ as the subset of $(\theta,\alpha)\subset \Theta \times \mathcal{A}^{GT}$ that satisfy $
\norm{\theta - \theta^0} < \eta$ and $T^{-1}\sum_{t=1}^T (\alpha_{gt}^0 - \alpha_{gt})^2 < \eta$ for all $g = 1,\dots,G$. Define
\begin{align*}
	\sigma_g^2(\theta,\alpha,\gamma) = \frac{1}{T}\sum_{t=1}^T\Ex{\mathds{1}\{g_i = g\} \left(u_{it} +x_{it}'(\theta^0 - \theta) + \alpha_{gt}^0 - \alpha_{gt} \right)^2}
\end{align*}
and denote the discrete probability measure induced by a random partition $\gamma$ as $\lambda(\gamma) \in \mathring{\Delta}^G$ where $\mathring{\Delta}^G$ is the interior of the $G$ dimensional probability simplex. The form of the objective function \eqref{wgfe:obj} prevents empty groups from being optimal hence the optimal grouping must lie in $\mathring{\Delta}^G$.

\phantom{a}
 
\begin{assump}\textnormal{(Consistency of group assignments).}\label{as:ae}
\begin{itemize}
	\item[$a$.] For all $g = 1,\dots,G$, ${\textstyle \frac{1}{N} \sum_{i=1}^N \mathds{1}\{g_i^0 = g\} \rightarrow_p P_g > 0}$; 

	\item[$b$.] There exists constants $a > 0$ and $d_1 > 0$ and a sequence $\alpha[t] \leq e^{-at^{d_1}}$ such that, for all $i = 1, \dots, N$ and $g = 1, \dots, G$, $\{u_{it}\}_t$ and $\{\alpha_{gt}^0\}_t$ are strongly mixing processes with mixing coefficients $\alpha[t]$. Moreover, $\Ex{\alpha_{gt}^0 u_{it}} = 0$ for all $g = 1, \dots, G$.

	\item[$c$.] There exists constants $b > 0$ and $d_2 > 0$ such that $\mathbb{P}(\vert u_{it} \vert > m) \leq e^{1- \left(\frac{m}{b}\right)^{d_2}}$ for all $i$, $t$, and $m>0$.

	\item[$d$.] There exists a constant $M^*>0$ such that, as $N,T$ tend to infinity I have
\[
	\sup_{i \in \{1, \dots, N\}} \mathbb{P}\left(\frac{1}{T}\sum_{t=1}^T \norm{x_{it}} \geq M^* \right) = o\left( T^{-\delta}\right) \>\>\> \text{ for all } \delta >0.
\]
	\item [$e$.] For all $g,\widetilde{g} = 1,\dots,G$ where $g\neq \gt$, 
\begin{align*}
\textnormal{plim}_{T\to\infty}\frac{1}{T} \sum_{t=1}^T \left(\alpha_{g t}^0 - \alpha_{\gt t}^0 \right)^2 &> c_{g,\gt} + \textnormal{plim}_{T\to\infty}\sup_{\lambda(\gamma) \in \mathring{\Delta}^G}\left\vert\frac{\sigma_g(\theta^0,\alpha^0,\gamma)- \sigma_{\gt}(\theta^0,\alpha^0,\gamma)}{\sigma_{\gt}(\theta^0,\alpha^0,\gamma)}\right\vert M\\
&= c_{g,\gt} + C_{g,\gt} M  >0
\end{align*}
where $M$ is the constant from Assumption \ref{as:infeas}.

\end{itemize}
\end{assump}

The assumptions made here do not deviate much from those in \cite{bm:2015} besides a strong separability assumption in part ($e$) that requires a specific bound depending on other features of the model. Note that is a stronger version of \eqref{strongsep} reflecting the fact that the group standard errors must also be estimated and it plays a role in showing the sample probability of misassigning individuals converges to zero. The denominator of $C_{g,\widetilde{g}}$ is bounded away from zero since the groupings are selected from the interior of the probability simplex.
 
In Assumptions \ref{as:ae}($b$) and ($c$) I strengthen restrictions on the dependence and tail properties of the error, respectively. I assume the error is strongly mixing with faster-than-polynomial decay rate and with tails that also decay faster than any polynomial. The mixing property is stronger than ergodicity and says that, eventually, the process will forget its history. Mathematically, for any stochastic process $X_t$ on some probability space $(\Sigma,\mathcal{F},\mathbb{P})$ define the function $\alpha[s]$ as a strongly mixing coefficient:
\[
	\alpha[s] = \sup\{|\mathbb{P}(A\cap B) - \mathbb{P}(A)\mathbb{P}(B)|: t\geq0, A \in X_0^t, B \in X_{t+s}^\infty\}
\]
where $X_a^b$ denote the sub $\sigma$-algebra of $\mathcal{F}$ specified between times $a$ and $b$. If $\alpha[s] \to 0$ as $s\to\infty$, then $X_t$ is said to be a strongly mixing process. I also assume that the group time-effects are strongly mixing and contemporaneously uncorrelated with the error $u_{it}$. These assumptions enable us to use exponential inequalities for weakly dependent processes and allows us to bound misclassification probabilities.\footnote{For more on asymptotic theory of weakly dependent process see \cite{rio:2017} and for useful exponential inequalities see \cite{merlevede:2011}. This condition for strongly mixing was established in \cite{rosenblatt:1956} to prove a central limit theorem.} Assumption \ref{as:ae} ($d$) is a condition on the distribution of the covariates. A sufficient condition would be if the covariates have bounded support or, alternatively, satisfy similar dependence and tail conditions as the error term. In their supplementary appendix, \cite{bm:2015} discuss in detail the latter condition since the strong mixing property may not hold with lagged outcomes.

The following proposition establishes an asymptotic equivalence of WGFE estimation to infeasible WGFE estimation and consistency of group membership as $N,T$ approach infinity, but $T$ may grow slower than $N$.

\begin{theorem}\label{prop:congroups}
	Suppose Assumptions \ref{as:infeas}, \ref{as:con} and \ref{as:ae} hold. Then, for all $\delta > 0$,
\begin{equation}
	\mathbb{P} \left( \sup_{i \in \{1,\dots , N} \vert \widehat{g}_{i} - g_i^0\vert >0 \right ) = o(1) + o(NT^{-\delta})
\end{equation}
and for all $g$ and $t$
\begin{gather}
\widehat{\theta} = \widetilde{\theta} + o_p(T^{-\delta})\\
\widehat{\alpha}_{gt}= \widetilde{\alpha}_{gt}+ o_p(T^{-\delta}).
\end{gather}
\end{theorem}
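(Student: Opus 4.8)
The plan is to follow the two–stage strategy of \cite{bm:2015}: first show that no individual is misclassified with probability tending to one fast enough, and then deduce the asymptotic equivalence of the feasible and infeasible estimators from the fact that they coincide on the high-probability event of perfect classification. Essentially all of the work lies in the classification bound; the equivalence is then almost immediate.

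First I would localize the estimated parameters. By Theorem \ref{prop:conswgfe}, $\widehat{\theta}\to_p\theta^0$ and $\frac{1}{NT}\sum_{i,t}(\widehat{\alpha}_{\widehat{g}_it}-\alpha^0_{g_i^0t})^2\to_p 0$, so for any fixed $\eta>0$ the event $\{(\widehat{\theta},\widehat{\alpha})\in\mathcal{N}_\eta\}$ has probability tending to one, which produces the $o(1)$ term and lets me restrict to parameters inside $\mathcal{N}_\eta$. I would then bound, uniformly over $(\theta,\alpha)\in\mathcal{N}_\eta$ and the induced grouping $\gamma$, the probability that an individual $i$ with $g_i^0=g$ is assigned to a wrong group $h\neq g$. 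Writing $v_i=y_i-x_i\theta$ and using the penalized rule \eqref{rule:wgfe2}, the misassignment event is that the penalized normalized distance to $\alpha_h$ falls below that to $\alpha_g$; expanding $\|v_i-\alpha_h\|^2$ as in the simple-case display \eqref{wgfe:misclass} and multiplying by $T$, this event reads
\[
\frac{\|\alpha_g^0-\alpha_h^0\|^2}{\sigma_h}+\Big(\tfrac{1}{\sigma_h}-\tfrac{1}{\sigma_g}\Big)\|u_i\|^2+\frac{2(\alpha_g^0-\alpha_h^0)'u_i}{\sigma_h}+(\sigma_h-\sigma_g)+R_{i}(\theta,\alpha) < 0,
\]
where $R_i$ gathers remainders that are $O(\eta T)$ uniformly. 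After dividing by $T$, Assumption \ref{as:ae}(e) — the strong-separability bound sharpening \eqref{strongsep} — forces the deterministic part (the squared mean gap minus the variance-mismatch term) to stay bounded away from zero uniformly over $\lambda(\gamma)\in\mathring{\Delta}^G$, once $\eta$ is small enough to absorb $R_i$.

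Consequently the misassignment event requires the mean-zero stochastic terms $\frac{1}{T}\sum_t(u_{it}^2-\sigma_g^2)$ and $\frac{1}{T}\sum_t(\alpha_{gt}^0-\alpha_{ht}^0)u_{it}$ to exceed a fixed positive gap, and under the strong-mixing and sub-exponential-tail conditions of Assumptions \ref{as:ae}(b,c) the exponential inequalities for weakly dependent sequences (cf.\ \cite{merlevede:2011}) bound this by $\exp(-c\,T^{d})=o(T^{-\delta})$ for every $\delta>0$, with Assumption \ref{as:ae}(d) controlling the covariate part of $R_i$ at the same rate. A union bound over the $N$ individuals and at most $G$ competing groups then converts the per-unit rate into $o(NT^{-\delta})$, giving $\mathbb{P}(\sup_i|\widehat{g}_i-g_i^0|>0)=o(1)+o(NT^{-\delta})$. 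For the equivalence, set $A_{NT}=\{\widehat{g}_i=g_i^0\ \forall i\}$, so $\mathbb{P}(A_{NT}^c)=o(1)+o(NT^{-\delta})$; on $A_{NT}$ the optimal partition in \eqref{wgfe:obj} equals $\gamma^0$, so the feasible problem reduces exactly to the infeasible one \eqref{est:infeasible}, whence $\widehat{\theta}=\widetilde{\theta}$ and $\widehat{\alpha}_{gt}=\widetilde{\alpha}_{gt}$ there. Since $\{T^\delta|\widehat{\theta}-\widetilde{\theta}|>\varepsilon\}\subseteq A_{NT}^c$ and $\mathbb{P}(A_{NT}^c)\to0$ in the regime $N/T^\nu\to0$, the differences are $o_p(T^{-\delta})$, and likewise for $\widehat{\alpha}_{gt}$.

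The main obstacle is the uniformity over groupings in the classification step. Because the WGFE rule normalizes by the group standard deviations $\sigma_g(\theta,\alpha,\gamma)$, which themselves depend on the candidate partition, the separability gap must be shown to survive not at a single parameter value but uniformly over all $\lambda(\gamma)$ in the interior simplex — precisely what the supremum in Assumption \ref{as:ae}(e) controls. Verifying that the remainder $R_i$ remains dominated by this gap, so that the concentration inequalities apply against a fixed threshold, is the delicate part that distinguishes the argument from the nearest-mean analysis for GFE.
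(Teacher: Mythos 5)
Your treatment of the classification statement is essentially the paper's argument: localize to $\mathcal{N}_\eta$ via Theorem \ref{prop:conswgfe} (producing the $o(1)$ term), bound the per-unit misassignment indicator by a parameter-free random variable whose deterministic gap is guaranteed uniformly over $\lambda(\gamma)\in\mathring{\Delta}^G$ by Assumption \ref{as:ae}($e$), apply the exponential inequalities for strongly mixing sequences to get a per-unit rate $o(T^{-\delta})$, and finish with a union bound over $i$ and the competing groups. That part is sound and you correctly identify the delicate point, namely the uniform control of the grouping-dependent normalizers $\widehat{\sigma}_g(\theta,\alpha,\gamma)$.

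The gap is in how you deduce $\widehat{\theta}=\widetilde{\theta}+o_p(T^{-\delta})$ and $\widehat{\alpha}_{gt}=\widetilde{\alpha}_{gt}+o_p(T^{-\delta})$. You route the equivalence entirely through the event $A_{NT}=\{\widehat{g}_i=g_i^0\ \forall i\}$ and the containment $\{T^{\delta}\vert\widehat{\theta}-\widetilde{\theta}\vert>\varepsilon\}\subseteq A_{NT}^c$. But the only bound available for $\mathbb{P}(A_{NT}^c)$ is $o(1)+o(NT^{-\delta})$, which tends to zero only under an additional rate restriction such as $N/T^{\nu}\to 0$ — a condition that appears in Proposition \ref{asym:wgfe} but is \emph{not} among the hypotheses of this theorem, whose second and third conclusions are asserted unconditionally as $N,T\to\infty$. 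The paper avoids this by never requiring perfect classification for the equivalence: Lemma \ref{lemma:missassign} controls the \emph{fraction} of misclassified units, $\sup_{(\theta,\alpha)\in\mathcal{N}_\eta}N^{-1}\sum_{i}\mathds{1}\{\widehat{g}_i(\theta,\alpha)\neq g_i^0\}=o_p(T^{-\delta})$, which holds with no restriction on $N$ relative to $T$; this is fed into a comparison of criterion functions, $\sup_{(\theta,\alpha)\in\mathcal{N}_\eta}\vert\widehat{Q}(\theta,\alpha)-\widetilde{Q}(\theta,\alpha)\vert=o_p(T^{-\delta})$, and then a quadratic lower bound on $\widetilde{Q}(\widehat{\theta},\widehat{\alpha})-\widetilde{Q}(\widetilde{\theta},\widetilde{\alpha})$ (the cross term killed by the first-order conditions of the infeasible problem, the quadratic term bounded below via Assumption \ref{as:con}($d$)) delivers $\norm{\widehat{\theta}-\widetilde{\theta}}^2=o_p(T^{-\delta})$ and the analogous statement for $\widehat{\alpha}$. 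To close your argument you would either need to add the rate condition $NT^{-\delta}\to 0$ to the theorem (weakening it), or replace the perfect-classification step by the Hamming-rate plus criterion-comparison argument. There is also a minor circularity you gloss over: on $A_{NT}$ the identity $\widehat{\theta}=\widetilde{\theta}$ additionally presumes uniqueness of the minimizer of $(\theta,\alpha)\mapsto\widehat{Q}(\theta,\alpha,\gamma^0)$, which the paper never needs to invoke.
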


\subsection{Asymptotic Normality of the WGFE Estimator}

With conditions gauranteeing the asymptotic equivalence of WGFE estimator and the infeasible WGFE estimator, the asymptotic distribution of the WGFE estimator for $\theta^0$ and $\alpha^0$ is that of the infeasible version where groups are known.

\begin{proposition}\label{asym:wgfe}
Let Assumptions \ref{as:infeas}, \ref{asym:infeas}, \ref{as:con} and \ref{as:ae} hold. As $N$ and $T$ approach infinity such that for some $\nu>0$, $N/T^\nu \to 0$ I have
\[
	\sqrt{NT}\left( \widehat{\theta} - \theta^0 \right) \longrightarrow_d N\left(0,B_\theta^{-1} V_\theta, B_\theta^{-1}\right)
\]
and, for all $g = 1,\dots,G$ and $t = 1,\dots,T$,
\[
	\sqrt{N}\left( \widehat{\alpha}_{gt} - \alpha_{gt}^0 \right) \longrightarrow_d N\left(0,v_{gt}/\left[P_g\right]^2\right).
\]
\end{proposition}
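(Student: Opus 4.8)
The plan is to obtain the result as a direct consequence of two ingredients already in hand: the asymptotic normality of the infeasible estimator $(\widetilde{\theta},\widetilde{\alpha})$ established under Assumptions \ref{as:infeas} and \ref{asym:infeas}, and the asymptotic equivalence $\widehat{\theta} = \widetilde{\theta} + o_p(T^{-\delta})$, $\widehat{\alpha}_{gt} = \widetilde{\alpha}_{gt} + o_p(T^{-\delta})$ for every $\delta > 0$ furnished by Theorem \ref{prop:congroups}. The whole argument is a rate-matching exercise: I would write each scaled feasible estimator as the corresponding scaled infeasible estimator plus a remainder, let the infeasible part carry the limiting law, and show the remainder is asymptotically negligible once the restriction $N/T^\nu \to 0$ lets me take $\delta$ large enough.

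For the slope, decompose $\sqrt{NT}(\widehat{\theta} - \theta^0) = \sqrt{NT}(\widetilde{\theta} - \theta^0) + \sqrt{NT}(\widehat{\theta} - \widetilde{\theta})$. The first term converges in distribution to $N(0, B_\theta^{-1} V_\theta B_\theta^{-1})$ by the asymptotic normality theorem for the infeasible estimator. For the second term, Theorem \ref{prop:congroups} gives $\widehat{\theta} - \widetilde{\theta} = o_p(T^{-\delta})$ for all $\delta > 0$, so $\sqrt{NT}(\widehat{\theta}-\widetilde{\theta}) = o_p(N^{1/2}T^{1/2-\delta})$. Since $N/T^\nu \to 0$ forces $N^{1/2} = o(T^{\nu/2})$, this remainder is $o_p(T^{(\nu+1)/2 - \delta})$, which vanishes as soon as $\delta > (\nu+1)/2$. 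Because $\delta$ is free, I fix such a $\delta$, and Slutsky's theorem then delivers the stated limit for $\widehat{\theta}$.

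The intercepts are handled identically. Writing $\sqrt{N}(\widehat{\alpha}_{gt} - \alpha_{gt}^0) = \sqrt{N}(\widetilde{\alpha}_{gt} - \alpha_{gt}^0) + \sqrt{N}(\widehat{\alpha}_{gt} - \widetilde{\alpha}_{gt})$, the first term tends to $N(0, v_{gt}/[\mathbb{P}(g_i^0=g)]^2)$, while the remainder is $o_p(T^{\nu/2 - \delta})$ by the same equivalence, negligible for $\delta > \nu/2$. Assumption \ref{as:ae}(a) identifies the probability limit $P_g$ with $\mathbb{P}(g_i^0=g)$, so the two variance expressions coincide and the claimed law for $\widehat{\alpha}_{gt}$ follows. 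The only genuine obstacle is the rate bookkeeping in these two remainder terms: verifying that the $o_p(T^{-\delta})$ equivalence, which itself encodes that all group memberships are recovered with probability $1 - o(1) - o(NT^{-\delta})$, dominates the inflation factors $\sqrt{NT}$ and $\sqrt{N}$. This is precisely what forces $N/T^\nu \to 0$ and is where the substance of the statement lies; the remaining steps are routine applications of Slutsky's theorem.
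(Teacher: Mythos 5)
Your proposal is correct and follows essentially the same route as the paper: decompose $\sqrt{NT}(\widehat{\theta}-\theta^0)$ into the infeasible part, which carries the limit law, plus $\sqrt{NT}(\widehat{\theta}-\widetilde{\theta})$, which is killed by choosing $\delta$ large enough in the $o_p(T^{-\delta})$ equivalence from Theorem \ref{prop:congroups} under $N/T^\nu \to 0$, and likewise for $\widehat{\alpha}_{gt}$. If anything, your explicit bookkeeping ($\delta > (\nu+1)/2$ for the slope, $\delta > \nu/2$ for the time effects) is slightly more careful than the paper's one-line choice of $\delta$.
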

\begin{proof}
By the asymptotic equivalence of $\widehat{\theta}$ and the infeasible $\widetilde{\theta}$ given by Proposition \ref{prop:congroups}, and the asymptotic normality of $\widetilde{\theta}$ I have the result:
\[
	\sqrt{NT}\left(\widehat{\theta} - \theta^0 \right) = \sqrt{NT}\left(\widetilde{\theta} - \theta^0\right) + \sqrt{NT} \left(\widehat{\theta} - \widetilde{\theta} \right) = \sqrt{NT}\left(\widetilde{\theta} - \theta^0\right) + o_p\left(\frac{\sqrt{N}}{T^\nu}\right).
\]
where I have chosen $\delta$ such that $1/2 - \delta = \nu$. Similarly for $\widehat{\alpha}_{gt}$ and $\widetilde{\alpha}_{gt}$, for any $g$ and $t$. 

\end{proof}

\section{Simulations and Empirical Applications}\label{sec:sims}

\subsection{Simulation Design}

Simulation follows \cite{bm:2015} where they revisit the Acemoglu, Johnson, Robinson, \& Yared (2008) study on the association between income and democracy. The data generating distribution is 
\[
	y_{it} = \widehat{\theta}_1 y_{i t-1} + \widehat{\theta}_2 x_{it} + 
	\widehat{\alpha}_{\widehat{g}_i t} + u_{it}
\]
where $y_{it}$ is an index of democracy and $x_{it}$ is log-income per capita. I split this DGP into two depending on the estimates, groupings and variance of normally distributed errors with zero mean:
\begin{itemize}
	\item[ 1.] $\widehat{\theta}^{GFE}$, $\widehat{\alpha}_{\widehat{g}^{GFE}}^{GFE}$ and $\widehat{g}^{GFE}$ and homoskedastic $\widehat{\sigma}^2$
	\item[2.] WGFE estimates, groupings and group heteroskedastic $\widehat{\sigma}_{\widehat{g}_i^{WGFE}}^2$ 
\end{itemize}
I generate data for different specifications of total number of groups $G$ while holding the income vector $x_{it}$ fixed. I collect a sample of 1,000 GFE and WGFE estimates for each specification. In Figure \ref{fig:dgp2stats}, the values of the heterogeneous variances and sizes are provided for DGP 2. In Figure \ref{fig:sim1}, I compare the root mean-squared error of the parameter estimates using WGFE, GFE, and two way fixed effects. In Figure \ref{fig:sim2}, I compare the average rates of misclassification across different $G$ for each DGP. We see that for DGP 1 there is not much difference between WGFE and GFE estimation as expected since there is no group heteroskedasticity. However, for DGP 2 it appears that WGFE is a better performer when variances are heterogeneous across groups. Unlike groupwise heteroskedasticity in simple regression with observed groups, ignoring it in this context not only affects standard errors, but also impacts parameter estimates since the estimators are functions of group assignments. If groups are different in size and variance, then assignment based solely on mean discrepancies is vulnerable to errors. It can also be interpreted as omitted variable bias if some incorrect assignments are determined since the latent variable is not properly controlled for.

\begin{figure}
	\centering
 \begin{tabular}{cccccc}
        \toprule $G$ & $\sigma_1$ & $\sigma_2$ & $\sigma_3$ & $\sigma_4$ & $\sigma_5$\\
        \midrule  2 & 0.219  & 0.086 & \--- & \--- & \--- \\
			      3 & 0.230 & 0.076  & 0.114 & \--- & \--- \\
			      4 & 0.235  & 0.105  & 0.152 & 0.059& \--- \\
				 5 & 0.141  & 0.206  & 0.064 & 0.096 & 0.209\\
        \bottomrule
\end{tabular}
\hspace{1cm}
 \begin{tabular}{cccccc}
        \toprule $G$ & $P_1$ & $P_2$ & $P_3$ & $P_4$ & $P_5$\\
        \midrule  2 & 0.64  & 0.36 & \--- & \--- & \--- \\
			      3 & 0.43 & 0.33  & 0.24 & \--- & \--- \\
			      4 & 0.33  & 0.23  & 0.14 & 0.29 &  \--- \\
				 5 & 0.14  & 0.21  & 0.30 & 0.20 & 0.14\\
        \bottomrule
\end{tabular}
\caption{DGP\# 2: latent variable features across groups. Left: group variances, Right: group probabilities}
\label{fig:dgp2stats}
\end{figure}

\begin{center}
 \begin{figure}[h!]\centering  \begin{tabular}{@{}cccccccccc@{}}
 \toprule DGP\# 1 &\phantom{abc} && \multicolumn{3}{c}{Lagged Term} &\phantom{abs} & \multicolumn{3}{c}{Income Term}\\ 
 \cmidrule{4-6} \cmidrule{8-10}
 \empty&$G$ && WGFE & GFE & FE && WGFE & GFE & FE  \\ \midrule 
  \empty&2 && 0.0484 & 0.0479 & 0.3146 && 0.0173 & 0.0173 &0.1095  \\ 
  \empty&3 && 0.0415 & 0.0416 & 0.1211&& 0.0133 & 0.0132 &0.1029 \\ 
  \empty&4 && 0.0423 & 0.0422 & 0.0623 && 0.0109 & 0.0109 &0.0708\\ 
  \empty&5 && 0.0458 & 0.0449 & 0.0527 && 0.0105 & 0.0105 &0.0855 \\ 
 \toprule DGP\# 2 \\

 \empty&2 && 0.042 & 0.069 & 0.2538 && 0.012 & 0.026 & 0.1134\\
 \empty&3 && 0.049 & 0.067 & 0.1211 && 0.010 & 0.022 & 0.1029\\
 \empty&4 && 0.053 & 0.066 & 0.1607 && 0.010 & 0.020 & 0.0983 \\
 \empty&5 && 0.047 & 0.061 & 0.2072 && 0.008 & 0.013 & 0.0972\\
 \bottomrule\\
 \end{tabular} 
 \caption{Root Mean Squared Errors for the WGFE, GFE and fixed-effects estimators of $\theta_1$ and $\theta_2$}
\label{fig:sim1}
  \end{figure}  
\end{center}

\begin{center}
 \begin{figure}[h!]\centering  \begin{tabular}{@{}cccccccc@{}}
 \toprule DGP\# 1 &\phantom{abc} && \multicolumn{2}{c}{Missclassified} &\phantom{abs} & \multicolumn{2}{c}{St. Error}\\ 
  \cmidrule{4-5} \cmidrule{7-8}
 \empty& $G$ && WGFE & GFE &&WGFE & GFE  \\ \midrule 
   \empty&2 && 10.38\% & 10.28\% && 3.95\% & 3.90\%  \\
    \empty&3 && 6.55\% & 6.56\% && 3.09\% & 3.21\%  \\
    \empty&4 && 4.74\% & 4.69\% && 2.99\% & 2.99\% \\
    \empty&5 && 4.13\% & 4.19\% && 3.17\% & 3.28\%  \\ \toprule
    DGP\# 2 \\
 
    \empty&2 && 3.71\% & 18.51\% && 2.70\% & 6.38\%  \\
    \empty&3 && 4.85\% & 16.93\% && 2.93\% & 4.84\%  \\
    \empty&4 && 8.86\% & 20.41\% && 3.83\% & 3.89\%  \\
    \empty&5 && 7.78\% & 13\% && 4\% & 4.30\%  \\
 \bottomrule\\
 \end{tabular} 
 
 \caption{Average rate of misclassification across 1,000 simulations}
\label{fig:sim2}
  \end{figure}  
\end{center}

\subsection{The Effects of Unionization on Wages}

In studies on the effects of unionization on earnings it is often argued that controlling for unobserved ability/skill is essential since employers who face a union contract may select more able candidates than those employers who do not (\cite{abowd:1982}). A solution for this selection problem would be a fixed-effects approach, but measurement error resulting from misclassification of unionized jobs is more pronounced in panel analysis, which has been shown to harm estimates. \cite{freeman:1984} show that estimates from panel studies are generally smaller than those from cross sectional studies across different changing union status groups. For the Panel Study of Income Dynamics (PSID), estimates range from 8\%--23\% and they might be affected by union misclassification. \cite{card:1996} estimate a discrete proxy for ability that is used to estimate five separate models for each ability level, controlling for misclassification in the process. He finds at ``high'' levels of ability there is negative bias when ignoring high skill heterogeneity versus the positive bias found with low ability workers. In this section, I will revisit the effect of unionization using data from the PSID on $N = 1,158$ workers who are the heads of their household between 2001 and 2019 ($T= 10$, odd numbered years). I draw comparisons to both of these studies by showing the WGFE estimate lies between the pooled and fixed effects estimates as postulated by \cite{freeman:1984} and the estimated groups follow a similar correlation pattern to skill groups found in \cite{card:1996}.

It is natural to assume that employers form impressions of ability of workers based on comparisons to other candidates for the position and so it is reasonable that this impression is part of a discrete set. For example, the employer may classify candidates as ``avoid hire'', ``maybe hire'' or ``strong hire''. For jobs that are not covered by unions, but compensate well, employers may be more selective and there may be a complicated correlation structure between working at a union job and skill. It is also reasonable that this discrete classification is correlated with wages across time, which may be due to the bargaining power of unions and overall economic conditions. For example, jobs with a strong union may be more robust to poor economic conditions such as a recession or worsening wage inequality. In addition the response from each skill group to earnings shocks might be more variable suggesting unobserved group heteroskedasticity.

I control for this discrete unobserved variable by estimating a linear model of unionization and wages with a group fixed-effects term: 
\begin{equation}\label{eq:unionmodel}
	logwage_{it} = \delta union_{it} + x_{it}'\theta + \alpha_{g_i t} + u_{it}
\end{equation}
where $logwage_{it}$ is the log real labor earnings of worker $i$ in year $t$ in 2001 USDs, $union_{it}$ is a dummy variable indicating if worker $i$ had a job at time $t$ that was under a union contract and $x_{it}$ contains time-varying and time-invariant characteristics such as years of schooling, if they are not white, if they are female, number of years of experience at current job and if there job is classified as ``blue collar''. I also include a time-varying dummy variable indicating if the worker resides in the south, along with their marital status and age.

Figure \ref{fig:sumstatsunion} shows summary statistics for the data. About 16\% of the sample are female and 33\% are nonwhite. Most individuals have 12 years of schooling with 26\% of the sample only completing grade school, 47\% completing some college and 20\% with postgraduate work. The sample contains workers who first appear with ages between 19 and 64 years and are followed for 20 years. In this panel, years of schooling is observed to be time-varying as some individuals returned to school or started working during their schooling. Figure \ref{fig:incomes_union} displays median incomes across different groups. Clearly, unions have a positive impact on real earnings despite the erosion of numbers of unionized jobs post 2007 as shown in Figure \ref{fig:union_prop}. Across sex, race, blue collar occupation, this effect persists. 
\begin{center}
 \begin{figure}[h!] \centering  \begin{tabular}{@{}cccccc@{}}
 \toprule  & \empty & Mean & Median & Min & Max\\[1.5mm] \midrule
 Wages & \empty & \$53,272 & \$40,496 & \$694 & \$3,365,385\\[1.5mm]
 Age & \empty & 46.41 & 46 & 19 & 87\\[1.5mm]
 Experience & \empty & 9.88 & 8 & 0 & 57\\[1.5mm]
 \bottomrule
 \end{tabular} 
 \caption{Wages in 2001 USD, experience in years.}
 \label{fig:sumstatsunion}
  \end{figure}  
\end{center}
\begin{figure}[h!]
	\includegraphics[scale = 0.35]{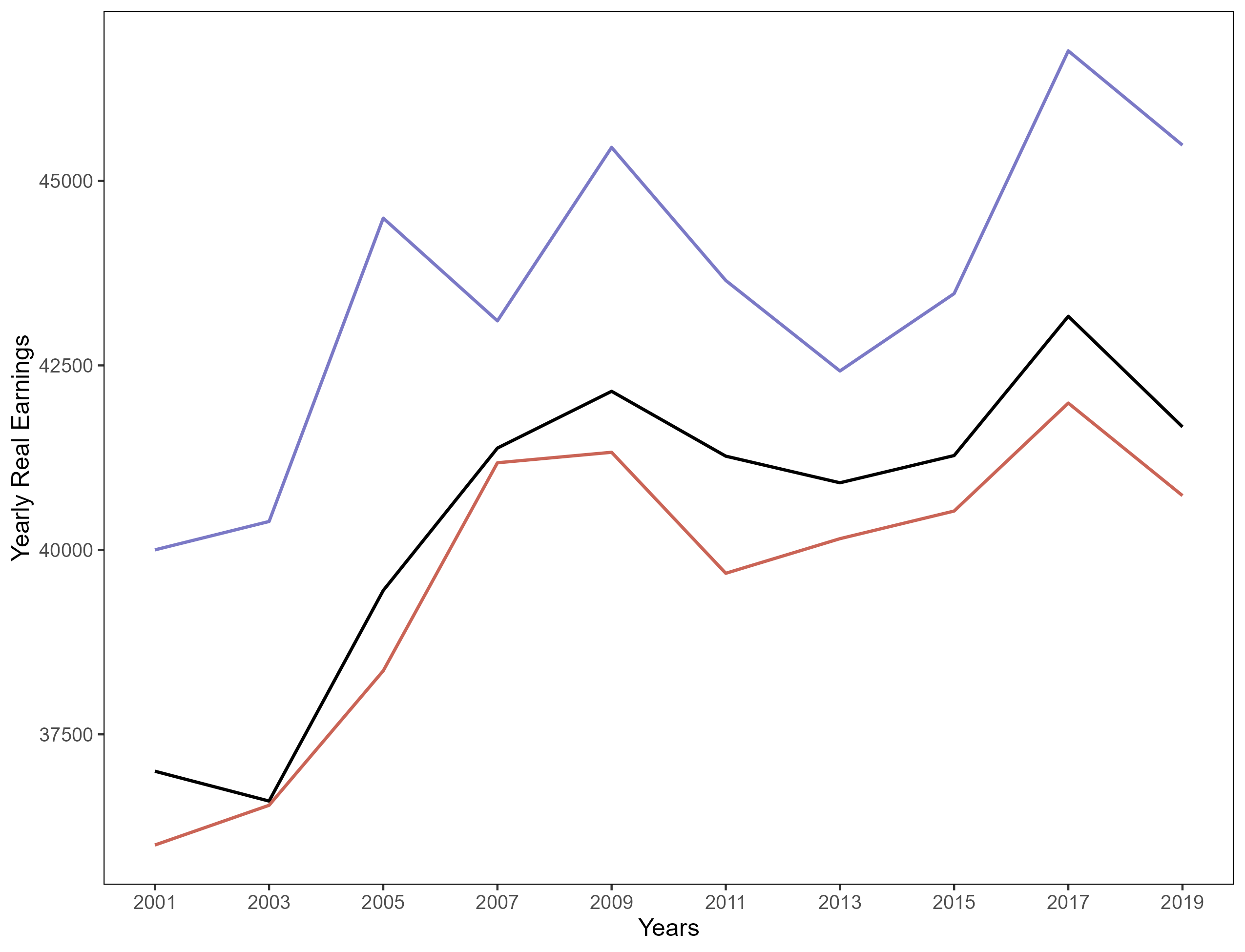}
	\includegraphics[scale = 0.35]{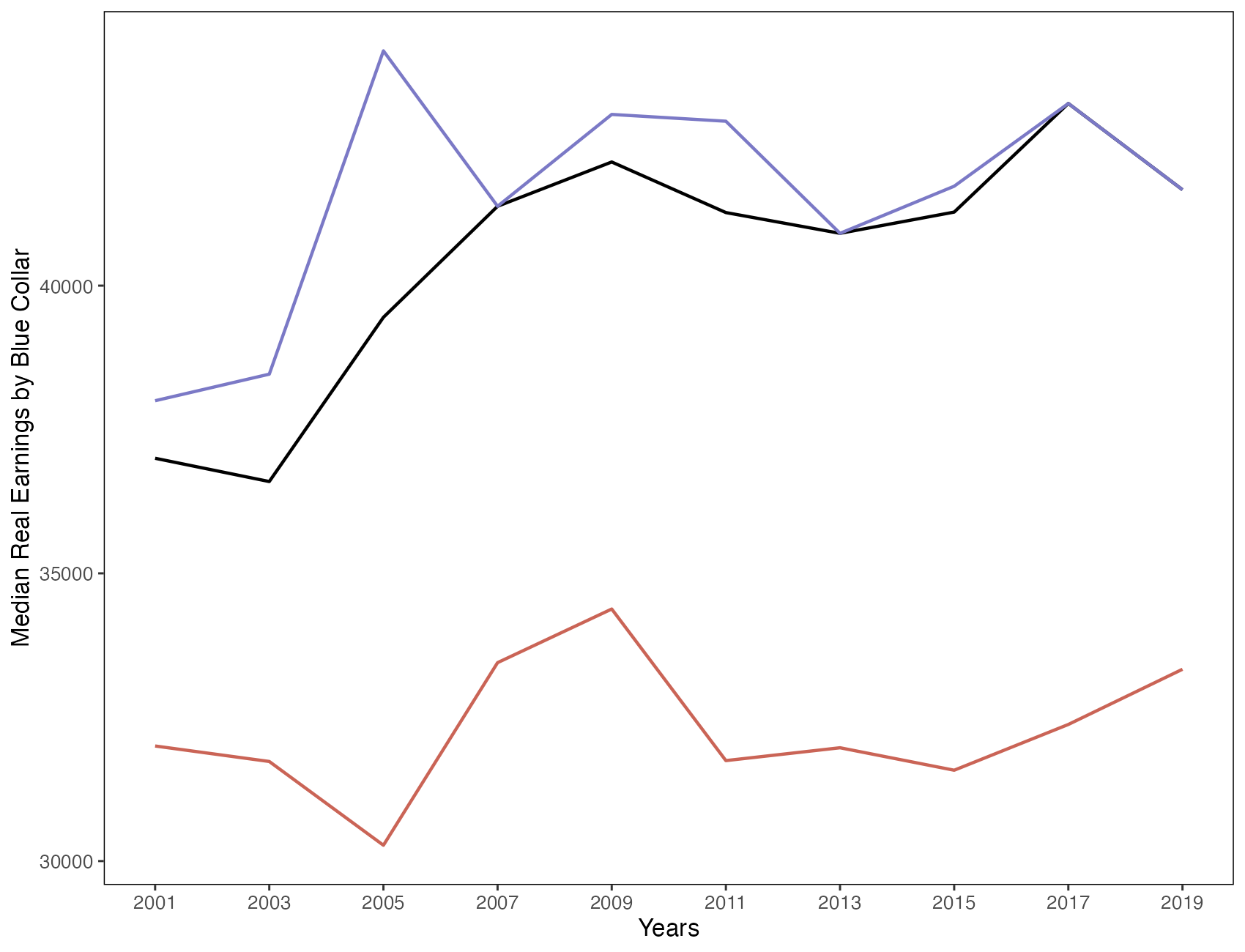}
	\includegraphics[scale = 0.35]{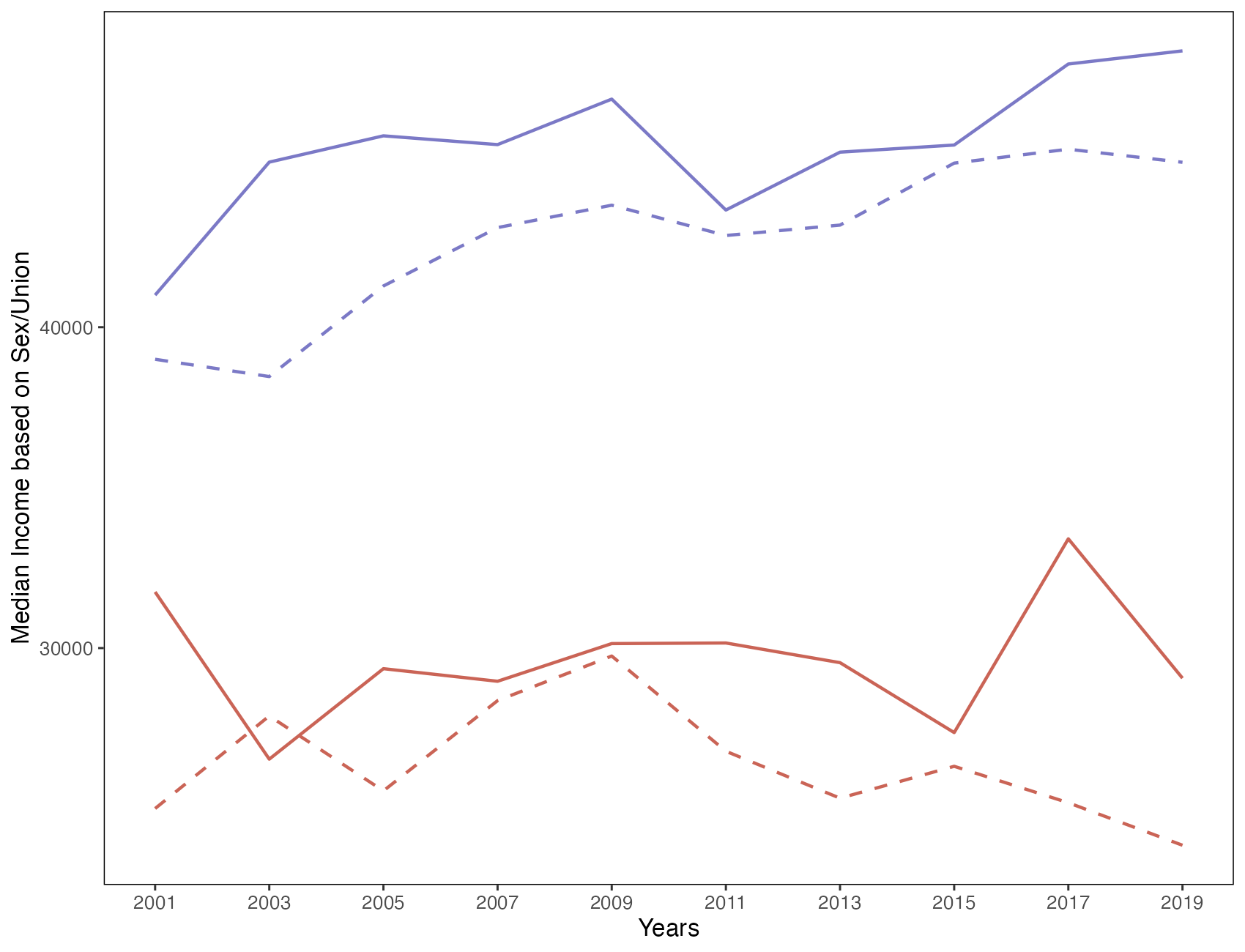}
	\includegraphics[scale = 0.35]{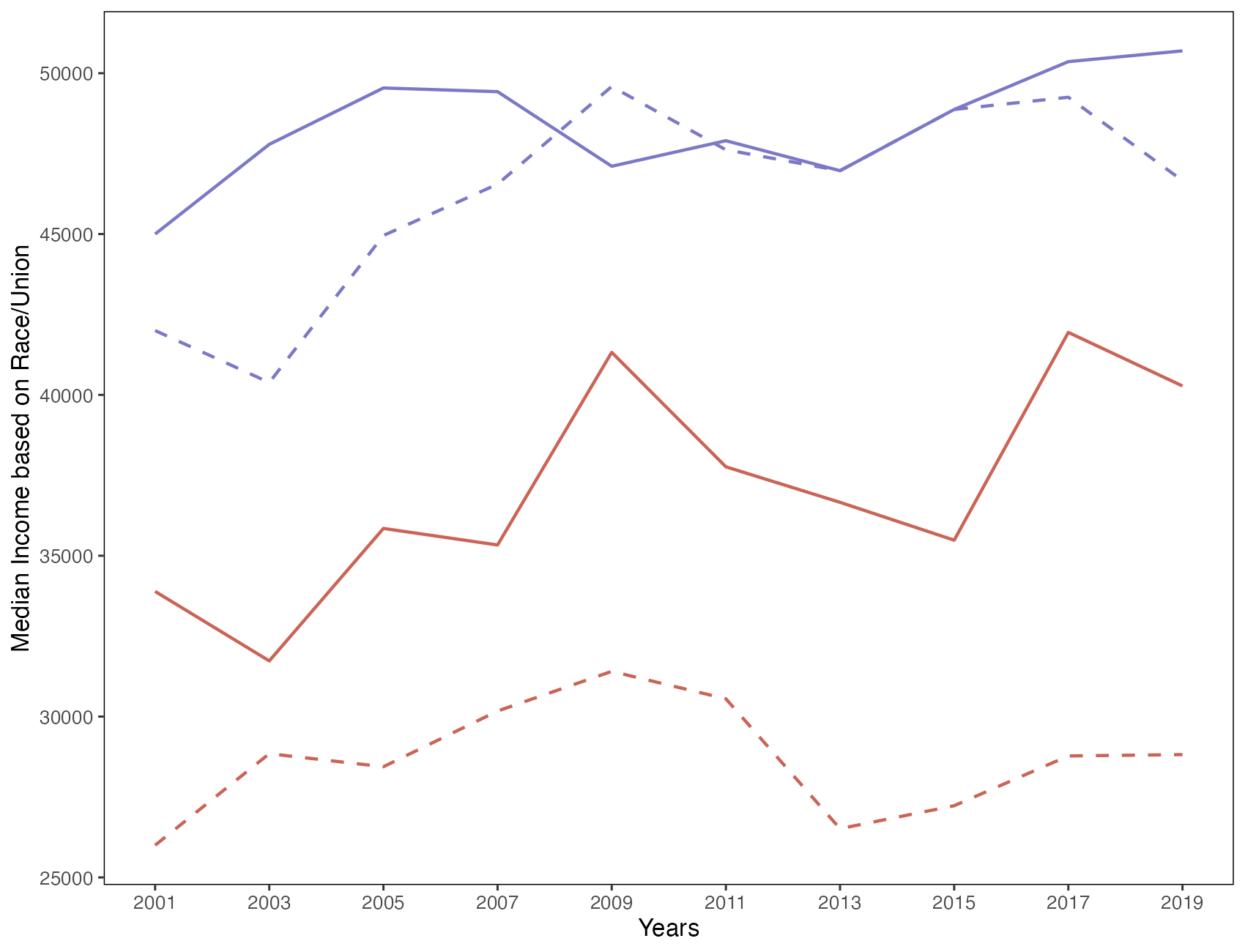}
\caption{Differences in median incomes between the groups. {\it Top Left}: Black: Aggregate, Blue: union group, Red: nonunion group. {\it Top Right}:
Black: Aggregate, Blue: union group, Red: nonunion group. 
{\it Bottom Left}: Blue: Male, Red: Female, Dashed: Nonunion group, Solid: Union group, {\it Bottom Right}: Blue: White, Red: Nonwhite, Dashed: Nonunion group, Solid: Union group}
\label{fig:incomes_union}
\end{figure}

\begin{figure}[h!]
\centering
	\includegraphics[scale = 0.45]{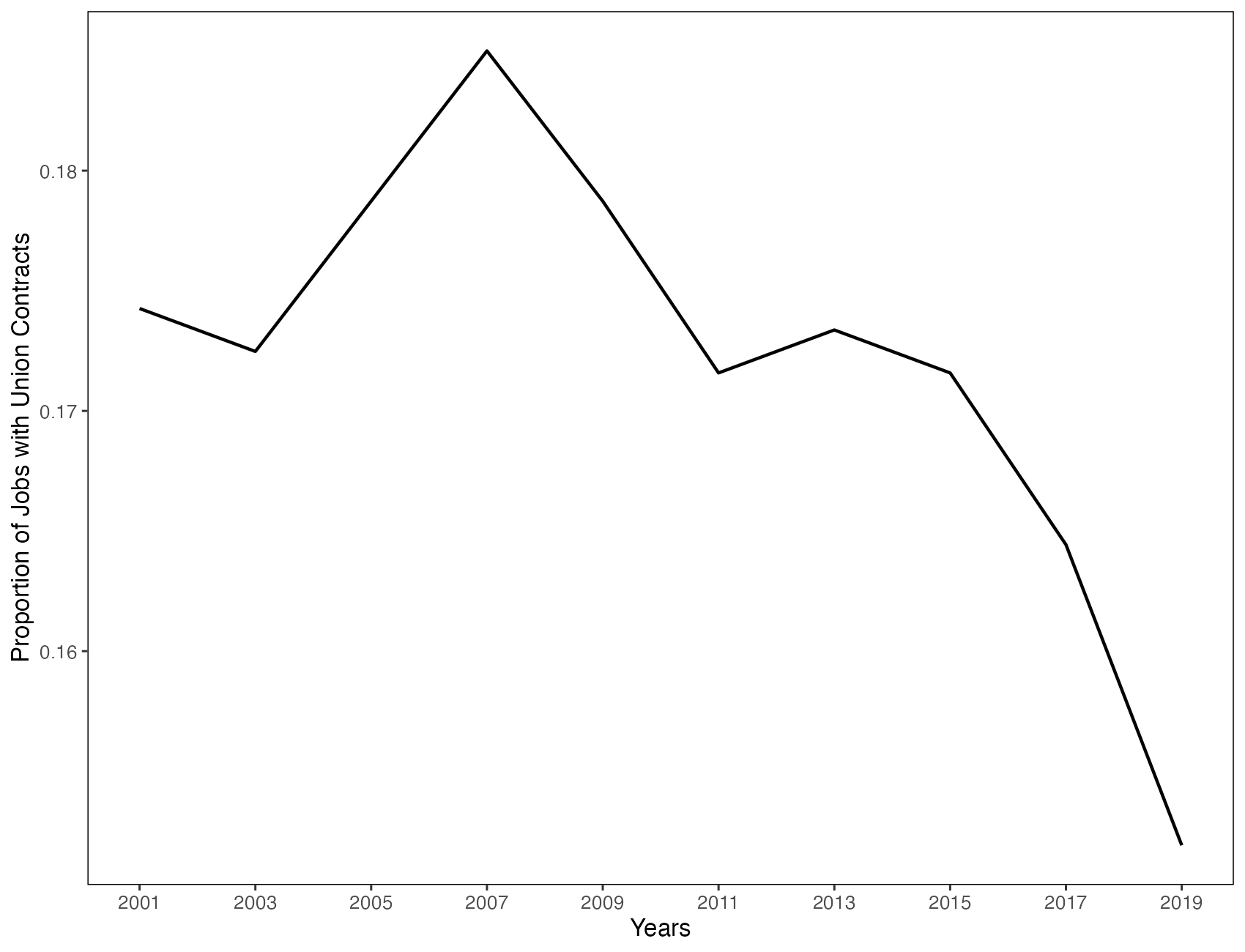}
	\caption{Proportion of workers with jobs with in a union contract}
	\label{fig:union_prop}
\end{figure}

Estimating model \eqref{eq:unionmodel} with pooled least squares and two way fixed effects estimation indicates union membership corresponds to 9.32\% and 22\% increases in real earnings, respectively. This observation is surprising since \cite{freeman:1984} pointed out the opposite ordering and worker bargaining power seems to be waning, especially compared to the proportion of unionized jobs in the older studies. One empirical explanation is that of \cite{card:1996} where unobserved ability has a complicated correlation structure with unionized job selection. This may be due to a modern labor market with more variable job outcomes due to technological advancements and the need for the associated skills. The worker has options beyond the traditional union job and so I observe in the sample consistent high earners who are compensated well outside of a union contract. Therefore, the more able may not value a union as much since employers make more competitive offers in their ability category. Wage inequality may also play as a secondary factor as the dominating wages this group might earn may induce negative correlation with union job employment, explaining why these estimates are much larger than pooled OLS estimates.

With WGFE a Bayesian Information Criteria\footnote{See the supplementary appendix of \cite{bm:2015} and \cite{bai:2002} for the BIC method of estimating the number of groups.} is used to estimate $\widehat{G} = 7$ and find that a union contract raises wages by 14.6\%, which is between the pooled least squares and two way fixed effects. Figure \ref{fig:groupresults} plots the group medians of income and proportion of union jobs. Figure \ref{fig:nonlinearcorr} shows nonlinear correlation where union jobs and earnings are positively correlated with unionization until earnings are much larger when it becomes negatively correlated. Ignoring the green and purple curves for now, focus in order of income by red, orange, blue, teal and gold. The lowest union proportions among these are red, orange and gold, where gold is the lowest income group. Teal and blue have higher earnings and the highest proportions of union jobs in these groups. Therefore at the highest stratum of income, union status is likely to be lower and as we fall in the income tiers the union proportions will increase until the levels of income are low enough where union status falls again. The green and pink group are the least stable groups that experience sharp declines in union status which amounts to a 10\% fall from over 20\% to under 10\% of all jobs. At the same time, pink has a significant fall in real earnings while green is more volatile in their earnings time series. 

\begin{figure}[h!]
	\includegraphics[scale = 0.35]{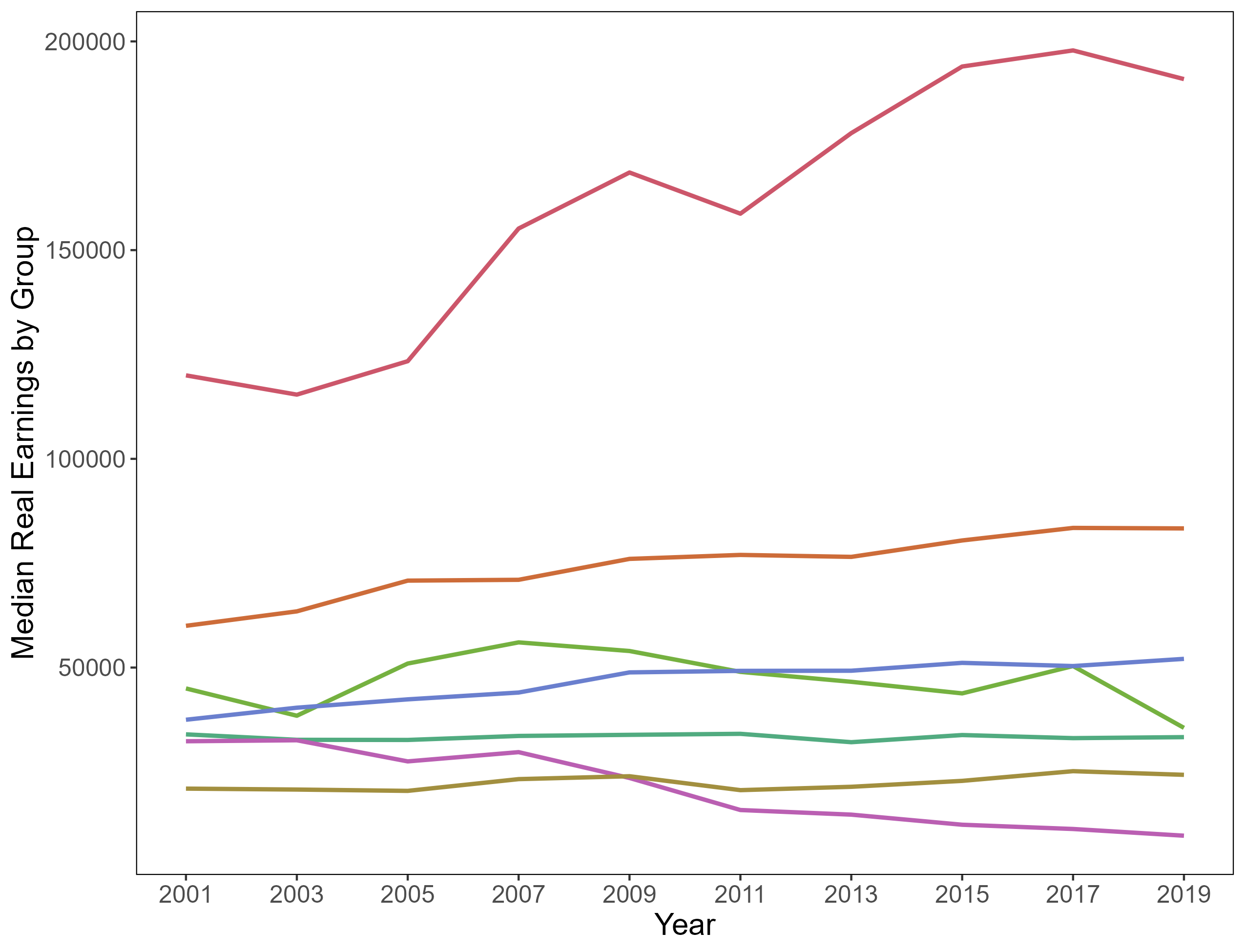}
	\includegraphics[scale = 0.35]{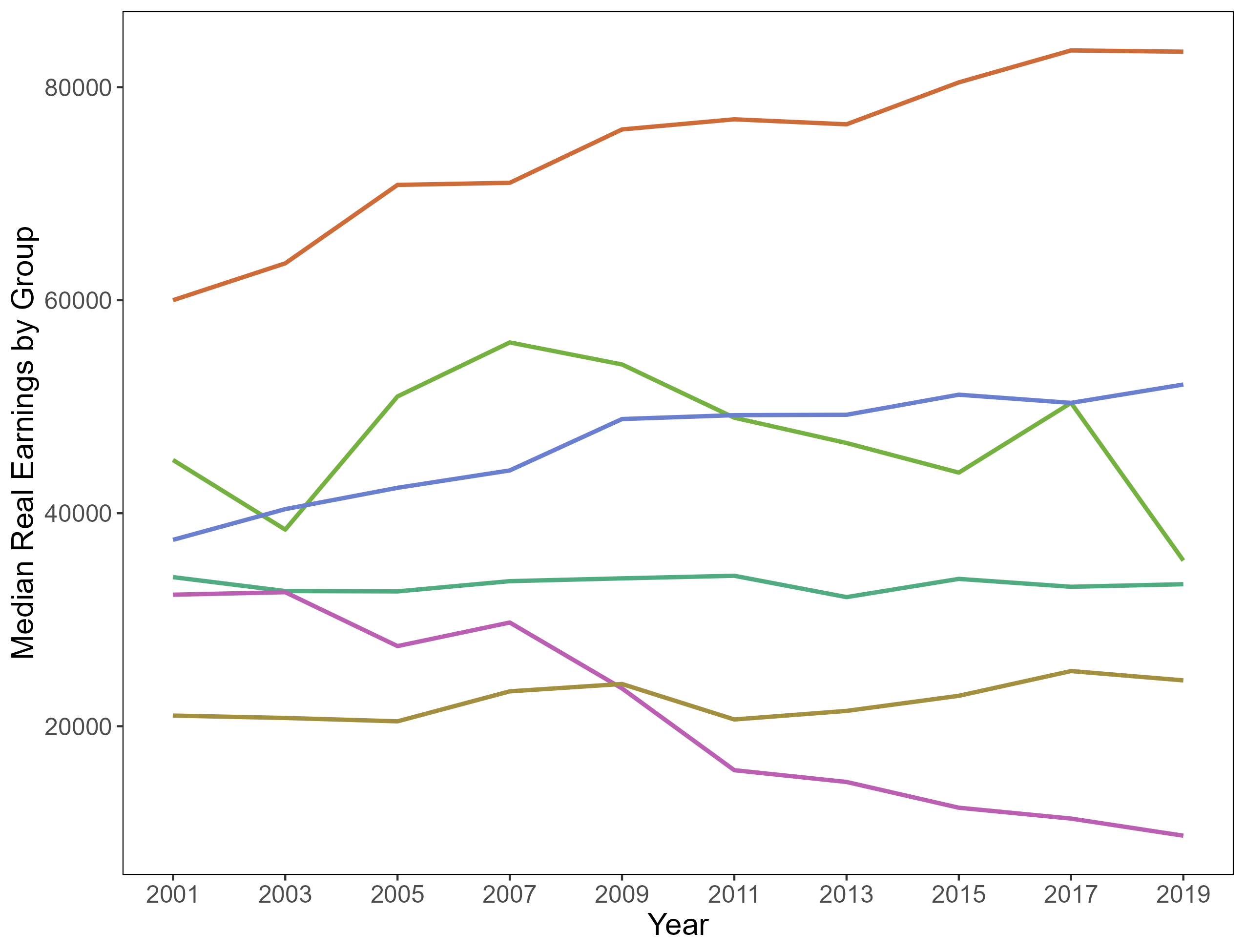}\\
	\includegraphics[scale = 0.4]{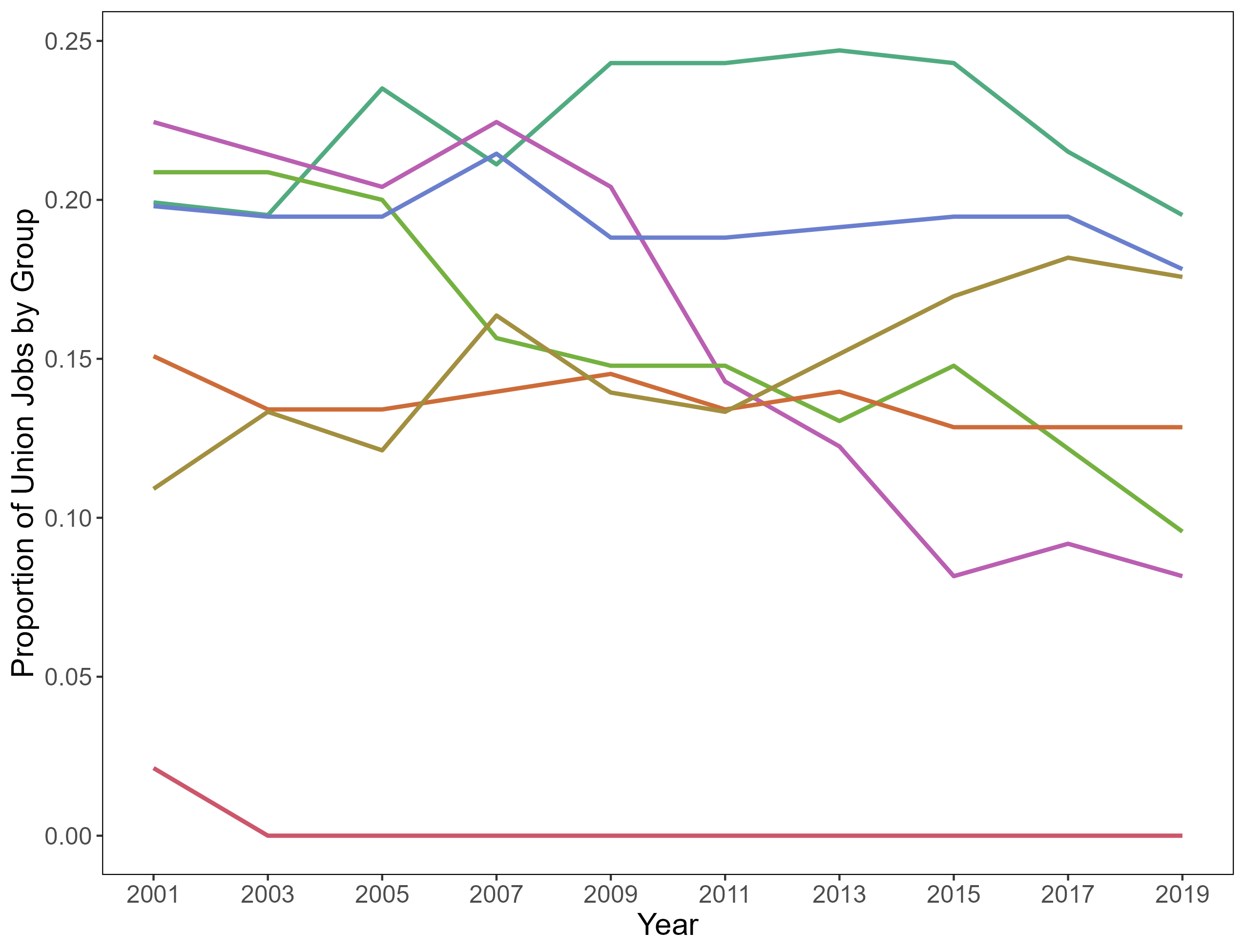}
\caption{Group median incomes where each color is a different group. Top Left: Including elite income earners, Top Right: Excluding elite income earners, Bottom: Proportion of union jobs in each group.}
\label{fig:groupresults}
\end{figure}

\begin{figure}[h!]
	\includegraphics[scale = 0.45]{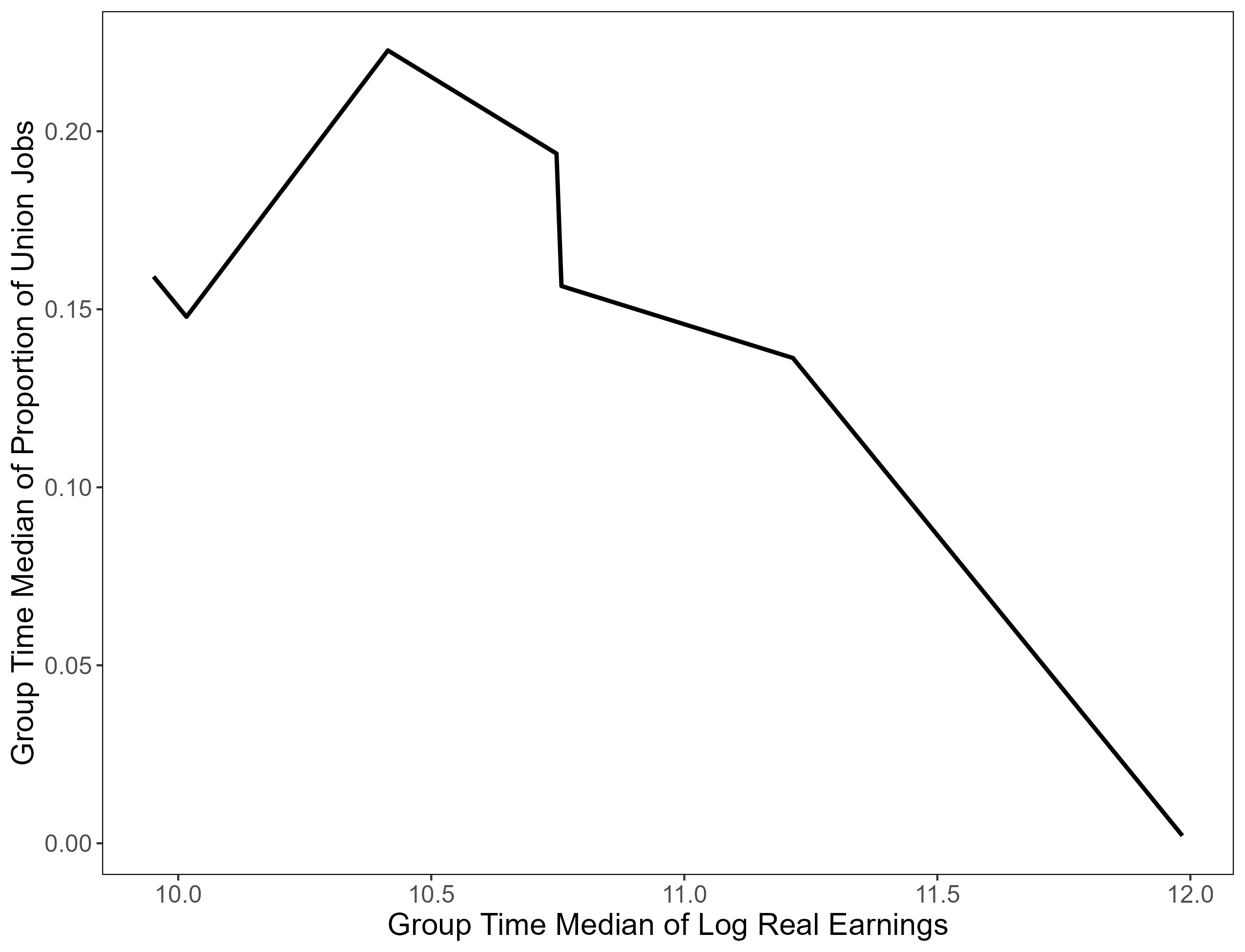}
\caption{Group time series medians of union-log earnings pairs showing concave relationship.}
\label{fig:nonlinearcorr}
\end{figure}

\subsection{Income and Democracy}

In this section I revisit \cite{acemoglu:2008} where they incorporate country fixed-effects to study the link between income and democracy in nations. They find that the positive effects of income on democracy vanish with the inclusion of country fixed effects arguing that this is consistent with unobserved historical factors that initiated divergent economic and political paths across nations. The purpose of this section is to make comparisons and find insights on how to interpret results using these clustering methods since the results are reinforced by GFE or WGFE.

Following this paper and the grouped fixed-effects approach of \cite{bm:2015}, I consider a regression of the Freedom house index of democracy on time-lagged democracy, lagged log-GDP per capita and a grouped fixed-effect term $\alpha_{g_i t}$, which captures group-specific time varying heterogeneity within group $g_i = g$ for some $g = 1,\dots,G$. The model is written as
\[
democracy_{it} = \theta_1 \, democracy_{i t-1} + \theta_2 \, logGDPpc_{i t-1} + \alpha_{g_i t} + u_{it}.
\]
\begin{center}
 \begin{figure}[h!] \centering  \begin{tabular}{@{}cccccccccc@{}}
 \toprule  &\phantom{abc} &\multicolumn{2}{c}{Objective} &\phantom{abs}  &\multicolumn{2}{c}{Lagged Dem. Term ($\theta_1$)} &\phantom{abs} & \multicolumn{2}{c}{Income Term ($\theta_2$)}\\
 \cmidrule{3-4}  \cmidrule{6-7} \cmidrule{9-10}\\[2mm] 
 $G$&& WGFE & GFE && WGFE & GFE && WGFE & GFE  \\ \midrule 
1 && $-$ & $-$ && $-$ & $\underset{(0.049)}{0.665}$ && $-$  & $\underset{(0.014)}{0.083}$\\[5mm] 
  2&& 0.1719& 19.847&& $\underset{(0.045)}{0.554}$ & $\underset{(0.044)}{0.600}$ && $\underset{(0.0095)}{0.062}$ & $\underset{(0.0118)}{0.061}$\\[5mm] 
  3&&  0.1522 & 16.599 && $\underset{(0.048)}{0.403}$ & $\underset{(0.055)}{0.407}$ && $\underset{(0.0089)}{0.070}$ & $\underset{(0.0113)}{0.089}$  \\[5mm] 
  4&&  0.1415 &14.319 && $\underset{(0.051)} {0.425}$ & $\underset{(0.058)}{0.302}$ && $\underset{(0.0091)}{0.065}$ & $\underset{(0.0098)}{0.082}$ \\[5mm]  
  5&&  0.1325 & 12.593 && $\underset{(0.047)}{0.455}$ & $\underset{(0.048)}{0.255}$  && $\underset{(0.0089)}{0.062}$ & $\underset{(0.0086)}{0.079}$  \\[5mm]  
 6&&  0.1252 & 11.132 && $\underset{(0.034)}{0.539}$ & $\underset{(0.046)}{0.465}$ && $\underset{(0.0083)}{0.042}$ & $\underset{(0.0075)}{0.064}$ \\[5mm]  
 7&&  0.1182 & 10.059&& $\underset{(0.038)}{0.483}$  & $\underset{(0.0423)}{0.403}$  && $\underset{(0.0080)}{0.040}$ & $\underset{(0.0079)}{0.065}$ \\[5mm]  
 FE &&  $-$ & $-$ && $-$  & $\underset{(0.058)}{0.284}$  && $-$ & $\underset{(0.069)}{-0.044}$ \\[5mm]  
 \bottomrule\\
 \end{tabular} 
 \caption{WGFE and GFE estimates of $\theta_1$ and $\theta_2$ over number of groups $G=1,\dots,7$, and fixed-effects (FE). Standard errors in parenthesis calculated via formulas in Appendix \ref{sec:ses}.}
 \label{fig:coeff-inc-dem}
  \end{figure}  
\end{center}
Figure \ref{fig:coeff-inc-dem} shows the GFE and WGFE estimates and standard errors of the lagged democracy and income coefficients over different total number of groups $G$ using the 1970-2000 balanced subsample of \cite{acemoglu:2008}. For reference, the results of the pooled regression ($G=1$) is reported under the GFE columns. First note the decrease in both estimates from the pooled estimate, which is consistent with some positive correlation between the unobservable and lagged democracy possibly due to historical/political events. WGFE standard errors are smaller than the GFE standard errors of the estimator of the lag term coefficient with the exception of $G=2$. However, for the income term in larger group numbers the standard errors for GFE are smaller. This can likely be attributed to multicollinearity between the unobservable, that represents additional second moment information and income. In other words the latent factor that is uncovered is more correlated with income, which was also a result found by \cite{kim:2019} in their Bayesian approach to group heteroskedasticity. As for the parameter estimates, the WGFE estimates of the lagged democracy term appear to be more robust to the number of groups $G$ by being more consistent, possibly indicating that the latent variable is being better controlled for in the regression.

\begin{figure}\centering
	\begin{subfigure}{\textwidth}\centering
		\includegraphics[width=0.3\textwidth]{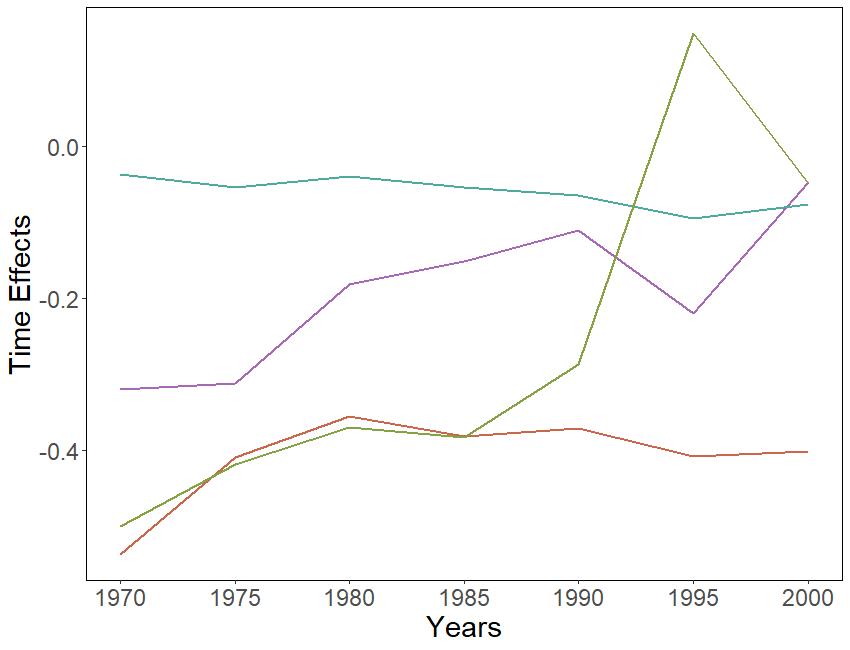}
		\includegraphics[width=0.3\textwidth]{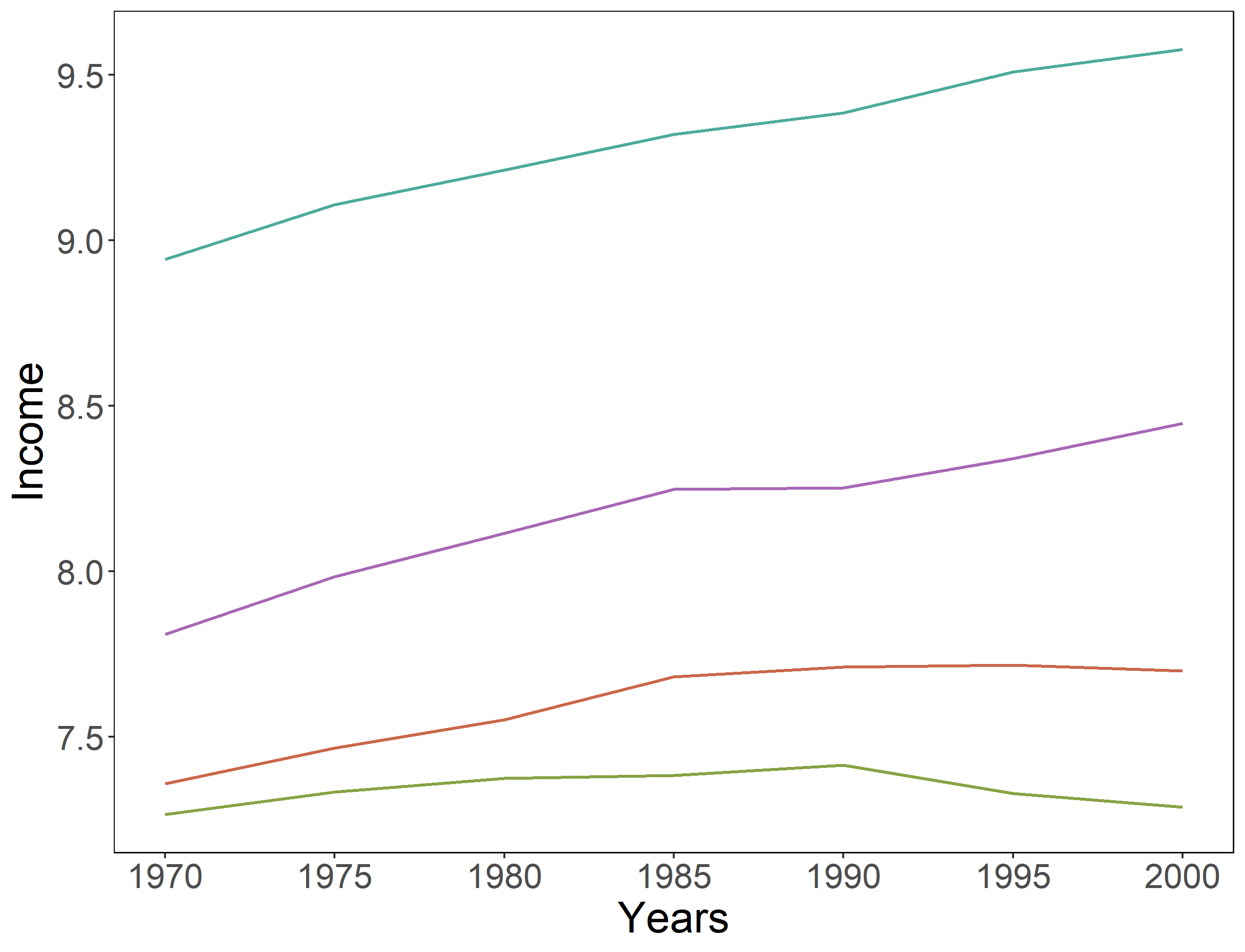}
		\includegraphics[width=0.3\textwidth]{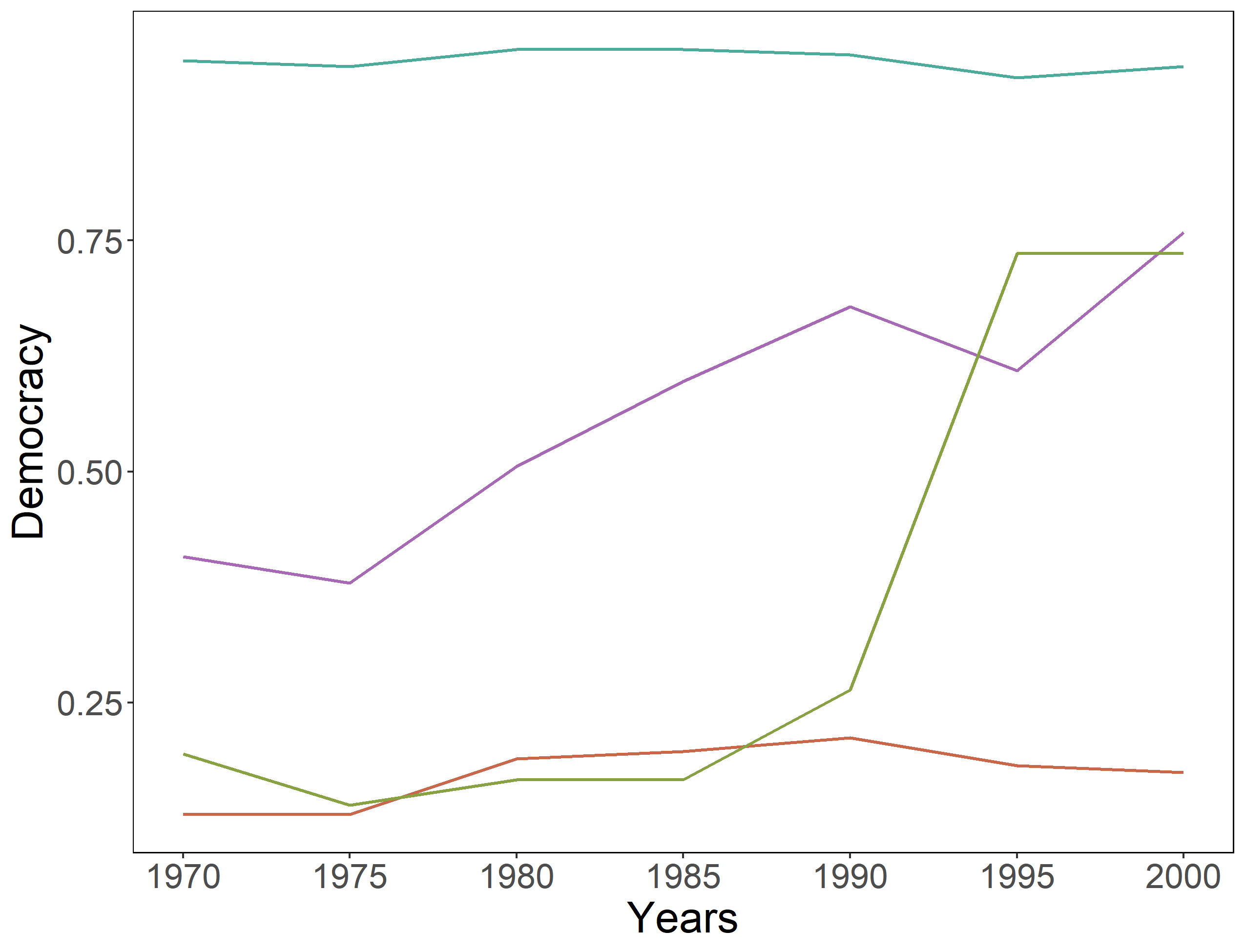}
		\caption{WGFE group assignments}
	\end{subfigure}
	\hfill
	\begin{subfigure}{\textwidth}s\centering
		\includegraphics[width=0.3\textwidth]{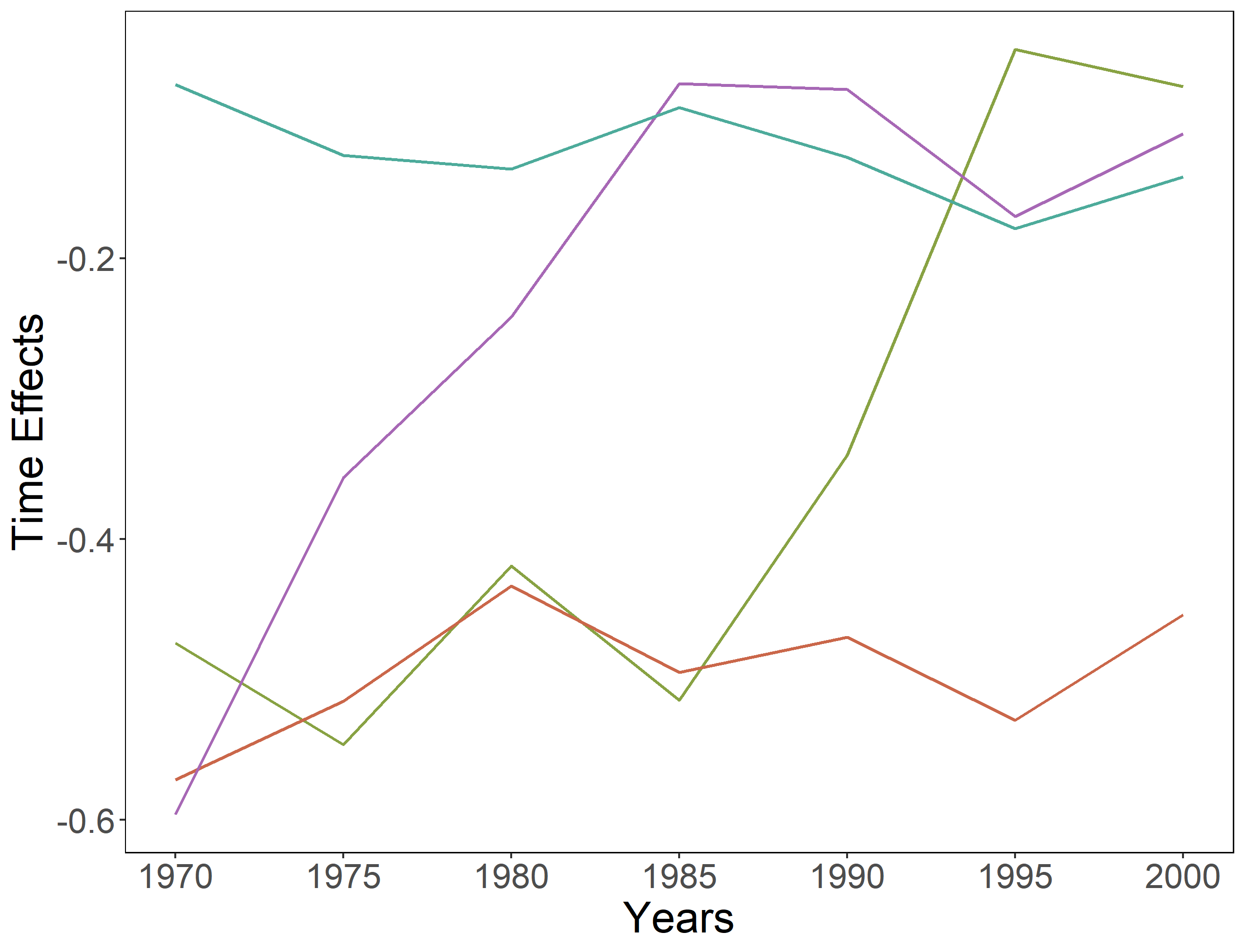}
		\includegraphics[width=0.3\textwidth]{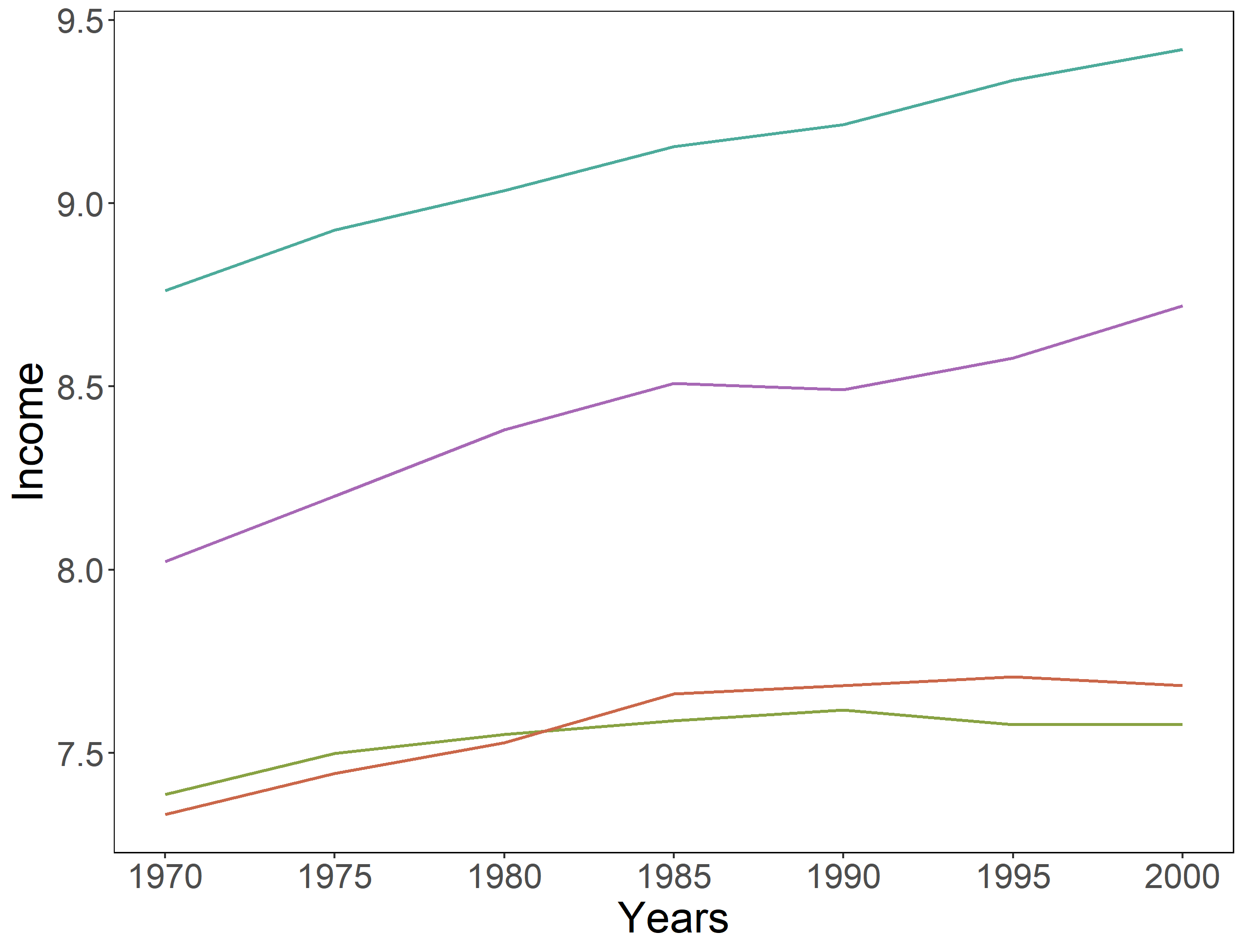}
		\includegraphics[width=0.3\textwidth]{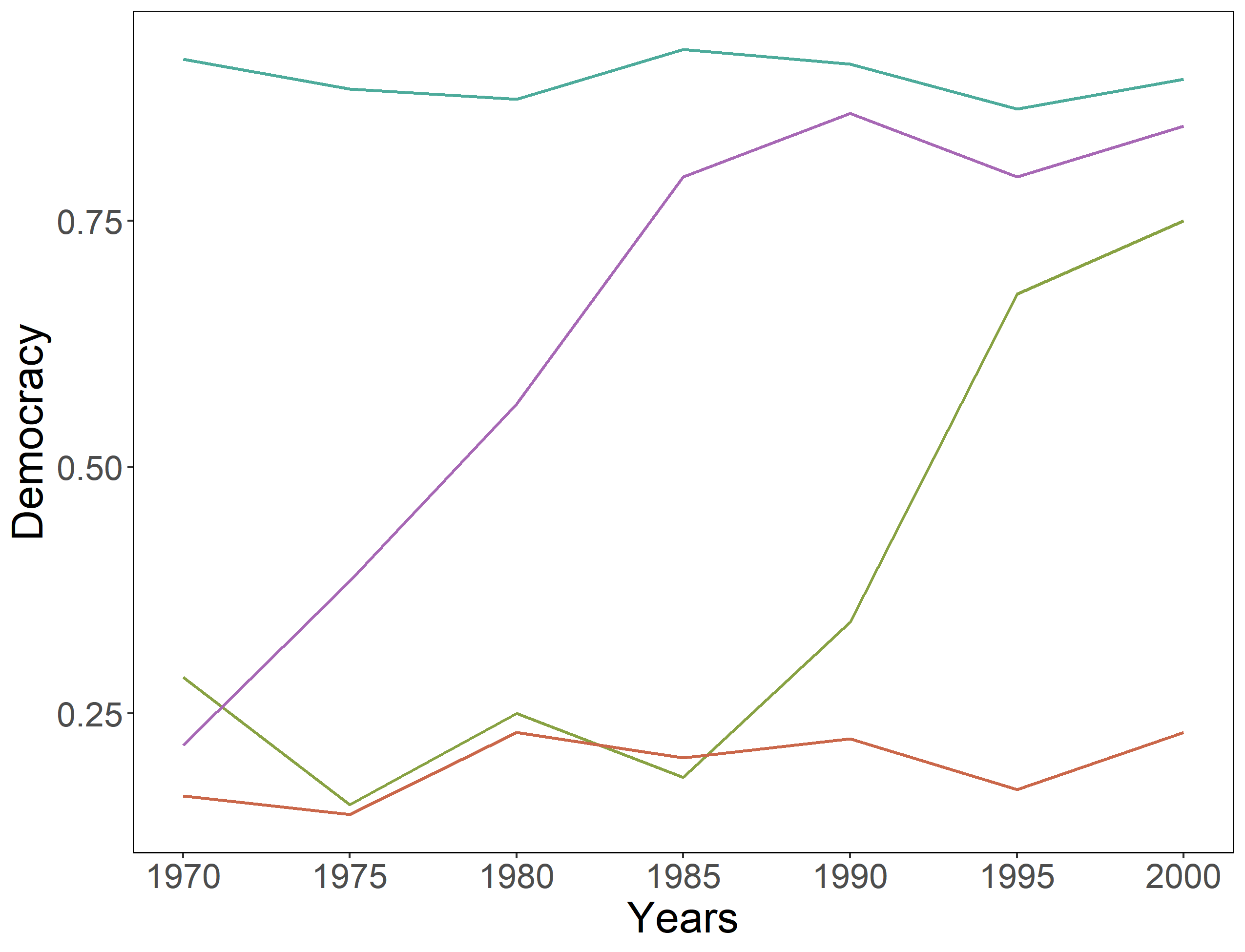}
	\caption{GFE group assignments}
	\end{subfigure}
\caption{Left: Group-specific time effects $\alpha_{gt}$ from WGFE estimates (top) and GFE estimates (bottom).
Middle: Within-group average income. Right: Within-group average of democracy index. Blue: High Democ., Orange: Low Democ., Green: Late transitioners, Purple: Early transitioners.}
\label{fig:time-effects-dem}
\end{figure}

In Figure \ref{fig:time-effects-dem} I have estimated the time effects using WGFE and GFE estimation in the case of $G=4$ groups. I follow the main discussion of \cite{bm:2015} and also report the alternative specification of $G=5$ in the Appendix \ref{app:moreresults} (as in \cite{kim:2019}). We see heterogeneous time patterns that are clearly distinct from each other. We also see two, well separated time paths in blue and orange, which are known as the low and high democracy groups following the convention of previous work. The high-democracy groups include most European Union countries, the USA, and also Colombia, Japan and India. As for the low-democracy groups both assignment rules result in China, Iran, and many African countries. These two groups run parallel to each other across time. 

In contrast, there are two additional groups detected that exhibit transitions from low-democracy to high-democracy in the sample time periods. Keeping with convention, I identify the group that makes the transition sooner as the early-transition group (purple) and the other as the late-transition group (green). The group assignments between these groups are different between the two methods, however both detect Argentina, South Korea, Spain and Greece as early transitioners and Panama, Romania and Taiwan as late transitioners. Differences of the WGFE assignments come from the early and late transitioners poaching members from either the high-democracy or low-democracy groups as determined by GFE assignments. For a complete list of assignments and differences between WGFE and GFE see Appendix \ref{app:moreresults}, Figure \ref{fig:wgfe-list}.

The WGFE may provide more consistent groupings for GFE based on historical account. For instance, the Dominican Republic had unstable democratic institutions throughout the 1960's and their first elected president was the proxy president for their last dictator, who remained in power until 1978. His regime was marked by poor human rights and civil liberties where restrictions were placed on opposition parties. Despite these factors and how the Freedom index measures democracy, GFE assigns the Dominican Republic to the high democracy group while WGFE places the country in the early transitioners. This and some other examples (Greece, Cyprus and Turkey) display a potential advantage of WGFE assignment detecting groupings more robust to the instability that might be found in transitioning groups. 
 
Turning attention to the group fixed effects themselves, the WGFE effects of the low and high democracy groups appear smoother compared to GFE. The late transitioners follow the low democracy group close up until 1985 when it sharply rises. The early transitioners follow a more stable upward trend to the high democracy group. As for the income and democracy plots, the WGFE assignments of groups display more positive correlation between the latent variable and income than GFE. This might explain the multicollinearity issue raised for the coefficient estimate of income using the WGFE estimator. 
\begin{center}
 \begin{figure}[h!] \centering  \begin{tabular}{@{}ccccc@{}}
 \toprule Group & High & Early & Late & Low  \\ \midrule 
$P_g$ & 26 & 13 & 18 & 33\\[5mm] 
  $\sigma_g$ & 0.133 & 0.213 & 0.212 & 0.181 \\
 \bottomrule\\
 \end{tabular} 
 \hspace{1cm}
\begin{tabular}{@{}ccccc@{}}
 \toprule Group & High & Early & Late & Low  \\ \midrule 
$P_g$ & 27 & 29 & 12 & 22\\[5mm] 
  $\sigma_g$ & 0.186 & 0.293 & 0.217 & 0.208 \\
 \bottomrule\\
 \end{tabular} 
 \caption{GFE (Left) and WGFE (Right) estimates of group sizes and variances.}
 \label{fig:sizes-var}
  \end{figure}  
\end{center}
Concerning variances, the late and early transitioners are found to be much different with WGFE estimates where the early transitioners have the most variable democracy outcomes. The stable groups still have the lowest variances, but they have increased from the GFE version. As for group sizes, Huntington's theory suggests democracy was on the rise in this period, so we should observe that most countries are either democracies or early transitioners. From Figure \ref{fig:sizes-var} and \ref{fig:wgfe-list} we see that democracy under WGFE is more dominant in the sample period with more early transitioners, as opposed to predictions from GFE which displays much less early transitioners and a large group of late transitioners. WGFE appears more in line with Huntington's theory in this sense.
\begin{figure}[h!]
	\includegraphics[width = \textwidth]{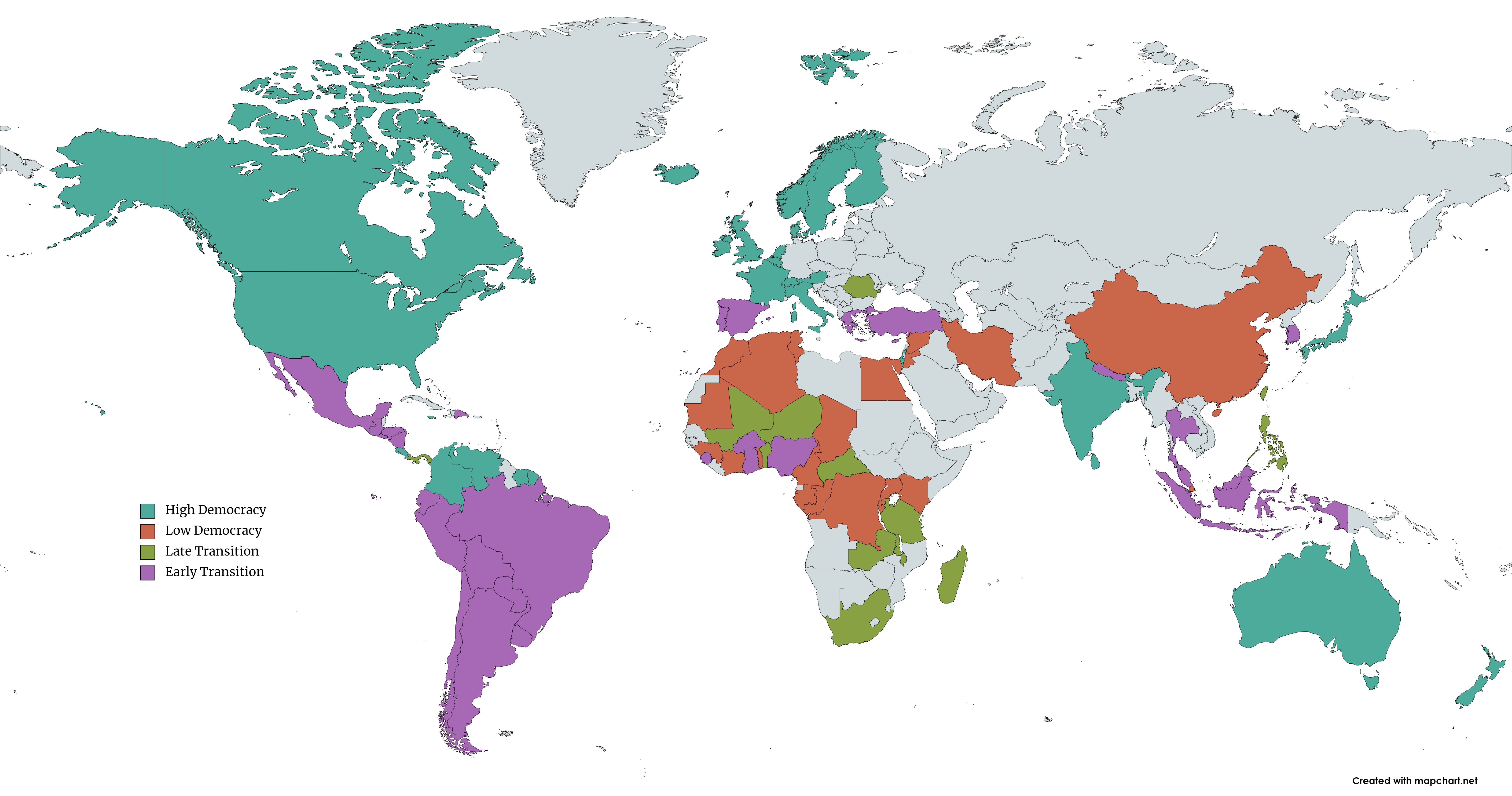}
	\includegraphics[width = \textwidth]{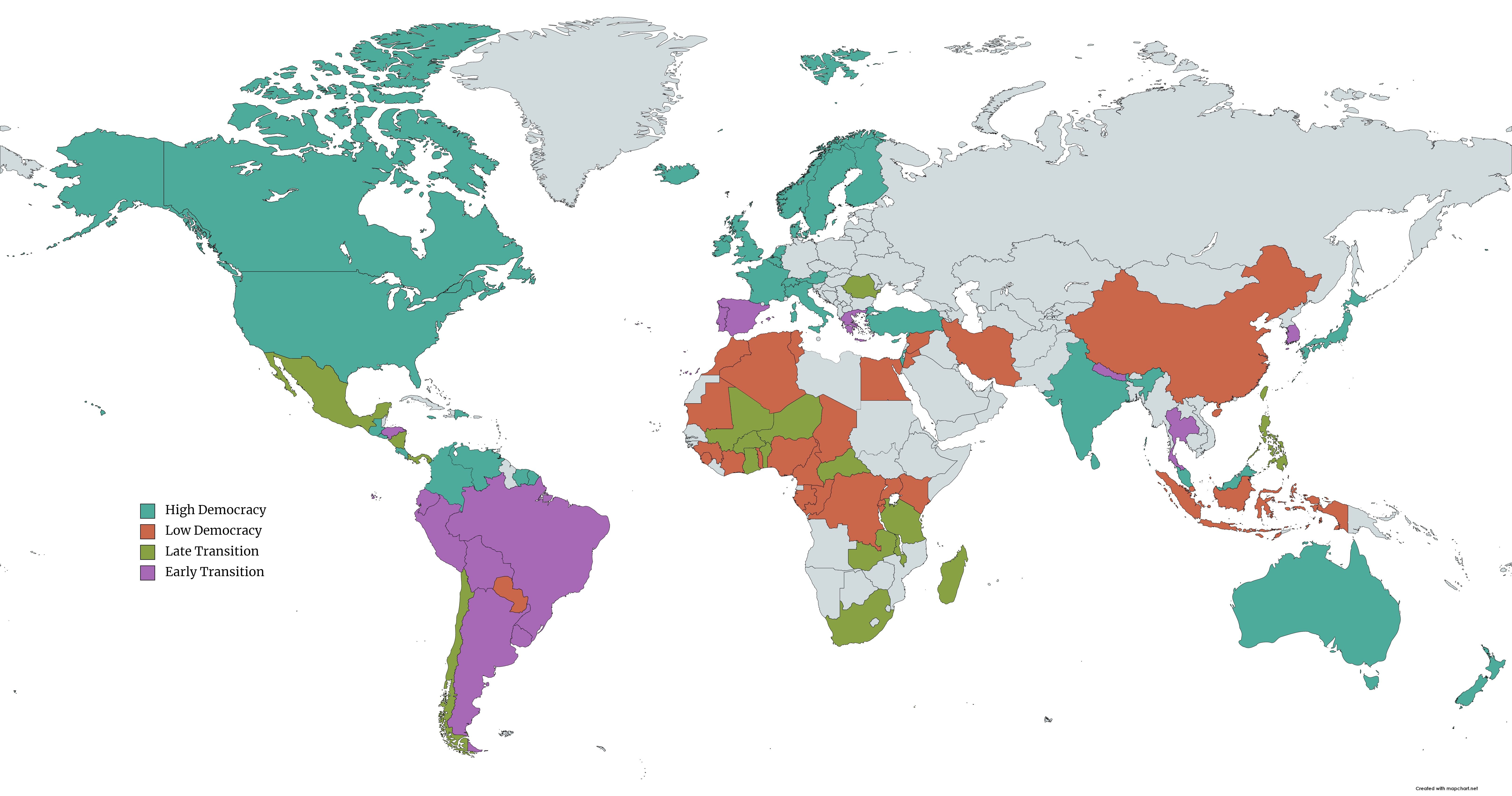}
\caption{$G = 4$ map with Top: WGFE groups, and Bottom: GFE groups}
\label{fig:maps4}
\end{figure}

In Figure \ref{fig:maps4} we see even more local clustering of democracies and early transitioners. Indeed, South America seems to be immersed in transition while Africa being unstable in the sample period having a mix of early, late transitioners with low democracy countries. In the Eastern Mediterranean region I see Turkey, Greece and Cyprus all transitioning as opposed to having Turkey as a high democracy country. In Asia, I see transitioners sandwiched by low democracy China and high democracies Australia, India and Japan. I emphasize the absence of any structure imposed on groupings so this apparent geographical clustering is all the result of estimation.

\newpage
\bibliographystyle{abbrvnat}
\bibliography{cluster-bib}

\newpage
\newgeometry{top=0.5in,bottom=0.5in,right =1in,left =1in}
\appendix
\section{Additional Results}\label{app:moreresults}
\subsection{Income and Democracy}

\begin{center}
 \begin{figure}[h!] \centering
	\begin{subfigure}{\textwidth}\centering
		\begin{tabular}{@{}ccc@{}}
		 \toprule WGFE & High & Low  \\ \midrule 
		$P_g$ & 32 &  58\\[5mm] 
		  $\sigma_g$ & 0.28 &  0.29\\
		 \bottomrule\\
		 \end{tabular} 
		 \hspace{1cm}
		 \begin{tabular}{@{}ccc@{}}
		 \toprule GFE & High & Low  \\ \midrule 
		$P_g$ & 49 &  41\\[5mm] 
		  $\sigma_g$ & 0.21 & 0.20 \\
		 \bottomrule\\
		 \end{tabular} 
	\caption{$G = 2$}
	\end{subfigure}
	\hfill\vspace{1cm}
	\begin{subfigure}{\textwidth}\centering
		\begin{tabular}{@{}cccc@{}}
		 \toprule WGFE & High & Transition & Low  \\ \midrule 
		$P_g$ & 29 &  39 & 22 \\[5mm] 
		  $\sigma_g$ & 0.22 & 0.31 & 0.24 \\
		 \bottomrule\\
		 \end{tabular} 
		 \hspace{1cm}
		 \begin{tabular}{@{}cccc@{}}
		 \toprule GFE & High & Transition & Low  \\ \midrule 
		$P_g$ & 38 &  24 & 28\\[5mm] 
		  $\sigma_g$ & 0.26 & 0.29 & 0.24 \\
		 \bottomrule\\
		 \end{tabular}
	\caption{$G = 3$} 
	\end{subfigure}
	\hfill\vspace{1cm}
	\begin{subfigure}{\textwidth}\centering
		\begin{tabular}{@{}cccccc@{}}
		 \toprule WGFE & High & Early & Middle & Late & Low  \\ \midrule 
		$P_g$ & 27&  18& 12 & 11& 22 \\[5mm] 
		  $\sigma_g$ & 0.19 & 0.27 & 0.27 & 0.20 & 0.20\\
		 \bottomrule\\
		 \end{tabular} 
		 \hspace{1cm}
		\begin{tabular}{@{}cccccc@{}}
		 \toprule GFE & High & Early & Middle & Late & Low  \\ \midrule 
		$P_g$ & 30&  12 & 14 & 13 & 21\\[5mm] 
		  $\sigma_g$ & 0.09 & 0.21 & 0.19  & 0.18 & 0.14\\
		 \bottomrule\\
		 \end{tabular} 
	\caption{$G = 5$} 
	\end{subfigure}
 \caption{ WGFE (Left) and GFE (Right) estimates of group sizes and variances.}
 \label{fig:sizes-var}
  \end{figure}  
\end{center}

\begin{figure}\centering
	\begin{subfigure}{\textwidth}\centering
		\includegraphics[width=0.3\textwidth]{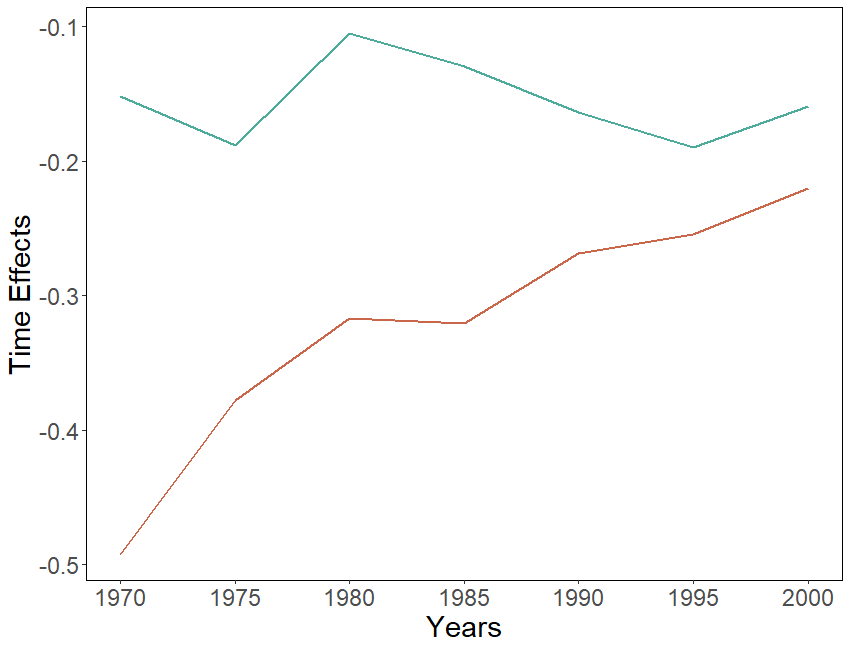}
		\includegraphics[width=0.3\textwidth]{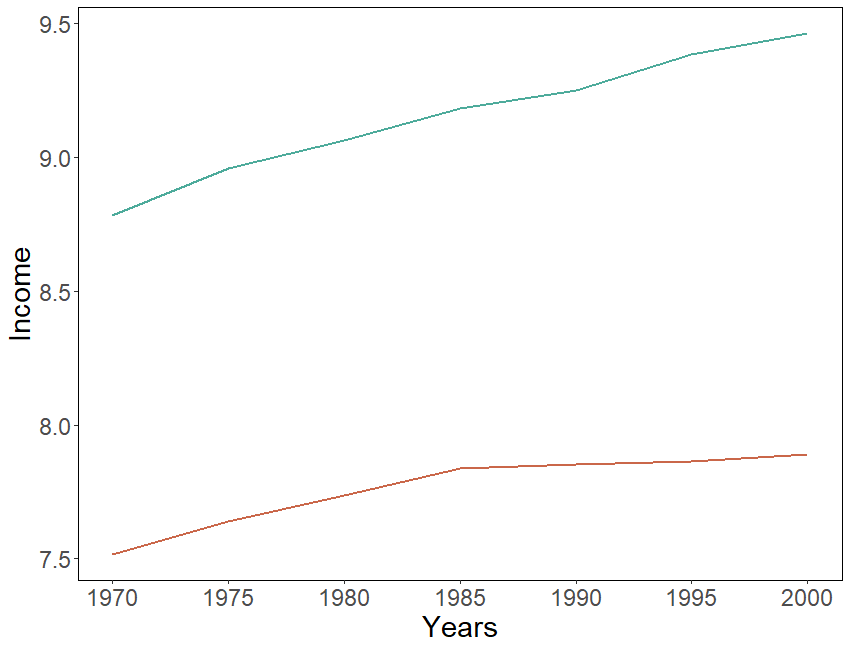}
		\includegraphics[width=0.3\textwidth]{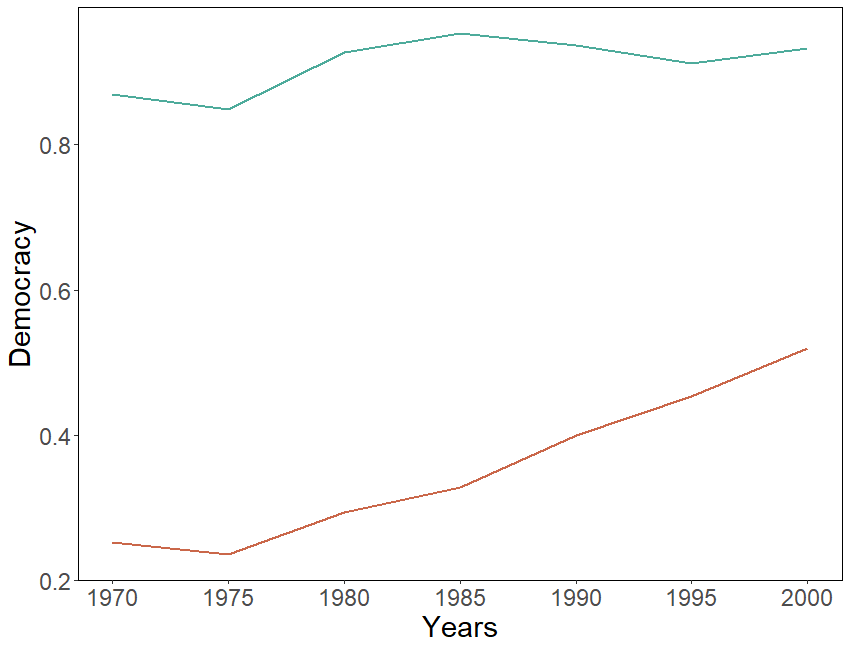}		
		\includegraphics[width=0.3\textwidth]{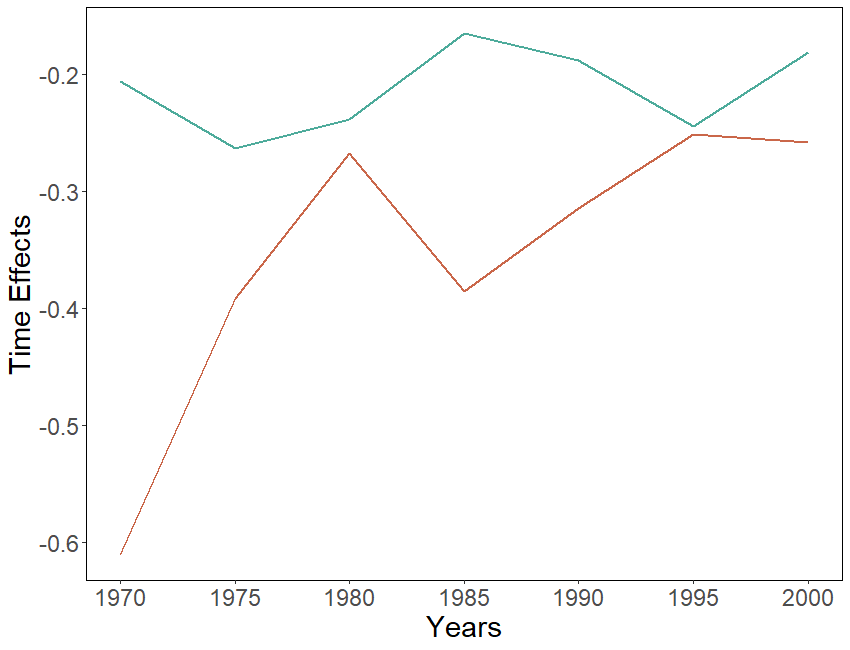}
		\includegraphics[width=0.3\textwidth]{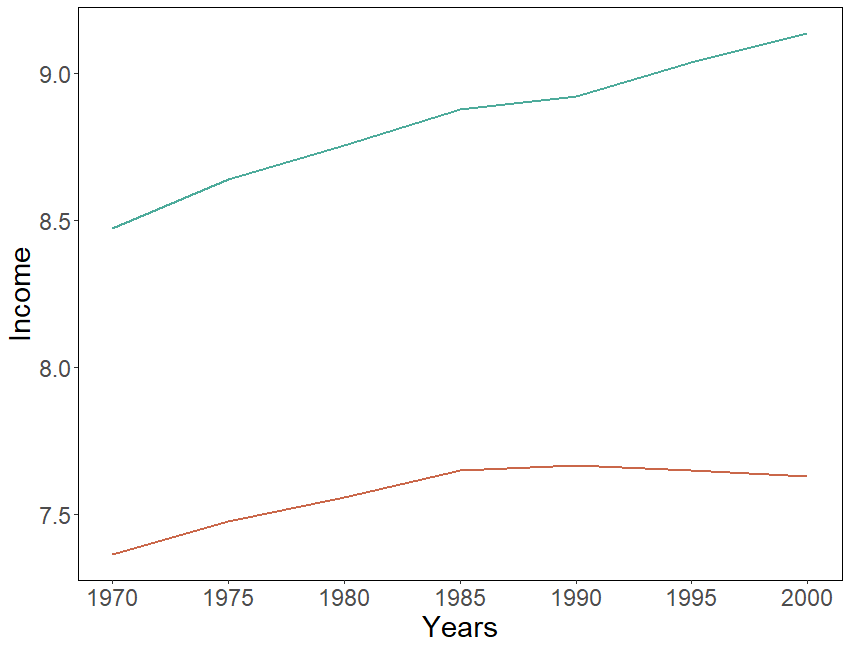}
		\includegraphics[width=0.3\textwidth]{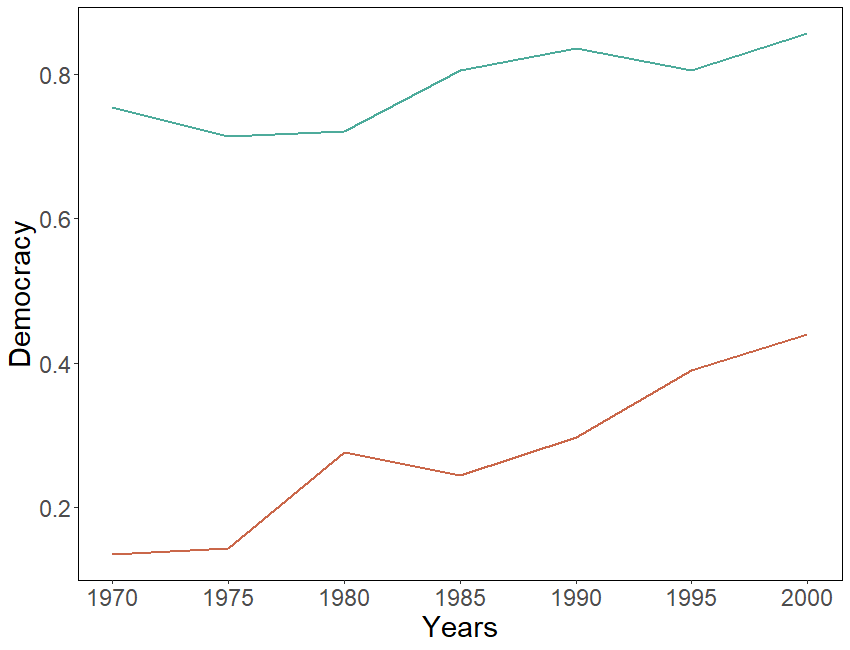}
		\caption{\scriptsize $G=2$ and Top: WGFE, Bottom: GFE}
	\end{subfigure}
	\hfill
	\begin{subfigure}{\textwidth}s\centering
		\includegraphics[width=0.3\textwidth]{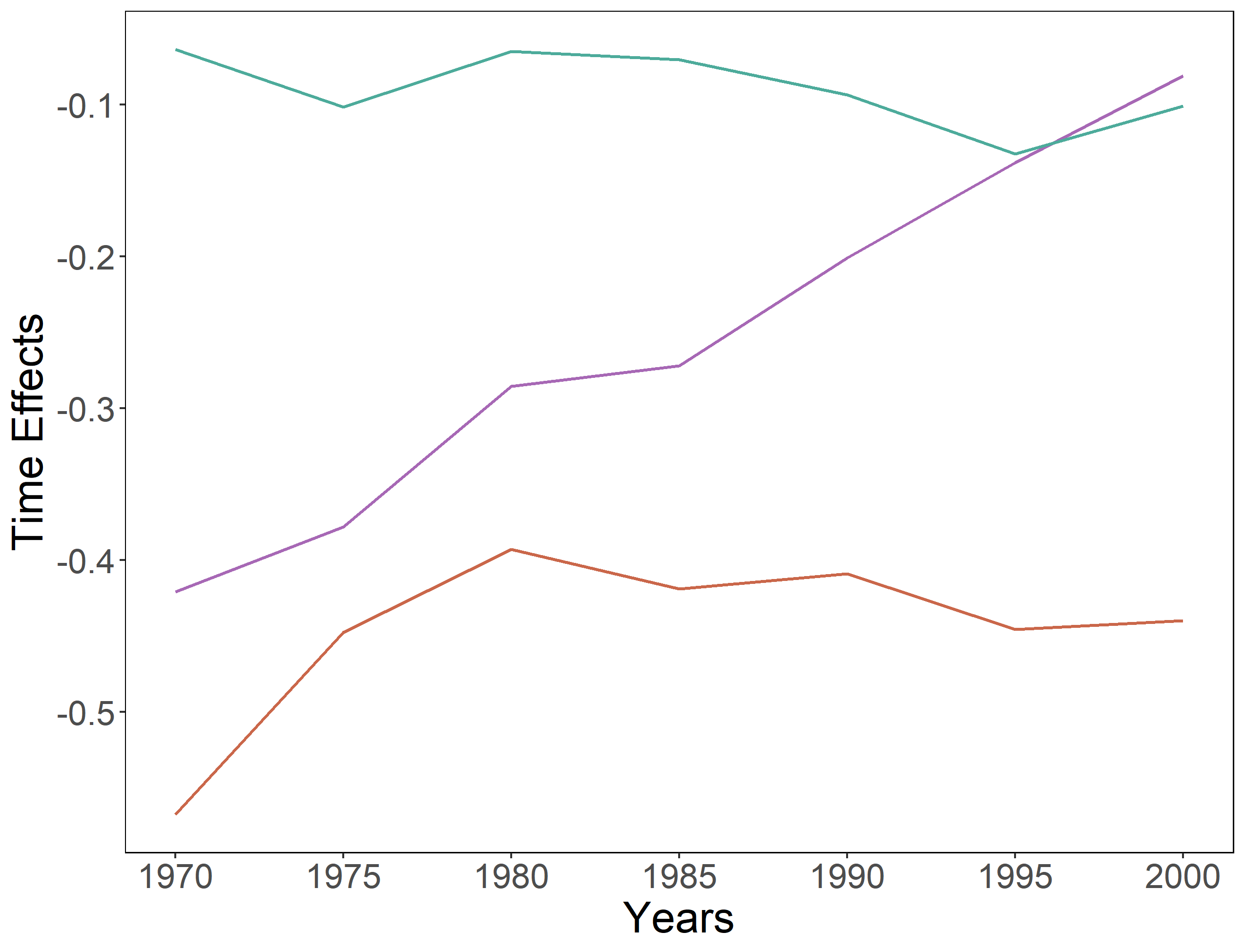}
		\includegraphics[width=0.3\textwidth]{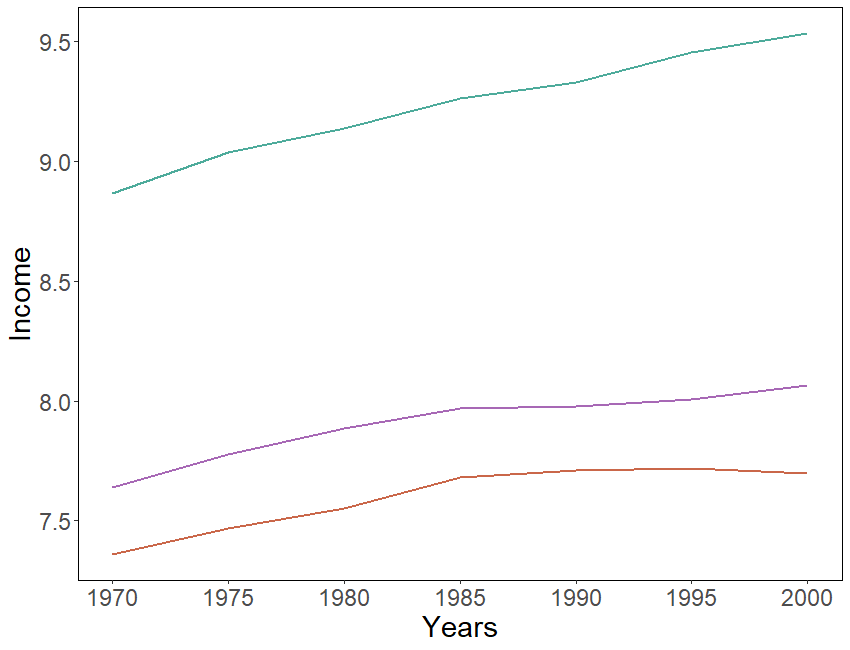}
		\includegraphics[width=0.3\textwidth]{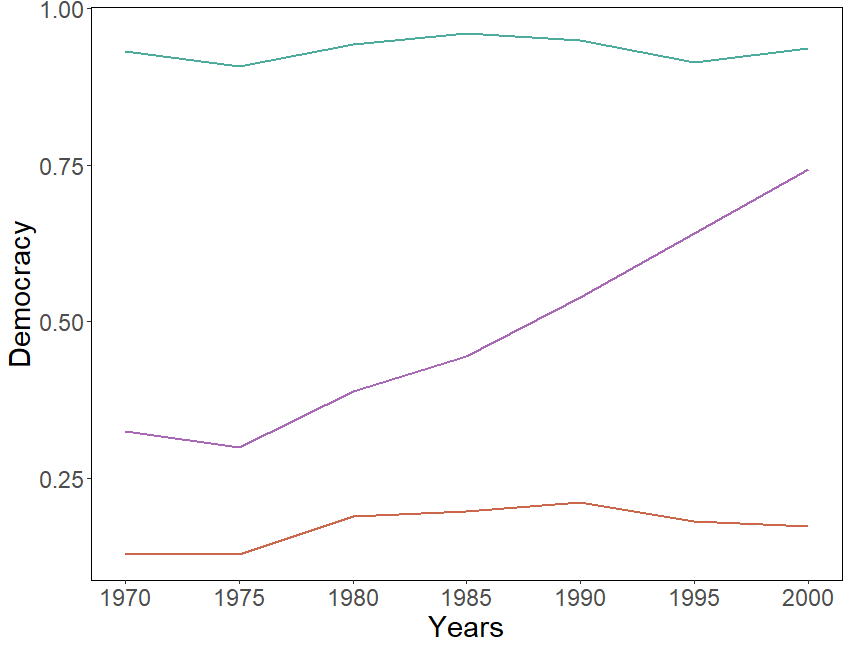}		
		\includegraphics[width=0.3\textwidth]{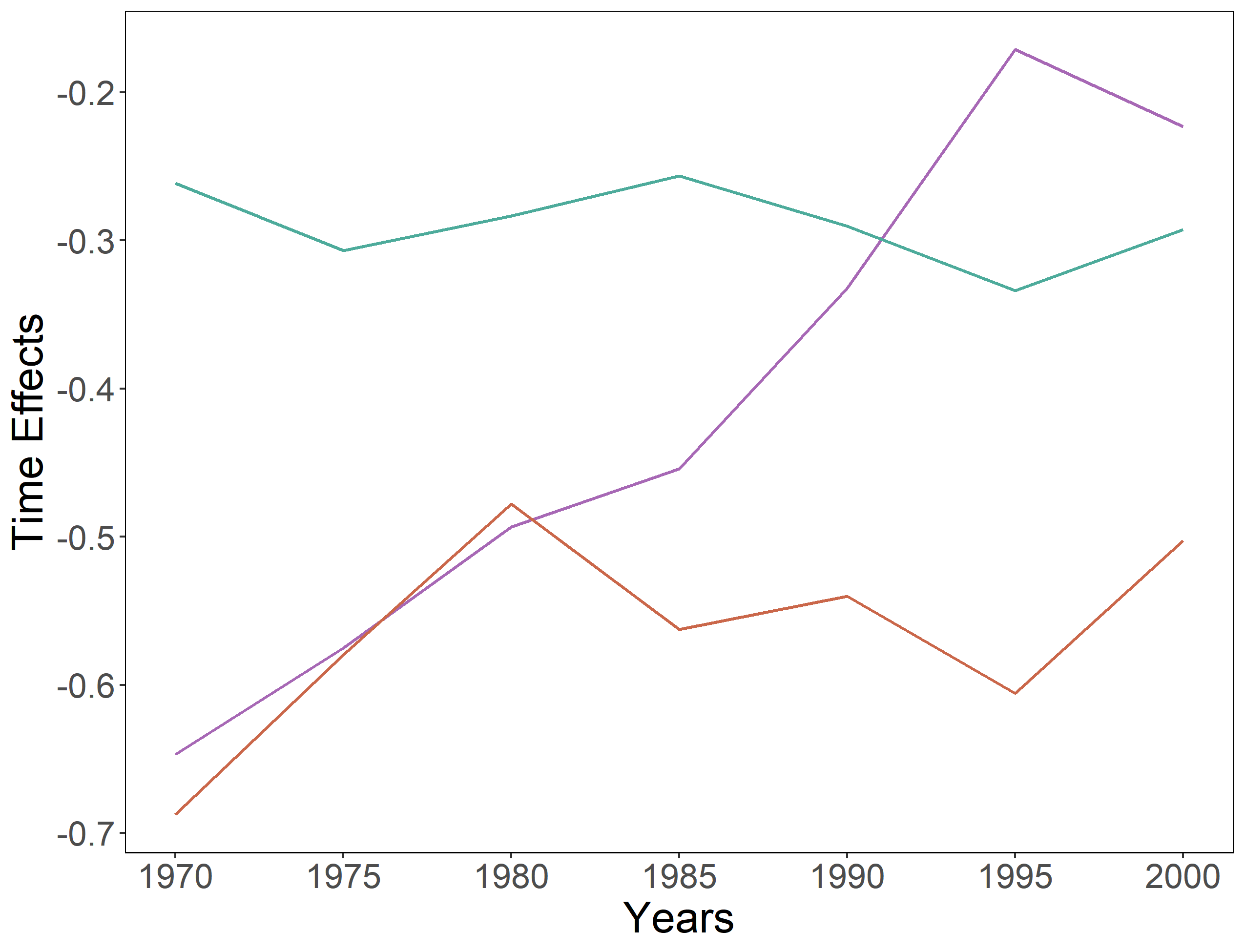}
		\includegraphics[width=0.3\textwidth]{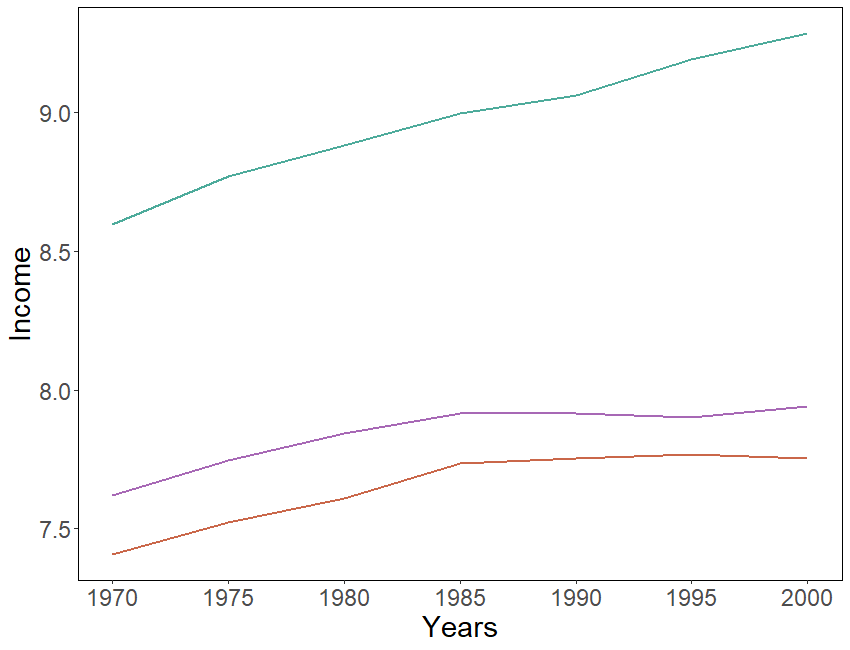}
		\includegraphics[width=0.3\textwidth]{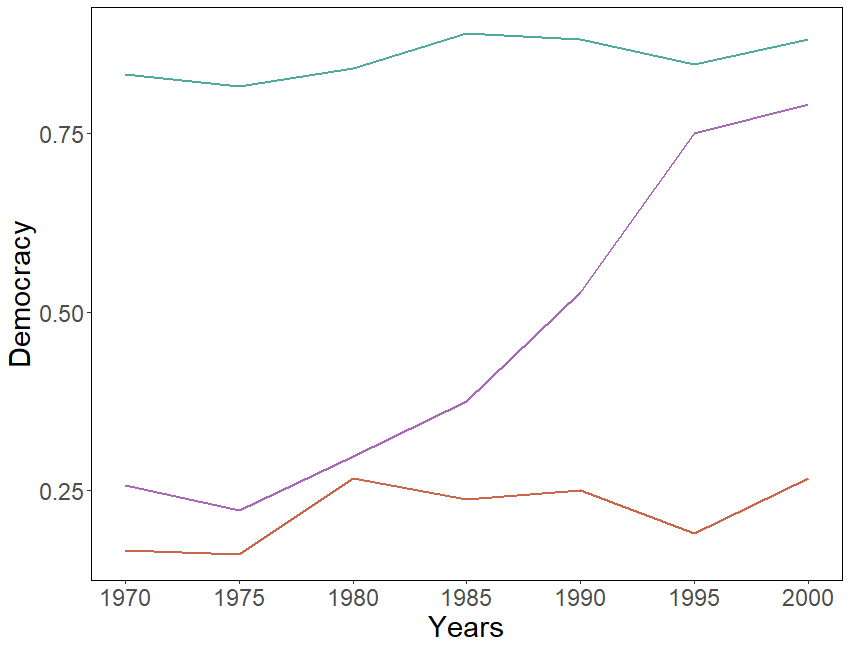}
		\caption{\scriptsize $G=3$ and Top: WGFE, Bottom: GFE}
	\end{subfigure}
	\hfill
	\begin{subfigure}{\textwidth}s\centering
		\includegraphics[width=0.3\textwidth]{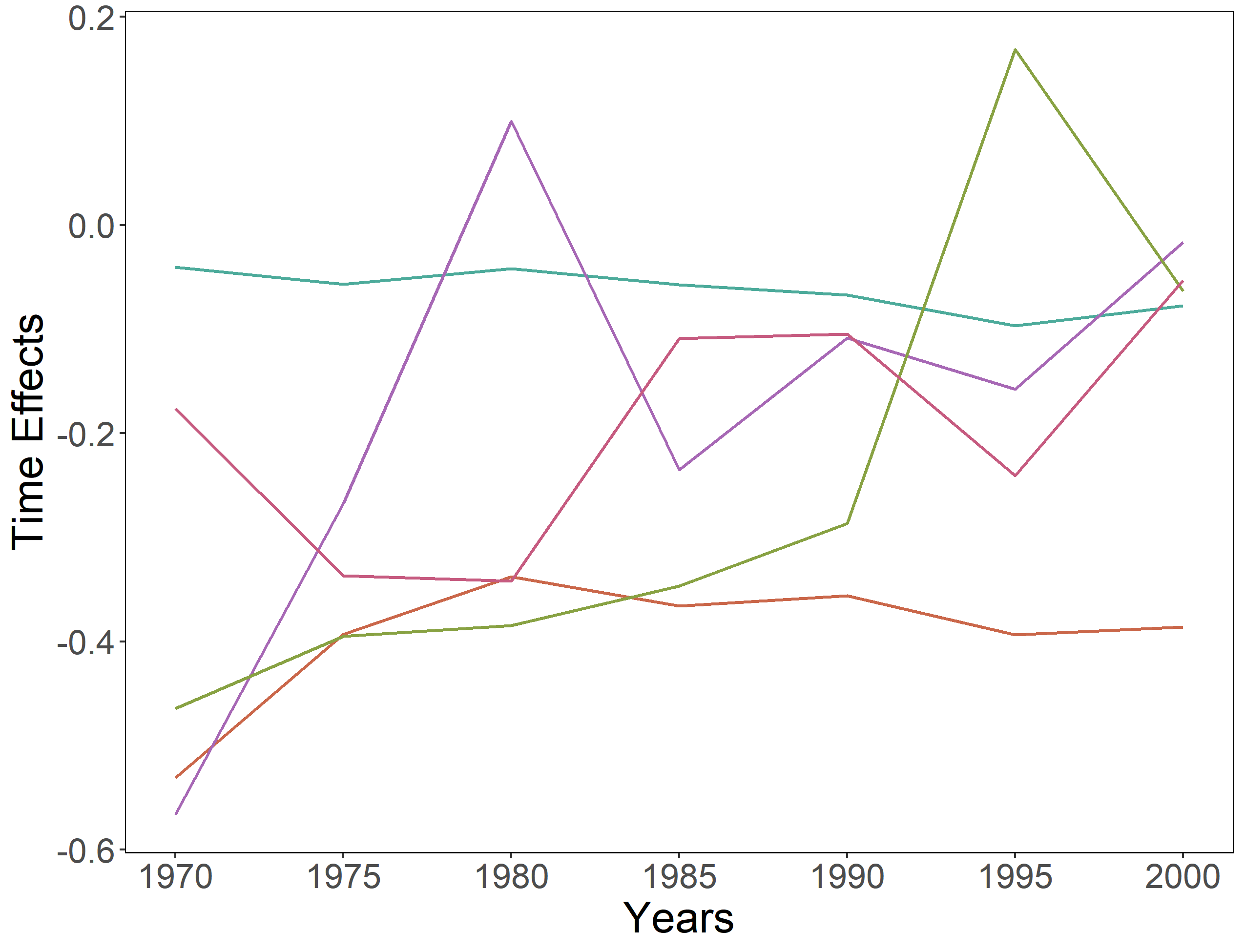}
		\includegraphics[width=0.3\textwidth]{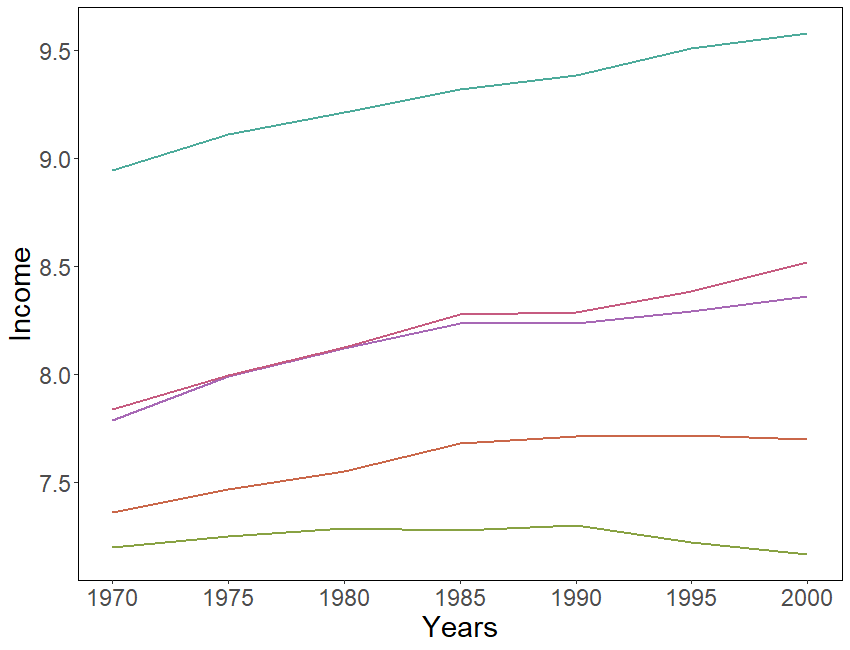}
		\includegraphics[width=0.3\textwidth]{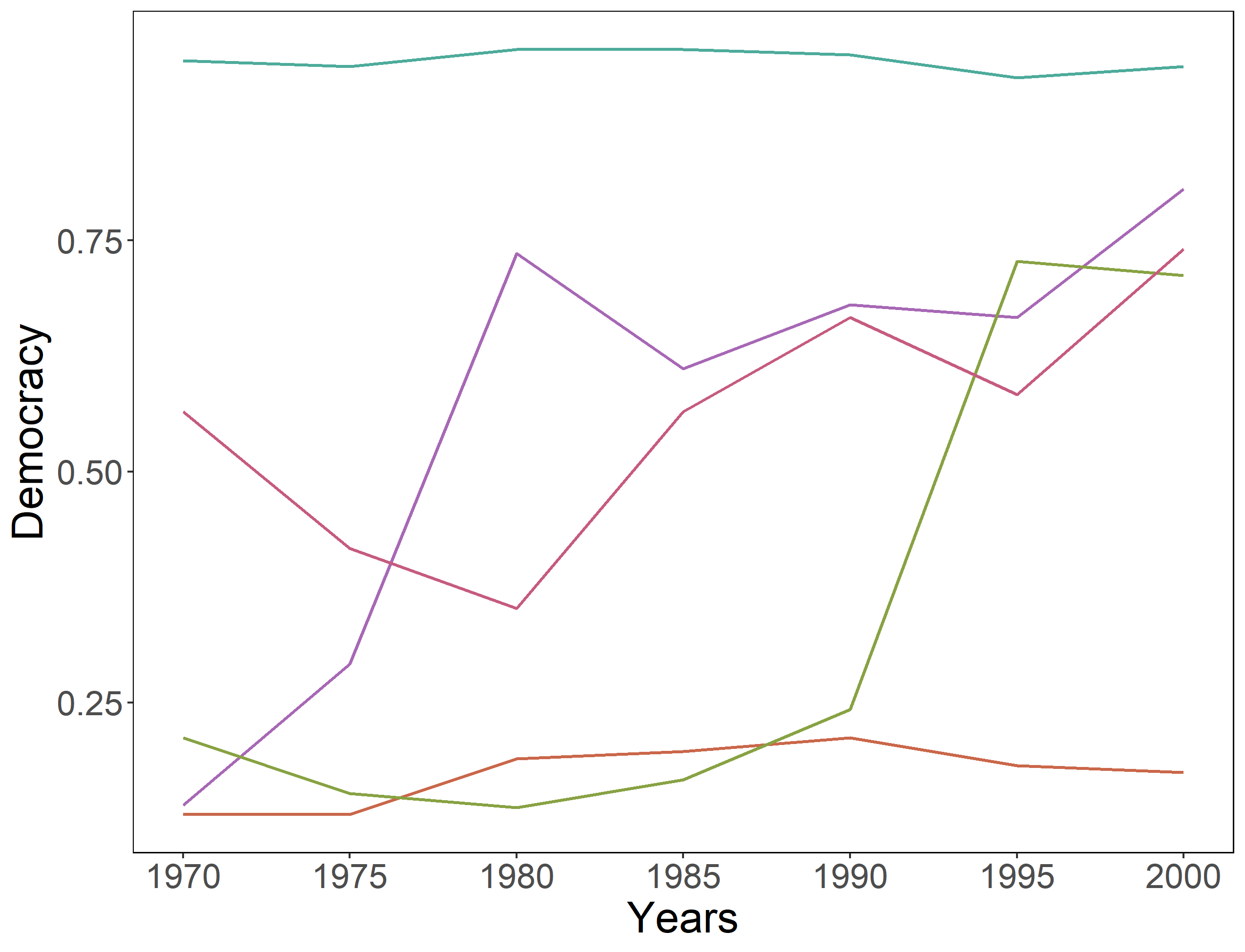}		
		\includegraphics[width=0.3\textwidth]{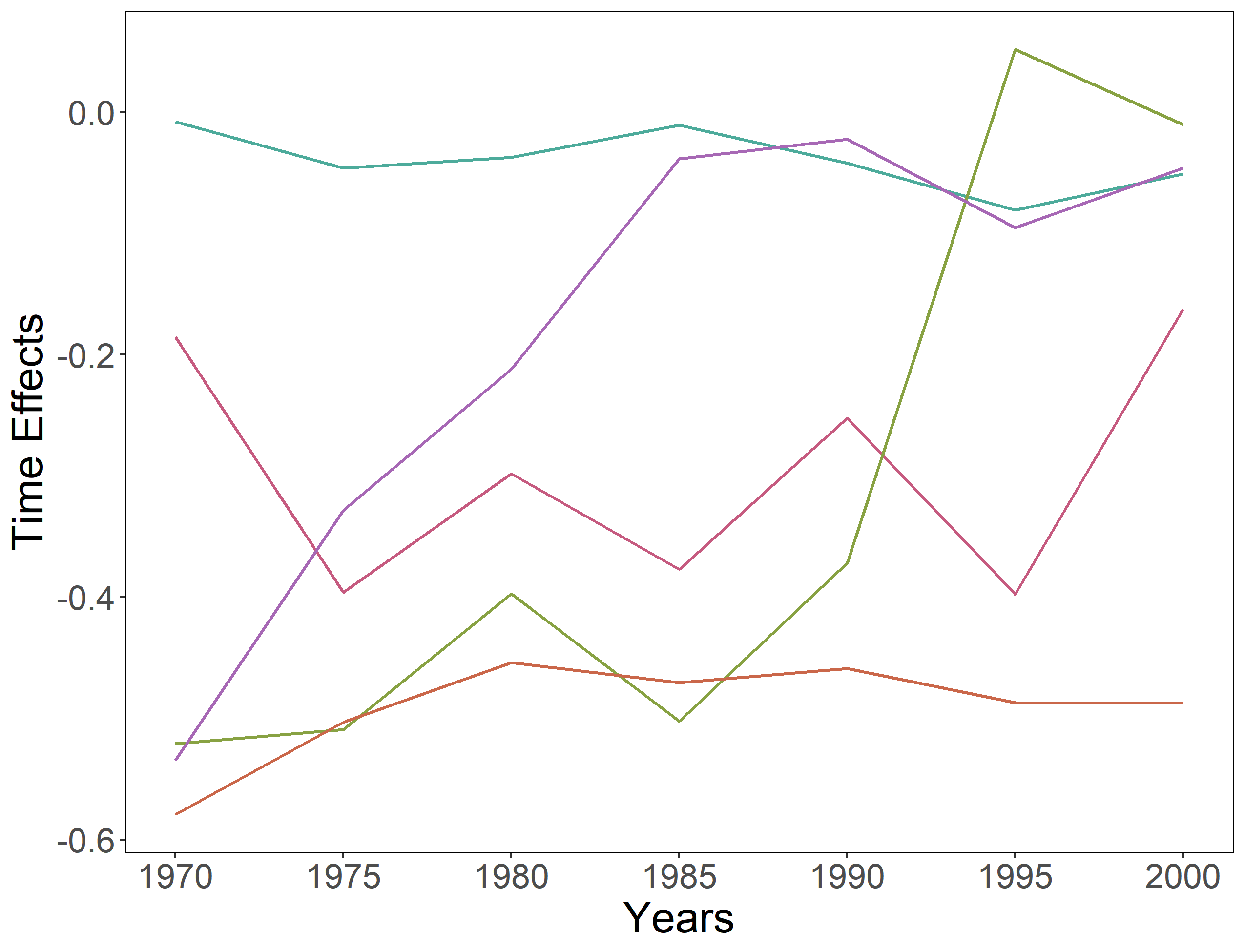}
		\includegraphics[width=0.3\textwidth]{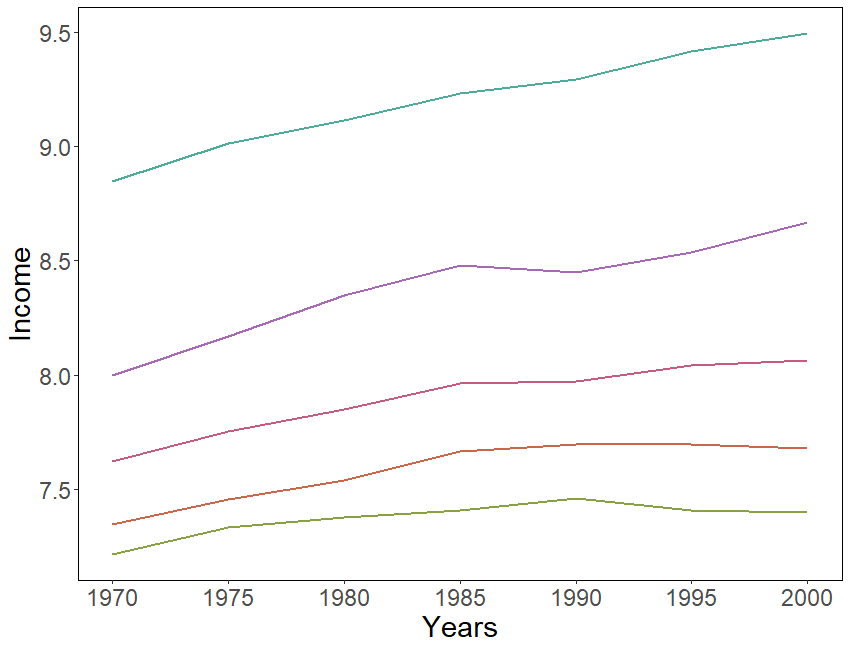}
		\includegraphics[width=0.3\textwidth]{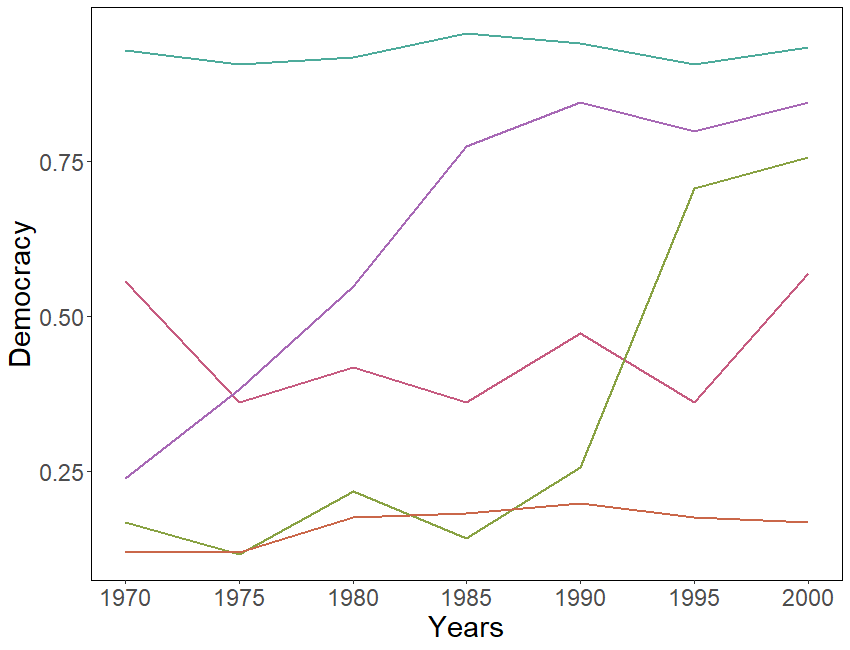}
		\caption{\scriptsize $G=5$ and Top: WGFE, Bottom: GFE}
	\end{subfigure}
\caption{\scriptsize Left: Group-specific time effects $\alpha_{gt}$.
Middle: Within-group average income. Right: Within-group average of democracy index}
\label{fig:time-effects-dem}
\end{figure}

\begin{center}
 \begin{figure} \centering \begin{tabular}{@{}ccccc@{}}
 \toprule  High & Early & Late & Low  \\ \midrule 
 Australia & Argentina & Benin & Algeria \\[2.5mm] 
Austria & Bolivia & Central African Republic & Burundi \\[2.5mm] 
 Belgium & Brazil & Madagascar & Cameroon \\[2.5mm] 
 Canada & Burkina Faso & Malawi & Chad \\[2.5mm]  
Colombia & Chile & Mali & China \\[2.5mm] 
 Costa Rica & Cyprus & Niger & Congo, Dem. Rep. \\[2.5mm] 
 Denmark & Dominican Republic & Panama & Congo, Rep. \\[2.5mm] 
Finland & Ecuador & Phillipines & Cote d'Ivoire \\[2.5mm] 
 France & El Salvador & Romania & Egypt, Arab Rep. \\[2.5mm] 
 Iceland & Ghana & South Africa & Gabon \\[2.5mm] 
 India & Greece & Tanzania & Guinea \\[2.5mm] 
 Ireland & Guatemala & Zambia & Iran \\[2.5mm] 
 Israel & Honduras &  & Jordan \\[2.5mm] 
 Italy & Indonesia &  & Kenya \\[2.5mm] 
 Jamaica & Korea, Rep. &  & Mauritania \\[2.5mm] 
 Japan & Malaysia &  & Morocco \\[2.5mm] 
 Luxembourg & Mexico &  & Rwanda \\[2.5mm] 
Netherlands & Nepal &  & Singapore \\[2.5mm] 
 New Zealand & Nicaragua &  & Syrian Arab Rep. \\[2.5mm] 
 Norway & Nigeria &  & Togo \\[2.5mm] 
 Sri Lanka & Paraguay &  & Tunisia \\[2.5mm] 
 Sweden & Peru &  & Uganda \\[2.5mm] 
 Switzerland & Portugal &  &  \\[2.5mm] 
 Trinidad and Tobago & Sierra Leone &  &  \\[2.5mm] 
 United Kingdom & Spain &  &  \\[2.5mm] 
United State & Taiwan & & \\[2.5mm]
 Venezuela & Thailand &  &  \\[2.5mm] 
\empty &   Turkey &  &  \\[2.5mm] 
\empty &  Uruguay &  &  \\[2.5mm] 
 \bottomrule\\
 \end{tabular} \\

 \caption{List of countries in each group determined by WGFE assignment.}
 \label{fig:wgfe-list}
  \end{figure}  
\end{center}

\begin{center}
 \begin{figure} \centering  \begin{tabular}{@{}ccccc@{}}
 \toprule  High & Early & Late & Low  \\ \midrule 
 Australia & Argentina & Benin & Algeria \\[2.5mm] 
Austria & Bolivia & Burkina Faso & Burundi \\[2.5mm] 
 Belgium & Brazil &  Central African Republic& Cameroon \\[2.5mm] 
 Canada & Ecuador & Chile & Chad \\[2.5mm]  
Colombia & Greece & Ghana & China \\[2.5mm] 
 Costa Rica & Honduras & Madagascar & Congo, Dem. Rep. \\[2.5mm] 
Cyprus &  Korea, Rep. & Malawai & Congo, Rep. \\[2.5mm] 
 Denmark & Nepal & Mali & Cote d'Ivoire \\[2.5mm] 
 Dominican Republic& Peru & Mexico & Egypt, Arab Rep. \\[2.5mm] 
El Salvador & Portugal & Nicaragua & Gabon \\[2.5mm] 
Finland & Spain & Niger & Guinea \\[2.5mm] 
 France & Thailand & Panama & Indonesia \\[2.5mm] 
Guatemala & Uruguay & Phillipines & Iran \\[2.5mm] 
 Iceland &  & Romania & Jordan \\[2.5mm] 
 India &  &  South Africa & Kenya \\[2.5mm] 
 Ireland &  & Taiwan & Mauritania \\[2.5mm] 
 Israel &  & Tanzania & Morocco \\[2.5mm] 
 Italy &  & Zambia & Nigeria \\[2.5mm] 
 Jamaica &  &  & Paraguay \\[2.5mm] 
 Japan & &  & Rwanda\\[2.5mm] 
 Luxembourg &  &  & Sierra Leone \\[2.5mm] 
Malaysia &  & & Singapore \\[2.5mm] 
Netherlands &  &  & Syrian Arab Rep. \\[2.5mm] 
 New Zealand &  &  & Togo \\[2.5mm] 
 Norway &  &  & Tunisia \\[2.5mm] 
 Sri Lanka &  &  & Uganda \\[2.5mm] 
 Sweden &  &  &  \\[2.5mm] 
 Switzerland &  &  &  \\[2.5mm] 
 Trinidad and Tobago &  &  &  \\[2.5mm] 
Turkey &  &  &  \\[2.5mm] 
 United Kingdom & &  &  \\[2.5mm] 
 United States &  &  & \\[2.5mm] 
 Venezuela & &  &  \\[2.5mm] 
\empty &    &  &  \\[2.5mm] 
\empty &   &  &  \\[2.5mm] 
 \bottomrule\\
 \end{tabular} \\

 \caption{List of countries in each group determined by GFE assignment.}
 \label{fig:gfe-list}
  \end{figure}  
\end{center}

\restoregeometry

\begin{center}
\begin{figure}
	\includegraphics[width = \textwidth]{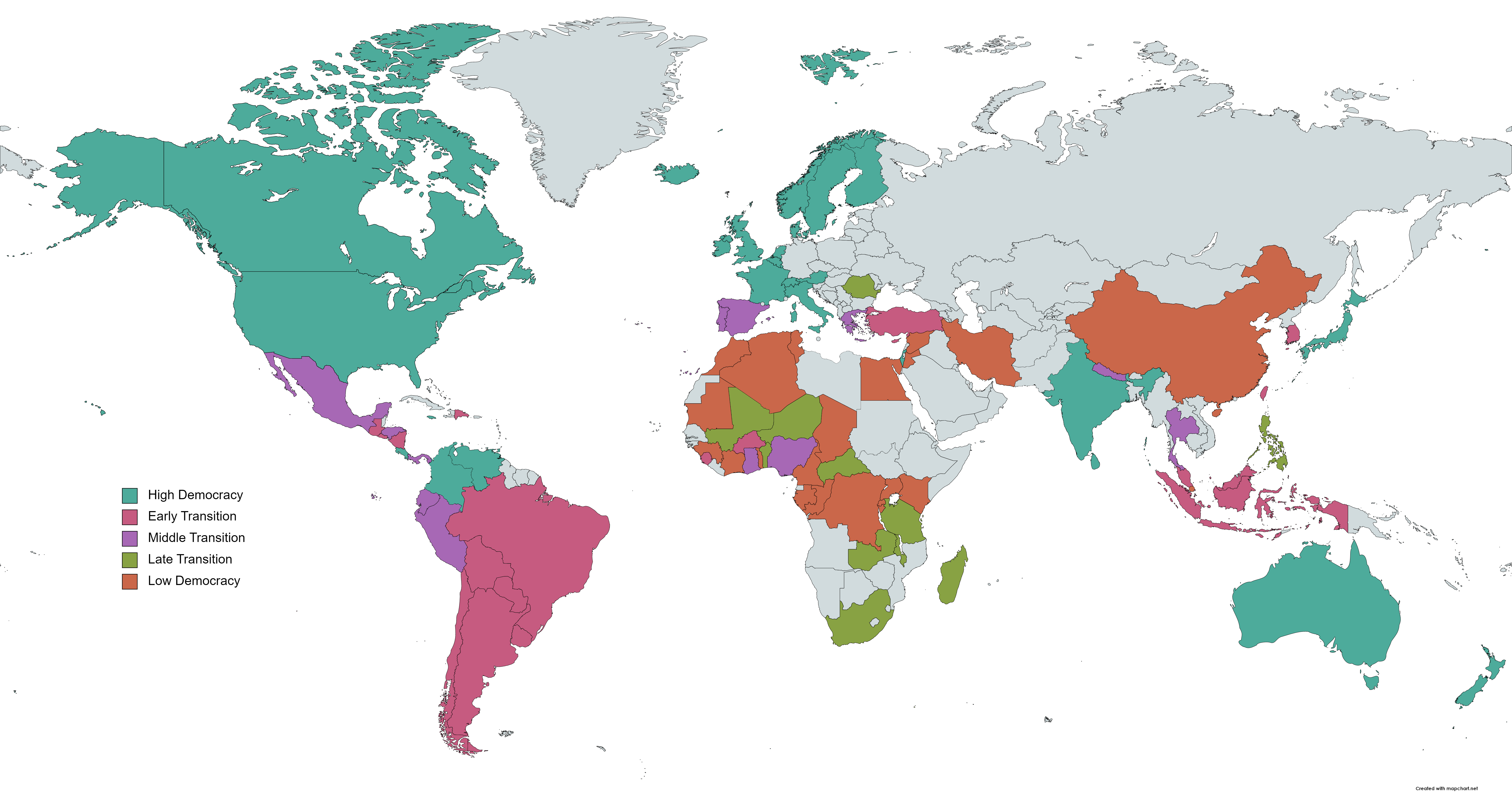}
	\includegraphics[width = \textwidth]{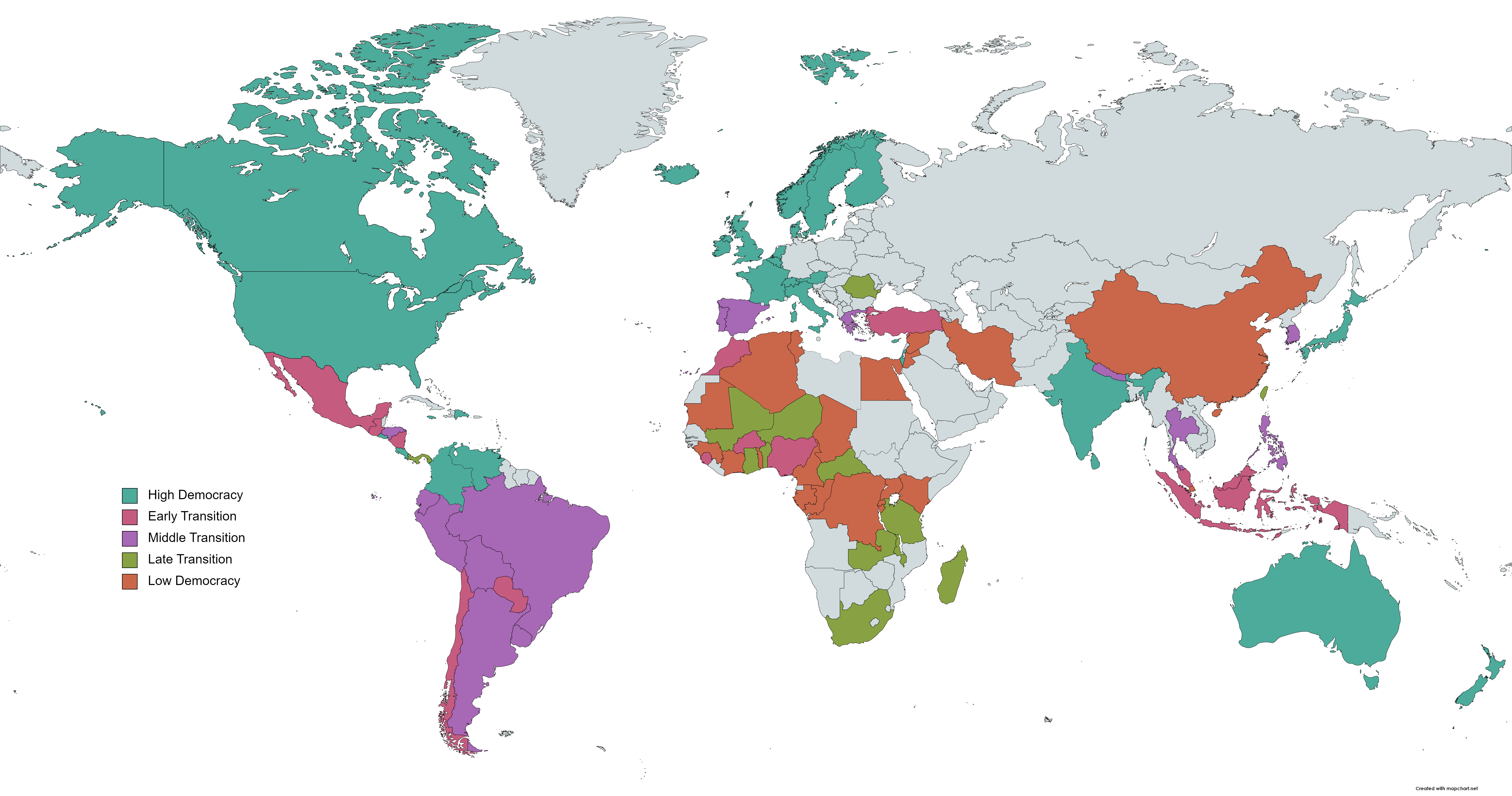}
\caption{$G = 5$ map with Top: WGFE groups, and Bottom: GFE groups}
\label{fig:maps4}
\end{figure}
\end{center}

\section{Variance Estimation}\label{sec:ses}

For all $g=1,\dots,G$, a White estimator for the group $g$ variance is
\begin{equation}\label{wgfe:se}
	\widehat{\sigma}_{g}^2 = \frac{\displaystyle \sumnt \mathds{1}\{\widehat{g}_i = g\} \widehat{u}_{it}^2}{\displaystyle T\sum_{j=1}^N \mathds{1}\{\widehat{g}_i = g\}}
\end{equation}
where $\widehat{u}_{it} = y_{it} - x_{it}'\widehat{\theta} - \widehat{\alpha}_{\widehat{g}_i t}$ are the WGFE residuals and, for any $t = 1,\dots,T$, the White estimator of the variance of the group-specific time effects is
\begin{equation}
	\widehat{\text{Var}}(\widehat{\alpha}_{gt}) = \frac{\displaystyle \sum_{i=1}^N \mathds{1}\{\widehat{g}_i = g\} \widehat{u}_{it}^2}{\displaystyle \left(\sum_{j=1}^N \mathds{1}\{\widehat{g}_i = g\}\right)^2}.
\end{equation}
Theorem \ref{asym:wgfe} suggests an estimator for the variance of $\widehat{\theta}$ as
\begin{equation}
	\widehat{\text{Var}}\left(\sqrt{NT} \widehat{\theta}\right) = \widehat{B}_\theta^{-1}\widehat{V}_\theta \widehat{B}_\theta^{-1}
\end{equation}
where, once again denoting $\overline{x}_{\widehat{g}_i t}$ as the average of $x_{jt}$, $j = 1\dots,N$ in group $\widehat{g}_i$ and denoting $\widehat{\sigma}_{\widehat{g}_i}$ as \eqref{wgfe:se} evaluated at the parameter estimates,
\begin{equation}
	\widehat{B}_\theta^{-1} = \frac{1}{NT}\sumnt \widehat{\sigma}_{\widehat{g}_i}^{-1}(x_{it} - \overline{x}_{\widehat{g}_i t})(x_{it} - \overline{x}_{\widehat{g}_i t})'
\end{equation}
The estimator $\widehat{V}_\theta$ of $V_\theta$ can be based on the estimator of \cite{arellano:1987} clustered at the individual level:
\begin{equation}
	\widehat{V}_\theta = \frac{1}{NT}\sumnt\sum_{s=1}^T \widehat{\sigma}_{\widehat{g}_i}^{-2}(x_{it} - \overline{x}_{\widehat{g}_i t})(x_{is} - \overline{x}_{\widehat{g}_i s})'\widehat{u}_{it}\widehat{u}_{is}
\end{equation}
with a possible bias-correction of degrees of freedom $NT - p$.

\section{Proofs}

\subsection{Proof of Theorem \ref{WC:GFE}}
\begin{proof}
	Begin by rewriting the cluster covariances as 
	\[
		\widehat{\bm{\bm{\Sigma}}}_g = \frac{\widehat{\bm{\Sigma}}^2}{T} I_T = \frac{1}{NT} \sum_{i=1}^N \norm{y_i - x_i\theta - \alpha_{g_i}}^2 I_T.
	\]
	Then, the barycenter covariance reduces to
	\begin{align*}
		\bm{\Omega}_\delta &= \sum_{g = 1}^G \frac{1}{G}\bm{\Omega}_\delta^{1/2} \bigg(\frac{1}{NT} \sum_{i=1}^N \norm{y_i - x_i\theta - \alpha_{g_i}}^2\bigg)^{1/2} I_T\\
							    &= \bm{\Omega}_\delta^{1/2} \bigg(\frac{1}{NT} \sum_{i=1}^N \norm{y_i - x_i\theta - \alpha_{g_i}}^2\bigg)^{1/2} I_T.
	\end{align*}
     Then,
	\begin{equation}
		\bm{\Omega}_\delta = \frac{1}{NT} \sum_{i=1}^N \norm{y_i - x_i\theta - \alpha_{g_i}}^2 I_T
	\end{equation}
	and taking the trace gives 
	\begin{equation}
		\text{tr}[\bm{\Omega}_\delta] = \frac{1}{N} \sum_{i=1}^N \sum_{t=1}^T (y_{it} - x_{it}\theta - \alpha_{g_i t})^2.
	\end{equation}
	which is precisely the criterion function of GFE estimation.
\end{proof}

\subsection{Proof of Theorem \ref{prop:infeasiblecons}}

To show consistency of the infeasible version of WGFE \eqref{est:infeasible}, I show the hypothesis of the standard theorem for consistency of M-estimators holds: 1. uniform weak convergence of the sample criterion to the population criterion, 2. the true values uniquely minimize the population criterion. I prove this in a series of lemmas.

The following is easily seen with a uniform law of large numbers due to compactness of the parameter space and the moment restrictions of Assumption \ref{as:infeas}.
\begin{lemma}\label{lemma:ucongroup}
Suppose Assumption \ref{as:infeas} holds. For all $g = 1,\dots,G$, 
\[
	\sup_{(\theta,\alpha) \in \Theta \times \mathcal{A}^{GT}} \left\vert \widetilde{Q}_g(\theta,\alpha) - Q_g(\theta,\alpha)  \right\vert \to_p 0 
\]
as $N\to \infty$.
\end{lemma}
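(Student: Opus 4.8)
The plan is to read $\widetilde{Q}_g$ as a within-group sample average, i.e.\ a ratio of two cross-sectional sample averages that, for fixed $T$, are built from the i.i.d.\ observations $\{(y_i,x_i,g_i^0)\}$, and to combine a uniform law of large numbers for the numerator with an ordinary law of large numbers for the denominator. Writing $\ell_i(\theta,\alpha) = \frac{1}{T}\sum_{t=1}^T (y_{it} - x_{it}'\theta - \alpha_{gt})^2$, I have
\[
\widetilde{Q}_g(\theta,\alpha) = \frac{A_N(\theta,\alpha)}{B_N}, \qquad A_N(\theta,\alpha) = \frac{1}{N}\sum_{i=1}^N P_g^i\,\ell_i(\theta,\alpha), \qquad B_N = \frac{1}{N}\sum_{i=1}^N P_g^i,
\]
so that the task reduces to controlling $A_N$ uniformly over the compact set $\Theta \times \mathcal{A}^{GT}$ and showing that $B_N$ stays bounded away from zero.

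First I would establish a uniform law of large numbers for $A_N$. For each data realization the map $(\theta,\alpha)\mapsto P_g^i\,\ell_i(\theta,\alpha)$ is a polynomial, hence continuous, and the parameter space is compact by Assumption \ref{as:infeas}(a). Expanding the square and using $\norm{\theta}\le C_\theta$ and $|\alpha_{gt}|\le C_\alpha$ on the compact parameter set gives the envelope
\[
\sup_{(\theta,\alpha)} P_g^i\,\ell_i(\theta,\alpha) \le \frac{3}{T}\sum_{t=1}^T\left( y_{it}^2 + C_\theta^2\norm{x_{it}}^2 + C_\alpha^2 \right),
\]
whose expectation is finite: Assumption \ref{as:infeas}(b) gives $\Ex{\norm{x_{it}}^4}<M$ and $\Ex{u_{it}^4}<M$, and since $y_{it}=x_{it}'\theta^0 + \alpha_{g_i^0 t}^0 + u_{it}$ with $\alpha^0$ bounded, also $\Ex{y_{it}^2}<\infty$. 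A standard uniform law of large numbers for i.i.d., continuous, dominated summands on a compact set then gives $\sup_{(\theta,\alpha)}\left\vert A_N(\theta,\alpha) - \Ex{P_g^i\,\ell_i(\theta,\alpha)}\right\vert \to_p 0$, with a limit that is continuous and bounded.

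Finally I would treat the denominator and assemble the ratio. Since $B_N = N^{-1}\sum_i \mathds{1}\{g_i^0=g\}$ is an average of i.i.d.\ Bernoulli variables, $B_N \to_p P_g^0 = \mathbb{P}(g_i^0=g) > 0$ by Assumption \ref{as:infeas}(d), while Assumption \ref{as:infeas}(c) guarantees the group is non-degenerate. On the event $\{B_N \ge P_g^0/2\}$, whose probability tends to one, the bound
\[
\left\vert \frac{A_N}{B_N} - \frac{\Ex{P_g^i\,\ell_i}}{P_g^0} \right\vert \le \frac{1}{B_N}\sup_{(\theta,\alpha)}\left\vert A_N - \Ex{P_g^i\,\ell_i}\right\vert + \Ex{P_g^i\,\ell_i}\left\vert \frac{1}{B_N} - \frac{1}{P_g^0}\right\vert
\]
holds uniformly in $(\theta,\alpha)$, and both terms vanish in probability by the two previous steps, the second using the uniform boundedness of the limit. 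The quantity on the right is precisely the within-group population mean-squared residual $Q_g(\theta,\alpha)$, which yields the claim. The only genuine work is verifying the envelope (domination) condition that feeds the uniform law of large numbers and handling the ratio uniformly; both are routine consequences of the fourth-moment bounds in Assumption \ref{as:infeas}(b), the compactness in Assumption \ref{as:infeas}(a), and the non-degeneracy of the groups in Assumption \ref{as:infeas}(c)--(d).
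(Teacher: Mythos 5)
Your proof is correct and is essentially the same argument the paper intends: the paper disposes of this lemma in one sentence ("easily seen with a uniform law of large numbers due to compactness of the parameter space and the moment restrictions of Assumption 1"), and your ratio decomposition $\widetilde{Q}_g = A_N/B_N$, the envelope bound feeding the ULLN, and the handling of the random denominator on the event $\{B_N \ge P_g^0/2\}$ are exactly the details that sentence leaves implicit. One caveat worth noting: the limit you derive is $\Ex{P_g^i\,\ell_i}/P_g^0$, the conditional within-group mean, whereas the paper's displayed definition of $Q_g(\theta,\alpha)$ is the unconditional expectation $\Ex{\mathds{1}\{g_i^0=g\}\,\ell_i}$ with no division by $P_g^0$ — the lemma is true under your (natural) reading, so the discrepancy is a normalization slip in the paper's definition of $Q_g$ rather than a gap in your argument.
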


The following shows uniform convergence of the square roots
\begin{lemma}\label{lemma:uwcongroupsq}
Suppose Assumption \ref{as:infeas} holds. For all $g = 1,\dots,G$, as $N \to \infty$
\[
	\sup_{(\theta,\alpha) \in \Theta \times \mathcal{A}^{GT}} \left\vert \sqrt{\widetilde{Q}_g(\theta,\alpha)} - \sqrt{Q_g(\theta,\alpha)}  \right\vert \to_p 0 .
\]
\end{lemma}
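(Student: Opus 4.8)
The plan is to derive uniform convergence of the square roots directly from Lemma \ref{lemma:ucongroup} by means of the elementary inequality $|\sqrt{a}-\sqrt{b}| \le \sqrt{|a-b|}$, valid for all $a,b\ge 0$. Both $\widetilde{Q}_g(\theta,\alpha)$ and $Q_g(\theta,\alpha)$ are nonnegative, being respectively an average and an expectation of squared terms, so this bound applies pointwise in $(\theta,\alpha)$ with $a=\widetilde{Q}_g(\theta,\alpha)$ and $b=Q_g(\theta,\alpha)$. The inequality itself follows from a one-line computation: assuming without loss of generality $a\ge b$, the claim $a+b-2\sqrt{ab}\le a-b$ reduces to $\sqrt{b}\le\sqrt{a}$, which holds.

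Applying this pointwise gives $\bigl|\sqrt{\widetilde{Q}_g(\theta,\alpha)}-\sqrt{Q_g(\theta,\alpha)}\bigr| \le \sqrt{\bigl|\widetilde{Q}_g(\theta,\alpha)-Q_g(\theta,\alpha)\bigr|}$ for every $(\theta,\alpha)\in\Theta\times\mathcal{A}^{GT}$. Taking the supremum over the parameter space and using monotonicity of $\sqrt{\cdot}$ to interchange the supremum with the root yields
\[
\sup_{(\theta,\alpha)}\Bigl|\sqrt{\widetilde{Q}_g(\theta,\alpha)}-\sqrt{Q_g(\theta,\alpha)}\Bigr| \;\le\; \sqrt{\,\sup_{(\theta,\alpha)}\bigl|\widetilde{Q}_g(\theta,\alpha)-Q_g(\theta,\alpha)\bigr|\,}.
\]
By Lemma \ref{lemma:ucongroup} the scalar random variable inside the root converges to zero in probability, and since $\sqrt{\cdot}$ is continuous at zero, the continuous mapping theorem gives that the right-hand side is $o_p(1)$. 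This delivers the claim for each $g=1,\dots,G$.

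The only delicate point is that $\sqrt{\cdot}$ is not Lipschitz near the origin, so a naive mean-value bound of the form $|\sqrt{a}-\sqrt{b}|\le \tfrac{1}{2\sqrt{\xi}}|a-b|$ would degenerate wherever $Q_g$ or its sample analogue is close to zero; the subadditivity inequality circumvents this entirely and requires no lower bound. I expect this to be the main (and only) obstacle, and it is handled precisely by using the root bound rather than differentiating. Should one instead wish to pursue the Lipschitz route, one would first verify that $Q_g(\theta,\alpha)$ is bounded away from zero on $\Theta\times\mathcal{A}^{GT}$ — which holds because, under Assumption \ref{as:infeas}, the group-$g$ error variance contributes a strictly positive floor of order $P_g^0\sigma_g^2>0$ through non-degeneracy of the groups in parts $(c)$ and $(d)$ — and then restrict to the region $\{Q_g\ge\varepsilon\}$ on which $\sqrt{\cdot}$ has bounded derivative, combining this with Lemma \ref{lemma:ucongroup} to control the sample objective on the same region with probability tending to one. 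The first argument is preferable since it avoids this extra verification.
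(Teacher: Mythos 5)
Your proposal is correct and follows essentially the same route as the paper: both establish the pointwise bound $\bigl|\sqrt{\widetilde{Q}_g}-\sqrt{Q_g}\bigr|\le\sqrt{\bigl|\widetilde{Q}_g-Q_g\bigr|}$, pass the supremum inside the square root by monotonicity, and conclude via Lemma \ref{lemma:ucongroup} and the continuous mapping theorem. The only cosmetic difference is how the elementary inequality is derived (the paper multiplies by the conjugate factor, you expand the square), and your closing remark on why the Lipschitz route is avoided matches the implicit rationale of the paper's argument.
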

\begin{proof} 
	\begin{align*}
\bigg\vert \sqrt{\widetilde{Q}_g(\theta,\alpha)} - \sqrt{Q_g(\theta,\alpha)} \bigg\vert^2 &\leq \bigg\vert \sqrt{\widetilde{Q}_g(\theta,\alpha)} - \sqrt{Q_g(\theta,\alpha)} \bigg\vert\bigg\vert \sqrt{\widetilde{Q}_g(\theta,\alpha)} + \sqrt{Q_g(\theta,\alpha)} \bigg\vert \\
&= \bigg\vert \widetilde{Q}_g(\theta,\alpha) - Q_g(\theta,\alpha) \bigg\vert 
\end{align*}

Then 
\[
\bigg\vert \sqrt{\widetilde{Q}_g(\theta,\alpha)} - \sqrt{Q_g(\theta,\alpha)} \bigg\vert\leq \sqrt{\bigg\vert \widetilde{Q}_g(\theta,\alpha) - Q_g(\theta,\alpha) \bigg\vert}
\]
so
\begin{align*}
\sup_{(\theta,\alpha) \in \Theta \times \mathcal{A}^{GT}} \bigg\vert \sqrt{\widetilde{Q}_g(\theta,\alpha)} - \sqrt{Q_g(\theta,\alpha)} \bigg\vert &\leq \sup_{(\theta,\alpha) \in \Theta \times \mathcal{A}^{GT}}\sqrt{\bigg\vert \widetilde{Q}_g(\theta,\alpha) - Q_g(\theta,\alpha) \bigg\vert}\\
& = \sqrt{\sup_{(\theta,\alpha) \in \Theta \times \mathcal{A}^{GT}}\bigg\vert \widetilde{Q}_g(\theta,\alpha) - Q_g(\theta,\alpha) \bigg\vert}
\end{align*}

Taking the probability limit as $N \to\infty$ and applying the continuous mapping theorem with Lemma \ref{lemma:ucongroup} and I are done.
\end{proof}

The following shows uniform convergence of the sample criterion function to its population counterpart.

\begin{lemma}\label{lemma:uwconinfeas}
	Suppose Assumption 1 holds. Then, as $N \to \infty$,
\[
	\sup_{(\theta,\alpha) \in \Theta \times \mathcal{A}^{GT}} \left\vert \widetilde{Q}(\theta,\alpha) - Q(\theta,\alpha)  \right\vert \to_p 0.
\]
\end{lemma}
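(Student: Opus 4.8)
The plan is to reduce the claim to the two preceding lemmas together with a weak law of large numbers for the group frequencies. First I would write the difference as a finite sum over groups and, within each summand, add and subtract a hybrid term so as to isolate the parameter-dependent randomness (which lives in $\sqrt{\widetilde{Q}_g}$) from the parameter-free randomness (which lives in $P_g$):
\[
\widetilde{Q}(\theta,\alpha) - Q(\theta,\alpha) = \sum_{g=1}^G \left[ P_g\left(\sqrt{\widetilde{Q}_g(\theta,\alpha)} - \sqrt{Q_g(\theta,\alpha)}\right) + (P_g - P_g^0)\sqrt{Q_g(\theta,\alpha)}\right].
\]
Taking absolute values, the supremum over $(\theta,\alpha)$, and applying the triangle inequality across the finite index set $g = 1,\dots,G$ then reduces the problem to controlling the two types of terms separately.

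For the first type, since $P_g = N^{-1}\sum_i \mathds{1}\{g_i^0=g\} \in [0,1]$ is bounded by one and does not depend on $(\theta,\alpha)$, I would bound
\[
\sup_{(\theta,\alpha)} \left| P_g\left(\sqrt{\widetilde{Q}_g} - \sqrt{Q_g}\right)\right| \leq \sup_{(\theta,\alpha)}\left|\sqrt{\widetilde{Q}_g(\theta,\alpha)} - \sqrt{Q_g(\theta,\alpha)}\right| \to_p 0,
\]
which is exactly the conclusion of Lemma \ref{lemma:uwcongroupsq}. For the second type, $(P_g - P_g^0)$ is constant in $(\theta,\alpha)$ and, by the weak law of large numbers applied to the i.i.d. indicators $\mathds{1}\{g_i^0=g\}$, converges in probability to zero; the factor $\sqrt{Q_g}$ must then be shown to be uniformly bounded over the parameter space so that the product vanishes. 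This boundedness follows from compactness of $\Theta \times \mathcal{A}^{GT}$ (Assumption \ref{as:infeas}a) together with the moment conditions of Assumption \ref{as:infeas}b: substituting $y_{it} - x_{it}'\theta - \alpha_{gt} = x_{it}'(\theta^0-\theta) + (\alpha_{gt}^0 - \alpha_{gt}) + u_{it}$ on the event $\{g_i^0=g\}$ and expanding the square yields a bound of the form $Q_g(\theta,\alpha) \leq C$ for a constant $C$ depending only on $M$ and the diameter of the parameter space.

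Combining, each of the $2G$ terms is $o_p(1)$ uniformly in $(\theta,\alpha)$, and since $G$ is finite the triangle inequality delivers $\sup_{(\theta,\alpha)} |\widetilde{Q} - Q| \to_p 0$. I do not anticipate a genuine obstacle here: the argument is essentially bookkeeping that leverages the clean separation between the randomness in the within-group criteria $\widetilde{Q}_g$ (handled by Lemma \ref{lemma:uwcongroupsq}) and the randomness in the group weights $P_g$ (handled by the weak law of large numbers). The only point requiring a modicum of care is the uniform upper bound on $\sqrt{Q_g}$, which is what prevents the vanishing weight error $(P_g - P_g^0)$ from being amplified; this is precisely where compactness of the parameter space and the moment bounds of Assumption \ref{as:infeas} enter.
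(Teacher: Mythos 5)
Your argument is correct and follows essentially the same route as the paper: decompose the difference over the finite set of groups, invoke Lemma \ref{lemma:uwcongroupsq} for the within-group square-root terms, and use the weak law of large numbers for the group frequencies $P_g$. The only difference is cosmetic — you place the deterministic $\sqrt{Q_g}$ in the cross term and spell out its uniform bound, whereas the paper absorbs the analogous term directly into an $o_p(1)$; your version is marginally more explicit but not a different proof.
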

\begin{proof}

First, note that by the weak law of large numbers I have $P_g = P_g^0 + o_p(1)$. Then
\begin{align*}
\sup_{(\theta,\alpha) \in \Theta \times \mathcal{A}^{GT}} \vert \widetilde{Q}(\theta,\alpha) - Q(\theta,\alpha)  \vert &= \sup_{(\theta,\alpha) \in \Theta \times \mathcal{A}^{GT}}\bigg \vert \sum_{g=1}^G P_g^0 \bigg(\sqrt{\widetilde{Q}_g(\theta,\alpha)} - \sqrt{Q_g(\theta,\alpha)}\bigg) + o_p(1) \bigg\vert\\
&\leq  \sum_{g=1}^G P_g^0 \sup_{(\theta,\alpha) \in \Theta \times \mathcal{A}^{GT}} \bigg\vert\sqrt{\widetilde{Q}_g(\theta,\alpha)} - \sqrt{Q_g(\theta,\alpha)}\bigg\vert + o_p(1)
\end{align*}
and the result follows by Lemma \ref{lemma:uwcongroupsq}.

\end{proof}

\begin{lemma}\label{lemma:popmin}
	Suppose Assumption \ref{as:infeas} holds. Then the true values $(\theta^0,\alpha^0)$ uniquely minimize $Q(\theta,\alpha)$.
\end{lemma}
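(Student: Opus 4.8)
The plan is to show that $Q(\theta,\alpha)$ decomposes, group by group, into an irreducible noise floor $Q_g(\theta^0,\alpha^0)$ plus a nonnegative quadratic penalty that vanishes only at the truth. First I would substitute the data generating process $y_{it} = x_{it}'\theta^0 + \alpha_{g_i^0 t}^0 + u_{it}$ into $Q_g$, so that for an individual with $g_i^0 = g$ the residual reads $y_{it} - x_{it}'\theta - \alpha_{gt} = x_{it}'(\theta^0-\theta) + (\alpha_{gt}^0-\alpha_{gt}) + u_{it}$. Writing $\Delta\theta = \theta^0-\theta$ and $\Delta\alpha_{gt} = \alpha_{gt}^0-\alpha_{gt}$ and expanding the square, the expectation splits into $\Ex{\frac{1}{T}\sum_t \mathds{1}\{g_i^0=g\}u_{it}^2} = Q_g(\theta^0,\alpha^0)$, a pure quadratic $D_g(\theta,\alpha) := \Ex{\frac{1}{T}\sum_t \mathds{1}\{g_i^0=g\}(x_{it}'\Delta\theta + \Delta\alpha_{gt})^2} \geq 0$, and two cross terms proportional to $\Ex{\mathds{1}\{g_i^0=g\}x_{it}u_{it}}$ and $\Ex{\mathds{1}\{g_i^0=g\}u_{it}}$. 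Under the exogeneity maintained for the model (mean-zero errors within each group and orthogonality of $u_{it}$ to $x_{it}$) both cross terms are zero, leaving $Q_g(\theta,\alpha) = Q_g(\theta^0,\alpha^0) + D_g(\theta,\alpha)$.

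Next I would aggregate. Assumption \ref{as:infeas}(c)--(d) guarantee $Q_g(\theta^0,\alpha^0) = \Ex{\frac{1}{T}\sum_t \mathds{1}\{g_i^0=g\}u_{it}^2} > 0$, so each $\sqrt{Q_g}$ is well defined and, since the square root is strictly increasing, $\sqrt{Q_g(\theta,\alpha)} \geq \sqrt{Q_g(\theta^0,\alpha^0)}$ with equality if and only if $D_g(\theta,\alpha) = 0$. Because every weight $P_g^0 > 0$ by Assumption \ref{as:infeas}(d), summing yields $Q(\theta,\alpha) = \sum_{g=1}^G P_g^0 \sqrt{Q_g(\theta,\alpha)} \geq Q(\theta^0,\alpha^0)$, and equality forces $D_g(\theta,\alpha) = 0$ for every $g$ simultaneously.

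It then remains to turn ``$D_g = 0$ for all $g$'' into $(\theta,\alpha) = (\theta^0,\alpha^0)$. Vanishing of $D_g$ means $x_{it}'\Delta\theta + \Delta\alpha_{gt} = 0$ almost surely on the subpopulation $g_i^0 = g$; subtracting the within-group mean $\Ex{x_{it}\mid g_i^0=g}'\Delta\theta + \Delta\alpha_{gt}$ eliminates $\Delta\alpha_{gt}$ and leaves $(x_{it} - \Ex{x_{it}\mid g_i^0=g})'\Delta\theta = 0$ a.s. Taking the outer product, averaging over $t$ and summing over $g$, I obtain $\Delta\theta'\big[\plim{T}\sum_{g=1}^G \frac{1}{T}\sum_t \Ex{(x_{it}-\Ex{x_{it}\mid g_i^0=g})(x_{it}-\Ex{x_{it}\mid g_i^0=g})'\mid g_i^0=g}\big]\Delta\theta = 0$; the bracketed matrix has a strictly positive minimum eigenvalue by Assumption \ref{as:infeas}(e), so $\Delta\theta = 0$, and then $\Delta\alpha_{gt} = 0$ for every $g$ and $t$.

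The hard part will be the two structural points rather than the algebra: first, justifying that the $u_{it}$ cross terms genuinely vanish, which relies on within-group orthogonality and conditional mean-zero of the errors rather than the unconditional $\Ex{u_{it}}=0$ alone; and second, propagating the group-wise vanishing conditions into global identification of the shared slope $\theta$, where the key is that the full-rank Assumption \ref{as:infeas}(e) is stated for the pooled, within-group demeaned design, so the separate constraints $D_g = 0$ can be combined into strict positivity of a single quadratic form in $\Delta\theta$. The positivity of the weights $P_g^0$ is essential throughout, since it is what lets equality in the weighted sum force equality in every group term and thereby activate the demeaning-and-full-rank argument.
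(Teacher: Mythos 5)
Your proposal is correct and follows essentially the same route as the paper: substitute the DGP, expand the square so that $Q_g(\theta,\alpha)$ splits into the noise floor $Q_g(\theta^0,\alpha^0)$ plus a nonnegative quadratic $D_g$, kill the cross terms via within-group exogeneity of $u_{it}$, and then convert $D_g=0$ for all $g$ into $\theta=\theta^0$ and $\alpha=\alpha^0$ by within-group demeaning and the minimum-eigenvalue condition of Assumption \ref{as:infeas}($e$). The only cosmetic difference is that you pass through the square root by strict monotonicity, whereas the paper uses the identity $\sqrt{a}-\sqrt{b}=(a-b)/(\sqrt{a}+\sqrt{b})$ together with a uniform bound on the denominator to obtain an explicit quantitative lower bound $Q(\theta,\alpha)-Q(\theta^0,\alpha^0)\gtrsim\norm{\theta-\theta^0}^2$; both suffice for the uniqueness claim as stated.
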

\begin{proof}
Using the data generating process and the identity:
\[
	\sqrt{a} - \sqrt{b} = \frac{a - b}{\sqrt{a} + \sqrt{b}}
\]
 I can write
\begin{align}
	&Q(\theta,\alpha) - Q(\theta^0, \alpha^0)\\ =& \sum_{g=1}^G P_g^0 \left(\sqrt{\Ex{\sum_{t=1}^T \mathds{1}\{g_i^0 = g\}\left(u_{it} + x_{it}'(\theta^0 - \theta) + \left(\alpha_{gt}^0 -  \alpha_{gt}\right)\right)^2}} - \sqrt{\Ex{\sum_{t=1}^T \mathds{1}\{g_i^0 = g\} u_{it}^2}} \right)\notag\\
&= \sum_{g=1}^G P_g^0 \frac{{\Ex{\sum_{t=1}^T \mathds{1}\{g_i^0 = g\}\left(u_{it} + x_{it}'(\theta^0 - \theta) + \left(\alpha_{gt}^0 -  \alpha_{gt}\right)\right)^2}} - {\Ex{\sum_{t=1}^T \mathds{1}\{g_i^0 = g\} u_{it}^2}}}{\sqrt{\Ex{\sum_{t=1}^T \mathds{1}\{g_i^0 = g\}\left(u_{it} + x_{it}'(\theta^0 - \theta) + \left(\alpha_{gt}^0 -  \alpha_{gt}\right)\right)^2}} + \sqrt{\Ex{\sum_{t=1}^T \mathds{1}\{g_i^0 = g\} u_{it}^2}}}\label{lemma:diff}.
\end{align}
For any $g = 1,\dots,G$, there is an upper bound for the denominator independent of parameters. Consider the first term and note that each term in the summand is non negative:
\begin{align}
&\Ex{\sum_{t=1}^T \mathds{1}\{g_i^0 = g\}\left(u_{it} + x_{it}'(\theta^0 - \theta) + \left(\alpha_{gt}^0 -  \alpha_{gt}\right)\right)^2}\notag\\ 
=&\Ex{\sum_{t=1}^T \mathds{1}\{g_i^0 = g\}\left( u_{it}^2 + 2u_{it}\left(x_{it}'(\theta^0 - \theta)  + \left(\alpha_{gt}^0 -  \alpha_{gt}\right)\right) + \left(x_{it}'(\theta^0 - \theta)  + \left(\alpha_{gt}^0 -  \alpha_{gt}\right)\right)^2 \right)}\label{lemma:long}.
\end{align}

Since $\Ex{u_{it} \vert x_{it}, g_i^0} = 0$, I know
\begin{align}
	&\Ex{\mathds{1}\{g_i^0 = g\}u_{it}\left(x_{it}'(\theta^0 - \theta)  + \left(\alpha_{gt}^0 -  \alpha_{gt}\right)\right)}\label{lemma:zero}\\ 
=& \Ex{\mathds{1}\{g_i^0 = g\}\left(x_{it}'(\theta^0 - \theta)  + \left(\alpha_{gt}^0 -  \alpha_{gt}\right)\right) \Ex{u_{it}\vert x_{it}, g_i^0}}\notag\\
=&0\notag.
\end{align}
Therefore, \eqref{lemma:long} can be written as
\begin{align*}
&\Ex{\sum_{t=1}^T \mathds{1}\{g_i^0 = g\} u_{it}^2}  + \Ex{\sum_{t=1}^T \mathds{1}\{g_i^0 = g\}\left(x_{it}'(\theta^0 - \theta)  + \left(\alpha_{gt}^0 -  \alpha_{gt}\right)\right)^2 }\\
\leq& \Ex{\sum_{t=1}^T u_{it}^2}  + \Ex{\sum_{t=1}^T \left(x_{it}'(\theta^0 - \theta)  + \left(\alpha_{gt}^0 -  \alpha_{gt}\right)\right)^2 }\\
\leq& \sum_{t=1}^T \Ex{u_{it}^2}  + \sum_{t=1}^T \Ex{\left(x_{it}'(\theta^0 - \theta)  + \left(\alpha_{gt}^0 -  \alpha_{gt}\right)\right)^2 }\\
\leq& T\sqrt{M} + \sum_{t=1}^T \Ex{\left(x_{it}'(\theta^0 - \theta)  + \left(\alpha_{gt}^0 -  \alpha_{gt}\right)\right)^2 }.
\end{align*}
where the last inequality follows by Assumption \ref{as:infeas}($b$). Set $X_{it} = (1, x_{it})$ as a row vector and $\pi_{gt} = (\alpha_{gt}^0 -  \alpha_{gt}, \theta^0 - \theta)'$. Then continuing with the previous inequality
\begin{align*}
&T\sqrt{M} + \sum_{t=1}^T \Ex{\left(x_{it}'(\theta^0 - \theta)  + \left(\alpha_{gt}^0 -  \alpha_{gt}\right)\right)^2 }\\
=& T\sqrt{M} + \sum_{t=1}^T \Ex{\left(X_{it}'\pi_{gt}\right)^2 }\\
\leq& T\sqrt{M} + \sum_{t=1}^T \norm{X_{it}}^2 \norm{\pi_{gt}}\\
\leq&T\sqrt{M} + K
\end{align*}
due to the Cauchy-Schwarz inequality, compactness of the parameter space and Assumption \ref{as:infeas}($b$), which results in a bound of constant $K>0$. Note that the second term of the denominators for any $g$ is bounded since it was a step in finding the bounds of the first term. Denote the bounding constant as $C^{-1}$ and return to the original difference \eqref{lemma:diff}. Expanding the quadratic cancels the sum of squared error terms and the center term is zero due to \eqref{lemma:zero} so I have
\begin{align*}
	Q(\theta,\alpha) - Q(\theta^0, \alpha^0) &\geq C \sum_{g=1}^G P_g^0\Ex{\sum_{t=1}^T \mathds{1}\{g_i^0 =g\}\left(x_{it}'(\theta^0 - \theta)  + \left(\alpha_{gt}^0 -  \alpha_{gt}\right)\right)^2 }\\
=&C \sum_{g=1}^G (P_g^0)^2\sum_{t=1}^T \Ex{\left(x_{it}'(\theta^0 - \theta)  + \left(\alpha_{gt}^0 -  \alpha_{gt}\right)\right)^2\bigg\vert g_i^0 =g }\\
&\geq C \sum_{g=1}^G (P_g^0)^2\sum_{t=1}^T \Ex{\left(x_{it}'(\theta^0 - \theta)  - \Ex{x_{it}\vert g_i^0 = g}'(\theta^0 - \theta)\right)^2\bigg\vert g_i^0 =g }
\end{align*}
where the last inequality is equality if and only if $\left(\alpha_{gt}^0 -  \alpha_{gt}\right) =  \Ex{x_{it}\vert g_i^0 = g}'(\theta^0 - \theta)$ since it minimizes the mean-squared error. Then, 
\begin{align*}
	&Q(\theta,\alpha) - Q(\theta^0, \alpha^0)\\
 \geq&  C \sum_{g=1}^G (P_g^0)^2\sum_{t=1}^T \Ex{\left(x_{it}'(\theta^0 - \theta)  - \Ex{x_{it}\vert g_i^0 = g}'(\theta^0 - \theta)\right)^2\bigg\vert g_i^0 =g}\\\geq& C\min_{g = 1,\dots,G}\{(P_g^0)^2\} (\theta^0 - \theta)'\left(\sum_{g=1}^G  \Ex{\sum_{t=1}^T\left(x_{it}  - \Ex{x_{it}\vert g_i^0 = g}\right)\left(x_{it}  - \Ex{x_{it}\vert g_i^0 = g}\right)'\bigg\vert g_i^0 =g}\right)(\theta^0 - \theta) \\
\geq&  C\min_{g = 1,\dots,G}\{(P_g^0)^2\}\rho^0 \norm{\theta^0 - \theta}\\
\geq 0
\end{align*}
by Assumption \ref{as:infeas}($e$) where $\rho^0>0$ is the minimum eigenvalue. Equality occurs if and only if $\theta^0 = \theta$ by definiteness of the standard Euclidean norm. Therefore, $\left(\alpha_{gt}^0 -  \alpha_{gt}\right) =  \Ex{x_{it}\vert g_i^0 = g}'(\theta^0 - \theta) = 0$ and I are done.

With Lemma \ref{lemma:uwconinfeas} and Lemma \ref{lemma:popmin}, I invoke consistency of M-estimators and therefore I have $\sqrt{N}$-consistency of $\eqref{est:infeasible}$.

\end{proof}

\subsection{Proof of Asymptotic Normality of the infeasible WGFE estimator}

\begin{proof}[Proof of Asymptotic Normality of $\widetilde{\theta}$]

First I will show asymptotic normality of $\widetilde{\theta}$. Consider 
\begin{gather*}
\sqrt{NT}\left(\widetilde{\theta} - \theta^0\right) = 
\left[ \frac{1}{NT} \sum_{i=1}^N \sum_{t=1}^T \widetilde{\sigma}_{g_i^0}^{-1}(\widetilde{\theta},\widetilde{\alpha}) (x_{it} - \overline{x}_{g_i^0 t})(x_{it} - \overline{x}_{g_i^0 t})' \right]^{-1} \frac{1}{\sqrt{NT}}  \sum_{i=1}^N \sum_{t=1}^T \widetilde{\sigma}_{g_i^0}^{-1}(\widetilde{\theta},\widetilde{\alpha}) (x_{it} - \overline{x}_{g_i^0 t}) u_{it}
\end{gather*}
where the true model was substituted into the estimator definition.

First I will show that the weight terms converge to the true group variances. Indeed,
\begin{align*}
\widetilde{\sigma}_{g}^{2}(\widetilde{\theta},\widetilde{\alpha}) &=  \frac{1}{T\sum_{j=1}^N \mathds{1}\{g_j^0 = g\}} \sumnt \mathds{1}\{g_i^0 = g\}\left(u_{it} + x_{it}'(\theta^0 - \widetilde{\theta}) + (\alpha_{g_i^0 t}^0 - \widetilde{\alpha}_{gt})\right)^2\\
&= \Ex{\mathds{1}\{g_i^0 = g\} u_{it}^2} + o_p(1)\\
&= \sigma_g^2 + o_p(1)
\end{align*}
by a weak law of large numbers and since the parameter estimators are consistent by Proposition \ref{prop:infeasiblecons}.

Therefore, 
\[
\sqrt{NT}\left(\widetilde{\theta} - \theta^0\right) = \left[ \frac{1}{NT} \sum_{i=1}^N \sum_{t=1}^T \sigma_{g_i^0}^{-1} (x_{it} - \overline{x}_{g_i^0 t})(x_{it} - \overline{x}_{g_i^0 t})' \right]^{-1} \frac{1}{\sqrt{NT}}  \sum_{i=1}^N \sum_{t=1}^T \sigma_{g_i^0}^{-1} (x_{it} - \overline{x}_{g_i^0 t}) u_{it} + o_p(1).
\]

Then, the variance conditional on covariates is 
\begin{gather*}
	\text{Var}\left(\sqrt{NT}\left(\widetilde{\theta} - \theta^0\right)\right) = \left[ \frac{1}{NT} \sum_{i=1}^N \sum_{t=1}^T \sigma_{g_i^0}^{-1} (x_{it} - \overline{x}_{g_i^0 t})(x_{it} - \overline{x}_{g_i^0 t})' \right]^{-1} \text{Var}\left( \frac{1}{\sqrt{NT}}  \sum_{i=1}^N \sum_{t=1}^T \sigma_{g_i^0}^{-1} (x_{it} - \overline{x}_{g_i^0 t}) u_{it} \right)\\ \times \left[ \frac{1}{NT} \sum_{i=1}^N \sum_{t=1}^T \sigma_{g_i^0}^{-1} (x_{it} - \overline{x}_{g_i^0 t})(x_{it} - \overline{x}_{g_i^0 t})' \right]^{-1} + o_p(1).
\end{gather*}
The term in the middle can be rewritten as
\begin{gather*}
	\text{Var}\left( \frac{1}{\sqrt{NT}}  \sum_{i=1}^N \sum_{t=1}^T \sigma_{g_i^0}^{-1} (x_{it} - \overline{x}_{g_i^0 t}) u_{it} \right)
= \frac{1}{NT}\Ex{\left( \sum_{i=1}^N \sum_{t=1}^T \sigma_{g_i^0}^{-1} (x_{it} - \overline{x}_{g_i^0 t}) u_{it}\right)\left( \sum_{j=1}^N \sum_{s=1}^T \sigma_{g_j^0}^{-1} (x_{js} - \overline{x}_{g_j^0 s}) u_{js}\right)'}\\
- \frac{1}{NT}\Ex{\sum_{i=1}^N \sum_{t=1}^T \sigma_{g_i^0}^{-1} (x_{it} - \overline{x}_{g_i^0 t}) u_{it}}\Ex{\sum_{j=1}^N \sum_{s=1}^T \sigma_{g_j^0}^{-1} (x_{js} - \overline{x}_{g_j^0 s}) u_{js}}'
\end{gather*}
where the second term is zero since $\Ex{u_{it}\vert x_{it}} = 0$. Then,
\begin{align*}
	&\Ex{\left( \sum_{i=1}^N \sum_{t=1}^T \sigma_{g_i^0}^{-1} (x_{it} - \overline{x}_{g_i^0 t}) u_{it}\right)\left( \sum_{j=1}^N \sum_{s=1}^T \sigma_{g_j^0}^{-1} (x_{js} - \overline{x}_{g_j^0 s}) u_{js}\right)'}\\ &= \Ex{\left( \sum_{i=1}^N \sum_{t=1}^T \sigma_{g_i^0}^{-1} (x_{it} - \overline{x}_{g_i^0 t}) u_{it}\right)\left( \sum_{j=1}^N \sum_{s=1}^T \sigma_{g_j^0}^{-1} (x_{js} - \overline{x}_{g_j^0 s})' u_{js}\right)}\\
&=\Ex{\sum_{i=1}^N \sum_{j=1}^N \sum_{t=1}^T \sum_{s = 1}^T \left[\sigma_{g_i^0}\sigma_{g_j^0}\right]^{-1}  (x_{it} - \overline{x}_{g_i^0 t})(x_{js} - \overline{x}_{g_j^0 s})' u_{it}u_{js}}\\
&= \sum_{i=1}^N \sum_{j=1}^N \sum_{t=1}^T \sum_{s = 1}^T \left[\sigma_{g_i^0}\sigma_{g_j^0}\right]^{-1}  \Ex{(x_{it} - \overline{x}_{g_i^0 t})(x_{js} - \overline{x}_{g_j^0 s})' u_{it}u_{js}}.
\end{align*}
Therefore,
\begin{align*}
	\text{Var}\left(\sqrt{NT}\left(\widetilde{\theta} - \theta^0\right)\right) =& \left[ \frac{1}{NT} \sum_{i=1}^N \sum_{t=1}^T \sigma_{g_i^0}^{-1} (x_{it} - \overline{x}_{g_i^0 t})(x_{it} - \overline{x}_{g_i^0 t})' \right]^{-1}\\ &\times \frac{1}{NT}\sum_{i=1}^N \sum_{j=1}^N \sum_{t=1}^T \sum_{s = 1}^T \left[\sigma_{g_i^0}\sigma_{g_j^0}\right]^{-1}  \Ex{(x_{it} - \overline{x}_{g_i^0 t})(x_{js} - \overline{x}_{g_j^0 s})' u_{it}u_{js}}\\ &\times \left[ \frac{1}{NT} \sum_{i=1}^N \sum_{t=1}^T \sigma_{g_i^0}^{-1} (x_{it} - \overline{x}_{g_i^0 t})(x_{it} - \overline{x}_{g_i^0 t})' \right]^{-1} + o_p(1).
\end{align*}
and, by Assumption \ref{asym:infeas} ($b$),
\[
	\text{Var}\left(\sqrt{NT}\left(\widetilde{\theta} - \theta^0\right)\right) \longrightarrow_p B_\theta^{-1} V_\theta B_{\theta}^{-1}.
\]

Finally, by Assumption \ref{asym:infeas} ($c$),
\begin{align*}
\sqrt{NT}\left(\widetilde{\theta} - \theta^0\right) &= \left[ \frac{1}{NT} \sum_{i=1}^N \sum_{t=1}^T \widetilde{\sigma}_{g_i^0}^{-1}(\widetilde{\theta},\widetilde{\alpha}) (x_{it} - \overline{x}_{g_i^0 t})(x_{it} - \overline{x}_{g_i^0 t})' \right]^{-1} \frac{1}{\sqrt{NT}}  \sum_{i=1}^N \sum_{t=1}^T \widetilde{\sigma}_{g_i^0}^{-1}(\widetilde{\theta},\widetilde{\alpha}) (x_{it} - \overline{x}_{g_i^0 t}) u_{it}\\
&= \left[ \frac{1}{NT} \sum_{i=1}^N \sum_{t=1}^T\sigma_{g_i^0}^{-1}(x_{it} - \overline{x}_{g_i^0 t})(x_{it} - \overline{x}_{g_i^0 t})' \right]^{-1} \frac{1}{\sqrt{NT}}  \sum_{i=1}^N \sum_{t=1}^T \sigma_{g_i^0}^{-1} (x_{it} - \overline{x}_{g_i^0 t}) u_{it} + o_p(1)\\
&\longrightarrow_d B_\theta^{-1} Z 
\end{align*}
where $Z \sim N(0,V_\theta)$. 

\end{proof}

\begin{proof}[Proof of Asymptotic Normality of $\widetilde{\alpha}$]

Consider the following for any $g=1,\dots,G$ and $t = 1,\dots,T$
\[
	\sqrt{N}\left(\widetilde{\alpha}_{gt} - \alpha_{g_i^0}^0 \right) = \frac{\frac{1}{\sqrt{N}}\sum_{i=1}^N \mathds{1}\{g_i^0 = g\} \left(u_{it} - x_{it}'\left(\theta^0 - \widetilde{\theta}\right)\right)}{\frac{1}{N}\sum_{j=1}^N \mathds{1}\{g_j^0 = g\}}
\]
which is obtained by substituting the population model for $y_{it}$.

Then,
\[
	\sqrt{N}\left(\widetilde{\alpha}_{gt} - \alpha_{g_i^0}^0 \right) = \frac{\frac{1}{\sqrt{N}}\sum_{i=1}^N \mathds{1}\{g_i^0 = g\} u_{it} }{\frac{1}{N}\sum_{j=1}^N \mathds{1}\{g_j^0 = g\}}  + \left[\frac{\frac{1}{\sqrt{N}}\sum_{i=1}^N \mathds{1}\{g_i^0 = g\} x_{it}}{\frac{1}{N}\sum_{j=1}^N \mathds{1}\{g_j^0 = g\}} \right]'\left(\theta^0 - \widetilde{\theta}\right).
\]
By Assumption \ref{as:infeas} ($a,c$) I know that 
\[
	\frac{\sum_{i=1}^N \mathds{1}\{g_i^0 = g\} x_{it}}{\sum_{j=1}^N \mathds{1}\{g_j^0 = g\}} = O_p(1/\sqrt{NT}).
\]
Therefore, by the consistency of $\widetilde{\theta}$, and for any $g = 1,\dots,G$ and $t = 1,\dots,T$
\[
	\sqrt{N}\left(\widetilde{\alpha}_{gt} - \alpha_{g_i^0}^0 \right) = \frac{\frac{1}{\sqrt{N}}\sum_{i=1}^N \mathds{1}\{g_i^0 = g\} u_{it} }{\frac{1}{N}\sum_{j=1}^N \mathds{1}\{g_j^0 = g\}}  +o_p(1).
\]
The conditional variance for all $g$ and $t$ is easily calculated as 
\begin{align*}
	\text{Var}\left(\sqrt{N}\left(\widetilde{\alpha}_{gt} - \alpha_{g_i^0}^0 \right)\right) = \left[ \frac{1}{N}\sum_{j=1}^N \mathds{1}\{g_j^0 = g\}\right]^{-2} \times \frac{1}{N} \sum_{i=1}^N\Ex{\sum_{j=1}^N \mathds{1}\{g_i^0 = g\}\mathds{1}\{g_j^0 = g\}u_{it} u_{jt}}
\end{align*}
hence
\[
	\sqrt{N}\left(\widetilde{\alpha}_{gt} - \alpha_{g_i^0}^0 \right) \longrightarrow_p \left[\mathbb{P}(g_i^0 = g) \right]^{-2}Z_\alpha
\]
where $Z_\alpha \sim N(0,v_{gt})$

\end{proof}

\subsection{Proof of Theorem \ref{prop:conswgfe} (Consistency of the WGFE Estimator)}

The argument for consistency of $\widehat{\theta}$ largely follows that of \cite{bm:2015}. First, define an auxiliary criterion function
\begin{align}
	\widetilde{Q}(\theta,\alpha,\gamma) 
&= \sum_{g=1}^G P_g \sqrt{ \frac{1}{T\sum_{j=1}^N P_g^j}\sumnt P_g^i \left(x_{it}'(\theta^0 - \theta) + \alpha_{g_i^0 t}^0 - \alpha_{g_it}\right)^2 
+ \frac{1}{T\sum_{j=1}^N P_g^j}\sum_{i=1}^N \sum_{t=1}^T P_g^i u_{it}^2} \notag\\
&=  \sum_{g=1}^G  \sqrt{P_g \frac{1}{NT}\sum_{i=1}^N \sum_{t=1}^T P_g^i \left(x_{it}'(\theta^0 - \theta) + \alpha_{g_i^0 t}^0 - \alpha_{g_it}\right)^2 
+P_g \frac{1}{NT}\sum_{i=1}^N \sum_{t=1}^T P_g^i u_{it}^2}
\end{align}
where I have taken $P_g(\gamma) = P_g$. Note that the minimizers of this function are the true values so I want to show that this auxiliary criterion function uniformly converges to the WGFE objective function \eqref{wgfe:obj}. To do this I first prove a few lemmas.

The following lemma shows that the variance of the sample barycenter of an arbitrary grouping of error terms will converge to its population counterpart.
\begin{lemma}\label{sig:con}
Suppose Assumption \ref{as:con} holds. Let $\widetilde{\gamma}$ be a random variable with support $\{1,\dots,G\}$ and let $\{\gt_i\}$ be an i.i.d. sample from $\widetilde{\gamma}$. Then
\[
	\sum_{g=1}^G P_{g}(\widetilde{\gamma}) \sqrt{\frac {1}{T\sum_{j=1}^N \mathds{1}\{\gt_i = g\}} \sumnt \mathds{1}\{\gt_i = g\}u_{it}^2} \longrightarrow_p \sum_{g=1}^G \mathbb{P}(\gt_i = g)\sqrt{ \Ex{\mathds{1}\{\gt_i = g\}u_{it}^2}}.
\]
\end{lemma}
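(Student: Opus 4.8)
The plan is to separate each summand into two pieces that converge on their own—the group proportion $P_g(\widetilde{\gamma}) = N^{-1}\sum_{i=1}^N \mathds{1}\{\gt_i = g\}$ and a within-group average of squared errors—show each converges in probability, and recombine them through Slutsky's theorem and the continuous mapping theorem. Since $G$ is finite, termwise convergence of the summands delivers convergence of the whole sum, so the argument is pointwise in $g$. Concretely, I would first rewrite the quantity under the square root as the ratio
\[
\frac{1}{T\sum_{j=1}^N \mathds{1}\{\gt_j = g\}} \sumnt \mathds{1}\{\gt_i = g\}u_{it}^2 = \frac{\frac{1}{NT}\sumnt \mathds{1}\{\gt_i = g\}u_{it}^2}{\frac{1}{N}\sum_{j=1}^N \mathds{1}\{\gt_j = g\}},
\]
whose denominator is exactly $P_g(\widetilde{\gamma})$.

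For the denominator, the i.i.d. assumption on $\{\gt_i\}$ gives $P_g(\widetilde{\gamma}) \to_p \mathbb{P}(\gt_i = g) > 0$ by the weak law of large numbers, and the strict positivity is what keeps the ratio stable in the limit. The substantive step is the numerator $\frac{1}{NT}\sumnt \mathds{1}\{\gt_i = g\}u_{it}^2$, which I would show converges to $\Ex{\mathds{1}\{\gt_i = g\}u_{it}^2}$ (read as the limiting time-averaged expectation when the errors are not stationary). Its mean equals this target by construction, so all that remains is a vanishing variance bound followed by Chebyshev. Expanding the variance produces covariance terms $\mathrm{Cov}(\mathds{1}\{\gt_i = g\}u_{it}^2, \mathds{1}\{\gt_j = g\}u_{js}^2)$; the cross-sectional contributions $i\neq j$ are absorbed by the limited cross-sectional dependence in Assumption \ref{as:con}(c), while the temporal contributions $i=j$ are controlled by the fourth-moment bound $\Ex{u_{it}^4} < M$ of Assumption \ref{as:infeas}(b) together with the weak-dependence restrictions, yielding a variance that is $o(1)$.

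With both limits established, the continuous mapping theorem applied to the ratio (legitimate because $\mathbb{P}(\gt_i = g) > 0$) and then to the square root, followed by Slutsky on the product $P_g(\widetilde{\gamma})\sqrt{\cdot}$, shows that the $g$-th summand converges to $\mathbb{P}(\gt_i = g)\sqrt{\Ex{u_{it}^2 \mid \gt_i = g}} = \sqrt{\mathbb{P}(\gt_i = g)\,\Ex{\mathds{1}\{\gt_i = g\}u_{it}^2}}$, i.e. the group proportion times the group standard deviation; summing over the finitely many groups completes the proof. The main obstacle is the numerator's law of large numbers: unlike the purely i.i.d. proportion, it mixes an i.i.d. cross-section of labels with a dependent time series of squared errors, so the variance bound must simultaneously exploit the cross-sectional and temporal dependence controls. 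This is the only step that genuinely relies on Assumption \ref{as:con} rather than elementary i.i.d. averaging.
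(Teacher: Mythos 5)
Your proof is correct and follows essentially the same route as the paper's: a Chebyshev variance bound on the within-group average of $u_{it}^2$ (exploiting independence across $i$ and the fourth-moment bound $\Ex{u_{it}^4}<M$ via Cauchy--Schwarz), the weak law of large numbers for the proportions $P_g(\widetilde{\gamma})$, and the continuous mapping theorem plus Slutsky to recombine. The one substantive remark is that your more careful normalization correctly identifies the limit of the within-group average as the conditional moment $\Ex{u_{it}^2 \mid \gt_i = g}$, so your $g$-th summand tends to $\mathbb{P}(\gt_i=g)\sqrt{\Ex{u_{it}^2\mid \gt_i=g}} = \sqrt{\mathbb{P}(\gt_i=g)\,\Ex{\mathds{1}\{\gt_i=g\}u_{it}^2}}$, which differs from the lemma's literal display by a factor $\sqrt{\mathbb{P}(\gt_i=g)}$; the paper's own proof introduces this slip by centering the normalized sum at $\mu_g=\Ex{\mathds{1}\{\gt_i=g\}u_{it}^2}$, so the discrepancy is a normalization error in the stated lemma rather than a gap in your argument.
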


\begin{proof}
	First, for $g = 1,\dots,G$, let $\mu_g = \Ex{\mathds{1}\{\gt_i = g\}u_{it}^2}$. Then, by Chebyshev's inequality,
\begin{align*}
	&\mathbb{P}\bigg(\bigg\vert \frac {1}{T\sum_{j=1}^N \mathds{1}\{\gt_i = g\}} \sumnt \mathds{1}\{\gt_i = g\} u_{it}^2 - \mu_g \bigg\vert \geq \varepsilon\bigg)\\ &\leq \frac{1}{\varepsilon^2} \text{Var}\left( \frac {1}{T\sum_{j=1}^N \mathds{1}\{\gt_i = g\}} \sumnt \mathds{1}\{\gt_i = g\} u_{it}^2 \right) \\
& \leq \frac{1}{T^2 \varepsilon^2\left( \sum_{j=1}^N \mathds{1}\{\gt_i = g\}\right)^2 }\text{Var}\left( \sumnt  \mathds{1}\{\gt_i = g\} u_{it}^2 \right) \\
&=\frac{1}{T^2 \varepsilon^2\left( \sum_{j=1}^N \mathds{1}\{\gt_i = g\}\right) }\text{Var}\left( \sum_{t=1}^T \mathds{1}\{\gt_i = g\} u_{it}^2 \right), \>\>\> \text{independent across }i \\
& \leq \frac{2}{T^2 \varepsilon^2\left( \sum_{j=1}^N \mathds{1}\{\gt_i = g\}\right) }\sum_{t=1}^T\sum_{s=1}^T \text{Cov}\left( \mathds{1}\{\gt_i = g\}u_{it}^2,  \mathds{1}\{\gt_i = g\}u_{is}^2\right) \\
& \leq \frac{1}{T^2 \varepsilon^2\left( \sum_{j=1}^N \mathds{1}\{\gt_i = g\}\right)} \sum_{t=1}^T\sum_{s=1}^T \sqrt{\text{Var}(\mathds{1}\{\gt_i = g\}u_{it}^2)}\sqrt{\text{Var}(\mathds{1}\{\gt_i = g\}u_{is}^2)}\\
&\leq \frac{1}{T^2 \varepsilon^2\sum_{j=1}^N \mathds{1}\{\gt_i = g\}}\sum_{t=1}^T\sum_{s=1}^T \sqrt{\Ex{ u_{it}^4 }}\sqrt{\Ex{ u_{is}^4 }} \\
& \leq \frac{M}{\varepsilon^2\sum_{j=1}^N \mathds{1}\{\gt_i = g\} }  
\end{align*}
where the bound on the fourth moments by $M>0$ is due to Assumption \ref{as:con}($b$) and I see that this probability converges to zero as $N$ approaches infinity. Therefore, since $\widetilde{\gamma}$ was an arbitrarily chosen random variable and
\[
P_{g}(\widetilde{\gamma}) = \frac{1}{N}\sum_{i = 1}^N \mathds{1}\{\gt_i = g\} \longrightarrow_p \Ex{\mathds{1}\{\gt_i = g\}} = \mathbb{P}(\gt_i = g)
\]
by the weak law of large numbers, using the continuous mapping theorem finishes the proof.
\end{proof}

The following lemma shows that the sample barycenter variance of any arbitrary grouping of error terms has an asymptotic lower bound as the barycenter variance with the true grouping.

\begin{lemma}\label{con:barycon}
	Suppose Assumption \ref{as:con} holds. Let $\widetilde{\gamma}$ be a random variable with support $\{1,\dots,G\}$ and let $\{\gt_i\}$ be an i.i.d. sample from $\widetilde{\gamma}$. Then there exists $C \geq 0$ such that
\begin{gather*}
	\sum_{g=1}^G P_{g}(\widetilde{\gamma})\sqrt{ \frac {1}{T\sum_{j=1}^N \mathds{1}\{\gt_i = g\} } \sumnt \mathds{1}\{\gt_i = g\} u_{it}^2}
- \sum_{g=1}^G P_{g}(\gamma^0) \sqrt{ \frac {1}{T\sum_{j=1}^N \mathds{1}\{g_i^0 = g\} } \sumnt \mathds{1}\{g_i^0 = g\}u_{it}^2}\\
= C + o_p(1)
\end{gather*}
\end{lemma}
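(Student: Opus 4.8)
The plan is to recognize that both barycenter sums in the statement have exactly the form treated in Lemma \ref{sig:con}, so each converges in probability to a deterministic limit. The difference of two probability-convergent sequences converges to the difference of the limits, which is a constant; calling that constant $C$ is precisely the assertion that the displayed difference equals $C + o_p(1)$. All the substance is therefore in identifying the two limits and checking that their difference is nonnegative, i.e. that the true partition $\gamma^0$ asymptotically minimizes the barycenter standard deviation of the errors among all partitions. This nonnegativity is the optimal-transport fact alluded to in the introduction, and it is the one step I expect to require real work.

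Concretely, I would first apply Lemma \ref{sig:con} with the grouping $\widetilde{\gamma}$ and again with $\gamma^0$, so that each sum has a well-defined probability limit. To evaluate these limits I condition on the true label: writing $\pi_{gh} = \plim{N} \frac{1}{N}\sum_{i=1}^N \mathds{1}\{\gt_i = g\}\mathds{1}\{g_i^0 = h\}$ for the limiting joint frequency of the arbitrary and true labels (which exists by i.i.d.\ sampling and the weak law of large numbers) and using $\Ex{u_{it}^2 \mid g_i^0 = h} = \sigma_h^2$, the ratio under the square root for the $\widetilde{\gamma}$ term satisfies
\[
	\plim{N}\frac{\frac{1}{NT}\sumnt \mathds{1}\{\gt_i = g\}u_{it}^2}{\frac{1}{N}\sum_{j=1}^N \mathds{1}\{\gt_j = g\}} = \frac{\sum_{h=1}^G \pi_{gh}\sigma_h^2}{\sum_{h=1}^G \pi_{gh}}, \qquad P_g(\widetilde{\gamma}) \to_p \sum_{h=1}^G \pi_{gh}.
\]
Multiplying the square root of the former by the latter and summing over $g$ gives the limit of the first sum as $\sum_{g=1}^G \sqrt{\left(\sum_{h}\pi_{gh}\right)\left(\sum_{h}\pi_{gh}\sigma_h^2\right)}$. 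For the true grouping $\pi_{gh}$ is concentrated on $h=g$, so the within-group variance is exactly $\sigma_g^2$ and the second sum converges to $\sum_{g=1}^G P_g^0 \sigma_g$; since $\sum_{g} \pi_{gh} = P_h^0$ this equals $\sum_{g,h} \pi_{gh}\sigma_h$.

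The decisive step is a termwise Cauchy--Schwarz inequality. For each fixed $g$, applying Cauchy--Schwarz to $(\sqrt{\pi_{gh}})_h$ and $(\sqrt{\pi_{gh}}\,\sigma_h)_h$ yields
\[
	\left(\sum_{h=1}^G \pi_{gh}\sigma_h\right)^2 \leq \left(\sum_{h=1}^G \pi_{gh}\right)\left(\sum_{h=1}^G \pi_{gh}\sigma_h^2\right),
\]
and taking square roots and summing over $g$ shows the limit of the first sum dominates that of the second. Hence
\[
	C = \sum_{g=1}^G \sqrt{\left(\sum_{h}\pi_{gh}\right)\left(\sum_{h}\pi_{gh}\sigma_h^2\right)} - \sum_{g=1}^G P_g^0\sigma_g \geq 0,
\]
with equality precisely when each $\widetilde{\gamma}$-group pools only true groups sharing a common variance.

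I expect the only genuine obstacle to be the sign of $C$: convergence to a constant is immediate from Lemma \ref{sig:con} and the continuous mapping theorem, whereas $C \geq 0$ is the content and is what the lemma contributes to the consistency argument, namely that replacing the true partition by any competitor can only inflate the noise component of the WGFE criterion. I would also note in passing that the argument uses only $\Ex{u_{it}}=0$ and the fourth-moment control of Assumption \ref{as:con} (through Lemma \ref{sig:con}) and does not invoke the strong-separability condition, which enters the classification argument elsewhere.
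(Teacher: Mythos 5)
Your proof is correct, and its skeleton matches the paper's: both invoke Lemma \ref{sig:con} together with the continuous mapping/Slutsky theorem to replace each sample barycenter sum by its probability limit, so that the displayed difference is a constant plus $o_p(1)$, and both then reduce the lemma to showing that this constant is nonnegative. Where you genuinely diverge is on that last, decisive step. The paper simply asserts that $\sum_{g} \mathbb{P}(g_i^0 = g)\sqrt{\Ex{\mathds{1}\{g_i^0=g\}u_{it}^2}}$ ``is a minimized value among $(\theta,\alpha,\gamma)$'' and deduces $C\geq 0$ from that claim without proof; you actually prove the claim by conditioning on the true label, writing the limit of the $\widetilde{\gamma}$-sum as $\sum_g \sqrt{(\sum_h\pi_{gh})(\sum_h\pi_{gh}\sigma_h^2)}$ and the limit of the $\gamma^0$-sum as $\sum_{g,h}\pi_{gh}\sigma_h$, and applying Cauchy--Schwarz groupwise to the vectors $(\sqrt{\pi_{gh}})_h$ and $(\sqrt{\pi_{gh}}\sigma_h)_h$. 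This buys three things the paper's version does not deliver: an actual proof of the inequality the paper takes for granted, a closed form for $C$, and a characterization of when $C=0$ (each candidate group pools only true groups of equal variance). You also handle the normalization consistently with the lemma's statement, whereas the paper's own proof silently switches from the within-group average $\bigl(T\sum_j\mathds{1}\{\gt_j=g\}\bigr)^{-1}$ to $(NT)^{-1}$, which changes the form of the limits.

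One caveat you should make explicit: your identity $\Ex{\mathds{1}\{\gt_i=g\}u_{it}^2}=\sum_h\pi_{gh}\sigma_h^2$ requires that the candidate label carry no information about the error beyond the true group, i.e.\ $\Ex{u_{it}^2\mid \gt_i=g,\,g_i^0=h}=\sigma_h^2$. If $\widetilde{\gamma}$ were permitted to depend on the realized errors (as the estimated grouping ultimately does when this lemma is applied along the consistency argument), the decomposition --- and indeed the sign of $C$ --- can fail, since a labeling that sorts on $|u_{it}|$ strictly lowers $\Ex{\sqrt{\Ex{u_{it}^2\mid\gt_i}}}$ by concavity. This is not a gap relative to the paper, whose unproven minimality assertion faces exactly the same issue under that reading of the lemma; but since your argument is the one doing the real work, it is the right place to record the assumption.
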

\begin{proof}
	By Lemma \ref{sig:con} and the Slutsky theorem, I have 
\begin{align*}
&\sum_{g=1}^G P_{g}(\widetilde{\gamma})\sqrt{\frac {1}{NT} \sumnt \mathds{1}\{\gt_i = g\}  u_{it}^2} - \sum_{g=1}^G P_{g}(\gamma^0)\sqrt{ \frac {1}{NT} \sumnt \mathds{1}\{g_i^0 = g\}u_{it}^2}\\
	=& \sum_{g=1}^G \mathbb{P}(\gt_i = g)\sqrt{ \Ex{\mathds{1}\{\gt_i = g\}u_{it}^2}} - \sum_{g=1}^G \mathbb{P}(g_i^0 = g)\sqrt{ \Ex{\mathds{1}\{g_i^0 = g\}u_{it}^2}} + o_p(1).
\end{align*}
as $N,T$ approach infinity.

By definition,
\[
	Q(\theta^0,\alpha^0,\gamma^0) = \sum_{g=1}^G \mathbb{P}(g_i^0 = g) \sqrt{\Ex{\mathds{1}\{g_i^0 = g\}u_{it}^2}}
\]
which is a minimized value among $(\theta,\alpha,\gamma)$ so
\[
	 \sum_{g=1}^G \mathbb{P}(\gt_i = g) \sqrt{\Ex{\mathds{1}\{\gt_i = g\}u_{it}^2}} = Q(\theta^0,\alpha^0,\widetilde{\gamma}) \geq Q(\theta^0,\alpha^0,\gamma^0).
\]
Therefore, there exists $C \geq 0 $ such that 
\[
 \sum_{g=1}^G \mathbb{P}(\gt_i = g)\sqrt{ \Ex{\mathds{1}\{\gt_i = g\}u_{it}^2}} - \sum_{g=1}^G \mathbb{P}(g_i^0 = g)\sqrt{ \Ex{\mathds{1}\{g_i^0 = g\}u_{it}^2}} = C + o_p(1).
\]
\end{proof}

\begin{lemma}\label{uwcon:aux}
Suppose Assumption \ref{as:con} holds. Then
\begin{equation}
\sup_{(\theta,\alpha,\gamma) \in \Theta \times \mathcal{A}^{GT} \times \Gamma}
 \left\vert \widehat{Q}(\theta,\alpha,\gamma) - \widetilde{Q}(\theta,\alpha,\gamma) \right\vert \longrightarrow_p  0
\end{equation}
as $N,T$ approach infinity.
\end{lemma}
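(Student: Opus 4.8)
The plan is to observe that $\widehat{Q}$ and $\widetilde{Q}$ share the same weights $P_g$ and differ, group by group, only through the cross term that is linear in the errors, and then to bound that cross term uniformly over the parameters and partitions. First I would substitute the true model \eqref{themodel} into $\widehat{Q}_g$ and expand. Writing $w_{it}^g = x_{it}'(\theta^0-\theta) + \alpha_{g_i^0 t}^0 - \alpha_{gt}$, the summand of $\widehat{Q}_g$ equals $(w_{it}^g+u_{it})^2 = (w_{it}^g)^2 + 2u_{it}w_{it}^g + u_{it}^2$, while the summand defining $\widetilde{Q}$ for group $g$ keeps only $(w_{it}^g)^2 + u_{it}^2$. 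Denoting the latter bracket by $\widetilde{Q}_g$, this gives
\[
\widehat{Q}_g(\theta,\alpha,\gamma) - \widetilde{Q}_g(\theta,\alpha,\gamma) = \frac{2}{T\sum_{j=1}^N P_g^j}\sum_{i=1}^N\sum_{t=1}^T P_g^i u_{it}\left(x_{it}'(\theta^0-\theta) + \alpha_{g_i^0 t}^0 - \alpha_{gt}\right).
\]

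Next I would pass from the brackets to their square roots. Since $\widehat{Q}-\widetilde{Q} = \sum_{g=1}^G P_g(\sqrt{\widehat{Q}_g}-\sqrt{\widetilde{Q}_g})$, I would apply the elementary bound $|\sqrt{a}-\sqrt{b}|\le\sqrt{|a-b|}$ already used in Lemma \ref{lemma:uwcongroupsq}, together with $P_g\le 1$ and $\sum_{j}P_g^j = NP_g$, to obtain
\[
P_g\left|\sqrt{\widehat{Q}_g}-\sqrt{\widetilde{Q}_g}\right| \le \sqrt{P_g^2\left|\widehat{Q}_g-\widetilde{Q}_g\right|} \le \sqrt{P_g\left|\widehat{Q}_g-\widetilde{Q}_g\right|} = \sqrt{2\left|R_g\right|},
\]
where $R_g := \frac{1}{NT}\sum_{i,t}P_g^i u_{it}(x_{it}'(\theta^0-\theta)+\alpha_{g_i^0 t}^0-\alpha_{gt})$; crucially the outer factor $P_g$ cancels the $1/P_g$ in the denominator of the gap. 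Because $G$ is finite and $x\mapsto\sqrt{2x}$ is continuous at $0$, it then suffices to prove $\sup_{(\theta,\alpha,\gamma)}|R_g|\to_p 0$ for each $g$.

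For the last step I would split $R_g = R_g^{(1)}+R_g^{(2)}-R_g^{(3)}$ along the three pieces of $w_{it}^g$ and, in each, discard the indicator via $P_g^i\le 1$ (after taking absolute values) so that the resulting bound no longer depends on $\gamma$: (i) $|R_g^{(1)}|\le \mathrm{diam}(\Theta)\cdot\frac{1}{N}\sum_{i=1}^N\norm{\frac{1}{T}\sum_{t=1}^T u_{it}x_{it}}$, whose square has expectation at most $M/T$ by Assumption \ref{as:con}($a$); (ii) $|R_g^{(2)}|\le\frac{1}{N}\sum_{i=1}^N\left|\frac{1}{T}\sum_{t=1}^T u_{it}\alpha_{g_i^0 t}^0\right|$, which is mean zero and $O_p(T^{-1/2})$ since $\mathbb{E}[u_{it}]=0$, $\alpha^0$ is bounded on the compact $\mathcal{A}$, and $u$ is weakly dependent across $t$; and (iii) after Cauchy--Schwarz in $t$ and discarding indicators, $|R_g^{(3)}|\le\bar{a}\left(\frac{1}{N^2}\sum_{i,j}\left|\frac{1}{T}\sum_{t=1}^T u_{it}u_{jt}\right|\right)^{1/2}$ with $\bar{a}=\sup_{\mathcal{A}}(\frac{1}{T}\sum_t\alpha_{gt}^2)^{1/2}$, whose diagonal is $O_p(N^{-1})$ and whose off-diagonal, after centering at $\mathbb{E}[u_{it}u_{jt}]$, is controlled in mean by Assumption \ref{as:con}($b$) and in variance by Assumption \ref{as:con}($c$), giving $O_p(T^{-1})$. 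Collecting (i)--(iii) yields $\sup|R_g|\to_p 0$, and summing over the finitely many groups finishes the proof.

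The hard part will be the supremum over the partition space $\Gamma_G^N$, which contains $G^N$ elements so that no union bound is available; the device that saves the argument is that every quantity entering $R_g$ is an indicator-weighted average, so replacing each $P_g^i$ by $1$ removes all dependence on $\gamma$ and reduces the problem to a handful of full-sample averages. I would stress two points that make the bound delicate: a crude Cauchy--Schwarz estimate of $R_g$ only yields $O_p(1)$, so the vanishing rate genuinely relies on the mean-zero structure of $u_{it}$ and the weak-dependence moment bounds in Assumption \ref{as:con}($a$)--($c$); and the step $\sqrt{P_g^2\,\cdot}\le\sqrt{P_g\,\cdot}$ is what keeps the estimate finite even for partitions with vanishingly small group frequencies $P_g$, which is exactly the regime left open because the supremum ranges over all of $\Gamma_G^N$ rather than the interior of the simplex.
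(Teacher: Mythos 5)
Your overall strategy is the paper's: substitute the true model, observe that $\widehat{Q}_g-\widetilde{Q}_g$ is exactly the cross term $\frac{2}{T\sum_j P_g^j}\sum_{i,t}P_g^i u_{it}w_{it}^g$, and drive that term to zero using the moment bounds in Assumption \ref{as:con}. Your handling of the square roots is actually a refinement: the paper uses $\sqrt{a}-\sqrt{b}=(a-b)/(\sqrt{a}+\sqrt{b})$ and must then argue that the reciprocal $\varphi_g=(\sqrt{\widehat{Q}_g}+\sqrt{\widetilde{Q}_g})^{-1}$ is $O_p(1)$ uniformly over partitions (delicate for near-empty groups), whereas your chain $P_g\vert\sqrt{\widehat{Q}_g}-\sqrt{\widetilde{Q}_g}\vert\le\sqrt{P_g^2\vert\widehat{Q}_g-\widetilde{Q}_g\vert}\le\sqrt{2\vert R_g\vert}$ sidesteps that issue entirely and correctly cancels the $1/P_g$ in the normalization. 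Your terms (i) and (iii) are handled the same way the paper handles them.

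There is, however, a genuine gap in your term (ii). You bound $\vert R_g^{(2)}\vert$ by $\frac{1}{N}\sum_{i=1}^N\bigl\vert\frac{1}{T}\sum_{t=1}^T u_{it}\alpha_{g_i^0 t}^0\bigr\vert$ and claim this is $O_p(T^{-1/2})$ because $u$ is ``weakly dependent across $t$.'' Once the absolute value sits inside the sum over $i$, all cross-sectional averaging is lost, and the only remaining source of convergence is a time-series law of large numbers for $u_{it}\alpha_{g_i^0 t}^0$ at fixed $i$. Assumption \ref{as:con} supplies no such condition: it controls the temporal dependence of $u_{it}x_{it}$ (part $a$, which is why your term (i) works) and the cross-sectional behavior of $v_{it}v_{jt}$ (parts $b$ and $c$), but mixing and orthogonality conditions on $\{u_{it}\}_t$ and $\{\alpha_{gt}^0\}_t$ appear only in Assumption \ref{as:ae}, which this lemma does not invoke; under Assumption \ref{as:con} alone your bound for (ii) need not vanish. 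The fix is the device you already use for (iii): since $\alpha_{ht}^0$ does not vary with $i$ within a true group, write $\sum_i P_g^i u_{it}\alpha_{g_i^0 t}^0=\sum_{h=1}^G\alpha_{ht}^0\sum_i P_g^i\mathds{1}\{g_i^0=h\}u_{it}$, keep the sum over $i$ inside, apply Cauchy--Schwarz in $t$ to factor out $\bigl(\frac{1}{T}\sum_t(\alpha_{ht}^0)^2\bigr)^{1/2}$, and control $\frac{1}{T}\sum_t\bigl(\frac{1}{N}\sum_i P_g^i\mathds{1}\{g_i^0=h\}u_{it}\bigr)^2\le\frac{1}{N^2}\sum_{i,j}\bigl\vert\frac{1}{T}\sum_t u_{it}u_{jt}\bigr\vert$ exactly as in your term (iii); the convergence then comes from $N\to\infty$ via Assumption \ref{as:con}($b$,$c$), not from $T$. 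With that substitution your argument goes through.
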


\begin{proof}

Let $(\theta,\alpha,\gamma) \in \Theta \times \mathcal{A}^{GT} \times \Gamma$.  Then
\begin{gather}
	 \widehat{Q}(\theta,\alpha,\gamma) - \widetilde{Q}(\theta,\alpha,\gamma) \notag \\
= \sum_{g=1}^G \sqrt{P_g \frac{1}{NT}\sum_{i=1}^N \sum_{t=1}^T P_g^i \left(x_{it}'(\theta^0 - \theta) + \alpha_{g_i^0 t}^0 - \alpha_{g_it}\right)^2 
+  P_g^i u_{it}^2 + 2 P_g^i u_{it}\left(x_{it}'(\theta^0 - \theta) + \alpha_{g_i^0 t}^0 - \alpha_{g_it}\right) } \label{uwcon:diff}\\
 - \sum_{g=1}^G  \sqrt{P_g  \frac{1}{NT}\sum_{i=1}^N \sum_{t=1}^T P_g^i \left(x_{it}'(\theta^0 - \theta) + \alpha_{g_i^0 t}^0 - \alpha_{g_it}\right)^2 
+P_g^i u_{it}^2} \notag
\end{gather}
For $a,b >0$ I have the identity
\begin{equation}\label{id:alg}
	\sqrt{a} - \sqrt{b} = \frac{\sqrt{a} - \sqrt{b}}{1} \left(\frac{\sqrt{a} + \sqrt{b}}{\sqrt{a} + \sqrt{b}}\right) = \frac{a - b}{\sqrt{a} + \sqrt{b}}
\end{equation}
which can be applied to \eqref{uwcon:diff} as
\begin{gather}
	\sum_{g=1}^G \varphi_g(\theta,\alpha,\gamma) \left(P_g \frac{2}{NT}\sumnt P_g^i u_{it}\left(x_{it}'(\theta^0 - \theta) + \alpha_{g_i^0 t}^0 - \alpha_{g_it}\right) \right)
\end{gather}
where 
\begin{equation}
	\varphi_g(\theta,\alpha,\gamma) = \left(  \sqrt{\widehat{Q}_g(\theta,\alpha,\gamma)} + \sqrt{\widetilde{Q}_g(\theta,\alpha,\gamma)} \right)^{-1}
\end{equation}
with the $\widehat{Q}_g$ and $\widetilde{Q}_g$ denoting the terms within the weighted sum of square roots of $\widehat{Q}$ and $\widetilde{Q}$, respectively.

To show that the difference between the sample criterion function and auxiliary criterion is $o_p(1)$, I need to show the following for all $g = 1,\dots,G$:
\begin{itemize}
	\item[$i$.] $\frac{2}{NT}\sumnt P_g^i u_{it}\left(x_{it}'(\theta^0 - \theta) + \alpha_{g_i^0 t}^0 - \alpha_{g_it}\right)  = o_p(1)$; and\\
	\item[$ii$.] $\varphi_g(\theta,\alpha,\gamma) = O_p(1)$.
\end{itemize}

\noindent{\it Proof of i}. Expanding the sum reveals
\[
	\left(\frac{2}{NT}\sumnt P_g^i u_{it}x_{it}\right)'(\theta^0 - \theta) + \frac{2}{NT}\sumnt P_g^i \alpha_{g_i^0 t}^0u_{it} -  \frac{2}{NT}\sumnt P_g^i \alpha_{g_it} u_{it}
\]
which I will show each term is $o_p(1)$.

Define $\omega_g^i = P_g^i/N \leq 1/N$ so $\sum_i \omega_g^i = P_g$. Then for the first term:
\begin{align}
	\Ex{\norm{\sum_{i=1}^N \omega_g^i \frac{1}{T} \sum_{t=1}^T  u_{it}x_{it}}^2} &\leq \Ex{\left( \sum_{i=1}^N \omega_g^i  \norm{ \frac{1}{T} \sum_{t=1}^Tu_{it}x_{it}}\right)^2} \\
&\leq \Ex{\left( \sum_{i=1}^N \frac{1}{N}  \norm{ \frac{1}{T} \sum_{t=1}^Tu_{it}x_{it}}\right)^2} \\
&\leq \Ex{ \frac{1}{N} \sum_{i=1}^N \norm{\frac{1}{T} \sum_{t=1}^Tu_{it}x_{it}}^2} \\
& = \Ex{ \frac{1}{NT^2} \sumnt \sum_{s=1}^T  u_{it}u_{is} x_{it}'x_{is} }\\
& = \Ex{\frac{1}{NT^2} \sumnt \sum_{s=1}^T  u_{it}u_{is} x_{it}'x_{is}}\\
& \leq \frac{M}{T}
\end{align}
where the inequalities are due to the triangle inequality and then Jensen's inequality and Assumption \ref{as:con} ($c$). Hence, the first term is $o_p(1)$ due to this inequality and Assumption \ref{as:con} ($a$) that the parameter space is a compact subset of $\R^p$ so $\norm{\theta^0 - \theta}$ is bounded for any $\theta$.

For the last two terms, it is enough to show the third is $o_p(1)$. For every $g = 1,\dots,G$,
\begin{equation}
	\left( \frac{1}{NT} \sumnt P_g^i \alpha_{g_i t} u_{it} \right)^2 
= \left( \frac{1}{T} \sum_{t=1}^T \alpha_{gt} \left(\frac{1}{N} \sum_{i=1}^N P_g^i u_{it} \right) \right)^2
\leq \left( \frac{1}{T} \sum_{t=1}^T \alpha_{gt} ^2 \right) \left( \frac{1}{T} \sum_{t=1}^T  \left(\frac{1}{N} \sum_{i=1}^N P_g^i u_{it}  \right)^2 \right)
\end{equation}
where the left term is uniformly bounded because of compactness of the parameter space Assumption \ref{as:con} ($a$). Then,
\begin{align*}
	\frac{1}{T} \sum_{t=1}^T \left( \frac{1}{N} \sum_{i=1}^N P_g^i u_{it} \right)^2 
&= \frac{1}{TN^2} \sum_{i=1}^N \sum_{j=1}^N P_g^i P_g^j \sum_{t=1}^T u_{it} u_{jt}
\leq \frac{1}{N^2} \sum_{i=1}^N \sum_{j=1}^N \left\vert \frac{1}{T}\sum_{t=1}^T u_{it} u_{jt} \right\vert\\
&\leq  \frac{1}{N^2} \sum_{i=1}^N \sum_{j=1}^N \left\vert \frac{1}{T}\sum_{t=1}^T \Ex{u_{it} u_{jt}} \right\vert
+ \frac{1}{N^2} \sum_{i=1}^N \sum_{j=1}^N \left\vert \frac{1}{T}\sum_{t=1}^T  u_{it} u_{jt} - \Ex{u_{it} u_{jt}} \right\vert.
\end{align*}
By Assumption \ref{as:con} ($d$), ${\textstyle \frac{1}{N^2} \sum_{i=1}^N \sum_{j=1}^N \left\vert \frac{1}{T}\sum_{t=1}^T \Ex{u_{it} u_{jt}} \right\vert \leq M/N}$. Also, by the Cauchy-Schwarz inequality,
\[
	\left( \frac{1}{N^2} \sum_{i=1}^N \sum_{j=1}^N \left\vert \frac{1}{T}\sum_{t=1}^T  u_{it} u_{jt} - \Ex{u_{it} u_{jt}} \right\vert \right)^2
 \leq  \frac{1}{N^2} \sum_{i=1}^N \sum_{j=1}^N \left( \frac{1}{T}\sum_{t=1}^T  u_{it} u_{jt} - \Ex{u_{it} u_{jt}} \right)^2
\]
which is bounded in expectation by Assumption \ref{as:con} ($e$). 

Therefore, point one is $o_p(1)$ and it remains to show that point two is $O_p(1)$. However, by compactness of the parameter space and Lemma \ref{sig:con} it is easily seen that it is $O_p(1)$.

Therefore, since $g$ was arbitrarily chosen, I have that 
\[
	 \widehat{Q}(\theta,\alpha,\gamma) - \widetilde{Q}(\theta,\alpha,\gamma) = o_p(1).
\]
\end{proof}

\textnormal{The following lemma shows that the true values are unique minimizers of the auxiliary criterion function in the probability limit.}

\begin{lemma}\label{lb:aux}
Suppose Assumption \ref{as:con} holds. There exists $K > 0$ such that for all $(\theta,\alpha,\gamma) \in \Theta \times \mathcal{A}^{GT} \times \Gamma_G$, 
\[
	\widetilde{Q}(\theta,\alpha,\gamma)  - \widetilde{Q}(\theta^0,\alpha^0,\gamma^0) \geq  K \norm{\theta^0 - \theta}^2 + o_p(1)
\]
\end{lemma}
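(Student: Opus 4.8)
The plan is to reduce the claim to a quadratic lower bound on a purely signal-driven term, separating it from the noise via the identity $\sqrt a-\sqrt b=(a-b)/(\sqrt a+\sqrt b)$ and then invoking the full-rank Assumption \ref{as:con}($d$). Write the generic squared deviation as $y_{it}-x_{it}'\theta-\alpha_{g_it}=u_{it}+r_{it}$ with $r_{it}=x_{it}'(\theta^0-\theta)+\alpha^0_{g_i^0t}-\alpha_{g_it}$, and for each candidate group $g$ set $\widetilde S_g=\frac1{NT}\sumnt \mathds 1\{g_i=g\}r_{it}^2$ and $\widetilde N_g=\frac1{NT}\sumnt\mathds 1\{g_i=g\}u_{it}^2$, so that $\widetilde Q(\theta,\alpha,\gamma)=\sum_{g}\sqrt{P_g\widetilde S_g+P_g\widetilde N_g}$ while, since the signal vanishes at the truth, $\widetilde Q(\theta^0,\alpha^0,\gamma^0)=\sum_g\sqrt{P_g^0\widetilde N_g^0}$.

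First I would minimize over $\alpha$. Because each $\alpha_{g\cdot}$ enters only the $g$-th summand through $\widetilde S_g$ and $\sqrt{\cdot}$ is increasing, minimizing $\widetilde Q$ over $\alpha$ amounts to replacing $\alpha_{gt}$ by the within-group-$g$ time-$t$ mean, leaving residuals from the candidate-group means. Passing to the finer intersection partition $g_i\wedge g_i^0$ can only lower the residual sum of squares (an ANOVA orthogonality decomposition), and on each intersection cell the true effect $\alpha^0_{g_i^0t}$ is constant and cancels, so $\widetilde S_g\ge \widetilde R_g:=\frac1{NT}\sum_{i:g_i=g}\sum_t\big((x_{it}-\overline x_{g_i\wedge g_i^0 t})'(\theta^0-\theta)\big)^2$. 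Crucially, summing over $g$ removes the grouping weights: $\sum_g\widetilde R_g=(\theta^0-\theta)'\big[\frac1{NT}\sumnt(x_{it}-\overline x_{g_i\wedge g_i^0t})(x_{it}-\overline x_{g_i\wedge g_i^0t})'\big](\theta^0-\theta)\ge(\rho+o_p(1))\norm{\theta^0-\theta}^2$ by Assumption \ref{as:con}($d$), uniformly in $\gamma$, and since $\Theta$ is compact this reads $\rho\norm{\theta^0-\theta}^2+o_p(1)$.

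Next I would split the difference as $\big[\sum_g\sqrt{P_g\widetilde R_g+P_g\widetilde N_g}-\sum_g\sqrt{P_g\widetilde N_g}\big]+\big[\sum_g\sqrt{P_g\widetilde N_g}-\sum_g\sqrt{P_g^0\widetilde N_g^0}\big]$. The second bracket is exactly the quantity controlled by Lemma \ref{con:barycon}, which shows it equals $C+o_p(1)$ with $C\ge0$, hence contributes $o_p(1)$ to the lower bound. For the first bracket I apply the identity termwise; the key cancellation is $\sqrt{P_g\widetilde R_g+P_g\widetilde N_g}-\sqrt{P_g\widetilde N_g}=\sqrt{P_g}\,\widetilde R_g/(\sqrt{\widetilde R_g+\widetilde N_g}+\sqrt{\widetilde N_g})\ge \widetilde R_g/(2\sqrt{B_g})$, where $B_g=(\widetilde R_g+\widetilde N_g)/P_g$ is the average of $\frac1T\sum_t(r_{it}^2+u_{it}^2)$ over candidate group $g$. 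The $\sqrt{P_g}$ in the numerator cancels the $\sqrt{P_g}$ scale of the denominator, so a uniform bound $B_g\le B$ gives first bracket $\ge\frac1{2\sqrt B}\sum_g\widetilde R_g\ge\frac{\rho}{2\sqrt B}\norm{\theta^0-\theta}^2+o_p(1)$, and I would take $K=\rho/(2\sqrt B)$.

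The main obstacle is the uniform upper bound $B_g\le B$ over all groups $g$ and all $\gamma\in\Gamma_G^N$. For groupings bounded away from the boundary of the simplex this follows from the fourth-moment conditions of Assumption \ref{as:infeas}($b$) and compactness of $\Theta$ through a law of large numbers for per-group averages; the delicate case is near-degenerate groupings with vanishing $P_g$, where the average runs over few units and could be inflated by the largest realized $\frac1T\sum_t u_{it}^2$. I would dispatch this either by observing that such cells contribute negligibly to both $\sum_g\widetilde R_g$ and to the first bracket (their $\sqrt{P_g}\,\widetilde R_g$ is of smaller order), so the full-rank mass in Assumption \ref{as:con}($d$) is carried by the non-degenerate cells, or by restricting attention to the interior groupings that the minimizer can attain, since it is only at such $\gamma$ that the bound is invoked in the consistency argument. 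This square-root--weighting complication is exactly what is absent in the unweighted GFE criterion; collecting the $o_p(1)$ terms from Lemma \ref{con:barycon} and Assumption \ref{as:con}($d$), uniform in $(\theta,\alpha,\gamma)$, then completes the proof.
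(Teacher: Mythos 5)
Your proof follows essentially the same route as the paper's: the identity $\sqrt{a}-\sqrt{b}=(a-b)/(\sqrt{a}+\sqrt{b})$, Lemma \ref{con:barycon} to dispose of the noise-only barycenter difference between the candidate and true groupings, within-cell demeaning on the intersection partition $g_i\wedge g_i^0$, and Assumption \ref{as:con}($d$) to extract the quadratic term $\rho\norm{\theta^0-\theta}^2$. The one place you diverge is the bookkeeping of the group weights in the square-root denominators --- the paper bounds $[\widetilde{\varphi}_g]^{-1}$ by full-sample averages and carries a $\min_g P_g$ factor into $K$, whereas you cancel the $\sqrt{P_g}$ exactly and are left with a within-group average $B_g$, correctly flagging near-empty groups as the delicate case --- but this is a refinement of the same step (and your proposed resolutions are consistent with the paper's own restriction to non-empty groups), not a different argument.
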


\begin{proof}

For shorthand, take $P_g^0 = P_g(\gamma^0)$. I know $P_g \neq 0$ for some $g= 1,\dots,G$, so denote
\begin{align*}
	\left[\widetilde{\varphi}_g(\theta,\alpha,\gamma)\right]^{-1} &= \sqrt{P_g \frac{1}{TN}\sum_{i=1}^N \sum_{t=1}^T \mathds{1}\{g_i = g\} \left(x_{it}'(\theta^0 - \theta) +\{g_i = g\} \alpha_{g_i^0 t}^0 - \alpha_{g_it}\right)^2 + P_g\frac{1}{TN}\sum_{i=1}^N \sum_{t=1}^T \mathds{1}\{g_i = g\}u_{it}^2}\\
 &+ \sqrt{P_{g}\frac{1}{TN}\sum_{i=1}^N \sum_{t=1}^T \mathds{1}\{g_i = g\} u_{it}^2}.
\end{align*}

Adding by zero to
\begin{align*}
	&\widetilde{Q}(\theta,\alpha,\gamma) -  \widetilde{Q}(\theta^0,\alpha^0,\gamma^0)\\
&= \sum_{g=1}^G \sqrt{P_g \frac{1}{NT}\sumnt \mathds{1}\{g_i = g\} \left( x_{it}'(\theta^0 - \theta) + \alpha_{g_i^0 t}^0 - \alpha_{g_i t}\right)^2 + P_g\frac{1}{NT} \sumnt \mathds{1}\{g_i = g\} u_{it}^2 }\\
&\underbrace{- \sum_{g=1}^G  \sqrt{P_g\frac{1}{NT} \sumnt \mathds{1}\{g_i = g\} u_{it}^2} + \sum_{g=1}^G  \sqrt{P_g\frac{1}{NT} \sumnt \mathds{1}\{g_i = g\} u_{it}^2}}_{=0} - \sum_{g=1}^G  \sqrt{P_{g}^0\frac{1}{NT} \sumnt \mathds{1}\{g_i^0 = g\} u_{it}^2}
\end{align*}
then applying the identity (\ref{id:alg}) and Lemma \ref{con:barycon} gives
\begin{align*}
	&\widetilde{Q}(\theta,\alpha,\gamma) -  \widetilde{Q}(\theta^0,\alpha^0,\gamma^0)\\
&= \sum_{g=1}^G \sqrt{P_g \frac{1}{NT}\sumnt P_g^i \left( x_{it}'(\theta^0 - \theta) + \alpha_{g_i^0 t}^0 - \alpha_{g_i t}\right)^2 + P_g\frac{1}{NT} \sumnt P_g^i u_{it}^2 }
- \sqrt{P_g\frac{1}{NT} \sumnt P_g^i u_{it}^2}\\
&+ C + o_p(1)\\
&\geq  \sum_{g=1}^G \widetilde{\varphi}_g(\theta,\alpha,\gamma) P_g \left[\frac{1}{NT}\sumnt P_g^i \left( x_{it}'(\theta^0 - \theta) + \alpha_{g_i^0 t}^0 - \alpha_{g_i t}\right)^2\right] + o_p(1)
\end{align*}
where this inequality is over those $g = 1,\dots,G$ such that $\mathbb{P}(g_i = g) > 0$, in other words those groups that are non empty and also
\begin{align*}
	\left[\widetilde{\varphi}_g(\theta,\alpha,\gamma)\right]^{-1} &= \sqrt{P_g \frac{1}{NT}\sum_{i=1}^N \sum_{t=1}^T P_g^i \left(x_{it}'(\theta^0 - \theta) + \alpha_{g_i^0 t}^0 - \alpha_{g_it}\right)^2 + P_g\frac{1}{NT}\sum_{i=1}^N \sum_{t=1}^T P_g^i u_{it}^2}\\
 &+ \sqrt{P_{g}\frac{1}{NT}\sum_{i=1}^N \sum_{t=1}^T P_{g}^i u_{it}^2}\\
&\leq \sqrt{\frac{1}{NT}\sum_{i=1}^N \sum_{t=1}^T \left(x_{it}'(\theta^0 - \theta) + \alpha_{g_i^0 t}^0 - \alpha_{g_it}\right)^2} + 2\sqrt{\frac{1}{TN}\sum_{i=1}^N \sum_{t=1}^T u_{it}^2}\\
&\leq  \sqrt{  \frac{1}{NT} \sumnt \norm{x_{it}}^2\norm{\theta^0 - \theta}^2} + \sqrt{\frac{1}{NT} \sumnt \norm{x_{it}}\norm{\theta^0 - \theta} \left\vert \alpha_{g_i^0 t}^0 - \alpha_{g_i t} \right\vert }\\
&+ \sqrt{\frac{1}{NT} \sumnt \left(\alpha_{g_i^0 t}^0 - \alpha_{g_i t}\right)^2} + 2\sqrt{\frac{1}{TN}\sum_{i=1}^N \sum_{t=1}^T u_{it}^2}
\end{align*}
where all the terms can be bounded due to Assumption \ref{as:con} ($a$,$b$) and Assumption 1 ($c$) so there exists a constant $J$ for any $g$ such that $\widetilde{\varphi}_g(\theta,\alpha,\gamma) > J$.

Then, since the (within-group) mean minimizes the sum of these squared deviations and I have assumed some groups are non empty, the first term is bounded below as
\begin{align*}
&\widetilde{Q}(\theta,\alpha,\gamma) -  \widetilde{Q}(\theta^0,\alpha^0,\gamma^0)\\ 
 \geq& J \min_{\substack{P_g >0\\ g= 1,\dots,G}}\{P_g\}\sum_{g=1}^G  \frac{1}{NT}\sumnt P_g^i \left( x_{it}'(\theta^0 - \theta) + \alpha_{g_i^0 t}^0 - \alpha_{g_i t} \right)^2 + o_p(1)\\
=& J \min_{\substack{P_g >0\\ g= 1,\dots,G}}\{P_g\}\frac{1}{NT}\sumnt\left( x_{it}'(\theta^0 - \theta) + \alpha_{g_i^0 t}^0 - \alpha_{g_i t} \right)^2 + o_p(1)\\
\geq&J \min_{\substack{P_g >0\\ g= 1,\dots,G}}\{P_g\}\frac{1}{NT}\sumnt  \left( x_{it}'(\theta^0 - \theta) - \overline{x}_{g_i\wedge g_i^0  t}'(\theta^0 - \theta) \right)^2 + o_p(1)\\
\geq& J  \min_{\substack{P_g >0\\ g= 1,\dots,G}}\{P_g\}\min_{\gamma \in \Gamma}(\theta^0 - \theta)'\left(  \frac{1}{NT}\sumnt  \left( x_{it} -  \overline{x}_{g_i\wedge g_i^0 t}\right)\left( x_{it} -  \overline{x}_{g_i\wedge g_i^0 t}\right)' \right)(\theta^0 - \theta) + o_p(1)\\
\geq& J \min_{\gamma\in \Gamma}\{\widehat{\rho}(\gamma)\}\norm{\theta^0 - \theta}^2 + o_p(1) \\
 =&K \norm{\theta^0 - \theta}^2 + o_p(1)
\end{align*}
where the last equality is due to the convergence of probability of $\min_\gamma\{ \widehat{\rho}(\gamma)\}$ to a nonzero constant by Assumption  \ref{as:con}($f$).
\end{proof}

\begin{proof}[Proof of consistency of the WGFE estimator]
I now show consistency of the WGFE estimator $\widehat{\theta}$ of $\theta^0$. By Lemma \ref{uwcon:aux} and the definition of the WC estimator, I have
\begin{equation}
	\widetilde{Q}(\widehat{\theta},\widehat{\alpha},\widehat{\gamma}) = \widehat{Q}(\widehat{\theta},\widehat{\alpha},\widehat{\gamma}) + o_p(1)
\leq \widehat{Q}(\theta^0,\alpha^0,\gamma^0) + o_p(1) = \widetilde{Q}(\theta^0,\alpha^0,\gamma^0) + o_p(1).
\end{equation}
Hence,
\[\widetilde{Q}(\widehat{\theta},\widehat{\alpha},\widehat{\gamma}) - \widetilde{Q}(\theta^0,\alpha^0,\gamma^0) \leq o_p(1).\]

Then, by Lemma \ref{lb:aux},
\[
o_p(1)\leq C \norm{\theta^0 - \widehat{\theta}}^2 + o_p(1) \leq\widetilde{Q}(\widehat{\theta},\widehat{\alpha},\widehat{\gamma}) - \widetilde{Q}(\theta^0,\alpha^0,\gamma^0) \leq o_p(1).
\]
Hence,
\[
	\norm{\widehat{\theta} - \theta^0}^2 = o_p(1).
\]

Next, I show the second part of the proposition:
\[
	\frac{1}{NT} \sumnt \left( \widehat{\alpha}_{\widehat{g}_i t} - \alpha_{g_i^0 t}^0 \right)^2 \longrightarrow_p 0
\]
as $N,T \to \infty$.

Start with 
\begin{align*}
&\left\vert \widetilde{Q}(\widehat{\theta},\widehat{\alpha},\widehat{\gamma}) - \widetilde{Q}(\theta^0,\widehat{\alpha},\widehat{\gamma})\right\vert\\
&= \bigg\vert\sum_{g=1}^G \sqrt{P_g \frac{1}{NT} \sumnt P_g^i \left(\left( x_{it}'(\theta^0 - \widehat{\theta}) + \alpha_{g_i^0 t}^0 - \widehat{\alpha}_{\widehat{g}_i t}\right)^2 + u_{it}^2 \right)}\\
&- \sum_{g=1}^G \sqrt{P_g \frac{1}{NT} \sumnt P_g^i \left(\left(\alpha_{g_i^0 t}^0 - \widehat{\alpha}_{\widehat{g}_i t}\right)^2 + u_{it}^2\right)} \bigg\vert\\
&= \left\vert \sum_{g=1}^G \varphi_g(\widehat{\theta},\widehat{\alpha},\widehat{\gamma})  P_g \frac{1}{NT}\sumnt P_g^i \left(\left( x_{it}'(\theta^0 - \widehat{\theta}) + \alpha_{g_i^0 t}^0 - \widehat{\alpha}_{\widehat{g}_i t}\right)^2 - \left(\alpha_{g_i^0 t}^0 - \widehat{\alpha}_{\widehat{g}_i t}\right)^2 \right) \right\vert
\end{align*}
where 
\begin{align*}
[\varphi_g(\widehat{\theta},\widehat{\alpha},\widehat{\gamma})]^{-1}
=& \sqrt{P_g \frac{1}{NT}\sumnt P_g^i\left( x_{it}'(\theta^0 - \widehat{\theta}) + \alpha_{g_i^0 t}^0 - \widehat{\alpha}_{\widehat{g}_i t}\right)^2 + P_g \frac{1}{NT}\sumnt P_g^i u_{it}^2}\\
+& \sqrt{P_{g} \frac{1}{NT}\sumnt P_{g}^i \left( (\alpha_{g_i^0 t}^0 - \widehat{\alpha}_{\widehat{g}_i t})^2 + u_{it}^2 \right)}
\end{align*}
is the sum of the square roots similar to what is found on the previous pages that is bounded above by a constant. Then
\begin{align*}
&\left\vert \widetilde{Q}(\widehat{\theta},\widehat{\alpha},\widehat{\gamma}) - \widetilde{Q}(\theta^0,\widehat{\alpha},\widehat{\gamma})\right\vert\\
&\leq \sum_{g=1}^G \varphi_g(\widehat{\theta},\widehat{\alpha},\widehat{\gamma})  \frac{1}{NT}\sumnt  \left\vert\left( x_{it}'(\theta^0 - \widehat{\theta}\right)\left(x_{it}'(\theta^0 - \widehat{\theta}) + 2\left(\alpha_{g_i^0 t}^0 - \widehat{\alpha}_{\widehat{g}_i t}\right)\right )\right\vert\\
&\leq C \times \norm{\theta^0 - \widehat{\theta}}^2 \times \frac{1}{NT} \sumnt \norm{x_{it}}^2 + C\times 4\sup_{\alpha_t \in \mathcal{A}} \vert \alpha_t\vert\times\norm{\theta^0 - \widehat{\theta}}\times \frac{1}{NT} \sumnt \norm{x_{it}}
\end{align*}
which is $o_p(1)$ because of the consistency of $\widehat{\theta}$.

Therefore,
\[
o_p(1) = \widetilde{Q}(\widehat{\theta},\widehat{\alpha},\widehat{\gamma}) - \widetilde{Q}(\theta^0,\widehat{\alpha},\widehat{\gamma})
\leq \widetilde{Q}(\theta^0,\alpha^0,\gamma^0) - \widetilde{Q}(\theta^0,\widehat{\alpha},\widehat{\gamma}) + o_p(1).
\]
Then
\begin{align*}
	&\widetilde{Q}(\theta^0,\alpha^0,\gamma^0) - \widetilde{Q}(\theta^0,\widehat{\alpha},\widehat{\gamma}) \\
=& \sum_{g=1}^G \sqrt{P_{g^0} \frac{1}{NT} \sumnt P_{g^0}^i u_{it}^2} - \sqrt{P_g \frac{1}{NT} \sumnt P_g^i \left( \left(\alpha_{g_i^0 t}^0 - \widehat{\alpha}_{\widehat{g}_i t} \right)^2 + u_{it}^2 \right)}\\
=& \sum_{g=1}^G \sqrt{P_{g^0} \frac{1}{NT} \sumnt P_{g^0}^i u_{it}^2} \underbrace{- \sqrt{P_{g} \frac{1}{NT} \sumnt P_{g}^i u_{it}^2} + \sqrt{P_{g} \frac{1}{NT} \sumnt P_{g}^i u_{it}^2}}_{=0}\\
 -& \sqrt{P_g \frac{1}{NT} \sumnt P_g^i \left( \left(\alpha_{g_i^0 t}^0 - \widehat{\alpha}_{\widehat{g}_i t} \right)^2 + u_{it}^2 \right)}\\
\end{align*}
Then, by Lemma \ref{con:barycon},
\begin{align*}
-& \left( \sum_{g=1}^G \sqrt{P_{g} \frac{1}{NT} \sumnt P_{g}^i u_{it}^2}-\sum_{g=1}^G \sqrt{P_{g^0} \frac{1}{NT} \sumnt P_{g^0}^i u_{it}^2} \right) - \sum_{g=1}^G P_g \varphi_g \frac{1}{NT} \sumnt P_g^i \left(\alpha_{g_i^0 t}^0 - \widehat{\alpha}_{\widehat{g}_i t} \right)^2 \\
\leq& -C + o_p(1) - \sum_{g=1}^G P_g \frac{1}{NT} \sumnt P_g^i \left(\alpha_{g_i^0 t}^0 - \widehat{\alpha}_{\widehat{g}_i t} \right)^2 \\
\leq& -\sum_{g=1}^G P_g \frac{1}{NT} \sumnt P_g^i \left(\alpha_{g_i^0 t}^0 - \widehat{\alpha}_{\widehat{g}_i t} \right)^2 + o_p(1) \\
\leq& o_p(1)
\end{align*}

Then,
\[
	o_p(1) = \sum_{g=1}^G P_g \frac{1}{NT} \sumnt P_g^i \left(\alpha_{g_i^0 t}^0 - \widehat{\alpha}_{\widehat{g}_i t} \right)^2  
\]
which implies for each $g=1,\dots,G$ (since these are all non negative terms),
\begin{equation}
	o_p(1) = \frac{1}{NT} \sumnt P_g^i \left(\alpha_{g_i^0 t}^0 - \widehat{\alpha}_{\widehat{g}_i t} \right)^2 
\end{equation}
so that
\begin{align}
o_p(1) &=  \sum_{g=1}^G \frac{1}{NT} \sumnt P_g^i \left(\alpha_{g_i^0 t}^0 - \widehat{\alpha}_{\widehat{g}_i t} \right)^2\\
&= \frac{1}{NT} \sumnt\left(\alpha_{g_i^0 t}^0 - \widehat{\alpha}_{\widehat{g}_i t} \right)^2
\end{align}

\end{proof}

\subsection{Proof of Theorem \ref{prop:congroups} (Consistency of group assignments)}
\textnormal{
Firstly I establish that WC group-specific effects are consistent with respect to the Hausdorff distance $d_H$ in $\R^{GT}$, defined by
\[
	d_h(a,b) = \max\left\{\max_{g\in\{1,\dots,G\}}\left(\min_{\gt \in \{1,\dots,G\}} \frac{1}{T} \sum_{t=1}^T (a_{\gt t} - b_{gt})^2 \right), \max_{g \in \{1,\dots,G\}} \left( \max_{\gt \in \{1,\dots,G\}} \frac{1}{T} \sum_{t=1}^T (a_{\gt t} - b_{gt})^2 \right)   \right\}.
\]
}
\begin{lemma}\label{lemma:hausdorff}
 Let Assumption \ref{as:con} and Assumption \ref{as:ae}($a,b$) hold. Then, as $N,T \to \infty$,
\[
	d_H(\widehat{\alpha},\alpha^0) \to_p 0.
\]
\end{lemma}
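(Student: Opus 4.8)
The plan is to bootstrap the \emph{average} consistency of the group effects, $\frac{1}{NT}\sumnt(\widehat{\alpha}_{\widehat{g}_i t}-\alpha_{g_i^0 t}^0)^2 = o_p(1)$ from Theorem~\ref{prop:conswgfe}, into the two-sided set-to-set bound required by $d_H$, by splitting this average according to true and estimated labels and using that each true group retains a positive share of the sample.

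First I would establish the ``covering'' direction $\max_{g}\min_{\gt}\frac{1}{T}\sum_t(\widehat\alpha_{\gt t}-\alpha_{gt}^0)^2 = o_p(1)$. Since every summand is nonnegative, restricting the average to $\{i:g_i^0=g\}$ shows $\frac1N\sum_{i:g_i^0=g}\frac1T\sum_t(\widehat\alpha_{\widehat g_i t}-\alpha_{gt}^0)^2=o_p(1)$ for each fixed $g$. Splitting this further by the estimated label and writing $N_{g\gt}=\#\{i:g_i^0=g,\ \widehat g_i=\gt\}$ turns it into $\sum_{\gt}(N_{g\gt}/N)\frac1T\sum_t(\widehat\alpha_{\gt t}-\alpha_{gt}^0)^2=o_p(1)$. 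By Assumption~\ref{as:ae}(a) the group mass $N_g/N$ converges in probability to $P_g>0$, so the label $\gt^{\ast}(g)=\argmax_{\gt}N_{g\gt}$ carries mass $N_{g\gt^\ast}/N\ge (N_g/N)/G$ bounded away from zero; dividing through yields $\frac1T\sum_t(\widehat\alpha_{\gt^\ast(g) t}-\alpha_{gt}^0)^2=o_p(1)$, and a fortiori the minimum over $\gt$ is $o_p(1)$, uniformly over the finitely many $g$.

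The hard part will be the reverse direction $\max_{\gt}\min_{g}\frac1T\sum_t(\widehat\alpha_{\gt t}-\alpha_{gt}^0)^2=o_p(1)$, i.e.\ ruling out a spurious estimated group far from every true effect. I would argue that the matching map $g\mapsto\gt^\ast(g)$ built above is injective with probability approaching one: if $\gt^\ast(g_1)=\gt^\ast(g_2)=\gt$ with $g_1\neq g_2$, then by the triangle inequality for the seminorm $\|a\|_T=(\frac1T\sum_t a_t^2)^{1/2}$ one gets $\|\alpha^0_{g_1}-\alpha^0_{g_2}\|_T\le\|\widehat\alpha_{\gt}-\alpha^0_{g_1}\|_T+\|\widehat\alpha_{\gt}-\alpha^0_{g_2}\|_T=o_p(1)$, contradicting $\frac1T\sum_t(\alpha^0_{g_1 t}-\alpha^0_{g_2 t})^2\to_p \ell_{g_1 g_2}>0$. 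The needed positive pairwise mean separation of the true effects is exactly the weak part of the strong-separability Assumption~\ref{as:ae}(e), whose right-hand side is strictly positive and which is in force throughout the proof of Theorem~\ref{prop:congroups}. An injective self-map of the finite set $\{1,\dots,G\}$ is a bijection, so every estimated label $\gt$ equals $\gt^\ast(g)$ for a unique true $g$, giving $\min_g\frac1T\sum_t(\widehat\alpha_{\gt t}-\alpha_{gt}^0)^2=o_p(1)$. Taking the maximum of the two $o_p(1)$ quantities in the definition of $d_H$ then gives $d_H(\widehat\alpha,\alpha^0)\to_p 0$.

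The principal obstacle is precisely this injectivity/bijectivity step: the average consistency alone cannot prevent two true groups from ``collapsing'' onto a single estimated effect, and the only available leverage is the separation of the true group effects. It is worth emphasising that only the mean separation is used here, whereas the full strong separability of Assumption~\ref{as:ae}(e) is reserved for controlling the misclassification \emph{rate} in the remainder of Theorem~\ref{prop:congroups}. A minor technical point is that $\gt^\ast(g)$ is random, so injectivity and the resulting distance bounds hold only with probability approaching one, which is harmless because $G$ is fixed and all limits are taken over finitely many $(g,\gt)$ pairs.
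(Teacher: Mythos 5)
Your proposal is correct and follows essentially the same route as the paper: the covering direction is obtained by restricting the average consistency from Theorem \ref{prop:conswgfe} to each true group and using Assumption \ref{as:ae}($a$) to divide out a group share bounded away from zero, and the reverse direction is obtained by showing a matching map from true to estimated labels is injective (hence bijective on the finite label set) via the triangle inequality and the positive mean separation implied by Assumption \ref{as:ae}($e$). The only cosmetic difference is that you define the matching as the most populated estimated label $\widetilde{g}^{\ast}(g)=\argmax_{\widetilde{g}}N_{g\widetilde{g}}$ while the paper uses $s(g)=\argmin_{\widetilde{g}}\frac{1}{T}\sum_{t}(\widehat{\alpha}_{\widetilde{g}t}-\alpha_{gt}^0)^2$; both yield the same injectivity argument.
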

\begin{proof}
The proof is identical to that of Lemma B3 in \cite{bm:2015}. I work on the terms in the maximum. I first show for any $g=1,\dots,G$ that
\begin{equation}\label{cons:alpha:1}
	\min_{\gt \in \{1,\dots,G\}} \frac{1}{T} \sum_{t=1}^T (\widehat{\alpha}_{\gt t} - \alpha_{gt}^0)^2 \to_p 0. 
\end{equation}
Let $g \in \{1,\dots,G\}$. I have 
\[
\frac{1}{NT} \sum_{i=1}^N \left( \min_{\gt \in \{1,\dots,G\}} \sum_{t=1}^T \mathds{1}\{g_i^0 = g\}(\widehat{\alpha}_{\gt t} - \alpha_{gt}^0)^2 \right) = \left( \frac{1}{N} \sum_{i=1}^N \mathds{1}\{g_i^0 = g\}\right) \left( \min_{\gt \in \{1,\dots,G\}} \frac{1}{T}\sum_{t=1}^T (\widehat{\alpha}_{\gt t} - \alpha_{gt}^0)^2 \right).
\]
By Assumption \ref{as:ae}($a$) (group probabilities are non zero) it is enough to show that for all $g$, as $N$ and $T$ approach infinity 
\[
\frac{1}{NT} \sum_{i=1}^N \left( \min_{\gt \in \{1,\dots,G\}} \sum_{t=1}^T \mathds{1}\{g_i^0 = g\}(\widehat{\alpha}_{\gt t} - \alpha_{gt}^0)^2 \right) \to_p 0
\]
therefore to this end,
\begin{align*}
\frac{1}{N} \sum_{i=1}^N \mathds{1}\{g_i^0 = g\} \left( \min_{\gt \in \{1,\dots,G\}} \frac{1}{T}\sum_{t=1}^T (\widehat{\alpha}_{\gt t} - \alpha_{gt}^0)^2 \right)
&\leq \frac{1}{N} \sum_{i=1}^N \mathds{1}\{g_i^0 = g\} \left( \frac{1}{T}\sum_{t=1}^T (\widehat{\alpha}_{\widehat{g}_i t} - \alpha_{gt}^0)^2 \right) \\
&\leq \frac{1}{NT} \sumnt \left(\widehat{\alpha}_{\widehat{g}_i t} - \alpha_{g_i^0 t}^0 \right)^2
\end{align*}
which is $o_p(1)$ by Proposition \ref{prop:conswgfe}, which shows \eqref{cons:alpha:1}. 

Now, for the second entry in the maximum, I first define the mapping
\[
	s(g) = \argmin_{\gt \in \{1,\dots,G\}} \frac{1}{T} \sum_{t=1}^T (\widehat{\alpha}_{\gt t} - \alpha_{gt}^0)^2.
\]
and show that it is one-to-one, with probability approaching one as $T$ approaches infinity. 

Let $g\neq \widetilde{g}$. By applying the reverse triangle inequality twice I have
\begin{align*}
	\norm{\widehat{\alpha}_{s(g) } - \widehat{\alpha}_{s(\gt) }} 
&= \norm{(\alpha_g^0- \alpha_{\gt}^0) - (\alpha_g^0 - \widehat{\alpha}_{s(g) }) - (\widehat{\alpha}_{s(\gt)} - \alpha_{\gt}^0)} \\
&\geq  \norm{\alpha_g^0- \alpha_{\gt}^0} - \norm{\alpha_g^0 - \widehat{\alpha}_{s(g) }} - \norm{\widehat{\alpha}_{s(\gt)} - \alpha_{\gt}^0}.
\end{align*}
Therefore, 
\begin{align*}
	\frac{1}{\sqrt{T}}\norm{\widehat{\alpha}_{s(g) } - \widehat{\alpha}_{s(\gt) }}
&= \left(\frac{1}{T} \sum_{t=1}^T (\widehat{\alpha}_{s(g) t} - \widehat{\alpha}_{s(\gt) t })^2 \right)\\
&\geq \left(\frac{1}{T} \sum_{t=1}^T (\alpha_g^0- \alpha_{\gt}^0)^2 \right)^{1/2} - \left(\frac{1}{T} \sum_{t=1}^T(\alpha_g^0 - \widehat{\alpha}_{s(g) })^2 \right)^{1/2} - \left(\frac{1}{T} \sum_{t=1}^T (\widehat{\alpha}_{s(\gt)} - \alpha_{\gt}^0)^2 \right)^{1/2}\\
&> (c_{g,\gt} + C_{g,\gt} \sqrt{M})^{1/2} + 0 > 0 
\end{align*}
by Assumption \ref{as:ae}($b$) and \eqref{cons:alpha:1}. It follows that, with probability approaching one, $s(g) \neq s(\gt)$ for all $g\neq \gt$ i.e. $s(g)$ is one-to-one. In particular, there exists an inverse mapping $s^{-1}$ of $s$ and with probability approaching one I have for all $\gt \in \{1,\dots,G\}$:
\[
\min_{g \in \{1,\dots,G\}}\frac{1}{T}\sum_{t=1}^T (\widehat{\alpha}_{\gt t} - \alpha_{gt}^0)^2 
\leq \frac{1}{T}\sum_{t=1}^T \left(\widehat{\alpha}_{\gt t} - \alpha_{ s^{-1}(\gt)t}^0\right)^2
= \min_{h \in\{1,\dots,G\}}\frac{1}{T}\sum_{t=1}^T (\widehat{\alpha}_{h t} - \alpha_{s^{-1}(\gt)t}^0)^2 \to_p 0
\]
where I used \eqref{cons:alpha:1} and the fact that $\gt = s(s^{-1}(\gt))$. This along with \eqref{cons:alpha:1} completes the proof.
\end{proof}

\textnormal{
The proof of Lemma \ref{lemma:hausdorff} shows that there exists a permutation $s: \{1,\dots,G\} \to \{1,\dots,G\}$ such that
\[
	\frac{1}{T} \sum_{t=1}^T (\widehat{\alpha}_{s(g) t} - \alpha_{gt}^0)^2 \to_p 0.
\]
By relabeling the elements of $\widehat{\alpha}$ I can take $s(g) = g$ and this is a convention that is adopted for what remains.}

\textnormal{
The group membership profiles $\gamma_N =(g_1,\dots,g_N) \in \Gamma_{G}^N$ characterize a histogram 
\[
\lambda(\gamma_N) = N^{-1}\left(\sum_{i=1}^N \mathds{1}\{g_i = 1\},\dots, \sum_{i=1}^N \mathds{1}\{g_i = G\} \right) \in \Delta^G
\]
which converge to a probability mass function given the $\gamma_N = \{g_i\}_{i=1}^N$. In particular, the WGFE estimator $\widehat{\gamma}_N$ defines a sample mass function that will converge in probability to a probability mass function.}\textnormal{ Now, define 
\begin{equation}
	\widehat{\sigma}_g^2(\theta,\alpha,\gamma_{N}) = \frac{1}{T\sum_{j=1}^N \mathds{1}\{g_j = g\}} \sumnt \mathds{1}\{g_i = g\}(u_{it} + x_{it}'(\theta^0 - \theta) + \alpha_{gt}^0 - \alpha_{gt})^2
\end{equation}
for any $(\theta,\alpha,\gamma_{N}) \in \Theta \times \mathcal{A}^{GT}\times \Gamma_G^N$.}

\textnormal{
By a weak law of large numbers, I have for any $g$ as $N\to\infty$,
\begin{equation}
	\widehat{\sigma}_g^{2}(\theta,\alpha,\gamma_N)\longrightarrow_p \sigma_g^2(\theta,\alpha,\gamma) = \frac{1}{T}\sum_{t=1}^T\Ex{ \mathds{1}\{g_i = g\}(u_{it} + x_{it}'(\theta^0 - \theta) + \alpha_{gt}^0 - \alpha_{gt})^2}
\end{equation}
where $\gamma$ is defined by $\lambda(\gamma^N) \to_p \lambda(\gamma) \in \Delta^G$. When $\theta = \theta^0$ and $\alpha = \alpha^0$ then this is a fixed-$T$ variance of a subpopulation of $u_{it}$ defined by the random partition $\gamma$.}

\textnormal{
For any $\eta>0$ I define the subset of parameters $(\theta,\alpha) \in \mathcal{N}_\eta \subset \Theta \times \mathcal{A}^{GT}$ that satisfy  $\norm{\theta - \theta^0} ^2 <\eta$,  $\frac{1}{T}\norm{\alpha_g - \alpha_g^0}^2 < \eta$ and \[\left\vert \frac{\widehat{\sigma}_g(\theta,\alpha,\widehat{\gamma}_N)}{\widehat{\sigma}_{\gt}(\theta,\alpha,\widehat{\gamma}_N)}  - \frac{\widehat{\sigma}_g(\theta^0,\alpha^0,\widehat{\gamma}_N)}{\widehat{\sigma}_{\gt}(\theta^0,\alpha^0,\widehat{\gamma}_N)}\right\vert < \eta\] for all $g,\gt = 1, \dots, G$. The first step to showing consistency of group assignments is showing the sample probability of missassignment of individuals converges at exponential rate to zero with respect to $T$ over all parameter values within a small enough neighborhood of the true values. Denote for fixed $(\theta,\alpha)$ the optimal assignment of individual $i= 1,\dots,N$ according to criterion \eqref{wgfe:obj} as $\widehat{g}_i(\theta,\alpha,\widehat{\gamma}_N)$ where $\widehat{\gamma}_N$ is the collection of optimal assignments used to determine sample group variances, which is also a function of $(\theta,\alpha)$.
}
\begin{lemma}\label{lemma:missassign}
	For $\eta >0$ small enough I have, for all $\delta>0$,
	\[
		\sup_{(\theta,\alpha) \in \mathcal{N}_\eta} \frac{1}{N}\sum_{i =1}^N \mathds{1}\{\widehat{g}_i(\theta,\alpha,\widehat{\gamma}_N) \neq g_i^0\} = o_p(T^{-\delta})
	\]
	as $N,T \to \infty$.
\end{lemma}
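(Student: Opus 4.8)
The plan is to fix an individual $i$ with true membership $g_i^0 = g$, bound the probability that the WGFE assignment criterion prefers some rival group $h \neq g$, then aggregate over $i$ and uniformize over $\mathcal{N}_\eta$. Using the true model $y_{it} = x_{it}'\theta^0 + \alpha_{gt}^0 + u_{it}$ and recalling from the assignment step of Algorithm \ref{wgfe:algo} that misassigning $i$ to $h$ requires
\[
\frac{1}{\widehat{\sigma}_h}\frac{1}{T}\norm{y_i - x_i\theta - \alpha_h}^2 + \frac{\widehat{\sigma}_h}{T} \leq \frac{1}{\widehat{\sigma}_g}\frac{1}{T}\norm{y_i - x_i\theta - \alpha_g}^2 + \frac{\widehat{\sigma}_g}{T},
\]
I would substitute the residual decomposition $y_{it} - x_{it}'\theta - \alpha_{ht} = u_{it} + a_{it}^h$ with $a_{it}^h = x_{it}'(\theta^0 - \theta) + (\alpha_{gt}^0 - \alpha_{ht})$, expand the squares, and collect terms. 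This rewrites the misassignment event as $D_{gh}(\theta,\alpha) \leq (\widehat{\sigma}_g - \widehat{\sigma}_h)/T = o(1)$, where $D_{gh}$ splits into a deterministic part dominated by $\widehat{\sigma}_h^{-1}T^{-1}\sum_t(\alpha_{gt}^0 - \alpha_{ht})^2$ and a fluctuation part built from $T^{-1}\sum_t(u_{it}^2 - \sigma_g^2)$, from $T^{-1}\sum_t a_{it}^h u_{it}$, and from the analogous $g$-indexed terms.

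Second, I would show the deterministic part of $D_{gh}$ is bounded away from zero. At the true parameters its probability limit equals $\widehat{\sigma}_h^{-1}\big[\,T^{-1}\sum_t(\alpha_{gt}^0-\alpha_{ht}^0)^2 - \sigma_g(\sigma_h - \sigma_g)\,\big]$, which is strictly positive precisely by the strong separability in Assumption \ref{as:ae}($e$), since $\sigma_g(\sigma_h-\sigma_g)\le |\sigma_h-\sigma_g|\max_f\sigma_f^2/\min_f\sigma_f$. The extra slack $C_{g,\widetilde g}M$ in that assumption is exactly what absorbs the perturbation incurred by replacing the true standard-deviation ratios with the sample ratios $\widehat{\sigma}_g/\widehat{\sigma}_h$ and by moving $(\theta,\alpha)$ inside $\mathcal{N}_\eta$. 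Choosing $\eta$ small and invoking the defining constraints of $\mathcal{N}_\eta$, namely $\norm{\theta-\theta^0}<\eta$, $T^{-1}\norm{\alpha_g-\alpha_g^0}^2<\eta$, and the ratio bound, keeps this deterministic gap above a fixed positive constant uniformly over $\mathcal{N}_\eta$.

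Third, the analytic core, I would control the fluctuation terms. Each is a time-average of a strongly mixing, sub-Weibull-tailed sequence (the quadratic terms $u_{it}^2$, the cross terms $\alpha^0 u$, and, after factoring out the bounded-by-$\eta$ coefficients $\theta^0-\theta$, the covariate cross terms $x u$), so the exponential inequalities for weakly dependent processes referenced after Assumption \ref{as:ae} (Merlev\`ede et al.) yield, for each fixed $h$, a bound $\mathbb{P}(\text{fluctuation} > \varepsilon) \leq \exp(-c\,T^{d})$ with $c,d>0$ set by the mixing rate $d_1$ and tail index $d_2$; Assumption \ref{as:ae}($d$) keeps the covariate magnitude under control so the $x u$ term does not spoil the rate. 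Uniformity over $\mathcal{N}_\eta$ follows because the parameter enters through the bounded factors $(\theta^0-\theta)$ and $(\alpha_g^0-\alpha_g)$, so $\sup_{\mathcal{N}_\eta}$ of each fluctuation is dominated by a fixed base sum times the neighborhood radius, and a union bound over the finitely many $h\neq g$ preserves the exponential rate. Setting $Z_i = \mathds{1}\{\exists (\theta,\alpha)\in\mathcal{N}_\eta:\widehat{g}_i(\theta,\alpha,\widehat{\gamma}_N)\neq g_i^0\}$, this gives $\sup_i \mathbb{E}[Z_i] \leq \exp(-c T^{d}) = o(T^{-\delta})$ for every $\delta$. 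Since
\[
\sup_{(\theta,\alpha)\in\mathcal{N}_\eta}\frac{1}{N}\sum_{i=1}^N \mathds{1}\{\widehat{g}_i(\theta,\alpha,\widehat{\gamma}_N)\neq g_i^0\} \leq \frac{1}{N}\sum_{i=1}^N Z_i,
\]
taking expectations and applying Markov's inequality delivers the claimed $o_p(T^{-\delta})$.

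The main obstacle lies in the third step: the denominators $\widehat{\sigma}_g(\theta,\alpha,\widehat{\gamma}_N)$ are themselves random, parameter-dependent, and tied to the endogenous grouping $\widehat{\gamma}_N$, so the clean factorization of the fluctuation terms is only valid after pinning the sample variance ratios to their population values uniformly on $\mathcal{N}_\eta$, which is exactly the purpose of the ratio constraint in the definition of $\mathcal{N}_\eta$. Ensuring that the exponential-in-$T$ rate survives both the supremum over a continuum of parameters and this random normalization, rather than degrading to a merely polynomial rate, is the delicate point and the place where the strengthened tail and mixing conditions of Assumption \ref{as:ae}($b,c$) are essential.
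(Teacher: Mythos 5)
Your proposal is correct and follows essentially the same route as the paper's proof: you decompose the misassignment indicator using the true model, use the defining constraints of $\mathcal{N}_\eta$ (including the variance-ratio constraint) to dominate it by a parameter-free random variable, invoke strong separability in Assumption \ref{as:ae}($e$) to keep the deterministic gap positive after absorbing the $C_{g,\widetilde{g}}M$ perturbation, control each fluctuation term via the faster-than-polynomial tail bounds for strongly mixing processes (the paper packages these as Lemma B5 of \cite{bm:2015}), and conclude by a union bound and Markov's inequality. The delicate point you flag at the end — the randomness of the grouping-dependent $\widehat{\sigma}_g$ and the need to pin the sample ratios to population values — is exactly how the paper handles it, via the supremum over $\lambda(\gamma_N)\in\Delta^G$ and an added $o(1)$ probability term.
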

\begin{proof}
For this proof I follow Bonhomme \& Manresa (2015) wherever possible, however adjustments need to be made with the inclusion of second moment information.We suppress the $\widehat{\gamma}_N$ notation in the group variances until it is convenient to include it once again i.e. $\widehat{\sigma}_g(\theta,\alpha,\widehat{\gamma}_N) =\widehat{\sigma}_g(\theta,\alpha)$. By definition, I have, for all $g = 1, \dots,G$:
\begin{align*}
	&\mathds{1}\{\widehat{g}_i(\theta,\alpha,\widehat{\gamma}_N) = g\}\\ \leq& \mathds{1}\left\{ \frac{1}{\widehat{\sigma}_g(\theta,\alpha)} \sum_{t=1}^T (y_{it} - x_{it}'\theta - \alpha_{gt})^2 +  \widehat{\sigma}_g(\theta,\alpha) \leq \frac{1}{\widehat{\sigma}_{g_i^0}(\theta,\alpha)} \sum_{t=1}^T (y_{it} - x_{it}'\theta - \alpha_{g_i^0 t})^2 +  \widehat{\sigma}_{g_i^0}(\theta,\alpha)\right\}.
\end{align*}

Then,
\begin{align*}
	&\frac{1}{N}\sum_{i =1}^N  \mathds{1}\{\widehat{g}_i(\theta,\alpha,\widehat{\gamma}_N) \neq g_i^0\} 
= \sum_{g=1}^G \frac{1}{N}\sum_{i=1}^N \mathds{1}\{g_i^0 \neq g\} \mathds{1}\{\widehat{g}_i(\theta,\alpha,\widehat{\gamma}_N) = g\}\\
\hspace{-15mm}\leq& \sum_{g=1}^G \frac{1}{N}\sum_{i=1}^N \mathds{1}\{g_i^0 \neq g\}\mathds{1}\biggr\{\frac{1}{\widehat{\sigma}_g(\theta,\alpha)} \sum_{t=1}^T (y_{it} - x_{it}'\theta - \alpha_{gt})^2 +  \widehat{\sigma}_g(\theta,\alpha) \\\leq& \frac{1}{\widehat{\sigma}_{g_i^0}(\theta,\alpha)} \sum_{t=1}^T (y_{it} - x_{it}'\theta - \alpha_{g_i^0 t})^2 +  \widehat{\sigma}_{g_i^0}(\theta,\alpha)\biggr\}
\end{align*}
Define $Z_{ig}(\theta,\alpha,\widehat{\gamma}_N)$ for $(\theta,\alpha) \in \mathcal{N}_\eta$ as the inner-most summand. I will bound this term by a quantity that does not depend on the parameters. Note first that
\begin{gather*}
Z_{ig}(\theta,\alpha,\widehat{\gamma}_N) = \mathds{1}\{g_i^0 \neq g\}\\ 
\times \mathds{1}\biggr\{\frac{1}{\widehat{\sigma}_g(\theta,\alpha)} \sum_{t=1}^T (u_{it} + x_{it}'(\theta^0 - \theta) + \alpha_{g_i^0 t}^0 - \alpha_{gt})^2 +  \widehat{\sigma}_g(\theta,\alpha)\\ \leq \frac{1}{\widehat{\sigma}_{g_i^0}(\theta,\alpha)} \sum_{t=1}^T (u_{it} + x_{it}'(\theta^0 - \theta) + \alpha_{g_i^0 t}^0 - \alpha_{g_i^0 t})^2 +  \widehat{\sigma}_{g_i^0}(\theta,\alpha)\biggr\}\\
\leq\max_{g\neq \gt}\mathds{1}\biggr\{\frac{1}{\widehat{\sigma}_g(\theta,\alpha)} \sum_{t=1}^T (u_{it} + x_{it}'(\theta^0 - \theta) + \alpha_{\gt t}^0 - \alpha_{gt})^2 +  \widehat{\sigma}_g(\theta,\alpha) \\\leq \frac{1}{\widehat{\sigma}_{\gt}(\theta,\alpha)} \sum_{t=1}^T (u_{it} + x_{it}'(\theta^0 - \theta) + \alpha_{\gt t}^0 - \alpha_{\gt t})^2 +  \widehat{\sigma}_{\gt}(\theta,\alpha)\biggr\}
\end{gather*}

Before establishing a bound, I will rewrite the inequality condition through simple algebra. I can rewrite the expression within the indicator function of the previous inequality as
\begin{gather*}
	\sum_{t=1}^T (\alpha_{\gt t} - \alpha_{gt})\left(u_{it} + x_{it}'(\theta^0 - \theta) + \alpha_{\gt t}^0 - \frac{\alpha_{\gt t} + \alpha_{gt}}{2}\right) \\
\leq \frac{1}{2}\left(\frac{\widehat{\sigma}_g(\theta,\alpha) }{\widehat{\sigma}_{\gt}(\theta,\alpha)} - 1\right)\left( \sum_{t=1}^T (u_{it} + x_{it}'(\theta^0 - \theta))^2 + \sum_{t=1}^T (\alpha_{\gt t}^0 - \alpha_{\gt t})^2 \right) \\+ \frac{\widehat{\sigma}_g(\theta,\alpha)}{\widehat{\sigma}_{\gt} (\theta,\alpha)} \sum_{t=1}^T (\alpha_{\gt t}^0 - \alpha_{\gt t})(u_{it} + x_{it}'\left(\theta^0 - \theta)\right)\\ + \widehat{\sigma}_g(\theta,\alpha)\left(\frac{\widehat{\sigma}_{\gt}(\theta,\alpha) - \widehat{\sigma}_g(\theta,\alpha)}{2}\right)
\end{gather*}
Make an addition on both sides by
\[
	A_T = \sum_{t=1}^T (\alpha_{\gt t}^0 - \alpha_{g t}^0)\left(u_{it} + \alpha_{\gt t}^0 - \frac{\alpha_{\gt t}^0 + \alpha_{g t}^0}{2}\right) - \sum_{t=1}^T (\alpha_{\gt t} - \alpha_{gt})\left(u_{it} + x_{it}'(\theta^0 - \theta) + \alpha_{\gt t}^0 - \frac{\alpha_{\gt t} + \alpha_{gt}}{2}\right) .
\]
Then, the inequality becomes
\begin{gather*}
\sum_{t=1}^T (\alpha_{\gt t}^0 - \alpha_{g t}^0)\left(u_{it} + \alpha_{\gt t}^0 - \left(\frac{\alpha_{\gt t}^0 + \alpha_{g t}^0}{2}\right) \right)\\
\leq A_T +\frac{1}{2}\left(\frac{\widehat{\sigma}_g(\theta,\alpha) }{\widehat{\sigma}_{\gt}(\theta,\alpha)} - 1\right)\left( \sum_{t=1}^T (u_{it} + x_{it}'(\theta^0 - \theta))^2 + \sum_{t=1}^T (\alpha_{\gt t}^0 - \alpha_{\gt t})^2 \right) \\+ \frac{\widehat{\sigma}_g(\theta,\alpha)}{\widehat{\sigma}_{\gt} (\theta,\alpha)} \sum_{t=1}^T (\alpha_{\gt t}^0 - \alpha_{\gt t})(u_{it} + x_{it}'\left(\theta^0 - \theta)\right)\\ + \widehat{\sigma}_g(\theta,\alpha)\left(\frac{\widehat{\sigma}_{\gt}(\theta,\alpha) - \widehat{\sigma}_g(\theta,\alpha)}{2}\right).
\end{gather*}
Using this form, consider another bound of $Z_{ig}$ through absolute values
\begin{gather}
Z_{ig}(\theta,\alpha,\widehat{\gamma}_N) \leq \max_{g\neq \gt} \mathds{1}\bigg\{ \sum_{t=1}^T (\alpha_{\gt t}^0 - \alpha_{g t}^0)\left(u_{it} + \alpha_{\gt t}^0 - \left(\frac{\alpha_{\gt t}^0 + \alpha_{g t}^0}{2}\right) \right)\notag\\ 
\leq |A_T| + \frac{1}{2}\left\vert\frac{\widehat{\sigma}_g(\theta,\alpha) }{\widehat{\sigma}_{\gt}(\theta,\alpha)} - 1\right\vert\left( \sum_{t=1}^T u_{it}^2 + \sum_{t=1}^T|x_{it}'(\theta^0 - \theta)|^2 
+ 2\sum_{t=1}^Tu_{it}x_{it}'(\theta^0 - \theta) + \sum_{t=1}^T (\alpha_{\gt t}^0 - \alpha_{\gt t})^2 \right) \label{firstterm}\\+ \frac{\widehat{\sigma}_g(\theta,\alpha)}{\widehat{\sigma}_{\gt} (\theta,\alpha)} \sum_{t=1}^T (\alpha_{\gt t}^0 - \alpha_{\gt t})(u_{it} + x_{it}'\left(\theta^0 - \theta)\right)\label{secondterm}\\ + \frac{\widehat{\sigma}_g(\theta,\alpha) \widehat{\sigma}_{\gt}(\theta,\alpha)}{2} \left\vert\frac{\widehat{\sigma}_g(\theta,\alpha) }{\widehat{\sigma}_{\gt}(\theta,\alpha)} - 1\right\vert\bigg\}
\end{gather}
Next, I will show a bound that does not depend on $(\theta,\alpha)$ by bounding each individual term on the right-hand side of the inner inequality.

From \cite{bm:2015}, I have the bound
\[
	\vert A_T \vert \leq TC_1 \sqrt{\eta} \left( \frac{1}{T}\sum_{t=1}^T u_{it}^2\right)^{1/2} 
+ TC_2 \sqrt{\eta} \left(\frac{1}{T} \sum_{t=1}^T \norm{x_{it}} \right) + TC_3\sqrt{\eta}
\]
where $C_1,C_2, C_3$ are constants that are independent of $\eta$ and $T$, which come from compactness of the parameter space.

The terms in the first pair of parenthesis of \eqref{firstterm} are bounded by
\begin{gather*}
	\sum_{t=1}^T u_{it}^2 + \sum_{t=1}^T|x_{it}'(\theta^0 - \theta)|^2 
+ 2\sum_{t=1}^Tu_{it}x_{it}'(\theta^0 - \theta) + \sum_{t=1}^T (\alpha_{\gt t}^0 - \alpha_{\gt t})^2 \\
\leq T \left( \frac{1}{T}\sum_{t=1}^T u_{it}^2\right) + T\eta \left( \frac{1}{T} \sum_{t=1}^T \norm{x_{it}}^2\right) +
2T\sqrt{\eta} \left(\frac{1}{T} \sum_{t=1}^T u_{it}^2 \right)^{1/2}  \left(\frac{1}{T} \sum_{t=1}^T \norm{x_{it}}^2\right)^{1/2}
+ T\eta
\end{gather*}
from the Cauchy-Schwarz inequality on the second and third terms along with the definition of $\mathcal{N}_\eta$ and the definition of $\mathcal{N}_\eta$ for the last term.

The terms in \eqref{secondterm} are once again bounded using the Cauchy-Schwarz inequality
\begin{gather*}
	\sum_{t=1}^T (\alpha_{\gt t}^0 - \alpha_{\gt t})u_{it} + \sum_{t=1}^T (\alpha_{\gt t}^0 - \alpha_{\gt t})\left(x_{it}'\left(\theta^0 - \theta\right)\right)\\
\leq T\sqrt{\eta} \left(\frac{1}{T}\sum_{t=1}^T u_{it}^2 \right)^{1/2} + T\eta\left(\frac{1}{T} \sum_{t=1}^T \norm{x_{it}}^2\right)^{1/2}.
\end{gather*}

As for the variance terms, I bring back to notation indicating the collection of group assignments and bound as:
\begin{align*}
\left\vert\frac{\widehat{\sigma}_g(\theta,\alpha,\widehat{\gamma}_N) }{\widehat{\sigma}_{\gt}(\theta,\alpha,\widehat{\gamma}_N)} - 1\right\vert & \leq \left\vert\frac{\widehat{\sigma}_g(\theta,\alpha,\widehat{\gamma}_N) }{\widehat{\sigma}_{\gt}(\theta,\alpha,\widehat{\gamma}_N)} - \frac{\widehat{\sigma}_g(\theta^0,\alpha^0,\widehat{\gamma}_N) }{\widehat{\sigma}_{\gt}(\theta^0,\alpha^0,\widehat{\gamma}_N)} +\frac{\widehat{\sigma}_g(\theta^0,\alpha^0,\widehat{\gamma}_N) }{\widehat{\sigma}_{\gt}(\theta^0,\alpha^0,\widehat{\gamma}_N)}  -  1\right\vert\\
&\leq \left\vert\frac{\widehat{\sigma}_g(\theta,\alpha,\widehat{\gamma}_N) }{\widehat{\sigma}_{\gt}(\theta,\alpha,\widehat{\gamma}_N)} - \frac{\widehat{\sigma}_g(\theta^0,\alpha^0,\widehat{\gamma}_N) }{\widehat{\sigma}_{\gt}(\theta^0,\alpha^0,\widehat{\gamma}_N)} \right\vert + \left\vert \frac{\widehat{\sigma}_g(\theta^0,\alpha^0,\widehat{\gamma}_N) }{\widehat{\sigma}_{\gt}(\theta^0,\alpha^0,\widehat{\gamma}_N)} - 1\right\vert\\
& < \eta + \sup_{\lambda(\gamma_N) \in \Delta^G} \left\vert \frac{\widehat{\sigma}_g(\theta^0,\alpha^0,\gamma_N) }{\widehat{\sigma}_{\gt}(\theta^0,\alpha^0,\gamma_N)} - 1 \right\vert
\end{align*}
Next, I bound the ratio. I apply the triangle inequality after two additions by zero:
\begin{gather}
\frac{\widehat{\sigma}_g(\theta,\alpha,\widehat{\gamma}_N)}{\widehat{\sigma}_{\gt}(\theta,\alpha,\widehat{\gamma}_N)} \leq \left\vert \frac{\widehat{\sigma}_g(\theta,\alpha,\widehat{\gamma}_N)}{\widehat{\sigma}_{\gt}(\theta,\alpha,\widehat{\gamma}_N)} - \frac{\widehat{\sigma}_g(\theta^0,\alpha^0,\widehat{\gamma}_N)}{\widehat{\sigma}_{\gt}(\theta^0,\alpha^0,\widehat{\gamma}_N)} \right\vert + \left\vert \frac{\widehat{\sigma}_g(\theta^0,\alpha^0,\widehat{\gamma}_N)}{\widehat{\sigma}_{\gt}(\theta^0,\alpha^0,\widehat{\gamma}_N)} - 1\right\vert + 1 \\\leq
 \eta + \sup_{\lambda(\gamma_N) \in \Delta^G}\left\vert \frac{\widehat{\sigma}_g(\theta^0,\alpha^0,{\gamma}_N)}{\widehat{\sigma}_{\gt}(\theta^0,\alpha^0,{\gamma}_N)} - 1\right\vert + 1 
\end{gather}

Next, I can easily bound $\widehat{\sigma}_g(\theta,\alpha,\widehat{\gamma}_N)$ independent of $(\theta,\alpha,\widehat{\gamma}_N)$ using compactness of the parameter space. To see this,
\begin{gather*}
\widehat{\sigma}_g^2(\theta,\alpha,\widehat{\gamma}_N) 
= \frac{1}{\sum_{i=1}^N \mathds{1}\{\widehat{g}_i= g\}}\sum_{i=1}^N \mathds{1}\{\widehat{g}_i= g\}\Biggr( \frac{1}{T} \sum_{t=1}^T u_{it}^2 + \frac{1}{T} \sum_{t=1}^T \vert x_{it}'(\theta^0 - \theta)\vert^2 + \frac{1}{T} \sum_{t=1}^T (\alpha_{\gt t}^0 - \alpha_{g t})^2 \\ 
+ 2\frac{1}{T} \sum_{t=1}^T x_{it}'(\theta^0 - \theta)(\alpha_{\gt t}^0 - \alpha_{g t}) + 2\frac{1}{T} \sum_{t=1}^Tu_{it} x_{it}'(\theta^0 - \theta) \\+2\frac{1}{T} \sum_{t=1}^Tu_{it} (\alpha_{\gt t}^0 - \alpha_{g t})\Biggr)\\ 
\leq  \max_{\gamma\in\Gamma_G^N}\sum_{i=1}^N \frac{\mathds{1}\{{g}_i= g\}}{\sum_{i=1}^N \mathds{1}\{{g}_i= g\}}\Biggr[ \left(\frac{1}{T} \sum_{t=1}^T u_{it}^2 \right) + \eta \left(\frac{1}{T} \sum_{t=1}^T \norm{x_{it}}^2 \right) + C_\alpha + 2C_\alpha\sqrt{\eta} \left( \frac{1}{T} \sum_{t=1}^T\norm{x_{it}} \right)\\ + 2\sqrt{\eta}\left(\frac{1}{T} \sum_{t=1}^T u_{it}^2\right)^{1/2}\left(\frac{1}{T} \sum_{t=1}^T\norm{x_{it}}^2 \right)^{1/2} + 2\sqrt{\eta}C_\alpha \left(\frac{1}{T} \sum_{t=1}^T u_{it}^2\right)^{1/2}\Biggr]
\end{gather*}
where $C_\alpha$ is the constant that bounds $\vert\alpha_{\gt t}^0 - \alpha_{gt}\vert$ for all $g,\gt, t$ due to compactness of $\mathcal{A}$. Set this bound as $\widehat{\sigma}(N,T)$ so that 
\begin{equation}
	\widehat{\sigma}_g(\theta,\alpha,\widehat{\gamma}_N)\widehat{\sigma}_{\gt}(\theta,\alpha,\widehat{\gamma}_N) \leq \widehat{\sigma}^2(N,T).
\end{equation}

For \eqref{firstterm},
\begin{gather*}
\sum_{t=1}^T u_{it}^2 + 2\sum_{t=1}^Tu_{it} x_{it}'(\theta^0 - \theta) + \sum_{t=1}^T \vert x_{it}'(\theta^0 - \theta)\vert^2
+ \sum_{t=1}^T (\alpha_{\gt t}^0 - \alpha_{\gt t})^2\\
\leq T\left(\frac{1}{T}\sum_{t=1}^T u_{it}^2 \right) + 2T\sqrt{\eta}\left(\frac{1}{T}\sum_{t=1}^T u_{it}^2 \right)^{1/2}\left(\frac{1}{T}\sum_{t=1}^T \norm{x_{it}}^2 \right)^{1/2} +\eta T\left(\frac{1}{T}\sum_{t=1}^T \norm{x_{it}}^2\right) + T\eta.
\end{gather*}
For \eqref{secondterm}, 
\begin{gather*}
\sum_{t=1}^T (\alpha_{\gt t}^0 - \alpha_{\gt t}) u_{it} + \sum_{t=1}^T (\alpha_{\gt t}^0 - \alpha_{\gt t})x_{it}'(\theta^0 - \theta) \\
\leq T\sqrt{\eta} \left(\frac{1}{T}\sum_{t=1}^T u_{it}^2 \right)^{1/2} + T\eta\left(\frac{1}{T}\sum_{t=1}^T \norm{x_{it}}^2\right)
\end{gather*}

Therefore, setting $\widehat{C}_{g,\gt} = \sup_{\lambda(\gamma_N) \in \Delta^G} \left\vert \frac{\widehat{\sigma}_g(\theta^0,\alpha^0,\gamma_N) }{\widehat{\sigma}_{\gt}(\theta^0,\alpha^0,\gamma_N)} - 1 \right\vert$, I have
\begin{gather}
Z_{ig}(\theta,\alpha,\widehat{\gamma}_N) \leq \max_{g\neq \gt} \mathds{1}\Biggr\{ \sum_{t=1}^T (\alpha_{\gt t}^0 - \alpha_{g t}^0)\left(u_{it} + \alpha_{\gt t}^0 - \left(\frac{\alpha_{\gt t}^0 + \alpha_{g t}^0}{2}\right) \right)\label{zsquiggle}\\ 
\leq  TC_1 \sqrt{\eta} \left( \frac{1}{T}\sum_{t=1}^T u_{it}^2\right)^{1/2} 
+ TC_2 \sqrt{\eta} \left(\frac{1}{T} \sum_{t=1}^T \norm{x_{it}} \right) + TC_3\sqrt{\eta}\\
+\frac{1}{2}\left(\eta +\widehat{C}_{g,\gt}\right) \left(T\left(\frac{1}{T}\sum_{t=1}^T u_{it}^2 \right) + 2T\sqrt{\eta}\left(\frac{1}{T}\sum_{t=1}^T u_{it}^2 \right)^{1/2}\left(\frac{1}{T}\sum_{t=1}^T \norm{x_{it}}^2 \right)^{1/2} +\eta T\left(\frac{1}{T}\sum_{t=1}^T \norm{x_{it}}^2\right) + T\eta \right)\\
+ \left(\eta + \widehat{C}_{g,\gt} + 1\right)\left(T\sqrt{\eta} \left(\frac{1}{T}\sum_{t=1}^T u_{it}^2 \right)^{1/2} + T\eta\left(\frac{1}{T}\sum_{t=1}^T \norm{x_{it}}^2\right) \right)\\
+\frac{1}{2}\widehat{\sigma}^2(N,T)\widehat{C}_{g,\gt} \Biggr\}
\end{gather}
Therefore, the right-hand side of the inequality does not depend on $(\theta,\alpha,\widehat{\gamma}_N)$ so, setting it as $\widetilde{Z}_{ig}$, I see that $\sup_{(\theta,\alpha) \in \mathcal{N}_\eta} Z_{ig}(\theta,\alpha,\widehat{\gamma}_N) \leq \widetilde{Z}_{ig}$. Hence,
\[
	\sup_{(\theta,\alpha) \in \mathcal{N}_\eta} \frac{1}{N}\sum_{i=1}^N \mathds{1}\{\widehat{g}_i(\theta,\alpha) \neq g_i^0\} \leq \frac{1}{N} \sum_{i=1}^N \sum_{g=1}^G \widetilde{Z}_{ig}.
\]
Before moving on I will rearrange terms to get a cleaner bound by letting $\eta^* \geq \max\{\eta,\eta^2,\eta\sqrt{\eta}\}$: starting from the first term,
\[
	T\sqrt{\eta}\left(C_1 \left( \frac{1}{T}\sum_{t=1}^T u_{it}^2\right)^{1/2} 
+ C_2  \left(\frac{1}{T} \sum_{t=1}^T \norm{x_{it}} \right) + C_3\right) \leq T\eta^*\left(C_1 \left( \frac{1}{T}\sum_{t=1}^T u_{it}^2\right)^{1/2} 
+ C_2  \left(\frac{1}{T} \sum_{t=1}^T \norm{x_{it}} \right) + C_3\right). 
\]
The second line is bounded by
\[
T\widehat{C}_{g,\gt} \left(\frac{1}{T}\sum_{t=1}^T u_{it}^2 \right) + T\eta^*(1 + \widehat{C}_{g,\gt})\left[\left(\frac{1}{T}\sum_{t=1}^T u_{it}^2\right) + 2 \left(\frac{1}{T}\sum_{t=1}^T u_{it}^2 \right)^{1/2}\left(\frac{1}{T}\sum_{t=1}^T \norm{x_{it}}^2 \right)^{1/2} + \left(\frac{1}{T}\sum_{t=1}^T \norm{x_{it}}^2 \right) + 1 \right]
\]
where I take special note that the first term is independent of $\eta^*$.

The third line is bounded by
\[
	T\eta^*(2 + \widehat{C}_{g,\gt})\left[\left( \frac{1}{T} \sum_{t=1}^T u_{it}^2 \right)^{1/2} + \left(\frac{1}{T}\sum_{t=1}^T \norm{x_{it}}^2 \right)\right].
\]
The fourth term will end up negligible and so we'll come back to it shortly in the proof.

Now, fix $\widetilde{M} > \max\{\sqrt{M}, M^*\}$, where these are constants given in Assumption \ref{as:infeas}($b$) and Assumption \ref{as:ae}($e$). Note by Jensen's inequality I have $\Ex{u_{it}^2} \leq \sqrt{M}$ and by Cauchy-Schwarz $M^* \leq \frac{1}{T}\sum_{t=1}^T \norm{x_{it}} \leq \left(\frac{1}{T}\sum_{t=1}^T \norm{x_{it}}^2\right)^{1/2}$. Therefore using standard probability algebra (union rule) and for all $g = 1,\dots,G$:
\begin{gather*}
\mathbb{P}(\widetilde{Z}_{ig} = 1)\\ \leq \sum_{\gt \neq g} \mathbb{P}\Biggr[ \sum_{t=1}^T (\alpha_{\gt t}^0 - \alpha_{g t}^0)u_{it} 
\leq  -\frac{1}{2}\sum_{t=1}^T (\alpha_{\gt t}^0 - \alpha_{gt}^0)^2 + \frac{1}{2}T\widehat{C}_{g,\gt} \left(\frac{1}{T}\sum_{t=1}^T u_{it}^2 \right)\\
 + T\eta^*(1 + \widehat{C}_{g,\gt})\left(\left(\frac{1}{T}\sum_{t=1}^T u_{it}^2\right) + 2 \left(\frac{1}{T}\sum_{t=1}^T u_{it}^2 \right)^{1/2}\left(\frac{1}{T}\sum_{t=1}^T \norm{x_{it}}^2 \right)^{1/2} + \left(\frac{1}{T}\sum_{t=1}^T \norm{x_{it}}^2 \right) + 1 \right)\\  
+ T\eta^*\left(C_1 \left( \frac{1}{T}\sum_{t=1}^T u_{it}^2\right)^{1/2} 
+ C_2  \left(\frac{1}{T} \sum_{t=1}^T \norm{x_{it}} \right) + C_3\right)\\  
+T\eta^*(2 + \widehat{C}_{g,\gt})\left(\left( \frac{1}{T} \sum_{t=1}^T u_{it}^2 \right)^{1/2} + \left(\frac{1}{T}\sum_{t=1}^T \norm{x_{it}}^2 \right)\right)\\
 +\frac{1}{2}\widehat{\sigma}^2(N,T)\widehat{C}_{g,\gt} 
 \Biggr]
\end{gather*}
which is bounded by
\begin{gather*}
\mathbb{P}(\widetilde{Z}_{ig}(\widehat{\gamma}^N) = 1) \leq \sum_{g\neq \gt} \mathbb{P}\left(\frac{1}{T}\sum_{t=1}^T \norm{x_{it}} \geq \widetilde{M} \right)
+ \mathbb{P}\left(\frac{1}{T} \sum_{t=1}^T u_{it}^2 \geq \widetilde{M} \right)\\ + \mathbb{P}\left( \sup_{\lambda(\gamma_N) \in \Delta^G} \left\vert \frac{\widehat{\sigma}_g(\theta^0,\alpha^0,\gamma_N) }{\widehat{\sigma}_{\gt}(\theta^0,\alpha^0,\gamma_N)} - 1 \right\vert \geq \sup_{\lambda(\gamma) \in \Delta^G} \left\vert \frac{{\sigma}_g(\theta^0,\alpha^0,\gamma) }{{\sigma}_{\gt}(\theta^0,\alpha^0,\gamma)} - 1 \right\vert \equiv C_{g,\gt} \right)\\ 
+ \mathbb{P}\left( \frac{1}{T} \sum_{t=1}^T (\alpha_{\gt t}^0 - \alpha_{gt}^0)^2 \leq \frac{c_{g,\gt}}{2} + \sup_{\lambda(\gamma) \in \Delta^G} \left\vert \frac{{\sigma}_g(\theta^0,\alpha^0,\gamma) }{{\sigma}_{\gt}(\theta^0,\alpha^0,\gamma)} - 1 \right\vert \widetilde{M}\right)\\
\hspace{-15mm}+ \sum_{g\neq \gt} \mathbb{P}\Biggr( 
\sum_{t=1}^T (\alpha_{\gt t}^0 - \alpha_{g t}^0)u_{it} 
\leq  -T \frac{c_{g,\gt}}{4}  + T\eta^*\left( (5C_{g,\gt} + C_2 + 6)\widetilde{M} + (C_1 + C_{g,\gt} + 2) \sqrt{\widetilde{M}} + (C_3 + C_{g,\gt} + 1)\right) + C_0\Biggr).
\end{gather*}
where the term independent of $\eta^*$ has been eliminated and the constant comes from the union rule:
\begin{align*}
\frac{C_{g,\gt}}{2}\widehat{\sigma}^2(N,T) = C_0 &= \frac{C_{g,\gt}}{2}\left(\widetilde{M} + \eta^*\left( (3 + 2 C_\alpha)\widetilde{M} + 2C_\alpha\sqrt{\widetilde{M}}\right) + C_\alpha\right)\max_{\gamma\in\Gamma_G^N} \sum_{i=1}^N \frac{\mathds{1}\{g_i = g\}}{\sum_{i=1}^N \mathds{1}\{g_i = g\}}\\
&=  \frac{C_{g,\gt}}{2}\left(\widetilde{M} + \eta^*\left( (3 + 2 C_\alpha)\widetilde{M} + 2C_\alpha\sqrt{\widetilde{M}}\right) + C_\alpha\right) \times 1
\end{align*}
given $C_\alpha$ some constant from the boundedness property of the compact subspace $\mathcal{A}$. 

I will show the exponential rates for each term in the sum. But first, a lemma from \cite{bm:2015}. 

\begin{lemma}[Lemma B5 of \cite{bm:2015}]\label{lemma:b3}
Let $z_t$ be a strongly mixing process with zero mean, with strong mixing coefficients $\alpha[t] \leq e^{-at^{d_1}}$, and with tail probabilities $\mathbb{P}(|z_t| > z) \leq e^{1 - (z/b)^{d_2}}$, where $a,b,d_1,d_2>0$ are constants. Then, for all $z>0$ I have, for all $\delta >0$, as $T \to \infty$,
\[
	T^{\delta} \mathbb{P}\left( \left\vert \frac{1}{T} \sum_{t=1}^T z_t \right\vert \geq z \right) \to 0.
\]
\end{lemma}

The second term is $o(T^{-\delta})$ by Lemma B5 of \cite{bm:2015} and Assumption \ref{as:infeas}($b$). To see this, set $z_t = u_{it}^2 - \Ex{u_{it}^2}$, which is necessarily strongly mixing by Assumption \ref{as:ae}($b$), and taking $z = \widetilde{M} - \sqrt{M} > 0$ then for any $\delta > 0$
\begin{align*}
&o_p(T^{-\delta}) \\ &= \mathbb{P}\left(\left\vert \frac{1}{T}\sum_{t=1}^T \left( u_{it}^2 - \Ex{u_{it}^2}\right) \right\vert \geq z \right)\\ &= \mathbb{P}\left(\left\{\frac{1}{T}\sum_{t=1}^T \left( u_{it}^2 - \Ex{u_{it}^2}\right) \leq -z\right\}\bigcup \left\{ \frac{1}{T}\sum_{t=1}^T \left( u_{it}^2 - \Ex{u_{it}^2}\right) \geq z \right\}\right)\\
&\geq \mathbb{P}\left( \frac{1}{T}\sum_{t=1}^T  u_{it}^2  \geq \widetilde{M} -  (\sqrt{M} - \Ex{u_{it}^2}) \right)\\
&\geq\mathbb{P}\left( \frac{1}{T}\sum_{t=1}^T  u_{it}^2  \geq \widetilde{M}\right)
\end{align*}
where the last inequality is due to $\Ex{u_{it}^2} \leq \sqrt{M}$ by Assumption \ref{as:infeas}($b$). Therefore, 
\[
	\mathbb{P}\left( \frac{1}{T}\sum_{t=1}^T  u_{it}^2  \geq \widetilde{M}\right) = o(T^{-\delta}).
\]
The third term is $o(1)$ as $N\to \infty$.

The fourth term is seen as $o(T^{-\delta})$ from a modification of the argument from \cite{bm:2015}. Since
\[\lim_{T\to \infty} \frac{1}{T}\sum_{t=1}^T \Ex{(\alpha_{\gt t}^0 - \alpha_{gt}^0)^2} > c_{g,\gt} + C_{g,\gt} \widetilde{M},\] I have for $T$ large enough,
\[
	 \frac{1}{T}\sum_{t=1}^T \Ex{(\alpha_{\gt t}^0 - \alpha_{gt}^0)^2} \geq c_{g,\gt} + C_{g,\gt}  \widetilde{M}.
\]
Then, applying Lemma B5 of \cite{bm:2015} with $z_t = (\alpha_{\gt t}^0 - \alpha_{gt}^0)^2 - \Ex{(\alpha_{\gt t}^0 - \alpha_{gt}^0)^2}$, which satisfies appropriate mixing and tail conditions by assumptions \ref{as:infeas}($a$) and \ref{as:ae}($b$), and setting $z = c_{g,\gt}/2 + C_{g,\gt}\widetilde{M}$ yields
\begin{align*}
o_p(T^{-\delta}) &= \mathbb{P}\left(\left\vert \frac{1}{T}\sum_{t=1}^T (\alpha_{\gt t}^0 - \alpha_{gt}^0)^2 - \Ex{(\alpha_{\gt t}^0 - \alpha_{gt}^0)^2}\right\vert \geq z \right)\\ &= \mathbb{P}\left(\left\{\frac{1}{T}\sum_{t=1}^T (\alpha_{\gt t}^0 - \alpha_{gt}^0)^2 - \Ex{(\alpha_{\gt t}^0 - \alpha_{gt}^0)^2} \leq -z\right\}\bigcup \left\{ \frac{1}{T}\sum_{t=1}^T (\alpha_{\gt t}^0 - \alpha_{gt}^0)^2 - \Ex{(\alpha_{\gt t}^0 - \alpha_{gt}^0)^2} \geq z \right\}\right)\\
&\geq \mathbb{P}\left( \frac{1}{T}\sum_{t=1}^T  (\alpha_{\gt t}^0 - \alpha_{gt}^0)^2 \leq -c_{g,\gt}/2  + \frac{1}{T}\sum_{t=1}^T \Ex{(\alpha_{\gt t}^0 - \alpha_{gt}^0)^2} ) \right)\\
&\geq \mathbb{P}\left( \frac{1}{T}\sum_{t=1}^T  (\alpha_{\gt t}^0 - \alpha_{gt}^0)^2 \leq c_{g,\gt}/2  + C_{g,\gt}  \widetilde{M}) \right)
\end{align*}
therefore,
\[
	\mathbb{P} \left( \frac{1}{T} \sum_{t=1}^T (\alpha_{\gt t}^0 - \alpha_{gt}^0)^2 \leq \frac{c_{g,\gt}}{2} + C_{g,\gt}  \widetilde{M}\right) = o_p(T^{-\delta})
\]
for any $\delta>0$.

The last term is also $o(T^{-\delta})$. First, denote $c$ as the minimum of the collection of $c_{g,\gt}$ over all $g\neq \gt$. Then, choose 
\begin{equation}\label{cond:eta}
	\eta^* \leq \frac{c}{8\left( (5C_{g,\gt} + C_2 + 6)\widetilde{M} + (C_1 + C_{g,\gt} + 2) \sqrt{\widetilde{M}} + (C_3 + C_{g,\gt} + 1)\right)}
\end{equation}
which does not depend on $T$. 

Then for such $\eta^*$ I have for all $g \neq \gt$ the bound
\begin{gather*}
	 \mathbb{P}\Biggr( 
\sum_{t=1}^T (\alpha_{\gt t}^0 - \alpha_{g t}^0)u_{it} 
\leq  - \frac{c_{g,\gt}}{4}  + \eta^*\left( (5C_{g,\gt} + C_2 + 6)\widetilde{M} + (C_1 + C_{g,\gt} + 2) \sqrt{\widetilde{M}} + (C_3 + C_{g,\gt} + 1)\right) + \frac{1}{T} C_0\Biggr)\\
\leq \mathbb{P} \left( \frac{1}{T}
\sum_{t=1}^T (\alpha_{\gt t}^0 - \alpha_{g t}^0)u_{it} 
\leq -\frac{c_{g,\gt}}{8} + \frac{1}{T} C_0\right).
\end{gather*}

By Assumption \ref{as:ae}($b$), the process $\{(\alpha_{\gt t}^0 - \alpha_{gt}^0)u_{it}\}$ has zero mean and is strongly mixing with faster-than-polynomial decay rate. Moreover, for all $i$, $t$, and $m>0$,
\[
	\mathbb{P}(\vert (\alpha_{\gt t}^0 - \alpha_{gt}^0)u_{it}\vert > m) \leq \mathbb{P}\left( \vert u_{it} \vert > \frac{m}{2 \sup_{\alpha_t \in \mathcal{A}}}\right),
\]
so $\{(\alpha_{\gt t}^0 - \alpha_{gt}^0)u_{it}\}$ also satisfies the tail condition of Assumption \ref{as:ae}($c$) with a different constant $\widetilde{b}>0$ rather than $b>0$.

I then apply Lemma \ref{lemma:b3} with $z_t = (\alpha_{\gt t}^0 - \alpha_{g t}^0)u_{it} $ and $z = z_{T^*} = \frac{c_{g,\gt}}{8} - \frac{1}{T^*}C_0 >0$ for $T^*$ large enough. So, for $T$ large enough I see that
\[
	\mathbb{P} \left( \frac{1}{T}
\sum_{t=1}^T (\alpha_{\gt t}^0 - \alpha_{g t}^0)u_{it} 
\leq -\frac{c_{g,\gt}}{8} + \frac{1}{T}C_0 \right) = o(T^{-\delta})
\]

Then, combining all results I have
\[
	\frac{1}{N}\sum_{i=1}^N \sum_{g=1}^G \mathbb{P}\left( \widetilde{Z}_{ig} = 1\right) \leq G(G-1) 
\sup_{i=1,\dots,N} \mathbb{P}\left(\frac{1}{N} \sum_{i=1}^N \norm{x_{it}} \geq \widetilde{M}\right) + o(T^{-\delta}) = o(T^{-\delta}).
\]
To complete the proof, for $\eta$ small enough (satisfying \eqref{cond:eta}) and $T$ large enough, I have for all $\delta > 0$ and for all $\varepsilon >0$, 
\begin{align*}
	\mathbb{P}\left(\sup_{(\theta,\alpha)\in \mathcal{N}_\eta} \frac{1}{N}\sum_{i=1}^N \mathds{1}\{\widehat{g}_{i}(\theta,\alpha,\widehat{\gamma}_N) \neq g_i^0\} > \varepsilon T^{-\delta}\right) &\leq \mathbb{P}\left( \frac{1}{N}\sum_{i=1}^N\frac{1}{G}\sum_{g=1}^G \widetilde{Z}_{ig} > \varepsilon T^{-\delta}\right)\\ &\leq \frac{\Ex{\frac{1}{N}\sum_{i=1}^N\frac{1}{G}\sum_{g=1}^G \widetilde{Z}_{ig}}}{\varepsilon T^{-\delta}} = o(1)
\end{align*}
where I used the Markov inequality. Therefore, 
	\[
		\sup_{(\theta,\alpha) \in \mathcal{N}_\eta} \frac{1}{N}\sum_{i =1}^N \mathds{1}\{\widehat{g}_i(\theta,\alpha,\widehat{\gamma}_N) \neq g_i^0\} = o_p(T^{-\delta}).
	\]
\end{proof}

\textnormal{ With this lemma, I can show the asymptotic properties claimed in the proof for $\widehat{\theta}, \widehat{\alpha}$ and $\widehat{g}$. Define}
\begin{align*}
	\widehat{Q}(\theta,\alpha) &= \sum_{g=1}^G \left(\frac{1}{N}\sum_{j=1}^N \mathds{1}\{\widehat{g}_j(\theta,\alpha) = g\}\right) \sqrt{\widehat{Q}_g(\theta,\alpha)}\\
		&\widehat{Q}_g(\theta,\alpha) = \frac{1}{T\sum_{j=1}^N \mathds{1}\{\widehat{g}_j(\theta,\alpha) = g\}} \sum_{i=1}^N \sum_{t=1}^T \mathds{1}\{\widehat{g}_i(\theta,\alpha) = g\}(y_{it} - x_{it}'\theta - \alpha_{gt})^2\\
\widetilde{Q}(\theta,\alpha) &=\sum_{g=1}^G \left(\frac{1}{N}\sum_{j=1}^N \mathds{1}\{g_j^0 = g\}\right) \sqrt{\widetilde{Q}_g(\theta,\alpha)}\\
		&\widetilde{Q}_g(\theta,\alpha) = \frac{1}{T\sum_{j=1}^N \mathds{1}\{g_j^0 = g\}} \sum_{i=1}^N \sum_{t=1}^T \mathds{1}\{g_i^0 = g\}(y_{it} - x_{it}'\theta - \alpha_{gt})^2\\
\end{align*}
\textnormal{which are the objective functions from the proof of Proposition \ref{prop:conswgfe}, fixed at the WC groupings and true groupings, respectively. Denote $(\widehat{\theta}, \widehat{\alpha})$ and $(\widetilde{\theta}, \widetilde{\alpha})$ as the minimizers of these functions, respectively.}

\begin{lemma}
For $\eta >0 $ small enough and for all $\delta >0$,
\[
	\sup_{(\theta,\alpha)\in \mathcal{N}_\eta} \left\vert \widehat{Q}(\theta,\alpha) -\widetilde{Q}(\theta,\alpha)  \right\vert = o_p(T^{-\delta}).
\]
\end{lemma}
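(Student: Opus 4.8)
The plan is to exploit the fact that $\widehat{Q}$ and $\widetilde{Q}$ have \emph{identical} functional form and differ only through the group labels attached to each individual: $\widehat{Q}$ uses the data-driven assignments $\widehat{g}_i(\theta,\alpha)$ while $\widetilde{Q}$ uses the true labels $g_i^0$. Since Lemma \ref{lemma:missassign} already controls the fraction of mislabeled units uniformly on $\mathcal{N}_\eta$, namely $\sup_{(\theta,\alpha)\in\mathcal{N}_\eta} N^{-1}\sum_{i=1}^N \mathds{1}\{\widehat{g}_i(\theta,\alpha)\neq g_i^0\} = o_p(T^{-\delta})$ for every $\delta>0$, the whole difference should inherit this rate. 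First I would write $\widehat{P}_g(\theta,\alpha) = N^{-1}\sum_{j}\mathds{1}\{\widehat{g}_j(\theta,\alpha)=g\}$ and $P_g^0 = N^{-1}\sum_j \mathds{1}\{g_j^0=g\}$ and split
\[
\widehat{Q}-\widetilde{Q} = \sum_{g=1}^G\Big(\widehat{P}_g-P_g^0\Big)\sqrt{\widehat{Q}_g} + \sum_{g=1}^G P_g^0\Big(\sqrt{\widehat{Q}_g}-\sqrt{\widetilde{Q}_g}\Big).
\]
For the first sum, $|\widehat{P}_g-P_g^0|\le N^{-1}\sum_i \mathds{1}\{\widehat{g}_i\neq g_i^0\}=o_p(T^{-\delta})$ by Lemma \ref{lemma:missassign}, while each $\sqrt{\widehat{Q}_g}$ is $O_p(1)$ uniformly on $\mathcal{N}_\eta$ by compactness of the parameter space and the moment bounds of Assumption \ref{as:infeas}($b$); hence this sum is $o_p(T^{-\delta})$.

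For the second sum, I would linearize the square roots via the identity \eqref{id:alg}, writing $\sqrt{\widehat{Q}_g}-\sqrt{\widetilde{Q}_g} = (\widehat{Q}_g-\widetilde{Q}_g)/(\sqrt{\widehat{Q}_g}+\sqrt{\widetilde{Q}_g})$. Because the groups are non-degenerate ($\sigma_g^2>0$, Assumption \ref{as:infeas}($c$,$d$)) and $\eta$ is small, both $\widehat{Q}_g$ and $\widetilde{Q}_g$ stay bounded away from zero on $\mathcal{N}_\eta$ with probability approaching one, so the denominator is bounded below by a positive constant and it suffices to bound $|\widehat{Q}_g-\widetilde{Q}_g|$. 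Writing $r_{it}^g(\theta,\alpha)=(y_{it}-x_{it}'\theta-\alpha_{gt})^2$, the gap between the two normalized group averages is driven entirely by the indicator difference $\mathds{1}\{\widehat{g}_i=g\}-\mathds{1}\{g_i^0=g\}$, which vanishes except on mislabeled units, together with the discrepancy between the two counts $\sum_i\mathds{1}\{\widehat{g}_i=g\}$ and $\sum_i\mathds{1}\{g_i^0=g\}$. Applying Cauchy--Schwarz,
\[
\frac{1}{N}\sum_{i=1}^N \mathds{1}\{\widehat{g}_i\neq g_i^0\}\,\frac{1}{T}\sum_{t=1}^T r_{it}^g \le \Big(\frac{1}{N}\sum_i\mathds{1}\{\widehat{g}_i\neq g_i^0\}\Big)^{1/2}\Big(\frac{1}{N}\sum_i\big(\tfrac{1}{T}\sum_t r_{it}^g\big)^2\Big)^{1/2},
\]
where the first factor is $o_p(T^{-\delta/2})$ by Lemma \ref{lemma:missassign} and the second factor is $O_p(1)$ uniformly on $\mathcal{N}_\eta$ by the fourth-moment bounds in Assumption \ref{as:infeas}($b$) and compactness. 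Since $\delta>0$ is arbitrary, the product is $o_p(T^{-\delta})$, and the same control handles the denominator mismatch, so $|\widehat{Q}_g-\widetilde{Q}_g|=o_p(T^{-\delta})$ uniformly. Summing over $g=1,\dots,G$ then delivers the claim.

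I expect the main obstacle to be twofold: (i) keeping the denominator $\sqrt{\widehat{Q}_g}+\sqrt{\widetilde{Q}_g}$ uniformly bounded away from zero over the entire neighborhood $\mathcal{N}_\eta$---this is where the non-degeneracy Assumptions \ref{as:infeas}($c$,$d$) and the smallness of $\eta$ must be invoked to prevent the estimated group variances from collapsing as the labels shift with the parameters---and (ii) establishing the uniform $O_p(1)$ control of the per-individual residual averages $\frac{1}{T}\sum_t r_{it}^g$ across $(\theta,\alpha)\in\mathcal{N}_\eta$, which follows from compactness of $\Theta\times\mathcal{A}^{GT}$ together with Assumption \ref{as:infeas}($b$) but must be stated so that the bound does not depend on the parameter-dependent labels $\widehat{g}_i(\theta,\alpha)$. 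Once these two uniform controls are secured, assembling the two sums yields the stated $o_p(T^{-\delta})$ rate for every $\delta>0$.
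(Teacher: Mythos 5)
Your proposal is correct and follows essentially the same route as the paper: both reduce the gap to the uniform misassignment rate from Lemma \ref{lemma:missassign}, linearize the square roots via $\sqrt{a}-\sqrt{b}=(a-b)/(\sqrt{a}+\sqrt{b})$ with the denominator kept away from zero, and control the time-averaged residuals as $O_p(1)$; the paper merely keeps the weights inside a single ratio $(\widehat{P}_g^2\widehat{Q}_g-(P_g^0)^2\widetilde{Q}_g)/(\widehat{P}_g\sqrt{\widehat{Q}_g}+P_g^0\sqrt{\widetilde{Q}_g})$ rather than splitting off the $(\widehat{P}_g-P_g^0)$ term first. Your Cauchy--Schwarz step, trading $o_p(T^{-\delta})$ for $o_p(T^{-\delta/2})$ and then relabeling $\delta$, is a legitimate and slightly cleaner way to avoid the per-individual uniform $O_p(1)$ bound the paper invokes.
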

\begin{proof}

Let $\eta>0$ and consider $(\theta,\alpha) \in \mathcal{N}_\eta$. Note that $P_g(\widehat{\gamma}(\theta,\alpha)) = \frac{1}{N}\sum_{j=1}^N\mathds{1}\{\widehat{g}_i(\theta,\alpha) = g\}$. I will suppress the $\widehat{\gamma}_N$ notation in the optimal group assignment. Consider the difference
\begin{align*}
\widehat{Q}(\theta,\alpha) -\widetilde{Q}(\theta,\alpha) &= \sum_{g=1}^G \frac{\left(\frac{1}{N}\sum_{j=1}^N \mathds{1}\{\widehat{g}_j(\theta,\alpha) = g\}\right)^2\widehat{Q}_g(\theta,\alpha) - \left(\frac{1}{N}\sum_{j=1}^N \mathds{1}\{g_j^0 = g\}\right)^2 \widetilde{Q}_g(\theta,\alpha)}{\left(\frac{1}{N}\sum_{j=1}^N \mathds{1}\{\widehat{g}_j(\theta,\alpha) = g\}\right) \sqrt{\widehat{Q}_g(\theta,\alpha)} + \left(\frac{1}{N}\sum_{j=1}^N \mathds{1}\{g_j^0 = g\}\right) \sqrt{\widetilde{Q}_g(\theta,\alpha)}}
\end{align*}
where I have used the identity \eqref{id:alg}. Since the denominators are all $O_p(1)$ for each $g = 1,\dots, G$, I focus on the numerator and note that
\[
	\left(\frac{1}{N}\sum_{j=1}^N \mathds{1}\{\widehat{g}_j(\theta,\alpha) = g\}\right)^2 \times \frac{1}{T \sum_{j=1}^N \mathds{1}\{\widehat{g}_j(\theta,\alpha) = g\}} = \frac{1}{N}\sum_{j=1}^N \mathds{1}\{\widehat{g}_j(\theta,\alpha) = g\} \times \frac{1}{NT}
\]
so, for all $g = 1,\dots,G$,
\begin{gather*}
\left(\frac{1}{N}\sum_{j=1}^N \mathds{1}\{\widehat{g}_j(\theta,\alpha) = g\}\right)^2\widehat{Q}_g(\theta,\alpha) - \left(\frac{1}{N}\sum_{j=1}^N \mathds{1}\{g_j^0 = g\}\right)^2 \widetilde{Q}_g(\theta,\alpha)\\
=\frac{1}{N}\sum_{i=1}^N\left[ \left(\frac{1}{N}\sum_{j=1}^N \mathds{1}\{\widehat{g}_j(\theta,\alpha) = g\}\right)\mathds{1}\{\widehat{g}_i(\theta,\alpha) = g\} - \left(\frac{1}{N}\sum_{j=1}^N \mathds{1}\{g_j^0 = g\}\right)\mathds{1}\{g_i^0 = g\} \right]\frac{1}{T}\sum_{t=1}^T (y_{it} - x_{it}'\theta - \alpha_{gt})^2\\
\leq \frac{1}{N}\sum_{i=1}^N\underbrace{\left\vert \left(\frac{1}{N}\sum_{j=1}^N \mathds{1}\{\widehat{g}_j(\theta,\alpha) = g\}\right)\mathds{1}\{\widehat{g}_i(\theta,\alpha) = g\} - \left(\frac{1}{N}\sum_{j=1}^N \mathds{1}\{g_j^0 = g\}\right)\mathds{1}\{g_i^0 = g\} \right\vert}_{A_{gi}}\frac{1}{T}\sum_{t=1}^T (y_{it} - x_{it}'\theta - \alpha_{gt})^2.
\end{gather*}
Then
\begin{align*}
A_{gi} \leq& \mathds{1}\{\widehat{g}_i(\theta,\alpha) = g\} \mathds{1}\{g_i^0 = g\}\frac{1}{N} \sum_{j=1}^N \underbrace{\left\vert\mathds{1}\{\widehat{g}_j(\theta,\alpha) = g\} - \mathds{1}\{g_j^0 = g\} \right\vert}_{B_g}\\
&+ \mathds{1}\{\widehat{g}_i(\theta,\alpha) \neq g\} \mathds{1}\{g_i^0 = g\} \frac{1}{N}\sum_{j=1}^N \mathds{1}\{g_j^0 = g\}\\
&+ \mathds{1}\{\widehat{g}_i(\theta,\alpha) = g\} \mathds{1}\{g_i^0 \neq g\}  \frac{1}{N}\sum_{j=1}^N \mathds{1}\{\widehat{g}_j(\theta,\alpha) = g\}.
\end{align*}
Note that 
\begin{itemize}
	\item $\mathds{1}\{\widehat{g}_i(\theta,\alpha) \neq g\} \mathds{1}\{g_i^0 = g\} = \mathds{1}\{\widehat{g}_i(\theta,\alpha) \neq g_i^0\} \mathds{1}\{g_i^0 = g\} \leq \mathds{1}\{\widehat{g}_i(\theta,\alpha) \neq g_i^0\} $
	\item $ \mathds{1}\{\widehat{g}_i(\theta,\alpha) = g\} \mathds{1}\{g_i^0 = g\}=  \mathds{1}\{\widehat{g}_i(\theta,\alpha) = g_i^0\} \mathds{1}\{g_i^0 = g\} \leq 1$
	\item $B_g = \begin{cases} 0 \>\>\> \text{if} \>\>\> (\widehat{g}_j(\theta,\alpha) = g) \text{ and } g_{j}^0 = g \text{ or }  (\widehat{g}_j(\theta,\alpha) \neq g \text{ and } g_{j}^0 \neq g)\\ 1  \>\>\> \text{if} \>\>\>  (\widehat{g}_j(\theta,\alpha) = g \text{ and } g_{j}^0 \neq g) \text{ or }  (\widehat{g}_j(\theta,\alpha) \neq g \text{ and } g_{j}^0 = g)\end{cases}$
which gives us\[B_g = \mathds{1}\{\widehat{g}_j(\theta,\alpha) = g\}\mathds{1}\{g_j^0 \neq g\} + \mathds{1}\{\widehat{g}_j(\theta,\alpha) \neq g_j^0\}\mathds{1}\{g_j^0 = g\} \leq \mathds{1}\{\widehat{g}_j(\theta,\alpha) = g\}\mathds{1}\{g_j^0 \neq g\} + \mathds{1}\{\widehat{g}_j(\theta,\alpha) \neq g_j^0\}\]
\end{itemize}
where I deliberately find bounds that leverage Lemma \ref{lemma:missassign}. Then,
\begin{align*}
A_{gi} \leq& \frac{1}{N} \sum_{j=1}^N\mathds{1}\{\widehat{g}_j(\theta,\alpha) = g\}\mathds{1}\{g_j^0 \neq g\} + \frac{1}{N} \sum_{j=1}^N\mathds{1}\{\widehat{g}_j(\theta,\alpha) \neq g_j^0\}\\
&+ \mathds{1}\{\widehat{g}_i(\theta,\alpha) \neq g_i^0\} \times 1\\
&+ \mathds{1}\{\widehat{g}_i(\theta,\alpha) = g\} \mathds{1}\{g_i^0 \neq g\} \times 1\\
\leq& \frac{1}{N} \sum_{j=1}^N\mathds{1}\{\widehat{g}_j(\theta,\alpha) = g\}\mathds{1}\{g_j^0 \neq g\} + \sup_{(\theta,\alpha) \in \mathcal{N}_\eta}\frac{1}{N} \sum_{j=1}^N\mathds{1}\{\widehat{g}_j(\theta,\alpha) \neq g_j^0\}\\
&+ \mathds{1}\{\widehat{g}_i(\theta,\alpha) \neq g_i^0\} + \mathds{1}\{\widehat{g}_i(\theta,\alpha) = g\} \mathds{1}\{g_i^0 \neq g\}.
\end{align*}
Now, by our assumptions, I have for any $(\theta,\alpha)\in\Theta \times \mathcal{A}^{GT}$, as $T\to \infty$
\[
	\frac{1}{T} \sum_{t=1}^T (y_{it} - x_{it}'\theta - \alpha_{gt})^2 = O_p(1)
\]
for any $g=1,\dots,G$. Hence,
\begin{align*}
&\frac{1}{N}\sum_{i=1}^NA_{gi}\frac{1}{T}\sum_{t=1}^T (y_{it} - x_{it}'\theta - \alpha_{gt})^2 \\
\leq& O_p(1)\Biggr[\frac{1}{N} \sum_{j=1}^N\mathds{1}\{\widehat{g}_j(\theta,\alpha) = g\}\mathds{1}\{g_j^0 \neq g\} + \sup_{(\theta,\alpha) \in \mathcal{N}_\eta}\frac{1}{N} \sum_{j=1}^N\mathds{1}\{\widehat{g}_j(\theta,\alpha) \neq g_j^0\}\\
&+ \frac{1}{N} \sum_{i=1}^N \mathds{1}\{\widehat{g}_i(\theta,\alpha) \neq g_i^0\} + \frac{1}{N} \sum_{i=1}^N \mathds{1}\{\widehat{g}_i(\theta,\alpha) = g\} \mathds{1}\{g_i^0 \neq g\} \Biggr].
\end{align*}

Then, by Lemma \ref{lemma:missassign},
\begin{align*}
\left\vert \widehat{Q}(\theta,\alpha) -\widetilde{Q}(\theta,\alpha) \right\vert \leq& O_p(1) \Biggr[\sum_{g=1}^G \frac{1}{N} \sum_{j=1}^N\mathds{1}\{\widehat{g}_j(\theta,\alpha) = g\}\mathds{1}\{g_j^0 \neq g\} + G \sup_{(\theta,\alpha) \in \mathcal{N}_\eta}\frac{1}{N} \sum_{j=1}^N\mathds{1}\{\widehat{g}_j(\theta,\alpha) \neq g_j^0\} \\
&+ \sum_{g=1}^G  \frac{1}{N} \sum_{i=1}^N \mathds{1}\{\widehat{g}_i(\theta,\alpha) \neq g_i^0\}  + \sum_{g=1}^G\frac{1}{N} \sum_{i=1}^N \mathds{1}\{\widehat{g}_i(\theta,\alpha) = g\} \mathds{1}\{g_i^0 \neq g\} \Biggr]\\
& = o_p(T^{-\delta}).
\end{align*}

\end{proof}

\begin{proof}[Proof of the Proposition.]

For any $\eta > 0$, the consistency of $(\widehat{\theta},\widehat{\alpha})$ and $(\widetilde{\theta},\widetilde{\alpha})$ imply as $N,T \to \infty$,
\begin{gather}
	\mathbb{P}\left(\left(\widehat{\theta},\widehat{\alpha}\right) \notin \mathcal{N}_\eta\right) \to 0,\\
	\mathbb{P}\left(\left(\widetilde{\theta},\widetilde{\alpha}\right) \notin \mathcal{N}_\eta\right) \to 0.
\end{gather}
Then, for any $\varepsilon >0$,
\[
	\mathbb{P} \left[ \left\vert\widehat{Q}\left(\widehat{\theta},\widehat{\alpha}\right) - \widetilde{Q}\left(\widehat{\theta},\widehat{\alpha}\right) \right\vert > \varepsilon T^{-\delta} \right] \leq \mathbb{P}\left(\left(\widehat{\theta},\widehat{\alpha}\right) \notin \mathcal{N}_\eta\right) + \mathbb{P} \left[ \sup_{(\theta, \alpha) \in \mathcal{N}_\eta} \left\vert \widehat{Q}(\theta,\alpha) -\widetilde{Q}(\theta,\alpha) \right\vert > \varepsilon T^{-\delta} \right] = o(1)
\]
hence
\[
\widehat{Q}\left(\widehat{\theta},\widehat{\alpha}\right) - \widetilde{Q}\left(\widehat{\theta},\widehat{\alpha}\right)  = o_p(T^{-\delta})
\]
and similarly for 
\[
\widehat{Q}\left(\widetilde{\theta},\widetilde{\alpha}\right) - \widetilde{Q}\left(\widetilde{\theta},\widetilde{\alpha}\right)  = o_p(T^{-\delta}).
\]

Then, using these and the minimizer definition of the feasible and infeasible WC estimators,
\[
	0 \leq \widetilde{Q}\left(\widehat{\theta},\widehat{\alpha} \right) - \widetilde{Q}\left(\widetilde{\theta},\widetilde{\alpha} \right) 
= \widetilde{Q}\left(\widehat{\theta},\widehat{\alpha} \right) - \widehat{Q}\left(\widetilde{\theta},\widetilde{\alpha} \right) + o_p(T^{-\delta}) \leq o_p(T^{-\delta}).
\]
Therefore,
\[
	\widetilde{Q}\left(\widehat{\theta},\widehat{\alpha} \right) - \widetilde{Q}\left(\widetilde{\theta},\widetilde{\alpha} \right)  = o_p(T^{-\delta}).
\]
To show consistency of the parameter estimators, denote $\widetilde{u}_{it}^2 = y_{it} - x_{it}'\widetilde{\theta} - \widetilde{\alpha}_{g_i^0t}$ and consider the following
\begin{gather*}
	\widetilde{Q}\left(\widehat{\theta},\widehat{\alpha} \right) - \widetilde{Q}\left(\widetilde{\theta},\widetilde{\alpha} \right)=
\sum_{g=1}^G \left(\frac{1}{N}\sum_{j=1}^N \mathds{1}\{g_j^0 = g\} \right) \left[ \sqrt{\widetilde{Q}_g\left(\widehat{\theta},\widehat{\alpha} \right)} - \sqrt{\widetilde{Q}_g\left(\widetilde{\theta},\widetilde{\alpha} \right)} \right]\\
= \sum_{g=1}^G \left(\frac{1}{N}\sum_{j=1}^N \mathds{1}\{g_j^0 = g\} \right) \frac{ \widetilde{Q}_g\left(\widehat{\theta},\widehat{\alpha} \right) - \widetilde{Q}_g\left(\widetilde{\theta},\widetilde{\alpha}\right) }{ \sqrt{\widetilde{Q}_g\left(\widehat{\theta},\widehat{\alpha} \right)} + \sqrt{\widetilde{Q}_g\left(\widetilde{\theta},\widetilde{\alpha} \right)}}\\
= \sum_{g=1}^G \left(\frac{1}{N}\sum_{j=1}^N \mathds{1}\{g_j^0 = g\} \right) \frac{ \widetilde{Q}_g\left(\widehat{\theta},\widehat{\alpha} \right) - \widetilde{Q}_g\left(\widetilde{\theta},\widetilde{\alpha}\right) }{ 2\sqrt{\widetilde{Q}_g\left(\widetilde{\theta},\widetilde{\alpha} \right)} + o_p(T^{-\delta})}\\
= \sum_{g=1}^G \frac{ \frac{1}{NT} \sum_{i=1}^N\mathds{1}\{g_i^0 = g\} \sum_{t=1}^T \left( y_{it} - x_{it}'\widehat{\theta} - \widehat{\alpha}_{gt}\right)^2 -  \left( y_{it} - x_{it}'\widetilde{\theta} - \widetilde{\alpha}_{gt}\right)^2}{ 2\sqrt{\widetilde{Q}_g\left(\widetilde{\theta},\widetilde{\alpha} \right)} + o_p(T^{-\delta})}\\
= \sum_{g=1}^G \frac{ \frac{1}{NT} \sum_{i=1}^N \mathds{1}\{g_i^0 = g\}\sum_{t=1}^T \left( \widetilde{u}_{it} + x_{it}'\left(\widetilde{\theta} - \widehat{\theta}\right) + \left(\widetilde{\alpha}_{gt} - \widehat{\alpha}_{gt} \right)\right)^2 -  \widetilde{u}_{it}^2}{ 2\sqrt{\widetilde{Q}_g\left(\widetilde{\theta},\widetilde{\alpha} \right)} } + o_p(T^{-\delta})\\
= \sum_{g=1}^G \frac{ \frac{1}{NT} \sum_{i=1}^N \mathds{1}\{g_i^0 = g\}\sum_{t=1}^T \left( x_{it}'\left(\widetilde{\theta} - \widehat{\theta}\right) + \left(\widetilde{\alpha}_{gt} - \widehat{\alpha}_{gt} \right)\right)^2 +  2\widetilde{u}_{it}\left( x_{it}'\left(\widetilde{\theta} - \widehat{\theta}\right) + \left(\widetilde{\alpha}_{gt} - \widehat{\alpha}_{gt} \right)\right)}{ 2\sqrt{\widetilde{Q}_g\left(\widetilde{\theta},\widetilde{\alpha} \right)} } + o_p(T^{-\delta})\\
= \sum_{g=1}^G \frac{ \frac{1}{NT} \sum_{i=1}^N \mathds{1}\{g_i^0 = g\}\sum_{t=1}^T \left( x_{it}'\left(\widetilde{\theta} - \widehat{\theta}\right) + \left(\widetilde{\alpha}_{gt} - \widehat{\alpha}_{gt} \right)\right)^2}{ 2\sqrt{\widetilde{Q}_g\left(\widetilde{\theta},\widetilde{\alpha} \right)} }\\
 +  \sum_{g=1}^G \frac{1}{ \sqrt{\widetilde{Q}_g\left(\widetilde{\theta},\widetilde{\alpha} \right)} }\frac{1}{NT} \sum_{i=1}^N \mathds{1}\{g_i^0 = g\}\sum_{t=1}^T\widetilde{u}_{it}\left( x_{it}'\left(\widetilde{\theta} - \widehat{\theta}\right) + \left(\widetilde{\alpha}_{gt} - \widehat{\alpha}_{gt} \right)\right) + o_p(T^{-\delta}).
\end{gather*}
The second term is zero and to see that, recall the first-order conditions for $\gamma = \gamma^0$ fixed:
\begin{gather*}
0 = \sum_{g=1}^G \frac{1}{\sqrt{\widetilde{Q}_g(\widetilde{\theta}, \widetilde{\alpha})}} \frac{1}{NT} \sum_{i=1}^N \sum_{t=1}^T \mathds{1}\{g_i^0 = g\}x_{it}\left(y_{it} - x_{it}'\widetilde{\theta} - \widetilde{\alpha}_{gt}\right)\\
0 = \frac{1}{N} \sum_{i=1}^N \mathds{1}\{g_i^0 = g\}\left (y_{it} - x_{it}'\widetilde{\theta} - \widetilde{\alpha}_{gt}\right), \>\>\> \text{ for } \>\> t = 1,\dots,T
\end{gather*}
so that the second term is equal to
\begin{gather*}
	\sum_{g=1}^G \frac{1}{ \sqrt{\widetilde{Q}_g\left(\widetilde{\theta},\widetilde{\alpha} \right)} }\left[\frac{1}{NT} \sum_{i=1}^N \mathds{1}\{g_i^0 = g\}\sum_{t=1}^T\widetilde{u}_{it} x_{it}'\left(\widetilde{\theta} - \widehat{\theta}\right) + \frac{1}{T} \sum_{t=1}^T \left(\widetilde{\alpha}_{gt} - \widehat{\alpha}_{gt} \right)\frac{1}{N}\sum_{i=1}^N \mathds{1}\{g_i^0 = g\}\widetilde{u}_{it}\right]\\
=\left(\widetilde{\theta} - \widehat{\theta}\right)'\left[ \sum_{g=1}^G \frac{1}{ \sqrt{\widetilde{Q}_g\left(\widetilde{\theta},\widetilde{\alpha} \right)} }\frac{1}{NT} \sum_{i=1}^N \sum_{t=1}^T \mathds{1}\{g_i^0 = g\}x_{it}\widetilde{u}_{it}\right] + 0\\
=0
\end{gather*}
by using the first-order conditions of $\widetilde{Q}$. 

Therefore,
\begin{gather*}
\widetilde{Q}\left(\widehat{\theta},\widehat{\alpha} \right) - \widetilde{Q}\left(\widetilde{\theta},\widetilde{\alpha} \right)\\
= \sum_{g=1}^G  \frac{1}{{ 2\sqrt{\widetilde{Q}_g\left(\widetilde{\theta},\widetilde{\alpha} \right)} }}\frac{1}{NT} \sum_{i=1}^N \sum_{t=1}^T \mathds{1}\{g_i^0 = g\} \left( x_{it}'\left(\widetilde{\theta} - \widehat{\theta}\right) + \left(\widetilde{\alpha}_{gt} - \widehat{\alpha}_{gt} \right)\right)^2 + o_p(T^{-\delta})\\
\geq \widetilde{C} \sum_{g=1}^G \frac{1}{NT} \sum_{i=1}^N \sum_{t=1}^T \mathds{1}\{g_i^0 = g\} \left( x_{it}'\left(\widetilde{\theta} - \widehat{\theta}\right) + \left(\widetilde{\alpha}_{gt} - \widehat{\alpha}_{gt} \right)\right)^2 + o_p(T^{-\delta})\\
\geq \widetilde{C} \frac{1}{NT} \sum_{i=1}^N \sum_{t=1}^T  \left( x_{it}'\left(\widetilde{\theta} - \widehat{\theta}\right) + \left(\widetilde{\alpha}_{g_i^0 t} - \widehat{\alpha}_{g_i^0 t} \right)\right)^2 + o_p(T^{-\delta})\\
\geq \widetilde{C} \frac{1}{NT} \sum_{i=1}^N \sum_{t=1}^T  \left( x_{it}'\left(\widetilde{\theta} - \widehat{\theta}\right) + \overline{x}_{g_i^0 t}'\left(\widetilde{\theta} - \widehat{\theta}\right)\right)^2 + o_p(T^{-\delta})\\
=\widetilde{C} \left(\widetilde{\theta} - \widehat{\theta}\right)' \left[\frac{1}{NT} \sum_{i=1}^N \sum_{t=1}^T  \left( x_{it}- \overline{x}_{g_i^0 t}\right)\left( x_{it}- \overline{x}_{g_i^0 t}\right)' \right]\left(\widetilde{\theta} - \widehat{\theta}\right) + o_p(T^{-\delta})\\
\geq \widetilde{C}\widehat{\rho} \norm{\widetilde{\theta} - \widehat{\theta}}^2 + o_p(1) + o_p(T^{-\delta})
\end{gather*}
where $\max_{g} \sqrt{\widetilde{Q}_g(\widetilde{\theta}, \widetilde{\alpha})} < 1/\widetilde{C}$. Therefore, 
\[
\widehat{\theta} = \widetilde{\theta} + o_p(T^{-\delta}).
\]

To show $\widehat{\alpha} \to_p \alpha$, I use the fact that $\widehat{\theta} = \widetilde{\theta} + o_p(T^{-\delta})$:
\begin{gather*}
\widetilde{Q}\left(\widehat{\theta},\widehat{\alpha} \right) - \widetilde{Q}\left(\widetilde{\theta},\widetilde{\alpha} \right)\\
= \sum_{g=1}^G  \frac{1}{{ 2\sqrt{\widetilde{Q}_g\left(\widetilde{\theta},\widetilde{\alpha} \right)} }}\frac{1}{NT} \sum_{i=1}^N \sum_{t=1}^T \mathds{1}\{g_i^0 = g\} \left( x_{it}'\left(\widetilde{\theta} - \widehat{\theta}\right) + \left(\widetilde{\alpha}_{gt} - \widehat{\alpha}_{gt} \right)\right)^2 + o_p(T^{-\delta})\\
\geq O_p(1)\times\sum_{g=1}^G \frac{1}{N} \sum_{i=1}^N\mathds{1}\{g_i^0 = g\} \frac{1}{T}\sum_{t=1}^T  \left(\widetilde{\alpha}_{gt} - \widehat{\alpha}_{gt} \right)^2 + o_p(T^{-\delta})\\
= O_p(1) \frac{1}{NT} \sum_{i=1}^N \sum_{t=1}^T  \left(\widetilde{\alpha}_{g_i^0 t} - \widehat{\alpha}_{g_i^0 t} \right)^2 + o_p(T^{-\delta})
\end{gather*}
so $\frac{1}{T}\norm{\widetilde{\alpha}_{g} - \widehat{\alpha}_{g}}^2 = o_p(T^{-\delta})$ for any $g= 1,\dots,G$ therefore
\[
	\norm{\widetilde{\alpha}_{g} - \widehat{\alpha}_{g}}^2 = o_p(T^{1-\delta}).
\] 
This holds for any $\delta > 0$ so I have the result. 

To show consistency of group assignments, by a union bound I have
\[
	\mathbb{P}\left(\sup_{i = 1,\dots,N} \left\vert \widehat{g}_i\left(\widehat{\theta},\widehat{\alpha} \right) - g_i^0 \right\vert > 0 \right) \leq 
\mathbb{P}\left( \left(\widehat{\theta},\widehat{\alpha}\right) \notin \mathcal{N}_\eta\right) + N \sup_{i=1,\dots,N} \mathbb{P}\left( \left(\widehat{\theta},\widehat{\alpha}\right) \in \mathcal{N}_\eta, \widehat{g}_i\left(\widehat{\theta},\widehat{\alpha}\right) \neq g_i^0 \right).
\]
Now, taking $\eta$ satisfying \eqref{cond:eta} gives us $\mathbb{P}\left( \left(\widehat{\theta},\widehat{\alpha}\right) \notin \mathcal{N}_\eta\right) = o(1)$. Then, from the proof of Lemma \ref{lemma:missassign} I know 
\[
	\sup_{(\theta,\alpha) \in \mathcal{N}_\eta}\mathds{1}\{\widehat{g}(\theta,\alpha) \neq g_i^0\} \leq \sum_{g=1}^G \widetilde{Z}_{ig}
\]
where $\widetilde{Z}_{ig}$ is given by \eqref{zsquiggle}. Then
\begin{align*}
N \sup_{i=1,\dots,N} \mathbb{P}\left( \left(\widehat{\theta},\widehat{\alpha}\right) \in \mathcal{N}_\eta, \widehat{g}_i\left(\widehat{\theta},\widehat{\alpha}\right) \neq g_i^0 \right) 
&= N \sup_{i=1,\dots,N} \Ex{\mathds{1}\{\left(\widehat{\theta},\widehat{\alpha}\right) \in \mathcal{N}_\eta\}\mathds{1}\{\widehat{g}(\theta,\alpha) \neq g_i^0\}}\\
&\leq N \sup_{i=1,\dots,N}\Ex{\mathds{1}\{\left(\widehat{\theta},\widehat{\alpha}\right) \in \mathcal{N}_\eta\} \sum_{g=1}^G \widetilde{Z}_{ig}}\\
&\leq N \sup_{i=1,\dots,N} \sum_{g=1}^G \mathbb{P}\left(\widetilde{Z}_{ig} = 1 \right)\\
&\leq  N \left [ G(G-1)\sup_{i=1,\dots,N} \mathbb{P}\left(\frac{1}{T} \sum_{t=1}^T \norm{x_{it}} \geq \widetilde{M} \right) + o(T^{-\delta})\right]\\
&= o(NT^{-\delta})
\end{align*}
for all $\delta>0$ and this completes the proof.

\end{proof}

\section{Generalized Grouped Fixed Effects Estimation}\label{app:ggfe}

 The \emph{generalized grouped fixed-effects} (GGFE) estimator for model \eqref{themodel} is the solution to the minimization problem 
\begin{equation}\label{gen:obj}
	\widehat{\delta} = (\widehat{\theta}, \widehat{\alpha}, \widehat{\gamma}) 
= \argmin_{\delta \in \Theta \times \mathcal{A}^{GT} \times \Gamma_G^N} {\it tr}[\bm{\Omega}_\delta]
\end{equation}
subject to 
\begin{gather}
	\bm{\bm{\Omega}}_{\delta} = \sum_{g=1}^G \left(\bm{\bm{\Omega}}_{\delta}^{1/2}\widehat{\bm{\bm{\Sigma}}}_{g}(\delta) \bm{\bm{\Omega}}_{\delta}^{1/2}\right)^{1/2} P_g(\delta)\label{baryvar}\\
\widehat{\bm{\bm{\bm{\Sigma}}}}_g(\delta) = \sum_{i=1}^N  \frac{\mathds{1}\{g_i = g\}}{\sum_{j=1}^N \mathds{1}\{g_j = g\} } \left(y_i  - x_i'\theta - \alpha_{g_i}\right)\cdot\left(y_i  - x_i'\theta - \alpha_{g_i}\right)' \label{groupcov} \\
P_g(\delta) = \frac{1}{N} \sum_{j=1}^N \mathds{1}\{g_j = g\}\label{emf}.
\end{gather}
where the square root of a matrix is understood as the principle square root. The objective function contains $\bm{\bm{\Omega}}_{\delta}$ as the solution to a nonlinear matrix equation \eqref{baryvar}, which can be obtained by a fixed-point algorithm, which is guaranteed to converge to the unique fixed-point when the group covariances \eqref{groupcov} are non-singular (\cite{alvarez:2016}). In addition, it resembles a geometric mean of the group covariances weighted by the empirical mass function of the groups given by \eqref{emf}.

\subsection{Computation} 

For the form of the gradient $\partial_\gamma\text{{\it tr}}[ \bm{\Omega}_{\delta} ] $ with respect to $\gamma$ in the relaxed set $\Delta^G$ considering soft assignment, see Appendix \ref{comp-grad}. With this I can construct an algorithm to solve \eqref{gen:obj}.

\begin{algorithm}\label{WC:alg}(Gradient descent for GGFE Estimation).
\begin{itemize}
\item[1.] Initialize $\theta^{(0)}$; set the $\alpha^{(0)}$ as $G$ randomly chosen $v_i = y_i - x_i\theta^{(0)}$; create initial assignments $g_i^{(0)}$ by assigning $v_i$ to the closest $\{\alpha_{g_i}^{(0)}\}$; and calculate the $\{\widehat{\bm{\Sigma}}_g^{(0)}(\delta^{(0)})\}$. Set $s = 1$. \newline

\item[2.] (Assignment Step). Calculate $\partial_{P_g^i}\text{tr}[ \bm{\Omega}_\delta ]$ for all $i$ and $g$. Assign according to
\[
	g_i^{(s+1)} \leftarrow \argmin_g \partial_{P_g^i}\text{{\it tr}}[ \bm{\Omega}_{\delta} ] \bigg\vert_{\delta = \delta^{(s)}}
\]
and collect them in $\gamma^{(s+1)} = (g_1^{(s+1)}, \dots, g_N^{(s+1)})$.\\
\item[3.] (Update Step). Update $\alpha^{(s+1)}$ and $\theta^{(s+1)}$ according to 
\[
	(\theta^{(s+1)},\alpha^{(s+1)}) = \argmin_{(\theta,\alpha)}\text{{\it tr}}[ \bm{\Omega}_{(\theta,\alpha, \gamma^{(s+1)})} ],
\]
then update $\{\widehat{\bm{\Sigma}}_g^{(s+1)}(\delta^{(s+1)})\}$. \newline

\item[4.] If $g_i^{(s)} = g_i^{(s + 1)}$ for all $i$, stop. Otherwise, set $s \leftarrow s + 1$ and go back to step 2.

\end{itemize}

\end{algorithm}

\section{Computation}

\subsection{Gradient of the criterion function with respect to assignments}\label{comp-grad}

Calculating the parameter values and labeling matrix that solve (\ref{wgfe:obj}) is done through a gradient descent algorithm. For given parameter values $\alpha$ and $\theta$, the existence of the gradient with respect to class assignments $P^i$ of the objective function for $i=1,\dots,N$ is shown in Appendix B and calculated in Appendix C of \cite{yang:2022} as 
\begin{equation}\label{gradL}
\nabla_{P^i}\text{tr}[\widehat{\bm{\Omega}}_\delta]
= \sum_{g=1}^G \vect({\bf I})' \cdot \bm{W}_g 
\cdot \vect\bigg((y_i - x_i\theta - \alpha_{g_i})\cdot(y_i - x_i\theta - \alpha_{g_i})' + \widehat{\bm{\Sigma} }_g \bigg)\vec{e}_g
\end{equation}
where $\vec{e}_g$ is a vector of zeros except with a one in the $g$th entry and the $\bm{W}_g$ matrices are
\begin{gather}
	\bm{W}_g = (\bm{\Omega}_\delta^{1/2} \otimes \bm{\Omega}_\delta^{1/2} )
	\times \bigg[ \sum_{h=1}^G P_h (\bm{U}_h \otimes\bm{U}_h)(\bm{D}_h^{1/2} \otimes {\bf I} + {\bf I} \otimes \bm{D}_h^{1/2} )^{-1} (\bm{D}_h^{1/2} \otimes \bm{D}_h^{1/2})(\bm{U}_h' \otimes \bm{U}_h') \bigg]^{-1} \notag\\
	\times \big[ (\bm{U}_g\otimes \bm{U}_g) (\bm{D}_g^{1/2} \otimes {\bf I} + {\bf I} \otimes \bm{D}_g^{1/2} )^{-1} (\bm{U}_g' \otimes \bm{U}_g')\big]
	(\bm{\Sigma}_\delta^{1/2} \otimes\bm{\Omega}_\delta^{1/2} )\label{weightgradL}
\end{gather}
and $\bm{U}_g$ and $\bm{D}_g$ are the corresponding orthonormal and diagonal matrices found from the eigendecompositions
$$
	\bm{\Omega}_\delta^{1/2}
\widehat{\bm{\Sigma}}_g 
\bm{\Omega}_\delta^{1/2}= \bm{U}_g \bm{D}_g \bm{U}_g' 
$$
for $g\in\{1,\dots, G\}$. Note that $\bm{D}$ and $\bm{U}$ depend on the parameter values.

In order to implement gradient descent I need to make updates at each iteration to the objective function, which is the variance of the unique solution to the nonlinear matrix equation (\ref{baryvar}). With the estimated cluster covariances given some parameter values $\alpha$ and $\theta$, I have the necessary terms to solve this equation by following the iteration scheme from \cite%
{alvarez:2016}: 
\begin{equation}\label{baryiter}
\bm{\Omega}_\delta(s+1)\leftarrow \bm{\Omega}_\delta^{-1/2}(s)\bigg(%
\sum_{g=1}^{G} P_g \bm{\Omega}_\delta^{1/2}(s)
 \widehat{\bm{\Sigma} }_g 
\bm{\Omega}_\delta^{1/2}(s) \bigg)^{2}\bm{\Omega}_\delta^{-1/2}(s)
\end{equation}%
using an arbitrary symmetric positive definite initialization matrix $\bm{\Omega}_\delta^{-1/2}(0)$, e.g. the identity matrix.

\subsection{A Variable Neighborhood Search (VNS) algorithm}\label{wgfe:vns}

\begin{algorithm}\label{alg:vns} (Variable Neighborhood Search (VNS)). 
\begin{itemize}
\item[1.] Set $iter_{\max}$ and $neigh_{\max}$ to desired values (usually both 10). 

\item[2.] Initialize $(\theta^{(0)}, \alpha^{(0)})$ and perform one assignment step to get an initial $g_{{\it init}}$. Set $g^*=g_{{\it init}} $. \newline

\item[3.] Set $n=1$. 

\item[4.] (Neighborhood jump). Relocate $n$ random individuals to $n$ other random groups and obtain a new grouping $g'$ and perform one update step to obtain $(\theta',\alpha')$. 

\item[5.] Set $(\theta^{(0)}, \alpha^{(0)}) = (\theta',\alpha')$ and apply Algorithm 3 to get a grouping $g$. 

\item[6.] (Local search). With grouping $g$ from step 5, systematically check all re-assignments of individuals $i \in \{1,\dots,N\}$ to groups $g \in \{1,\dots,G\}$, updating $g_i$ whenever the objective function decreases. Denote the resulting grouping by $g''$. 

\item[7.] If the objective function with $g''$ improves relative to the one with $g'$, then set $g^* = g''$ and go to step 3, otherwise set $n= n+1$ and go to step 8.

\item[8.] If $n \leq neigh_{\max}$, then go to step 4; otherwise go to step 9. 

\item[9.] Set $j \leftarrow j+1$. If $j > iter_{\max}$, then stop; otherwise go to step 3.
\end{itemize}
\end{algorithm}

\subsection{Initialization}

Initialization of Algorithm \ref{wgfe:algo} and \ref{alg:vns} is an important consideration and heavily influences the search for global minima using these heuristics. Currently using pooled OLS, two-way fixed effects estimates or a convex combination of them for $\theta$ make sense for initialization and seem to perform well, although no formal test of performance has been done. This is mentioned since it hasn't been proposed in the literature. Other approaches is to randomly draw these parameters from some distribution or to perform a subsample WGFE or GFE estimation to collect estimates that are then used in the full sample estimation. One could also obtain GFE estimates with the full sample and use those as initial parameters and groupings.

\end{document}